\title{Preprocessing to Reduce the Search Space for Odd Cycle Transversal}
\newcommand{\universityOfUtah}{School of Computing, University of Utah, USA}
\newcommand{\eindhoven}{Eindhoven University of Technology, The Netherlands}
\author{Bart M. P. Jansen}{\eindhoven}{b.m.p.jansen@tue.nl}{https://orcid.org/0000-0001-8204-1268}{}
\author{Yosuke Mizutani}{\universityOfUtah}{yos@cs.utah.edu}{https://orcid.org/0000-0002-9847-4890}{}
\author{Blair D. Sullivan}{\universityOfUtah}{sullivan@cs.utah.edu}{https://orcid.org/0000-0001-7720-6208}{Gordon \& Betty Moore Foundation under grant GBMF4560.}
\author{Ruben F. A. Verhaegh}{\eindhoven}{r.f.a.verhaegh@tue.nl}{https://orcid.org/0009-0008-8568-104X}{}
\authorrunning{B. M. P. Jansen, Y. Mizutani, B. D. Sullivan, and R. F. A. Verhaegh} 
\keywords{odd cycle transversal, parameterized complexity, graph decomposition, search-space reduction, witness of optimality} 
\begin{document}

\maketitle

\begin{abstract}
  The NP-hard \textsc{Odd Cycle Transversal} problem asks for a minimum vertex set whose removal from an undirected input graph~$G$ breaks all odd cycles, and thereby yields a bipartite graph. The problem is well-known to be fixed-parameter tractable when parameterized by the size~$k$ of the desired solution. It also admits a randomized kernelization of polynomial size, using the celebrated matroid toolkit by Kratsch and Wahlstr\"{o}m. The kernelization guarantees a reduction in the total \emph{size} of an input graph, but does not guarantee any decrease in the size of the solution to be sought; the latter governs the size of the search space for FPT algorithms parameterized by~$k$. We investigate under which conditions an efficient algorithm can detect one or more vertices that belong to an optimal solution to \textsc{Odd Cycle Transversal}. By drawing inspiration from the popular \emph{crown reduction} rule for \textsc{Vertex Cover}, and the notion of \emph{antler decompositions} that was recently proposed for \textsc{Feedback Vertex Set}, we introduce a graph decomposition called \emph{tight odd cycle cut} that can be used to certify that a vertex set is part of an optimal odd cycle transversal. While it is NP-hard to compute such a graph decomposition, we develop parameterized algorithms to find a set of at least~$k$ vertices that belong to an optimal odd cycle transversal when the input contains a tight odd cycle cut certifying the membership of~$k$ vertices in an optimal solution. The resulting algorithm formalizes when the search space for the solution-size parameterization of \textsc{Odd Cycle Transversal} can be reduced by preprocessing. To obtain our results, we develop a graph reduction step that can be used to simplify the graph to the point that the odd cycle cut can be detected via color coding.
\end{abstract}


\newpage

\section{Introduction} \label{sec:intro}


The NP-hard \textsc{Odd Cycle Transversal} problem asks for a minimum vertex set whose removal from an undirected input graph~$G$ breaks all odd cycles, and thereby yields a bipartite graph. Finding odd cycle transversals has important applications, for example in computational biology~\cite{Huffner09,Wernicke14} and adiabatic quantum computing~\cite{GoodrichHS21,Goodrich18}. \textsc{Odd Cycle Transversal} parameterized by the desired solution size~$k$ has been studied intensively, leading to important advances such as \emph{iterative compression}~\cite{ReedSV04} and \emph{matroid-based kernelization}~\cite{KratschW14,KratschW20}. The randomized kernel due to Kratsch and Wahlstr\"{o}m~\cite[Lemma 7.11]{KratschW20} is a polynomial-time algorithm that reduces an $n$-vertex instance~$(G,k)$ of \textsc{Odd Cycle Transversal} to an instance~$(G',k')$ on~$\Oh((k \log k \log \log k)^3)$ vertices, that is equivalent to the input instance with probability at least~$2^{-n}$. Experiments with this matroid-based kernelization, however, show disappointing preprocessing results in practice~\cite{PilipczukZ18}. This formed one of the motivations for a recent line of research aimed at preprocessing that reduces the \emph{search space} explored by algorithms solving the reduced instance, rather than preprocessing aimed at reducing the \emph{encoding size} of the instance (which is captured by kernelization). To motivate our work, we present some background on this topic.

A \emph{kernelization} of size~$f \colon \mathbb{N} \to \mathbb{N}$ for a parameterized problem~$\mathcal{P}$ is a polynomial-time algorithm that reduces any parameterized instance~$(x,k)$ to an instance~$(x',k')$ with the same \textsc{yes/no} answer, such that~$|x'|, k' \leq f(k)$. It therefore guarantees that the size of the instance is reduced in terms of the complexity parameter~$k$. It does not directly ensure a reduction in the \emph{search space} of the follow-up algorithm that is employed to solve the reduced instance. Since the running times of FPT algorithms for the natural parameterization of \textsc{Odd Cycle Transversal}~\cite{Huffner09,ReedSV04,LokshtanovNRRS14} depend exponentially on the size of the sought solution, the size of the search space considered by such algorithms can be reduced significantly by a preprocessing step that finds some vertices~$S$ that belong to an optimal solution for the input graph~$G$: the search for a solution of size~$k$ on~$G$ then reduces to the search for a solution of size~$k-|S|$ on~$G-S$. Researchers therefore started to investigate in which situations an efficient preprocessing phase can guarantee finding part of an optimal solution. 

One line of inquiry in this direction aims at finding vertices that not only belong to an optimal solution, but are even required for building a $c$-approximate solution~\cite{BumpusJK22,JansenV24}; such vertices are called \emph{$c$-essential}. This has resulted in refined running time guarantees, showing that an optimal odd cycle transversal of size~$k$ can be found in time~$2.3146^{k - \ell} \cdot n^{\Oh(1)}$, where~$\ell$ is the number of vertices in the instance that are essential for making a 3-approximate solution~\cite{BumpusJK22}. Another line of research, more relevant to the subject of this paper, aims at finding vertices that belong to an optimal solution when there is a simple, locally verifiable certificate of the existence of an optimal solution containing them. So far, the latter direction has been explored for \textsc{Vertex Cover} (where a \emph{crown decomposition}~\cite{Abu-KhzamFLS07,Fellows03} forms such a certificate), and for the (undirected) \textsc{Feedback Vertex Set} problem (where an \emph{antler decomposition}~\cite{DonkersJ24}) forms such a certificate. 

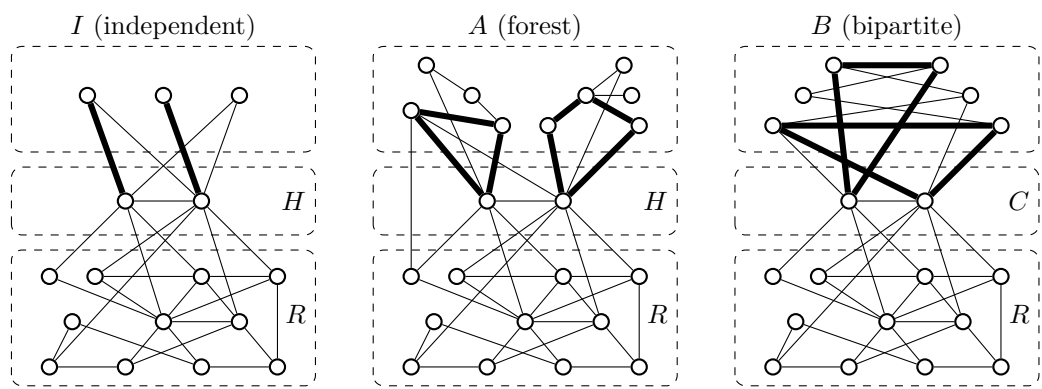
\begin{figure*}[t]
  \centering

  \tikzstyle{plain} = [circle, fill=white, text=black, draw, thick, scale=1, minimum size=0.5cm, inner sep=1.5pt]
  \tikzstyle{small} = [circle, fill=white, text=black, draw, thick, scale=1, minimum size=0.2cm, inner sep=1.5pt]
  \tikzstyle{thickedge} = [line width=0.8mm]

  \begin{minipage}[m]{.32\textwidth}
      \vspace{0pt}
      \centering
      \begin{tikzpicture}
          \node[small] (r1) at (1, 1.8) {};
          \node[small] (r2) at (2, 1.8) {};
          \node[small] (r3) at (3, 1.8) {};
          \node[small] (r4) at (4, 1.8) {};

          \node[small] (r5) at (1.3, 2.4) {};
          \node[small] (r6) at (2.5, 2.4) {};
          \node[small] (r7) at (3.5, 2.4) {};

          \node[small] (r8) at (1, 3) {};
          \node[small] (r9) at (1.6, 3) {};
          \node[small] (r10) at (3, 3) {};
          \node[small] (r11) at (4, 3) {};

          \node[small] (c1) at (2, 4) {};
          \node[small] (c2) at (3, 4) {};

          \node[small] (b1) at (1.5, 5.4) {};
          \node[small] (b2) at (2.5, 5.4) {};
          \node[small] (b3) at (3.5, 5.4) {};

          \draw (r1) -- (r2);
          \draw (r3) -- (r4);
          \draw (r1) -- (r5);
          \draw (r2) -- (r6);
          \draw (r2) -- (r7);
          \draw (r3) -- (r5);
          \draw (r3) -- (r6);
          \draw (r4) -- (r7);
          \draw (r6) -- (r7);
          \draw (r6) -- (r8);
          \draw (r6) -- (r9);
          \draw (r6) -- (r10);
          \draw (r6) -- (r11);
          \draw (r7) -- (r10);
          \draw (r4) -- (r11);
          \draw (r9) -- (r10);
          \draw (r10) -- (r11);

          \draw (r8) -- (c1);
          \draw (r6) -- (c1);
          \draw (r10) -- (c1);
          \draw (r7) -- (c2);
          \draw (r9) -- (c2);
          \draw (r11) -- (c2);
          \draw (r1) -- (c2);

          \draw (c1) -- (c2);
          \draw[thickedge] (c1) -- (b1);
          \draw (c1) -- (b3);
          \draw (c2) -- (b1);
          \draw[thickedge] (c2) -- (b2);
          \draw (c2) -- (b3);

          \draw[rounded corners, dashed] (0.5, 1.55) rectangle ++ (4, 1.8);
          \node[anchor=east] at (4.5, 2.5) {$R$};

          \draw[rounded corners, dashed] (0.5, 3.55) rectangle ++ (4, 0.9);
          \node[anchor=east] at (4.5, 4) {$H$};

          \draw[rounded corners, dashed] (0.5, 4.65) rectangle ++ (4, 1.4);
          \node[] at (2.5, 6.3) {$I$ (independent)};
      \end{tikzpicture}
  \end{minipage}
  \hfill
  \begin{minipage}[m]{.32\textwidth}
    \vspace{0pt}
    \centering
    \begin{tikzpicture}
        \node[small] (r1) at (1, 1.8) {};
        \node[small] (r2) at (2, 1.8) {};
        \node[small] (r3) at (3, 1.8) {};
        \node[small] (r4) at (4, 1.8) {};

        \node[small] (r5) at (1.3, 2.4) {};
        \node[small] (r6) at (2.5, 2.4) {};
        \node[small] (r7) at (3.5, 2.4) {};

        \node[small] (r8) at (1, 3) {};
        \node[small] (r9) at (1.6, 3) {};
        \node[small] (r10) at (3, 3) {};
        \node[small] (r11) at (4, 3) {};

        \node[small] (c1) at (2, 4) {};
        \node[small] (c2) at (3, 4) {};

        \node[small] (b1) at (2.2, 5) {};
        \node[small] (b2) at (1.8, 5.4) {};
        \node[small] (b3) at (1.2, 5.8) {};
        \node[small] (b4) at (1, 5.2) {};

        \node[small] (b5) at (2.8, 5) {};
        \node[small] (b6) at (3.3, 5.4) {};
        \node[small] (b7) at (4, 5.0) {};
        \node[small] (b8) at (3.9, 5.4) {};
        \node[small] (b9) at (3.8, 5.8) {};

        \draw (r1) -- (r2);
        \draw (r3) -- (r4);
        \draw (r1) -- (r5);
        \draw (r2) -- (r6);
        \draw (r2) -- (r7);
        \draw (r3) -- (r5);
        \draw (r3) -- (r6);
        \draw (r4) -- (r7);
        \draw (r6) -- (r7);
        \draw (r6) -- (r8);
        \draw (r6) -- (r9);
        \draw (r6) -- (r10);
        \draw (r6) -- (r11);
        \draw (r7) -- (r10);
        \draw (r4) -- (r11);
        \draw (r9) -- (r10);
        \draw (r10) -- (r11);

        \draw (r8) -- (c1);
        \draw (r6) -- (c1);
        \draw (r10) -- (c1);
        \draw (r7) -- (c2);
        \draw (r9) -- (c2);
        \draw (r11) -- (c2);
        \draw (r1) -- (c2);

        \draw (c1) -- (c2);
        \draw[thickedge] (c1) -- (b1);
        \draw (c1) -- (b3);
        \draw[thickedge] (c1) -- (b4);
        \draw (c2) -- (b4);
        \draw[thickedge] (c2) -- (b5);
        \draw[thickedge] (c2) -- (b7);
        \draw (c2) -- (b9);

        \draw (b1) -- (b2);
        \draw (b2) -- (b3);
        \draw[thickedge] (b1) -- (b4);
        \draw[thickedge] (b5) -- (b6);
        \draw[thickedge] (b6) -- (b7);
        \draw (b6) -- (b8);
        \draw (b6) -- (b9);

        \draw (b4) -- (r8);

        \draw[rounded corners, dashed] (0.5, 1.55) rectangle ++ (4, 1.8);
        \node[anchor=east] at (4.5, 2.5) {$R$};

        \draw[rounded corners, dashed] (0.5, 3.55) rectangle ++ (4, 0.9);
        \node[anchor=east] at (4.5, 4) {$H$};

        \draw[rounded corners, dashed] (0.5, 4.65) rectangle ++ (4, 1.4);
        \node[] at (2.5, 6.3) {$A$ (forest)};
      \end{tikzpicture}
  \end{minipage}
  \hfill
  \begin{minipage}[m]{.32\textwidth}
    \vspace{0pt}
    \centering
    \begin{tikzpicture}
        \node[small] (r1) at (1, 1.8) {};
        \node[small] (r2) at (2, 1.8) {};
        \node[small] (r3) at (3, 1.8) {};
        \node[small] (r4) at (4, 1.8) {};

        \node[small] (r5) at (1.3, 2.4) {};
        \node[small] (r6) at (2.5, 2.4) {};
        \node[small] (r7) at (3.5, 2.4) {};

        \node[small] (r8) at (1, 3) {};
        \node[small] (r9) at (1.6, 3) {};
        \node[small] (r10) at (3, 3) {};
        \node[small] (r11) at (4, 3) {};

        \node[small] (c1) at (2, 4) {};
        \node[small] (c2) at (3, 4) {};

        \node[small] (b1) at (1, 5) {};
        \node[small] (b2) at (1.4, 5.4) {};
        \node[small] (b3) at (1.8, 5.8) {};

        \node[small] (b4) at (4, 5) {};
        \node[small] (b5) at (3.6, 5.4) {};
        \node[small] (b6) at (3.2, 5.8) {};

        \draw (r1) -- (r2);
        \draw (r3) -- (r4);
        \draw (r1) -- (r5);
        \draw (r2) -- (r6);
        \draw (r2) -- (r7);
        \draw (r3) -- (r5);
        \draw (r3) -- (r6);
        \draw (r4) -- (r7);
        \draw (r6) -- (r7);
        \draw (r6) -- (r8);
        \draw (r6) -- (r9);
        \draw (r6) -- (r10);
        \draw (r6) -- (r11);
        \draw (r7) -- (r10);
        \draw (r4) -- (r11);
        \draw (r9) -- (r10);
        \draw (r10) -- (r11);

        \draw (r8) -- (c1);
        \draw (r6) -- (c1);
        \draw (r10) -- (c1);
        \draw (r7) -- (c2);
        \draw (r9) -- (c2);
        \draw (r11) -- (c2);
        \draw (r1) -- (c2);

        \draw (c1) -- (c2);
        \draw (c1) -- (b1);
        \draw[thickedge] (c1) -- (b3);
        \draw[thickedge] (c1) -- (b6);
        \draw[thickedge] (c2) -- (b1);
        \draw[thickedge] (c2) -- (b4);
        \draw (c2) -- (b5);

        \draw[thickedge] (b1) -- (b4);
        \draw (b1) -- (b5);
        \draw (b2) -- (b4);
        \draw (b2) -- (b6);
        \draw (b3) -- (b5);
        \draw[thickedge] (b3) -- (b6);

        \draw[rounded corners, dashed] (0.5, 1.55) rectangle ++ (4, 1.8);
        \node[anchor=east] at (4.5, 2.5) {$R$};

        \draw[rounded corners, dashed] (0.5, 3.55) rectangle ++ (4, 0.9);
        \node[anchor=east] at (4.5, 4) {$C$};

        \draw[rounded corners, dashed] (0.5, 4.65) rectangle ++ (4, 1.4);
        \node[] at (2.5, 6.3) {$B$ (bipartite)};
      \end{tikzpicture}
  \end{minipage}
  \caption{%
  Examples of crown decomposition (left),
  antler decomposition for \PrbFVSLong (middle) and 
  a tight OCC for \PrbOCTLong (right).
  Packings of forbidden subgraphs are highlighted in bold.
  }
  \label{fig:example}
\end{figure*}

A \emph{crown decomposition} (see \cref{fig:example}) of a graph~$G$ consists of a partition of its vertex set into three parts: the \emph{crown}~$I$ (which is required to be a non-empty independent set), the \emph{head}~$H$ (which is required to contain all neighbors of~$I$), and the \emph{remainder}~$R = V(G) \setminus (I \cup H)$, such that the graph~$G[I \cup H]$ contains a matching~$M$ of size~$|H|$. Since~$I$ is an independent set, this matching partners each vertex of~$H$ with a private neighbor in~$I$. The existence of a crown decomposition shows that there is an optimal vertex cover (a minimum-size vertex set intersecting all edges) that contains all vertices of~$H$ and none of~$I$: any vertex cover contains at least~$|M| = |H|$ vertices from~$I \cup H$ to cover the matching~$M$, while~$H$ covers all the edges of~$G$ that can be covered by selecting vertices from~$I \cup H$. Hence a crown decomposition forms a polynomial-time verifiable certificate that there is an optimal vertex cover containing all vertices of~$H$. It facilitates a reduction in search space for \textsc{Vertex Cover}: graph~$G$ has a vertex cover of size~$k$ if and only if~$G - (I \cup H)$ has one of size~$k - |H|$. A crown decomposition can be found in polynomial time if it exists, which yields a powerful reduction rule for \textsc{Vertex Cover}~\cite{Abu-KhzamFLS07}.


Inspired by this decomposition for \textsc{Vertex Cover}, Donkers and Jansen~\cite{DonkersJ24} introduced the notion of an \emph{antler decomposition} of a graph~$G$. It is a partition of the vertex set into three parts: the \emph{antler}~$A$ (which is required to induce a non-empty acyclic graph), the \emph{head}~$H$ (which is required to contain \emph{almost} all neighbors of~$A$: for each tree~$T$ in the forest~$G[A]$, there is at most one edge that connects~$T$ to a vertex outside~$H$), and the \emph{remainder}~$R = V(G) \setminus (A \cup H)$, while satisfying an additional condition in terms of an integer~$z$ that represents the \emph{order} of the antler decomposition. In its simplest form for~$z=1$ (we discuss~$z > 1$ later), the additional condition says that the graph~$G[A \cup H]$ should contain~$|H|$ vertex-disjoint cycles. Since~$G[A]$ is acyclic, each of these cycles contains exactly one vertex of~$H$. They certify that any feedback vertex set of~$G$ contains at least~$|H|$ vertices from~$A \cup H$. Since~$A$ induces an acyclic graph, and all cycles in~$G$ that enter a tree~$T$ of~$G[A]$ from~$R$ must leave~$A$ from~$H$, the set~$H$ intersects all cycles of~$G$ that contain a vertex of~$A \cup H$. Hence there is an optimal feedback vertex set containing~$H$. By finding an antler decomposition we can therefore reduce the problem of finding a size-$k$ solution in~$G$ to finding a size-($k - |H|$) solution in~$G - (A \cup H)$, and therefore reduce the search space for algorithms parameterized by solution size.

Donkers and Jansen proved that, assuming $\mathsf{P} \neq \mathsf{NP}$, there unfortunately is no polynomial-time algorithm to find an antler decomposition if one exists~\cite[Theorem 3.4]{DonkersJ24}. However, they gave a \emph{fixed-parameter tractable} preprocessing algorithm, parameterized by the size of the head. There is an algorithm that, given a graph~$G$ and integer~$k$ such that~$G$ contains an antler decomposition~$(A,H,R)$ with~$|H| = k$, runs in time~$2^{\Oh(k^5)} \cdot n^{\Oh(1)}$ and outputs a set of at least~$k$ vertices that belong to an optimal feedback vertex set. For each fixed value of~$k$, this yields a preprocessing algorithm to detect vertices that belong to an optimal solution if there is a simple certificate of their membership in an optimal solution.

In fact, Donkers and Jansen gave a more general algorithm; this is where $z$-antlers for~$z > 1$ make an appearance. Recall that for a $1$-antler decomposition~$(A,H,R)$ of a graph~$G$, the graph~$G[A \cup H]$ must contain a collection~$\mathcal{C}$ of~$|H|$ vertex-disjoint cycles. These cycles certify that the set~$H$ is an optimal feedback vertex set in the graph~$G[A \cup H]$. In fact, the feedback vertex set~$H$ in~$G[A \cup H]$ is already optimal for the subgraph~$\mathcal{C} \subseteq G[A \cup H]$, and that subgraph~$\mathcal{C}$ is structurally simple because each of its connected components (which is a cycle) has a feedback vertex set of size~$z=1$. This motivates the following definition of a $z$-antler decomposition for~$z>1$: the set~$H$ should be an optimal feedback vertex set for the subgraph~$G[A \cup H]$, and moreover, there should be a subgraph~$\mathcal{C}_z \subseteq G[A \cup H]$ such that (1)~$H$ is an optimal feedback vertex set in~$\mathcal{C}_z$, and (2)~each connected component of~$\mathcal{C}_z$ has a feedback vertex set of size at most~$z$. So for a $z$-antler decomposition~$(A,H,R)$ of a graph~$G$, there is a certificate that~$H$ is part of an optimal solution in the overall graph~$G$ that consists of the decomposition together with the subgraph~$\mathcal{C}_z \subseteq G[A \cup H]$ for which~$H$ is an optimal solution. The complexity of verifying this certificate scales with~$z$: it comes down to verifying that~$H \cap V(C)$ is indeed an optimal feedback vertex set of size at most~$z$ for each connected component of the subgraph~$\mathcal{C}_z$. Donkers and Jansen presented an algorithm that, given integers~$k \geq z \geq 0$ and a graph~$G$ that contains a $z$-antler decomposition whose head has size~$k$, outputs a set of at least~$k$ vertices that belongs to an optimal feedback vertex set in time~$2^{\Oh(k^5 z^2)} n^{\Oh(z)}$. For each fixed choice of~$k$ and~$z$, this gives a reduction rule (that can potentially be applied numerous times on an instance) to reduce the search space if the preconditions are met.

\subparagraph*{Our contribution} We investigate search-space reduction for \textsc{Odd Cycle Transversal}, thereby continuing the line of research proposed by Donkers and Jansen~\cite{DonkersJ24}. We introduce the notion of \emph{tight odd cycle cuts} to provide efficiently verifiable witnesses that a certain vertex set belongs to an optimal odd cycle transversal, and present algorithms to find vertices that belong to an optimal solution in inputs that admit such witnesses.

To be able to state our main result, we introduce the corresponding terminology. An \emph{odd cycle cut} (OCC) in an undirected graph~$G$ is a partition of its vertex set into three parts: the bipartite part~$B$ (which is required to induce a bipartite subgraph of~$G$), the cut part~$C$ (which is required to contain all neighbors of~$B$), and the rest~$R = V(G) \setminus (B \cup C)$. An odd cycle cut is called \emph{tight} if the set~$C$ forms an optimal odd cycle transversal for the graph~$G[B \cup C]$. In this case, it is easy to see that there is an optimal odd cycle transversal in~$G$ that contains all vertices of~$C$, since all odd cycles through~$B$ are intersected by~$C$. A tight OCC~$(B,C,R)$ has \emph{order~$z$} if there is a subgraph~$\mathcal{C}_z$ of~$G[B \cup C]$ for which~$C$ is an optimal odd cycle transversal, and for which each connected component of~$\mathcal{C}_z$ has an odd cycle transversal of size at most~$z$. This means that for~$z=1$, if there is such a subgraph~$\mathcal{C}_z \subseteq G[B \cup C]$, then there is one consisting of~$|C|$ vertex-disjoint odd cycles. We use the term $z$-tight OCC to refer to a tight OCC of order~$z$. Our notion of $z$-tight OCCs forms an analogue of $z$-antler decompositions. Note that the requirement that~$C$ contains \emph{all} neighbors of~$B$ is slightly more restrictive than in the \textsc{Feedback Vertex Set} case. We need this restriction for technical reasons, but discuss potential relaxations in \cref{sec:conclusion}.

Similarly to the setting of $z$-antlers for \textsc{Feedback Vertex Set}, assuming $\mathsf{P} \neq \mathsf{NP}$ there is no polynomial-time algorithm that always finds a tight OCC in a graph if one exists; not even in the case~$z=1$ (\cref{thm:1tight:occ:nphard}). We therefore develop algorithms that are efficient for small~$k$ and~$z$. The following theorem captures our main result, which is an OCT-analogue of the antler-based preprocessing algorithm for FVS. The \emph{width} of an OCC~$(B,C,R)$ is defined as~$|C|$. Our theorem shows that for constant~$z$ we can efficiently find~$k$ vertices that belong to an optimal solution, if there is a $z$-tight OCC of width~$k$.
\begin{restatable}{theorem}{thmMain}
  \label{thm:main}
  There is a deterministic algorithm that, given a graph $G$ and integers $k \geq z \geq 0$, runs in $2^{\Oh(k^{33} z^2)}\cdot n^{\Oh{(z)}}$ time and either outputs at least $k$ vertices that belong to an optimal solution for \PrbOCTLong, or concludes that $G$ does not contain a \zocc of width $k$.
\end{restatable}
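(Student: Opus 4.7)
The plan is to follow the three-phase blueprint Donkers and Jansen developed for $z$-antler decompositions, adapted to the bipartite-cut setting of \zoccs. The three phases are: (i)~a marking step that identifies a vertex set $M \subseteq V(G)$ of size polynomial in $k$ that is guaranteed to contain every ``relevant'' vertex of a hypothetical \zocc of width $k$; (ii)~a reduction step that replaces the part of $G$ outside $M$ by canonical bounded-size gadgets, yielding an equivalent instance whose total size is a function of $k$ and $z$ only; and (iii)~a color-coding step on this reduced instance that, by enumerating a deterministic perfect hash family, exposes the cut set $C$ of an optimal width-$k$ \zocc whenever one exists.

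For the marking phase, the target is to construct $M$ so that for every \zocc $(B,C,R)$ of width $k$ in $G$ there is a witness subgraph $\mathcal{C}_z \subseteq G[B \cup C]$ with $V(\mathcal{C}_z) \subseteq M$, whose connected components each admit an OCT of size at most $z$ and for which $C \cap V(\mathcal{C}_z)$ is an optimal OCT on each component. Because $C$ contains \emph{every} neighbor of $B$, any odd cycle passing through $B$ must enter and leave via $C$, so the witness structure localizes cleanly inside $G[B \cup C]$. This makes it feasible to grow $M$ iteratively via representative-set and sunflower arguments, adding vertices that could participate as an element of the cut, as an endpoint of a short odd walk between already marked vertices, or as an interior vertex of an odd cycle in a bipartite-side witness. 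A counting argument over the ``types'' of such roles bounds $|M|$ by a polynomial in $k$, the degree of which is driven by the nesting of marking rounds.

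The reduction phase exploits that, once $M$ is fixed, the only data from $V(G) \setminus M$ relevant to detecting a \zocc of width $k$ is the parity-annotated connectivity between vertices of $M$ through $G-M$. This data is represented by canonical bipartite ``bypass'' gadgets of size $\mathrm{poly}(|M|)$; substituting them for $G-M$ produces a reduced graph $G'$ of size $f(k,z)$ on which \zoccs of width $k$ are preserved. We then apply a deterministic perfect hash family of functions $V(G') \to [\Oh(kz)]$ so that, for every \zocc of width $k$ in $G'$, at least one coloring isolates both $C$ and a canonical $\mathcal{C}_z$ as disjoint color classes. For each such coloring the algorithm verifies (a)~bipartiteness of the alleged $B$, (b)~that the alleged $C$ separates $B$ from the rest, and (c)~that $C \cap V(\mathcal{C}_z)$ is an optimal OCT on each component of the alleged $\mathcal{C}_z$. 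Check~(c) is performed by brute force over vertex subsets of size at most $z$ per component, which is the source of the $n^{\Oh(z)}$ factor; if some coloring passes all checks we return the corresponding $C$, and otherwise the covering guarantee of the hash family certifies that no \zocc of width $k$ exists.

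The main obstacle is the marking step. For antler decompositions the forest structure of $A$ yields a clean tree-like decomposition of the witness, keeping the marking shallow; here the bipartite part $B$ can be arbitrarily dense and the witness $\mathcal{C}_z$ is constrained only through the optimality of $C$ as an OCT within it, not through any acyclicity. Consequently the marking has to simultaneously track path parities, potential odd cycles routed through currently unmarked regions, and the candidate bipartition of $B$, which forces several iterated sunflower-style marking rounds. The composition of these rounds, each contributing its own polynomial blow-up before the color coding converts the marked set into an algorithmic guarantee, is precisely what produces the $k^{33}$ exponent in the final running time.
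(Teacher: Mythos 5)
Your three-phase blueprint resembles the paper's architecture in spirit (mark, replace by parity gadgets, then color-code with an $n^{\Oh(z)}$ verification of order-$z$ certificates), but the first phase as you state it cannot work, and this breaks the chain. You ask for a set $M$ of size $k^{\Oh(1)}$ such that every \zocc $(B,C,R)$ of width $k$ in the \emph{original} graph has an order-$z$ witness $\mathcal{C}_z$ with $V(\mathcal{C}_z) \subseteq M$. Witnesses are not small: if $G$ is a single odd cycle on $n$ vertices, then for $k=z=1$ every witness of every $1$-tight \occ is the entire cycle, so any such $M$ has $n$ vertices. No counting of ``roles'' or sunflower/representative-set argument can bound $|M|$ by a polynomial in $k$, because the obstruction is the witness itself, not the number of its types. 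The paper resolves this chicken-and-egg problem by reversing the order of quantification: it first finds some (not necessarily tight) \emph{reducible} \occ $(X_B,X_C,X_R)$ with $|X_C| \le 2k$ via universal-set color coding and bipartite separations, marks only a set $B^\ast \subseteq X_B$ of size $|X_C|^{\Oh(1)}$, replaces $X_B \setminus B^\ast$ by parity-preserving paths, and iterates until no reducible single-component \occ remains; only in the stabilized graph are the bipartite parts of tight \occs guaranteed to have size $(kz)^{\Oh(1)}$, which is when the color coding and the $n^{\Oh(z)}$ extraction are applied. In particular, the reduced graph is \emph{not} of total size $f(k,z)$ (the part $X_R$ is untouched, and optimal OCTs of $G$ may be arbitrarily large), contrary to your phase (ii).

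The second gap is that you never justify why a tight \occ found after replacing $G-M$ by gadgets pulls back to a set of vertices lying in an optimal odd cycle transversal of $G$. Tightness is an optimality statement: $C$ must be a \emph{minimum} OCT of $G[B\cup C]$, and deleting unmarked vertices can destroy exactly the minimum separators that certify this, in both directions (a cut that is tight in the gadget graph need not be tight in $G$, and a tight \occ of $G$ may have its cut partially erased). The paper's key technical ingredient for this is absent from your sketch: it characterizes $A_C \cap X_B$ as a minimum solution to a three-way $\{A,R,N\}$-separation problem imposed on $G[X_B]$, computes (deterministically, via an FPT multiway-cut routine replacing the randomized Kratsch--Wahlstr\"{o}m step) a cut covering set containing minimum restricted multiway cuts for \emph{all} generalized partitions of a terminal set of size $\Oh(k)$, and then uses a Menger-type exchange argument to show that swapping the original separator for a marked one yields another \zocc of the same width whose certificate survives, plus a backward-safety lemma showing minimum OCTs of the reduced graph are minimum OCTs of $G$. ``Representative-set and sunflower arguments'' does not supply this machinery, and without it neither the forward nor the backward direction of your reduction is established.
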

One may wonder whether it is feasible to have more control over the output, by having the algorithm output a $z$-tight OCC~$(B,C,R)$ of width~$k$, if one exists. However, a small adaptation of a W[1]-hardness proof for antlers~\cite[Theorem 3.7]{DonkersJ24} shows (\cref{thm:tightocc:w1hard}) that the corresponding algorithmic task is W[1]-hard even for~$z=1$. This explains why the algorithm outputs a vertex set that belongs to an optimal solution, rather than a $z$-tight OCC.

In terms of techniques, our algorithm combines insights from the previous work on antlers~\cite{DonkersJ24} with ideas in the representative-set based kernelization~\cite{KratschW20} for \textsc{Odd Cycle Transversal}. The global idea behind the algorithm is to repeatedly simplify the graph, while preserving the structure of $z$-tight OCCs, to arrive at the following favorable situation: if there was a $z$-tight OCC of width~$k$ in the input, then the reduced graph has a $z$-tight OCC~$(B,C,R)$ of the same width that satisfies~$|B| \in k^{\Oh(1)}$. At that point, we can use color coding with a set of~$k^{\Oh(1)}$ colors to ensure that the structure~$B \cup C$ gets colored in a way that makes it tractable to identify it. The simplification steps on the graph are inspired by the kernelization for \textsc{Odd Cycle Transversal} and involve the computation of a \emph{cut covering set} of size~$k^{\Oh(1)}$ that contains a minimum three-way~$\{X,Y,Z\}$-separator
for all possible choices of sets~$\{X,Y,Z\}$ drawn from a terminal set~$T$ of size~$k^{\Oh(1)}$. The existence of such sets follows from the matroid-based tools of Kratsch and Wahlstr\"{o}m~\cite{KratschW20}. We can avoid the randomization incurred by their polynomial-time algorithm by computing a cut covering set in~$2^{\Oh(k)} \cdot n^{\Oh(1)}$ time deterministically. Compared to the kernelization for \textsc{Odd Cycle Transversal}, a significant additional challenge we face in this setting is that the size of OCTs in the graph can be arbitrarily large in terms of the parameter~$k$. Our algorithm is looking for a \emph{small} region of the graph in which a vertex set exists with a simple certificate for its membership in an optimal solution; it cannot afford to learn the structure of global OCTs in the graph. This local perspective poses a challenge when repeatedly simplifying the graph: we not only have to be careful how these operations affect the total solution size in~$G$, but also how these modifications affect the existence of simple certificates for membership in an optimal solution. This is why our reduction step works with three-way separators, rather than the two-way separators that suffice to solve or kernelize OCT.

\subparagraph*{Organization} The remainder of this work is organized as follows. The first twelve pages of the manuscript present the key statements and ideas. For statements marked $(\bigstar$), the proof is deferred to a later section for readability. After presenting preliminaries on graphs in \cref{sec:prelims}, we define (tight) OCCs in \cref{sec:oct:cut} and explore some of their properties. In \cref{sec:find:occs} we show how color coding can be used to find an OCC whose bipartite part is connected and significantly larger than its cut. Given such an OCC, we show in \cref{sec:reduce:occ} how to simplify the graph while preserving the essential structure of odd cycles in the graph. This leads to an algorithm that finds vertices belonging to an optimal solution the presence of a tight OCC in \cref{sec:tight:occs}. Sections~\ref{sec:more:prelims}--\ref{sec:tight:occs-proof} contain the deferred proofs. In \cref{sec:hardness} we give the hardness proofs mentioned above. Finally, we conclude in \cref{sec:conclusion}. 

\section{Preliminaries} \label{sec:prelims}

\subparagraph*{Graphs} We only consider finite, undirected, simple graphs. Such a graph~$G$ consists of a set~$V(G)$ of vertices and a set~$E(G) \subseteq \binom{V(G)}{2}$ of edges. For ease of notation, we write~$uv$ for an undirected edge~$\{u,v\} \in E(G)$; note that~$uv = vu$. When it is clear which graph is referenced from context, we write $n$ and $m$ to denote the number of vertices and edges in this graph respectively. For a vertex~$v \in V(G)$, its open neighborhood is~$N_G(v) := \{u \in V(G) \mid uv \in E(G)\}$ and its closed neighborhood is~$N_G[v] := N_G(v) \cup \{v\}$. For a vertex set~$S \subseteq V(G)$ we define its open neighborhood as~$N_G(S) := (\bigcup _{v \in S} N_G(v)) \setminus S$ and its closed neighborhood as~$N_G[S] := \bigcup_{v \in S} N_G[v]$. The subgraph of~$G$ induced by a vertex set~$S \subseteq V(G)$ is the graph~$G[S]$ on vertex set~$S$ with edges~$\{uv \in E(G) \mid \{u,v\} \subseteq S\}$. We use~$G-S$ as a shorthand for~$G[V(G) \setminus S]$ and write~$G - v$ instead of~$G - \{v\}$ for singletons. A \emph{walk} is a sequence of (not necessarily distinct) vertices~$(v_1, \ldots, v_k)$ such that~$v_i, v_{i+1} \in E(G)$ for each~$i \in [k-1]$. The walk is \emph{closed} if we additionally have~$v_k, v_1 \in E(G)$. A \emph{cycle} is a closed walk whose vertices are all distinct. The \emph{length} of a cycle~$(v_1, \ldots, v_k)$ is~$k$. A \emph{path} is a walk whose vertices are all distinct. The \emph{length} of a path~$(v_1, \ldots, v_k)$ is~$k-1$. The vertices~$v_1, v_k$ are the \emph{endpoints} of the path. For two (not necessarily disjoint) vertex sets~$S,T$ of a graph~$G$, we say that a path~$P = (v_1, \ldots, v_k)$ in~$G$ is an~$(S,T)$-path if~$v_1 \in S$ and~$v_k \in T$. If one (or both) of $S$ and $T$ contains only one element, we may write this single element instead of the singleton set consisting of it. 

The \emph{parity} of a path or cycle refers to the parity of its length. For a walk~$W = (v_1, \ldots, v_k)$, we refer to its vertex set as~$V(W) = \{v_1, \ldots, v_k\}$. Observe that if~$W$ is a closed walk of odd parity (a \emph{closed odd walk}), then the graph~$G[V(W)]$ contains a cycle of odd length (an \emph{odd cycle}): any edge connecting two vertices of~$V(W)$ that are not consecutive on~$W$ splits the walk into two closed subwalks, one of which has odd length.

For a positive integer~$q$, a \emph{proper $q$-coloring} of a graph~$G$ is a function~$f \colon V(G) \to \{0, \ldots, q-1\}$ such that~$f(u) \neq f(v)$ for all~$uv \in E(G)$. A graph~$G$ is \emph{bipartite} if its vertex set can be partitioned into two \emph{partite sets}~$L \dot \cup R$ such that no edge has both of its endpoints in the same partite set. It is well-known that the following three conditions are equivalent for any graph~$G$: (1)~$G$ is bipartite,~(2)~$G$ admits a proper $2$-coloring, and~(3)~there is no cycle of odd length in~$G$. An \emph{odd cycle transversal} (OCT) of a graph~$G$ is a set~$S \subseteq V(G)$ such that~$G - S$ is bipartite. An \emph{independent set} is a vertex set~$S$ such that~$G[S]$ is edgeless. We say that a vertex set~$X$ in a graph~$G$ \emph{separates} two (not necessarily) disjoint vertex sets~$S$ and~$T$ if no connected component of~$G - X$ simultaneously contains a vertex from~$S$ and a vertex from~$T$. For a collection~$\{T_1, \ldots, T_m\}$ of (not necessarily disjoint) vertex sets in a graph~$G$, we say that a vertex set~$X$ is an \emph{$\{T_1, \ldots, T_m\}$-separator} if~$X$ separates all pairs~$(T_i, T_j)$ for~$i \neq j$. Note that~$X$ is allowed to intersect~$\bigcup _{i \in [m]} T_i$.

The next lemma gives a simple sufficient condition for a graph to be bipartite.

\begin{restatable}{lemma}{lemSeparatedColoringsMakeBipartite} \label{lem:separated:colorings:make:bipartite}
	Let $G$ be a graph and let $V_L \cup V_0 \cup V_R = V(G)$ be a partition of its vertices such that $V_0$ is a $\{V_L, V_R\}$-separator. If there exist proper $2$-colorings $f_L: (V_0 \cup V_L) \rightarrow \{0, 1\}$ and $f_R: (V_0 \cup V_R) \rightarrow \{0, 1\}$ of $G[V_0 \cup V_L]$ and $G[V_0 \cup V_R]$ respectively such that $f_L(v_0) = f_R(v_0)$ for every $v_0 \in V_0$, then $G$ is bipartite.
\end{restatable}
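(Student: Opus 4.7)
The plan is to explicitly construct a proper $2$-coloring of $G$ by combining $f_L$ and $f_R$, which suffices since a graph admits a proper $2$-coloring if and only if it is bipartite. Define $f \colon V(G) \to \{0,1\}$ by
\[
f(v) = \begin{cases} f_L(v) & \text{if } v \in V_0 \cup V_L, \\ f_R(v) & \text{if } v \in V_R. \end{cases}
\]
First I would check that $f$ is well-defined. The two cases overlap only on $V_0$, and the hypothesis $f_L(v_0) = f_R(v_0)$ for every $v_0 \in V_0$ ensures the two definitions agree there. Since $V_L \cup V_0 \cup V_R = V(G)$, the function $f$ is defined on every vertex.

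Next I would verify that $f$ is a proper $2$-coloring by a short case analysis on the location of the endpoints of an arbitrary edge $uv \in E(G)$. If both $u, v \in V_0 \cup V_L$, then the edge is present in $G[V_0 \cup V_L]$ and $f(u) = f_L(u) \neq f_L(v) = f(v)$ since $f_L$ is proper. Symmetrically, if both $u, v \in V_0 \cup V_R$, properness follows from $f_R$. The only remaining case, up to symmetry, is $u \in V_L$ and $v \in V_R$; but such an edge would place $u$ and $v$ in the same connected component of $G - V_0$, contradicting the assumption that $V_0$ is a $\{V_L, V_R\}$-separator. Hence this case cannot occur, and in every remaining case $f(u) \neq f(v)$. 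Thus $f$ is a proper $2$-coloring of $G$, which is equivalent to $G$ being bipartite.

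There is no real obstacle here beyond careful bookkeeping: the key insight is simply that the separator condition rules out any edge with one endpoint in $V_L$ and one in $V_R$, so the two locally-consistent colorings glue into a global one without any conflict.
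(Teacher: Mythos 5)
Your proof is correct and follows essentially the same approach as the paper: both glue $f_L$ and $f_R$ into a single coloring (consistent on $V_0$ by hypothesis) and observe that the separator condition forces every edge to lie within $G[V_0 \cup V_L]$ or $G[V_0 \cup V_R]$, so properness is inherited from $f_L$ or $f_R$. No gaps.
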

\begin{proof}
	To show that $G$ is bipartite, we provide a proper $2$-coloring of the graph. We define this coloring $f \colon V(G) \rightarrow \{0, 1\}$ such that $f(v_0) = f_L(v_0) (= f_R(v_0))$ for every $v_0 \in V_0$, $f(v_L) = f_1(v_L)$ for every $v_L \in V_L$ and $f(v_R) = f_2(v_R)$ for every $v_R \in V_R$. To see that~$f$ is a proper $2$-coloring, we show that no edge~$e \in E(G)$ is monochromatic under~$f$. 
	
	By the assumption that~$V_0$ is a separator, each edge~$e \in E(G)$ is contained in~$G[V_0 \cup V_L]$ or~$G[V_0 \cup V_R]$ (or both). If~$e$ is an edge in the former, its endpoints are colored the same as in~$f_L$ and are therefore bichromatic. The analogous argument for~$f_R$ holds when~$e$ is an edge of the latter.
\end{proof}

The next lemma captures the main idea behind the iterative compression algorithm~\cite{ReedSV04} (cf.~\cite[\S 4.4]{CyganFKLMPPS15}) for solving \textsc{Odd Cycle Transversal}. Given a (potentially suboptimal) odd cycle transversal~$W$ of a graph, it shows that the task of finding an odd cycle transversal disjoint from~$W$ whose removal leaves a bipartite graph with~$W_0, W_1 \subseteq W$ in opposite partite sets of its bipartition is equivalent to separating two vertex sets derived from a baseline bipartition of~$G-W$. Our statement below is implied by Claim 1 in the work of Jansen and de Kroon~\cite{JansenK21}.
\begin{lemma}[{\cite[Claim 1]{JansenK21}}] \label{lem:AR-separation}
	Let~$W$ be an OCT in graph~$G$. For each partition of $W = W_0 \cup W_1$ into two independent sets, for each proper $2$-coloring $c$ of $G - W$, we have the following equivalence for each $X \subseteq V(G) \setminus W$: the graph $G - X$ has a proper $2$-coloring with $W_0$ color $0$ and $W_1$ color $1$ \emph{if and only if} the set $X$ separates $A$ from $R$ in the graph $G - W$, with:
	\begin{align*}
		A &= (N_G(W_0) \cap c^{-1}(0)) \cup (N_G(W_1) \cap c^{-1}(1)), \\ 
		R &= (N_G(W_0) \cap c^{-1}(1)) \cup (N_G(W_1) \cap c^{-1}(0)).
	\end{align*}
\end{lemma}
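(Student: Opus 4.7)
The plan is to establish a correspondence between proper $2$-colorings $c'$ of $G-X$ extending $W_0 \mapsto 0$ and $W_1 \mapsto 1$ and consistent ``flip/keep'' labellings of the connected components of $G - W - X$. The key fact is that each connected component of a bipartite graph admits exactly two proper $2$-colorings, differing by a global swap of the color classes. Hence any valid $c'$ restricted to a component $K$ of $G - W - X$ is either $c|_K$ or its pointwise flip. For a vertex $v \in N_G(W_0) \cap c^{-1}(0)$, a neighbor $w \in W_0$ satisfies $c'(w)=0$, so $c'(v)=1 \neq c(v)$, forcing $K$ to be flipped; the vertices in $N_G(W_1) \cap c^{-1}(1)$ impose the same flip requirement, so $A$ is exactly the set of vertices forcing a flip in their component. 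Dually, $R$ is exactly the set of vertices forcing $c$ to be kept.

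\textbf{Forward direction.} Suppose such a $c'$ exists. Fix any component $K$ of $G - W - X$. The restriction $c'|_K$ is a proper $2$-coloring of the connected bipartite graph $K$, so it equals $c|_K$ or its flip. If $V(K)$ contained both an $A$-vertex (forcing flip) and an $R$-vertex (forcing keep), we would reach a contradiction. Thus no component of $G - W - X$ meets both $A$ and $R$, i.e.\ $X$ separates $A$ from $R$ in $G - W$.

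\textbf{Backward direction.} Suppose $X$ separates $A$ from $R$ in $G - W$. Any $v \in (A \cap R) \setminus X$ would place $v$ in a component witnessing a failure of separation, so $A \cap R \subseteq X$, and for every component $K$ of $G - W - X$ at most one of $V(K) \cap A$ and $V(K) \cap R$ is nonempty. Define $c'$ on $V(G)\setminus X$ by $c'|_{W_0}\equiv 0$, $c'|_{W_1}\equiv 1$, and for each component $K$ set $c'|_K := c|_K$ if $V(K)\cap A = \emptyset$ and $c'|_K := 1 - c|_K$ otherwise. Edges inside $W$ are properly colored because $W_0$ and $W_1$ are independent; edges inside a component $K$ are proper because flipping preserves properness of $c|_K$; for an edge $vw$ with $w \in W$ and $v \in V(K)$, the vertex $v$ lies in $A \cup R$, and the flip/keep rule was designed so that $c'(v) = 1 - c'(w)$.

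The main obstacle is the consistency check in the backward direction: one must verify that the single flip-or-keep choice made for a component $K$ simultaneously satisfies every forcing constraint imposed by the various $W$-neighbors of $K$, not merely the witness that determined the choice. This reduces to the observation that every $A$-vertex within one component imposes the \emph{same} forcing (flip), every $R$-vertex imposes the same forcing (keep), and the separation hypothesis forbids the incompatible mix within a single component.
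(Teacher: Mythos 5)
Your proof is correct and complete. Note that the paper itself does not prove \cref{lem:AR-separation}; it imports the statement from Claim~1 of Jansen and de Kroon~\cite{JansenK21}, so there is no in-paper argument to compare against. Your flip-or-keep argument on the components of $G - W - X$ is exactly the standard reasoning underlying that claim (and the iterative-compression view of \textsc{Odd Cycle Transversal}): you correctly use that a connected bipartite component admits only $c|_K$ and its pointwise complement, that every $A$-vertex forces the flip and every $R$-vertex forces the keep, and in the backward direction you handle the two points where a blind construction could go wrong, namely that $A \cap R \subseteq X$ under the separation hypothesis and that all forcing constraints within a single component are mutually consistent. Nothing is missing.
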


\subparagraph*{Multiway cuts} Let~$\mathcal{T} = (T_1, \ldots, T_s)$ be a partition of a set~$T \subseteq V(G)$ of \emph{terminal} vertices in an undirected graph~$G$. A \emph{multiway cut} of~$\mathcal{T}$ in~$G$ is a vertex set~$X \subseteq V(G)$ such that for each pair~$t_i, t_j \in T \setminus X$ that belong to different parts of partition~$\mathcal{T}$, the graph~$G-X$ does not contain a path from~$t_i$ to~$t_j$. A \emph{restricted} multiway cut of~$\mathcal{T}$ is a vertex set~$X$ that is a multiway cut for~$\mathcal{T}$ such that~$X \cap T = \emptyset$, i.e., it does not contain any terminals. 

For a positive integer~$s$, a \emph{generalized $s$-partition} of a set~$T$ is a partition~$\mathcal{T}^* = (T_0, T_1, \ldots, T_s, \linebreak[1] T_X)$ of~$T$ into~$s+2$ parts, some of which can be empty. The parts~$T_0$ and~$T_X$ play a special role, which are the \emph{free} and \emph{deleted} part of~$\mathcal{T}^*$, respectively. Let~$T' = T_1 \cup \ldots \cup T_s$. A \emph{multiway cut} of $\mathcal{T}^*$ is a (non-restricted) multiway cut in~$G - T_X$ of the partition $\mathcal{T} = (T_1, \ldots, T_s)$ of~$T'$. Hence the vertices of~$T_X$ are deleted from the graph, while no cut constraints are imposed on the vertices of~$T_0$. 

A \emph{minimum multiway cut} of a generalized $s$-partition $\mathcal{T}^*$ in a graph~$G$ is a minimum-cardinality vertex set that satisfies the requirements of a multiway cut for $\mathcal{T}^*$. We denote the size of a minimum multiway cut of~$\mathcal{T}^*$ in~$G$ by~$\mwc(G, \mathcal{T}^*)$. The following cut covering lemma by Kratsch and Wahlstr\"{o}m will be useful for our algorithm.

\begin{theorem}[{\cite[Theorem 5.14]{KratschW20}}]\label{thm:multiway-cover}
	Let~$G$ be an undirected graph on $n$ vertices with a set~$T \subseteq V(G)$ of terminal vertices, and let~$s \in \mathbb{N}$ be a constant. There is a set~$Z \subseteq V(G)$ with~$|Z| = \Oh(|T|^{s+1})$ such that~$Z$ contains a minimum multiway cut of every generalized $s$-partition~$\mathcal{T}^*$ of~$T$, and we can compute such a set in randomized polynomial time with failure probability~$\mathcal{O}(2^{-n})$.
\end{theorem}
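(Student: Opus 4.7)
The plan is to follow the representative-set / matroid framework of Kratsch and Wahlström. The argument has three main ingredients: (a) encode candidate cut vertices as independent sets of a suitably constructed matroid $M$ on $V(G)$; (b) apply a representative-family theorem to extract, in deterministic polynomial time given a matrix representation, a subfamily whose ground-set union $Z$ has size $\Oh(|T|^{s+1})$; and (c) verify that $Z$ simultaneously contains a minimum multiway cut for every generalized $s$-partition.

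For (a), I would take $M$ to be the gammoid on $V(G)$ with source set $T$: a set $X \subseteq V(G)$ is independent iff $G$ contains $|X|$ pairwise internally vertex-disjoint paths from distinct terminals in $T$ to the distinct vertices of $X$. Such a gammoid has rank at most $|T|$ and admits a linear representation over a field of size $2^{\Omega(n)}$; sampling matrix entries uniformly at random yields a valid representation with failure probability $\Oh(2^{-n})$ by a Schwartz--Zippel argument applied to the Lindström polynomial that certifies independence. This is the sole source of randomization in the statement. For (b), I would observe that by Menger's theorem any minimum multiway cut $S$ of a generalized $s$-partition $\mathcal{T}^\ast = (T_0, T_1, \ldots, T_s, T_X)$ is, after deleting $T_X$ from $M$ and treating $T_0$ as a wildcard source that may be absorbed into any non-trivial part, independent in the corresponding matroid minor. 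The representative-family theorem of Marx (via Lovász's matroid truncation) then yields a representative family of size $\Oh(|T|^{s+1})$, because one only needs to enumerate partition ``signatures'' recording how $S$ interacts with each of the $s+1$ non-trivial parts of $\mathcal{T}^\ast$.

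For (c), set $Z$ to be the union of ground-set elements appearing across all representatives. By the representative property, for every generalized $s$-partition $\mathcal{T}^\ast$ there is an independent set in the family whose cardinality equals $\mwc(G, \mathcal{T}^\ast)$ and whose support forms a minimum multiway cut; since all such sets lie in $Z$, the conclusion follows, with $|Z| = \Oh(|T|^{s+1})$ inherited directly from the representative-family size.

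The main obstacle will be the asymmetric treatment of $T_0$ and $T_X$: the part $T_X$ corresponds to vertex deletion in the ambient graph (hence to matroid deletion), whereas $T_0$ \emph{relaxes} rather than \emph{enforces} separation constraints, so it cannot be modelled by deletion or contraction alone. Properly encoding this asymmetry so that a single matroid captures the cut structure simultaneously for every generalized $s$-partition -- while keeping the representative-family size polynomial in $|T|$ rather than dependent on the rank of $M$ -- is the technical heart of the construction, and this is precisely where the extra factor of $|T|$ in the exponent of the $\Oh(|T|^{s+1})$ bound arises. A secondary concern is that the derandomization of the matrix representation is not known in polynomial time, which forces the randomized running time in the statement; however, once the representation is fixed the representative-family computation itself is deterministic.
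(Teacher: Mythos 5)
First, note that the paper does not prove this statement at all: it is imported verbatim as a citation of Kratsch and Wahlstr\"{o}m (Theorem 5.14 of their JACM paper), and the surrounding text only uses it as a black box (indeed, \cref{lem:covering:corollary} only needs its existential part). So the comparison here is against the original Kratsch--Wahlstr\"{o}m proof, not against anything in this manuscript. Your high-level framework does match theirs: a gammoid over the terminal set, a randomized linear representation over a large field with one-sided failure probability $\Oh(2^{-n})$ (this is indeed the sole source of randomness), and a deterministic representative-set computation once the representation is fixed.

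However, the quantitative core of the argument is missing or misattributed, and as written step (b) does not work. You propose to apply the representative-family theorem to the family of minimum multiway cuts themselves; that family is exponentially large and not efficiently enumerable, and computing a representative subfamily requires the family as explicit input, so this is not an algorithm. In the actual proof the family fed to the representative-set machinery consists of just the $n$ per-vertex sets: one builds (after adding sink-only copies of terminals so that terminals in $T_X$, and terminals chosen into a cut, can be modelled) the direct sum of $s+1$ gammoids, one per active part of the partition structure, and each vertex $v$ contributes the $(s+1)$-element set of its copies, one in each summand. A representative family for these size-$(s+1)$ sets in a matroid of rank $\Oh(s\,|T|)$ has size roughly $\binom{\Oh(s|T|)}{s+1} = \Oh(|T|^{s+1})$ for constant $s$, and $Z$ is the set of vertices whose tuples are retained; an exchange argument then shows that for every generalized $s$-partition some minimum multiway cut survives inside $Z$. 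In particular, the exponent $s+1$ comes from the number of matroid summands (equivalently, the size of the per-vertex sets), not from ``partition signatures'' and not from the asymmetry between $T_0$ and $T_X$: the free part $T_0$ is simply unconstrained (it is not a wildcard source), and the deleted part $T_X$ is handled by the sink-only-copy construction. Your Menger-based observation that a (closest) minimum cut is independent in the relevant gammoid is correct and is indeed used, but without the direct-sum/per-vertex-tuple construction your sketch neither yields the $\Oh(|T|^{s+1})$ bound nor explains how a single precomputed set $Z$ covers all $(s+2)^{|T|}$ partitions simultaneously.
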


For a generalized $s$-partition~$\mathcal{T} = (T_0, T_1, \ldots, T_s, T_X)$ of a terminal set~$T \subseteq V(G)$ in an undirected graph~$G$, we call a multiway cut~$X$ of~$\mathcal{T}$ \emph{restricted} if it satisfies~$X \cap (\bigcup_{i=1}^s T_i) = \emptyset$. Hence a restricted multiway cut does not delete any vertex that is active as a terminal in the generalized partition. A minimum \emph{restricted} multiway cut of~$\mathcal{T}$ is a restricted multiway cut whose size is minimum among all restricted multiway cuts. We denote the minimum size of a \emph{restricted} multiway cut of~$\mathcal{T}$ in~$G$ by~$\rmwc(G, \mathcal{T})$, which we define as~$+\infty$ if no such cut exists. 

The following lemma shows that the randomization in the polynomial-time algorithm by Kratsch and Wahlstr\"{o}m can be avoided by the use of a single-exponential FPT algorithm, and that the cut covering set can be adapted to work for \emph{restricted} multiway cuts as long as we have a bound on their size.

\begin{restatable}{lemma}{lemCoveringCorollary}[$(\bigstar)$]
	\label{lem:covering:corollary}
	Let~$s \in \mathbb{N}$ be a constant. There is a \emph{deterministic} algorithm that, given an undirected $n$-vertex graph~$G$ and a set~$T \subseteq V(G)$ of terminals, runs in time~$2^{\Oh(|T|)} \cdot n^{\Oh(1)}$ and computes a set~$Z \subseteq V(G)$ with~$|Z| = \Oh(|T|^{2s+2})$ with the following guarantee: for each generalized $s$-partition~$\mathcal{T}$ of~$T$, if there is a restricted multiway cut for~$\mathcal{T}$ of size at most~$|T|$ in~$G$, then the set~$Z$ contains a \emph{minimum} restricted multiway cut of~$\mathcal{T}$.
\end{restatable}
\section{Odd Cycle Cuts} \label{sec:oct:cut}
In order to extend the ``antler'' framework of~\cite{DonkersJ24} to \PrbOCTLong (\PrbOCT), 
we define a problem-specific decomposition which we term \emph{\Occs} (\occs). Our decompositions have three parts --- a bipartite induced subgraph $X_B$, a vertex separator $X_C$ (which we call the \emph{head}), and a remainder $X_R$.  


\begin{definition}[\Occ]
    Given a graph $G$, a partition $(X_B, X_C, X_R)$ of $V(G)$ is an \emph{\Occ (\occ)}
    if (1) $G[X_B]$ is bipartite,
    (2) there are no edges between $X_B$ and $X_R$, and
    (3) $X_C \cup X_B \neq \emptyset$.
\end{definition}

We say $|X_C|$ is the \emph{width} of an \occ, and 
observe that $X_C$ hits all odd cycles in $G - X_R$.
We denote the minimum size of an OCT in $G$ by $\oct(G)$.

\begin{observation}
    If $(X_B, X_C, X_R)$ is an \occ in $G$, then $|X_C| \geq \oct(G[X_C \cup X_B])$.
\end{observation}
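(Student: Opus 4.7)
The plan is to show that $X_C$ itself is an odd cycle transversal of the subgraph $G[X_C \cup X_B]$; the inequality then follows immediately from the definition of $\oct$ as the minimum size of such a transversal.

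First, I would observe that removing the vertex set $X_C$ from the induced subgraph $G[X_C \cup X_B]$ yields precisely the induced subgraph $G[X_B]$, since removing vertices from an induced subgraph removes exactly those vertices and all incident edges. By condition (1) in the definition of an \occ, the graph $G[X_B]$ is bipartite, and hence contains no odd cycles.

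Therefore $X_C$ is an OCT of $G[X_C \cup X_B]$. Since $\oct(G[X_C \cup X_B])$ is defined as the minimum cardinality of any OCT of $G[X_C \cup X_B]$, we conclude $|X_C| \geq \oct(G[X_C \cup X_B])$, as desired.

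There is essentially no obstacle here: the statement is an immediate unpacking of the definitions of an \occ (which guarantees that $G[X_B]$ is bipartite) and of $\oct$ (as a minimum). The only subtlety worth mentioning explicitly in the proof is the observation that conditions (2) and (3) of the \occ definition play no role; only the bipartiteness of $G[X_B]$ is needed.
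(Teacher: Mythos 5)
Your proof is correct and matches the paper's (implicit) argument: the paper simply notes that $X_C$ hits all odd cycles in $G - X_R = G[X_B \cup X_C]$ since $G[X_B]$ is bipartite, which is exactly your observation that $X_C$ is an OCT of $G[X_C \cup X_B]$. Nothing further is needed.
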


Analogous to $z$-antlers \cite{DonkersJ24}, here we define a \emph{\tocc}
as a special case of an \occ.
For a graph $G$, a set $X_C \subseteq V(G)$ and an integer $z$,
an \emph{$X_C$-certificate} of order $z$ is a subgraph~$H$ of~$G$
such that $X_C$ is an optimal OCT of $H$,
and for each component $H'$ of $H$ we have $|X_C \cap V(H')| \leq z$. Throughout the paper, and starting with the following definition, we will use the convention of referring to a \tocc as $(A_B, A_C, A_R)$ to emphasize its stronger guarantees compared to an arbitrary \occ~$(X_B, X_C, X_R)$.

\begin{definition}[($z$-)\tocc] \label{def:zocc}
    An \occ $(A_B,A_C,A_R)$ of a graph $G$ is \emph{tight}
    when $|A_C| = \oct(G[A_C \cup A_B])$.
    Furthermore, $(A_B, A_C, A_R)$ is a \tocc of order $z$ (equivalently, \emph{$z$-tight OCC})
    if $G[A_C \cup A_B]$ contains an $A_C$-certificate of order $z$.
\end{definition}

Note this definition naturally implies $\oct(G) = |A_C| + \oct(G[A_R])$: the union of~$A_C$ with a minimum OCT in~$G[A_R]$ forms an OCT for~$G$ (since~$A_C$ separates~$A_B$ from~$A_R$) for which the requirement~$|A_C| = \oct(G[A_C \cup A_B])$ guarantees optimality. The main result of this section is that assuming a graph $G$ has a \zocc, there exists a \zocc $(A_B,A_C,A_R)$
such that the number of components in $G[A_B]$ is bounded in terms of $z$ and $|A_C|$.
This is an extension of \cite[Lemma 4.6]{DonkersJ24}, and we defer its proof to \Cref{sec:proofs:oct:cut}.

\begin{restatable}{lemma}{restatenumcomponents}[$(\bigstar)$]
    \label{lem:num_components}
    Let $(A_B,A_C,A_R)$ be a \zocc in a graph $G$ for some $z \geq 0$.
    There exists a set $A_B' \subseteq A_B$ such that $(A_B',A_C, A_R \cup A_B \setminus A_B')$
    is a \zocc in $G$ and $G[A_B']$ has at most $z^2|A_C|$ components.
\end{restatable}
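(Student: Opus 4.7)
\medskip

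\noindent\emph{Proof plan.} The approach is to work with the $A_C$-certificate $H$ of order $z$ guaranteed by the \zocc property. Call a component of $H$ \emph{essential} if it contains a vertex of $A_C$; the remaining components of $H$ lie entirely in $A_B$, are bipartite, and can be discarded without affecting the certificate. Let $H_1, \ldots, H_m$ denote the essential components and set $k_i := |V(H_i) \cap A_C|$, so each $k_i \leq z$ while $\sum_i k_i = |A_C|$, giving $m \leq |A_C|$. Optimality of $A_C$ as an OCT of $H$ decomposes across components: $A_C \cap V(H_i)$ must be an optimal OCT of each $H_i$, since otherwise combining smaller per-component OCTs would yield an OCT of $H$ of size strictly less than $|A_C|$, contradicting tightness.

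For each essential component $H_i$, the plan is to pick a family $\mathcal{D}_i$ of at most $z^2$ connected components of $G[A_B]$ that suffices to preserve both tightness and the order-$z$ certificate after restricting $A_B$ to $\bigcup_i \bigcup_{D \in \mathcal{D}_i} D$. For each $v \in A_C \cap V(H_i)$, optimality of $A_C \cap V(H_i)$ in $H_i$ implies $(A_C \cap V(H_i)) \setminus \{v\}$ is not an OCT of $H_i$, so there is an odd cycle in $H_i$ through $v$ whose other vertices lie in $A_B$ (as $G[A_B]$ is bipartite) and hence in a single connected component of $G[A_B]$. Including only these witnesses uses one component of $G[A_B]$ per $v$, but this may destroy optimality of $A_C \cap V(H_i)$ in the restricted graph: discarded $A_B$-vertices could have been essential for keeping odd cycles through distinct $A_C$-vertices separated, and their removal might allow a single $A_B$-vertex to hit two previously independent cycles at once. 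Guarding against this merging requires earmarking up to $z$ additional components of $G[A_B]$ per vertex $v$, yielding $|\mathcal{D}_i| \leq z \cdot k_i \leq z^2$.

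Setting $A_B' := \bigcup_i \bigcup_{D \in \mathcal{D}_i} D$ produces a union of connected components of $G[A_B]$, so no edges cross between $A_B'$ and the new remainder $A_R \cup (A_B \setminus A_B')$, and $(A_B', A_C, A_R \cup (A_B \setminus A_B'))$ is an \occ. The restricted subgraph $\tilde{H} \subseteq G[A_C \cup A_B']$ obtained by intersecting each $H_i$ with $A_C \cup \bigcup_{D \in \mathcal{D}_i} D$ then still has $A_C$ as an optimal OCT (by construction of the $\mathcal{D}_i$), and each component of $\tilde{H}$ contains at most $z$ vertices of $A_C$ (an inherited property from the $H_i$), so $(A_B', A_C, A_R \cup (A_B \setminus A_B'))$ is a \zocc. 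The total number of components of $G[A_B']$ is $\sum_i |\mathcal{D}_i| \leq z^2 \cdot m \leq z^2 |A_C|$, matching the claimed bound.

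The main obstacle is making the second step precise: deciding \emph{which} $z$ extra components of $G[A_B]$ to earmark per vertex $v \in A_C \cap V(H_i)$ so that the sub-component of the restricted certificate containing $v$ still forces $v$ into every optimal OCT. This requires a fine-grained analysis of how odd cycles through distinct $A_C$-vertices of $H_i$ can share $A_B$-vertices, likely leveraging bipartiteness of $H_i - A_C$ and a parity-based separation argument in the spirit of \cref{lem:AR-separation} to identify which $A_B$-components act as unavoidable separators between witness cycles for different $A_C$-vertices.
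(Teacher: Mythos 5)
There is a genuine gap: your proposal is a plan that stops exactly where the real work begins, and you say so yourself in the last paragraph. The reduction to the certificate $H$, the decomposition of optimality across components, and the existence of a per-vertex witness odd cycle are all fine, but the heart of the lemma is proving that after throwing away components of $H-A_C$ the set $A_C$ is \emph{still} an optimal OCT of the surviving subgraph, and your proposal never specifies which components to keep nor why keeping them suffices. Your quantitative bookkeeping (one witness component per $v$ plus ``up to $z$ additional components per $v$'') is asserted without a selection rule or a proof of sufficiency, and a per-vertex budget is not obviously the right currency: what can cause $\oct$ to drop is the loss of \emph{connections between pairs} of cut vertices, so the budget naturally has to be allocated per pair of vertices of $A_C \cap V(H_i)$ and per parity, not per single vertex.

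This is precisely how the paper resolves it (\cref{lem:redundant-component,lem:num-components-certificate}). Instead of selecting witnesses up front, it deletes components of $H-A_C$ one at a time, and a component $D$ may be deleted only if every ``trace'' it supplies is supplied by at least $z$ other components: (1) for every $x\in A_C$ such that $H[\{x\}\cup V(D)]$ contains an odd cycle, at least $z$ other components yield an odd cycle through $x$, and (2) for every pair $x,y\in A_C$ and every parity $p$ such that $D$ supplies an $(x,y)$-path of parity $p$, at least $z$ other components supply such a path. Safety is shown by an exchange argument: if $\oct$ dropped, take an OCT $X$ of the reduced component with $|X|<z$, take an odd cycle of $H'-X$ minimizing its intersection with $V(D)$, and reroute its excursion through $D$ (either a closed odd walk at a single $x$, or an $(x,y)$-path of the same parity) through one of the $\geq z$ alternative components, at least one of which avoids $X$; this contradicts minimality. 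After exhaustive deletion, each surviving component fails criterion (1) for some $x$ or criterion (2) for some pair and parity, and counting at most $z$ survivors per vertex and at most $2z$ per pair gives $z|A_C| + z(z-1)|A_C| = z^2|A_C|$ components, after which $A_B'$ is taken to be the union of the $G[A_B]$-components meeting the pruned certificate. The pair-with-parity condition is exactly the ``fine-grained analysis'' your plan defers, so as written the proposal does not constitute a proof.
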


Finally, we introduce the notion of an \emph{imposed separation problem} whose solutions naturally correspond to odd cycle transversals of
specific subgraphs. 

\begin{definition} \label{def:imposed:separation:problem}
	Let $(X_B, X_C, X_R)$ be an OCC of $G$, and let $f_B \colon X_B \rightarrow \{0, 1\}$ be a proper $2$-coloring of $G[X_B]$. Let $C_1, C_2 \subseteq X_C$ be two disjoint subsets of $X_C$ and let $f_{C} \colon C_1 \rightarrow \{0, 1\}$ be a (not necessarily proper) $2$-coloring of the vertices in $C_1$. Based on this $4$-tuple of objects $(C_1, C_2, f_C, f_B)$, we define three (potentially overlapping) subsets $A, R, N \subseteq X_B$.
  	\begin{enumerate}
		\item Let $A$ be the set of vertices $v_b \in X_B$ with a neighbor $v_c \in C_1$ such that $f_B(v_b) = f_C(v_c)$.
		\item Let $R$ be the set of vertices $v_b \in X_B$ with a neighbor $v_c \in C_1$ such that $f_B(v_b) \neq f_C(v_c)$.
		\item Finally, let $N := N_G(C_2) \cap X_B$.
	\end{enumerate} 
	We refer to the problem of finding a smallest $\{A, R, N\}$-separator in $G[X_B]$ as \emph{the $\{A, R, N\}$-separation problem imposed onto $G[X_B]$ by $(C_1, C_2, f_C, f_B)$}.
\end{definition}

To see the connection between solutions and OCTs, one may let $C_1$ and $f_B$ in this definition correspond to $W$ and $c$ respectively in \cref{lem:AR-separation}, while the color classes of $f_C$ correspond to the sets $W_0$ and $W_1$ respectively. As shown below in \cref{lem:tight:occ:cut:optimal:ARN:separator}, we can recognize parts of tight OCCs as optimal solutions to specific imposed separation problems.

Although \cref{def:imposed:separation:problem} requires $f_B$ and $f_C$ to be colorings of $X_B$ and $C_1$ respectively, we sometimes abuse the notation by providing colorings whose domains are supersets of these intended domains. In these cases, one may interpret the definition of the imposed separation problem as if given the restrictions of these colorings to their respective intended domains.

One important role of these separation problems is to allow us to characterize intersections of two OCCs when at least one is tight. Specifically, in \cref{lem:tight:occ:cut:optimal:ARN:separator}, we show that the intersection of one OCC's head with the other OCC's bipartite part forms an optimal solution to a specific $3$-way separation problem, which is even optimal for a corresponding $2$-way problem.

\begin{restatable}{lemma}{lemOCCARNSeparators}[$(\bigstar)$] \label{lem:tight:occ:cut:optimal:ARN:separator}
	Let $(X_B, X_C, X_R)$ be a (not necessarily tight) OCC in the graph $G$ and let $(A_B, A_C, A_R)$ be a tight OCC in $G$. Let $f_X \colon X_B \rightarrow \{0, 1\}$ and $f_A \colon A_B \rightarrow \{0, 1\}$ be proper $2$-colorings of $G[X_B]$ and $G[A_B]$ respectively. Let $A$, $R$ and $N$ be the three sets to be separated in the separation problem imposed onto $G[X_B]$ by $(X_C \cap A_B, X_C \cap A_R, f_A, f_B)$ and let their names correspond to their roles as defined in \cref{def:imposed:separation:problem}. Then, $A_C \cap X_B$ is both a minimum-size $\{A, R\}$-separator and a minimum-size $\{A, R, N\}$-separator in $G[X_B]$.
\end{restatable}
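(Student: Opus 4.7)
The plan is to split the lemma into two claims: (i) $A_C \cap X_B$ is an $\{A,R,N\}$-separator in $G[X_B]$; and (ii) every $\{A,R\}$-separator $S$ in $G[X_B]$ satisfies $|S| \geq |A_C \cap X_B|$. Since every $\{A,R,N\}$-separator is also an $\{A,R\}$-separator, the two claims together show $A_C \cap X_B$ is a minimum solution to \emph{both} separation problems.

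For (i), I would observe that any path $P$ in $G[X_B] \setminus (A_C \cap X_B)$ lies in $X_B \cap (A_B \cup A_R)$, and because the OCC $(A_B,A_C,A_R)$ forbids edges between $A_B$ and $A_R$, $P$ must be contained entirely in $A_B$ or entirely in $A_R$. A vertex of $A \cup R$ has a neighbor in $C_1 \subseteq A_B$ and so cannot sit in $A_R$; a vertex of $N$ has a neighbor in $C_2 \subseteq A_R$ and so cannot sit in $A_B$. This immediately excludes $\{A,N\}$- and $\{R,N\}$-paths. For an $\{A,R\}$-path $u_0,\dots,u_k$ in $A_B$ with $C_1$-witnesses $v_0,v_k$, I would extend it to a walk $v_0,u_0,\dots,u_k,v_k$ in $G[A_B]$ and exploit that $f_A$ alternates along this walk while $f_X$ alternates along the subpath $u_0,\dots,u_k$ inside $G[X_B]$. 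The defining properties $f_X(u_0)=f_A(v_0)$ and $f_X(u_k)\neq f_A(v_k)$ then become inconsistent once both parities are unwound, yielding a contradiction.

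For (ii), the key step is to invoke \cref{lem:AR-separation} on the auxiliary graph $G[X_B \cup C_1]$. Here $W := C_1$ is an OCT (removing it leaves $G[X_B]$, which is bipartite), and because $C_1 \subseteq A_B$, the coloring $f_A|_{C_1}$ partitions $W$ into two independent sets $W_0,W_1$. Taking $c := f_X$ as the baseline proper 2-coloring of $G[X_B]$, a direct unfolding of the definitions shows that the sets ``$A$'' and ``$R$'' produced by the lemma coincide with ours. Hence for every $\{A,R\}$-separator $S$ in $G[X_B]$, the graph $G[(X_B \setminus S) \cup C_1]$ admits a proper 2-coloring $h$ with $h|_{C_1} = f_A|_{C_1}$.

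To finish, I would extend $h$ to a proper 2-coloring $h'$ of $G[(A_B \cup A_C) \setminus T]$ where $T := (A_C \setminus X_B) \cup (S \cap (A_B \cup A_C))$. Its vertex set is $(A_B \setminus S) \cup ((A_C \cap X_B) \setminus S)$, which differs from the domain of $h$ by containing $A_B \cap X_R$ in place of $A_R \cap X_B \setminus S$, so I would set $h' := h$ on the overlap and $h' := f_A$ on $A_B \cap X_R$. The OCC conditions ensure an $A_B \cap X_R$ vertex has no $X_B$-neighbor and no $A_R$-neighbor, so its only neighbors in the domain lie in $C_1 \cup (A_B \cap X_R)$, where $h'$ already agrees with $f_A$; properness follows. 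Thus $T$ is an OCT of $G[A_B \cup A_C]$ of size at most $|A_C \setminus X_B| + |S|$, and tightness of $(A_B,A_C,A_R)$ forces $|A_C| \leq |A_C \setminus X_B| + |S|$, i.e., $|A_C \cap X_B| \leq |S|$. The main obstacle in this plan is the book-keeping around the two OCCs possibly crossing each other ($X_B$ may meet $A_R$, and $A_B$ may meet $X_R$); these crossings are exactly what necessitate the three-way $\{A,R,N\}$ formulation and the separate handling of $A_B \cap X_R$ in the extension step.
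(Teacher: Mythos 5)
Your proposal is correct and follows essentially the same route as the paper's proof: minimality is obtained by turning an arbitrary $\{A,R\}$-separator $S$ in $G[X_B]$ into an odd cycle transversal of $G[A_B \cup A_C]$ of size at most $|A_C \setminus X_B| + |S|$ via \cref{lem:AR-separation} applied to $G[X_B \cup (X_C \cap A_B)]$ and a gluing of the resulting coloring with $f_A$ across $X_C \cap A_B$, after which tightness of $(A_B,A_C,A_R)$ gives $|S| \geq |A_C \cap X_B|$, while the $N$-part is reduced to the absence of edges between $A_B$ and $A_R$ --- exactly the content of the paper's two claims, combined in the same way ($\{A,R,N\}$-separators are in particular $\{A,R\}$-separators). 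The only inessential deviation is that you certify that $A_C \cap X_B$ separates $A$ from $R$ by a direct parity argument along a hypothetical violating path extended by its $C_1$-witnesses, whereas the paper applies \cref{lem:AR-separation} a second time and exhibits an explicit proper $2$-coloring combining $f_A$ and $f_X$; both arguments are sound.
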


This will prove to be a useful property in \Cref{sec:reduce:occ} by which we are able to recognize part of a tight OCC $(A_B, A_C, A_R)$ in an arbitrary graph. We complement it with the statement below, indicating that the intersection $A_C \cap X_B$ is even bounded in size.

\begin{lemma} \label{lem:small:occ:intersection}
	Let $(X_B, X_C, X_R)$ be a (not necessarily tight) OCC in the graph $G$ and let $(A_B, A_C, A_R)$ be a tight OCC in $G$. Then $|A_C \cap X_B| \leq |X_C|$.
\end{lemma}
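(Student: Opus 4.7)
The plan is to use a swap argument: replace the portion of $A_C$ lying inside $X_B$ by the vertex set $X_C$ of the other OCC, and argue that the resulting set is still an OCT of $G[A_B \cup A_C]$. Since $A_C$ is by assumption an optimal OCT for $G[A_B \cup A_C]$, the swap cannot strictly decrease the OCT size, and this will immediately yield the desired inequality.

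Concretely, I would define $A_C' := (A_C \setminus X_B) \cup X_C$ and aim to show that $A_C' \cap (A_B \cup A_C)$ is an OCT of $G[A_B \cup A_C]$. To do this I would analyze the graph $H := G[(A_B \cup A_C) \setminus A_C']$ and argue it is bipartite. A short computation shows that its vertex set equals $(A_B \setminus X_C) \cup (A_C \cap X_B)$. The key structural step is to split $A_B \setminus X_C$ as $(A_B \cap X_B) \cup (A_B \cap X_R)$ (using $X_B \cap X_C = \emptyset$) and then invoke the fact that $X_C$ separates $X_B$ from $X_R$ in $G$: this means $H$ decomposes into $G[A_B \cap X_R]$ and $G[(A_B \cup A_C) \cap X_B]$ with no edges between them. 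The first is a subgraph of the bipartite graph $G[A_B]$, and the second is a subgraph of the bipartite graph $G[X_B]$, so both are bipartite; a disjoint union of bipartite graphs is bipartite, hence $H$ is bipartite.

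With $H$ bipartite, $A_C' \cap (A_B \cup A_C)$ is an OCT of $G[A_B \cup A_C]$, so by the tightness of $(A_B, A_C, A_R)$ we have $|A_C| \leq |A_C' \cap (A_B \cup A_C)|$. A direct bound gives $|A_C' \cap (A_B \cup A_C)| \leq |A_C \setminus X_B| + |X_C| = |A_C| - |A_C \cap X_B| + |X_C|$, and rearranging yields $|A_C \cap X_B| \leq |X_C|$, which is what we wanted.

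The main obstacle is verifying that the swapped set really is an OCT — in particular that the vertices of $A_C \cap X_B$, which remain in the graph, do not create any odd cycle together with the surviving vertices of $A_B$. This is where we crucially use the defining property of an OCC that $X_C$ separates $X_B$ from $X_R$: it guarantees that any edge incident to $A_C \cap X_B \subseteq X_B$ has its other endpoint also in $X_B$ (since $X_C$ is removed and $X_R$ is unreachable), so the bipartiteness of $G[X_B]$ controls all such edges while the bipartiteness of $G[A_B]$ controls the rest.
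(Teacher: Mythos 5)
Your proposal is correct and follows essentially the same exchange argument as the paper: replace $A_C \cap X_B$ by $X_C \cap (A_B \cup A_C)$ and invoke tightness of $(A_B, A_C, A_R)$, using exactly the same structural facts (bipartiteness of $G[X_B]$ and $G[A_B]$, and the absence of $X_B$--$X_R$ edges). The only cosmetic difference is that you verify the swapped set is an OCT by showing the residual graph is bipartite and then do a direct counting step, whereas the paper argues by contradiction that every odd cycle of $G[A_B \cup A_C]$ is hit; these are equivalent.
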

\begin{proof}
	Suppose for contradiction that $|A_C \cap X_B| > |S|$. Then, $A_C' := (A_C \setminus X_B) \cup (X_C \cap (A_B \cup A_C))$ is a subset of $A_B \cup A_C$ that is strictly smaller than $A_C$. Now, showing that $A_C'$ is an OCT of $G[A_B \cup A_C]$ contradicts the assumption that $A_C$ is a smallest such OCT by virtue of $(A_B, A_C, A_R)$ being a tight OCC.
	
	To show that $A_C'$ is an OCT of $G[A_B \cup A_C]$, we let $F$ be an arbitrary odd cycle in this graph and show that it intersects $A_C'$. First, if $F$ intersects $X_C$, it intersects $A_C'$ in particular, since $X_C \cap (A_B \cup A_C) \subseteq A_C'$.
	
	Otherwise, since $X_C$ separates $X_B$ and $X_R$ in $G$, $F$ is completely contained in either $G[X_B]$ or $G[X_R]$. The former is not possible, since $G[X_B]$ is bipartite by assumption, so $F$ lives in $G[X_R]$. Furthermore, since $F$ was assumed to live in $G[A_B \cup A_C]$ and $G[A_B]$ is bipartite, $F$ intersects $A_C$. In particular, as we found $F$ to live in $G[X_R]$, it intersects $A_C \cap X_R$ which is a subset of $A_C'$ by construction. Hence, $F$ intersects $A_C'$ in any case.
\end{proof}

\section{Finding \Occs} \label{sec:find:occs}

Our ultimate goal is to show that if the graph contains any \tocc $(X_B,X_C,X_R)$ with $|X_C| \leq k$, 
then we can produce a \tocc with $|X_C| \leq k$ \emph{and} $|X_B|$ upper-bounded by some function of $k$.
To achieve this, we first show that we can efficiently find some \occ where $|X_B|$ is large enough, and then (in \Cref{sec:reduce:occ}) 
that we can reduce any such cut so that $|X_B|$ is small without destroying any essential structure of the input graph.

Specifically, we say an \occ $(X_B,X_C,X_R)$ is \emph{reducible} with respect to some function $g_r$ if $|X_B| > g_r(|X_C|)$. 
Our results all hold for a specific polynomial $g_r(x)$ in $\Theta(x^{16})$, which we specify in \Cref{sec:reduction:proofs}.
We say an \occ $(X_B,X_C,X_R)$ is a \emph{single-component} \occ if $G[X_B]$ is connected.


Given a graph $G$, our goal is to output a reducible \occ efficiently
assuming that $G$ contains a single-component \occ $(X_B, X_C, X_R)$
with $|X_B| > g_r(2 |X_C|)$ and~$|X_C| \leq k$.
We achieve this by color coding of the vertices in $G$ (see definitions in \cref{sec:color:coding} for details).
Consider a coloring $\chi\colon V(G) \to \{\cB, \cC\}$.
For an integer $\ell$,
an \occ $(X_B, X_C, X_R)$ with $|X_B| \geq \ell$
is \emph{$\ell$-properly colored} by $\chi$
if $X_C \subseteq \chi^{-1}(\cC)$ and there is a set of~$\ell$ vertices of~$X_B$ that are colored $\cB$ and induce a connected subgraph of~$G$.
First, we show how to construct an \occ with large $X_B$ from a proper coloring.

\begin{restatable}{lemma}{lemFindColoredOCC}[$(\bigstar)$] \label{lem:finding-occ-colored}
  Given a graph $G$, integers $k, \ell$, and a coloring $\chi \colon V(G) \to \{\cB, \cC\}$ of $V(G)$ that $\ell$-properly colors
  a single-component \occ $(X_B, X_C, X_R)$ with $|X_C| \leq k$,
  an \occ $(X_B', X_C', X_R')$ such that $|X_B'| \geq \ell$ and $|X_C'| \leq 2k$
  can be found in polynomial time.
\end{restatable}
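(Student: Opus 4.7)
My plan is to identify a candidate bipartite subgraph via the coloring $\chi$ and bound its boundary through a matching-based argument. Let $B = \chi^{-1}(\cB)$; the algorithm enumerates the connected components of $G[B]$ in polynomial time, and for each component $K$ that is bipartite with $|K| \geq \ell$, it attempts to construct an OCC with $X_B' \supseteq K$. The critical candidate is the component $K^*$ containing the hidden connected subset $Y \subseteq X_B \cap B$: since $X_C \subseteq \chi^{-1}(\cC)$ separates $X_B$ from $X_R$ in $G$ and any path in $G[B]$ avoids $\chi^{-1}(\cC)$, we have $K^* \subseteq X_B$, so $|K^*| \geq \ell$ and $G[K^*]$ is bipartite as an induced subgraph of the bipartite $G[X_B]$.

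To construct $X_B' \supseteq K^*$, take the $2$-coloring $c \colon K^* \to \{0,1\}$ and partition $N_G(K^*)$ into three sets $N_0$, $N_1$, $N_{01}$ according to which $c$-colors appear among each vertex's $K^*$-neighbors. Bipartiteness of $G[X_B]$ forces $N_{01} \subseteq X_C$, and every edge within $N_i$ must have at least one endpoint in $X_C$ (otherwise both endpoints would be color-$(1{-}i)$ vertices of $X_B$, breaking the bipartition). Compute maximum matchings $M_0, M_1$ in $G[N_0], G[N_1]$ and let $V_i = V(M_i)$; each $V_i$ is a vertex cover of $G[N_i]$ (endpoint sets of maximum matchings are covers). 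Set $X_B' = K^* \cup (N_0 \setminus V_0) \cup (N_1 \setminus V_1)$, extending the coloring by giving $N_i \setminus V_i$ color $1{-}i$; this yields a proper $2$-coloring of $G[X_B']$ and $|X_B'| \geq \ell$.

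For the cut bound, write $a = |X_C \cap N_{01}|$ and $b = |X_C \cap N_0| + |X_C \cap N_1|$, so $a + b \leq |X_C| \leq k$. Since $X_C \cap N_i$ covers all edges of $G[N_i]$, we have $|M_i| \leq |X_C \cap N_i|$ (matching number is at most any vertex cover), so $|V_i| \leq 2|X_C \cap N_i|$. The contribution to the cut from $N_G(K^*) \setminus X_B'$ is at most $|N_{01}| + |V_0| + |V_1| \leq a + 2b = (a+b) + b \leq 2k$. The main obstacle will be that $N_G(X_B')$ may include additional vertices outside $N_G(K^*)$: namely, neighbors of $N_i \setminus V_i$ reached through edges to vertices not in $K^*$, which is only problematic when some vertex in $N_i \setminus V_i$ actually lies in $X_C$ and can reach into $X_R$. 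I expect to resolve this either by iterating the absorption procedure until the boundary stabilizes, or by a bookkeeping argument showing that any such extra boundary vertices remain covered within the $2k$ budget, possibly by refining the choice of $V_i$ so that $N_i \setminus V_i$ is forced to lie inside $X_B$ without enlarging the cover.
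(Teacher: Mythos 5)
Your first step matches the paper: the component $K^*$ of $G[\chi^{-1}(\cB)]$ containing the $\ell$-vertex connected $\cB$-colored set satisfies $K^* \subseteq X_B$, is bipartite, and has size at least $\ell$. The local absorption step is also internally sound as far as it goes ($N_{01} \subseteq X_C$, each edge of $G[N_i]$ is covered by $X_C \cap N_i$, hence $|V_i| \le 2|X_C \cap N_i|$ and the absorbed graph stays bipartite). But the gap you flag at the end is not a loose end -- it is the crux of the lemma, and your construction does not close it. The obligation is $|N_G(X_B')| \le 2k$, not merely $|N_G(K^*) \setminus X_B'| \le 2k$. A vertex $v \in N_i \setminus V_i$ that you absorb may lie in $X_B$ but have arbitrarily many neighbors in $X_B \setminus K^*$ that $\chi$ colors $\cC$ (the coloring only promises that \emph{some} $\ell$ vertices of $X_B$ are $\cB$-colored and connected); all of those neighbors land in $N_G(X_B')$ and must be placed in $X_C'$, and nothing bounds their number in terms of $k$. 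If instead $v \in X_C$, its neighbors in $X_R$ create the same problem. Iterating the absorption does not obviously repair this: once the growing set $C$ contains vertices of $X_C$, the two structural facts your budget rests on ($N_{01} \subseteq X_C$ and $X_C \cap N_i$ covering $G[N_i]$) no longer follow from bipartiteness of $G[X_B]$, and the $a+2b$ accounting does not compose across rounds, so neither termination with a small boundary nor the $2k$ bound is established.

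This missing piece is exactly what the paper delegates to an external tool: the $(\bip,k)$-separation machinery of Jansen et al.\ (\cref{cor:bipartite-separation}), a polynomial-time $2$-approximation that, given the connected set $K^*$, either returns a pair $(C,S)$ with $K^* \subseteq C$, $G[C]$ bipartite, $N_G(C) \subseteq S$ and $|S| \le 2k$, or certifies that no such separation of order $k$ exists. Since $(X_B, X_C)$ itself is a $(\bip, k)$-separation covering $K^*$ with $|X_C| \le k$, the certification branch cannot occur for the right component, and the returned $(C, S, V(G) \setminus (C \cup S))$ is the desired \occ. That algorithm works by a global separator computation (in the spirit of the odd-cycle-transversal-to-cut reduction of \cref{lem:AR-separation}) rather than a layer-by-layer greedy absorption; a single-layer matching argument of the kind you propose only controls the first neighborhood and is unlikely to yield the factor-$2$ guarantee without essentially reproving that result. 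To fix your proof, either invoke such a bipartite-separation subroutine directly, or give a genuinely global argument (e.g.\ a minimum separator in an auxiliary two-copy graph) that bounds the entire boundary of the final bipartite piece by $2k$.
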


\begin{proof}[Proof sketch]
We iterate over the connected components of $G[\chi^{-1}(\cB)]$. 
For any component which is both large ($\geq \ell$) and bipartite, we try to find an \occ of small enough width
where the component is contained in the $X_B$ side of the cut. To do this, we use the machinery of 
bipartite separations introduced in Jansen et al.~\cite{JansenKW22} (see \Cref{sec:proofs:oct:finding} for details). 
Intuitively, given a vertex set $C$ which induces a connected bipartite subgraph, they either find a set of at most $2k$ vertices
which separates $C' \supseteq C$ from the remainder of the graph so that $G[C']$ is bipartite, or certify that $C$ is not part of $X_B$ for any \occ with width $\leq k$. 
\end{proof}



Now, we use this coloring scheme to find a reducible \occ,
assuming that a graph $G$ has a single-component \occ $(X_B, X_C, X_R)$ with large $X_B$.
%

\begin{lemma}\label{lem:finding-occ}
  There exists a $2^{\Oh(k^{16})} n^{\Oh(1)}$-time algorithm that,
  given a graph $G$ and an integer $k$,
  either determines that $G$ does not contain a single-component \occ $(X_B,X_C,X_R)$ of width at most $k$ with $|X_B|>g_r(2k)$ or
  outputs a reducible \occ in $G$.
\end{lemma}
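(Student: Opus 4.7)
The plan is to combine color coding over $V(G)$ with palette $\{\cB, \cC\}$ and the routine of \cref{lem:finding-occ-colored}. Set $\ell := g_r(2k) + 1 \in \Oh(k^{16})$. Suppose first that $G$ contains a single-component \occ $(X_B, X_C, X_R)$ with $|X_C| \leq k$ and $|X_B| > g_r(2k)$. Since $G[X_B]$ is connected and has at least $\ell$ vertices, one can extract (for instance by truncating a BFS tree of $G[X_B]$) a connected subset $X_B^{\star} \subseteq X_B$ of exactly $\ell$ vertices. Any coloring $\chi \colon V(G) \to \{\cB, \cC\}$ that assigns $\cC$ to every vertex of $X_C$ and $\cB$ to every vertex of $X_B^{\star}$ then $\ell$-properly colors this \occ, exactly matching the precondition of \cref{lem:finding-occ-colored}.

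To locate such a coloring deterministically, I would construct an $(n, s)$-universal family $\mathcal{F}$ of colorings for $s := k + \ell \in \Oh(k^{16})$: for every $S \subseteq V(G)$ of size at most $s$ and every $T \subseteq S$, some $\chi \in \mathcal{F}$ realizes $\chi^{-1}(\cC) \cap S = T$. Standard color-coding constructions (referenced in \cref{sec:color:coding}) yield such a family of size $2^{\Oh(k^{16})} \cdot \log n$ in $2^{\Oh(k^{16})} \cdot n^{\Oh(1)}$ time. Applying universality to the specific choice $S := X_C \cup X_B^{\star}$ and $T := X_C$ shows that, whenever the hypothesised \occ exists, $\mathcal{F}$ contains at least one coloring that $\ell$-properly colors it.

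The algorithm then iterates over all $\chi \in \mathcal{F}$, running the polynomial-time routine of \cref{lem:finding-occ-colored} with parameters $k$ and $\ell$ on each. Whenever that routine returns an \occ $(X_B', X_C', X_R')$ with $|X_B'| \geq \ell$ and $|X_C'| \leq 2k$, monotonicity of $g_r$ gives $|X_B'| \geq \ell > g_r(2k) \geq g_r(|X_C'|)$, so $(X_B', X_C', X_R')$ is reducible and we return it. If no invocation yields such an \occ, the universality property allows us to conclude that no single-component \occ of width at most $k$ with $|X_B| > g_r(2k)$ can exist in $G$. The total running time is $|\mathcal{F}|$ times a polynomial, i.e., $2^{\Oh(k^{16})} \cdot n^{\Oh(1)}$, matching the claim.

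The main obstacle is mostly bookkeeping rather than conceptual: the size $s \in \Oh(k^{16})$ forced by $\ell = g_r(2k)+1$ is exactly what pins down the $2^{\Oh(k^{16})}$ factor in the running time, so the exponent is tied to whatever $g_r$ is eventually fixed in \cref{sec:reduction:proofs}. Beyond that, the argument is a routine color-coding reduction onto the sub-problem already resolved by \cref{lem:finding-occ-colored}.
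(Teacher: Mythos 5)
Your proposal is correct and follows essentially the same route as the paper: fix $\ell = g_r(2k)+1$, note a connected $\ell$-vertex subset of $X_B$ exists, build an $(n,s)$-universal family with $s = k+\ell \in \Oh(k^{16})$ to guarantee some coloring $\ell$-properly colors the hypothesised \occ, and run the algorithm of \cref{lem:finding-occ-colored} on each coloring, returning any output with $|X_B'| \geq \ell$ and $|X_C'| \leq 2k$ as a reducible \occ. No substantive differences from the paper's argument.
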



\begin{proof}
  We will invoke the algorithm from \Cref{lem:finding-occ-colored} multiple times for $\ell=g_r(2k) + 1$.
  If we supply a coloring that $\ell$-properly colors~$(X_B, X_C, X_R)$, then the algorithm is guaranteed to find an \occ $(X_B', X_C', X_R')$
  such that $|X_B'| > g_r(2k)$ and $|X_C'| \leq 2k$,
  which is reducible as $|X_B'| > g_r(2k) \geq g_r(|X_C'|)$.
  If all relevant colorings fail to find such a reducible \occ,
  then we can conclude that $G$ does not contain a single-component \occ $(X_B, X_C, X_R)$
  with $|X_C| \leq k$ and $|X_B| \geq \ell >g_r(2k)$.

  Let $X_B' \subseteq X_B$ be an arbitrary vertex set of size $\ell$ that induces a connected subgraph of $G$.
  Since $G[X_B]$ is connected, such $X_B'$ must exist.
  Observe that we obtain an $\ell$-proper coloring if $X_C \cup X_B'$ are colored correctly.
  Let $s = |X_C \cup X_B'| = k + g_r(2k) + 1 = \Oh(k^{16})$.

  Using an~$(n,k)$-universal set, which is a well-known pseudorandom object~\cite{NaorSS95, CyganFKLMPPS15} used to derandomize applications of color coding (see \Cref{thm:universal-set}), we can construct a family of $2^{\Oh(s)}\log n$ many subsets~$A_1, \ldots, A_{2^{\Oh(s)} \log n}$ with the guarantee that for each set~$S \subseteq V(G)$ of size~$s$, for each subset~$S'$ of~$S$, there exists a set in the family with~$A_i \cap S = S'$. This can be done in $2^{\Oh(s)} n \log n = 2^{\Oh(k^{16})} n \log n$ time. From this family, we can construct a family of colorings that is guaranteed to include one that $\ell$-properly colors a suitable \occ~$(X_B, X_C, X_R)$ if one exists. To derive a coloring~$\chi_i$ from a member~$A_i \subseteq V(G)$ of the $(n,s)$-universal set, it suffices to pick~$\chi(a \in A) = \cR$ and~$\chi(a \notin A) = \cB$.

  %
  We run the $n^{\Oh(1)}$-time algorithm from \Cref{lem:finding-occ-colored} for each coloring,
  which results in the overall runtime $2^{\Oh(k^{16})} n^{\Oh(1)}$.
\end{proof}

\section{Reducing Odd Cycle Cuts} \label{sec:reduce:occ}

Given an OCC $(X_B, X_C, X_R)$ of $G$ with $|X_B| > g_r(|X_C|)$,
the next step is to ``shrink'' $X_B$ in a way that preserves some of the structure of the input graph. In this section, we give a reduction to do this and prove that it preserves the general structure of minimum-size OCTs and of tight OCCs in the graph. The reduction starts with a marking scheme that is discussed separately in \cref{ssec:shrinking:step:1}. We give the full reduction, which includes this marking scheme as a subroutine, in \cref{ssec:shrinking:step:2}. The reduction will only affect $G[X_B]$ and the edge set between $X_B$ and $X_C$, which already ensures that an important part of the input graph is maintained.

\subsection{A marking scheme for the reduction} \label{ssec:shrinking:step:1}
The goal of the marking scheme is to mark a set $B^\ast \subseteq X_B$ of size $|X_C|^{\Oh(1)}$ as ``interesting'' vertices that the reduction should not remove or modify. Intuitively, we want this set to contain vertices which we expect might be part of the cut part of a tight OCC in $G$. More precisely, we guarantee that for every tight OCC in $G$ there is a (possibly different) tight OCC $(A_B, A_C, A_R)$ such that $A_C \cap X_B$ is contained in the marked set $B^\ast$.

As seen in \cref{lem:tight:occ:cut:optimal:ARN:separator}, for every tight OCC $(A_B, A_C, A_R)$ in the graph, the intersection $A_C \cap X_B$ forms an optimal solution to a specific imposed separation problem (\cref{def:imposed:separation:problem}). As such, it suffices if $B^\ast$ is a \emph{cut covering set} for these imposed separation problems.

Indeed, the key ingredient of the algorithm presented below is the computation of such a cut covering set. Preceding this computation is a graph reduction ensuring that the computed set covers precisely the imposed separation problems. In \cref{lem:covering:corollary}, we will show that a cut covering set can be computed in deterministic FPT time parameterized by the size of the terminal set, which leads to a total running time of $2^{\Oh(|X_C|)} n^{\Oh(1)}$ time for the marking step. 

\begin{marking} \label{algo:mark:B:star} 
	Consider the following algorithm.

	Input: A graph $G$ and an OCC $(X_B, X_C, X_R)$ of $G$. 

	Output: Marked vertices $B^\ast \subseteq X_B$.
	\begin{enumerate}
		\item \label{itm:mark:coloring} Find a proper 2-coloring $f_X \colon X_B \to \{0,1\}$ of $G[X_B]$.
		\item \label{itm:mark:auxiliary} Construct an auxiliary (undirected) graph $G'$, initialized to a copy of $G[X_B]$. For each $v \in X_C$, do the following:
		\begin{itemize}
				\item Add vertices $v^{(0)}$ and $v^{(1)}$ to $G'$.
				\item For each neighbor $u \in N_G(v) \cap X_B$, add an edge $v^{(f_X(u))}u$.
		\end{itemize}
		Let $T$ be the set $\{v^{(i)} \mid v \in X_C, i \in \{0,1\}\}$.
		Note that $|T|=2|X_C|$.
		\item \label{itm:mark:cut:covering} Compute a cut covering set~$B^\ast \subseteq V(G')$ via \Cref{lem:covering:corollary} such that for every partition~$\mathcal{T}^* = (T_0, T_1, T_2, T_3, T_X)$ of~$T$, the set~$B^*$ contains a minimum-size solution to the following problem: 
		\begin{itemize}
			\item find a vertex set~$S \subseteq V(G') \setminus (T_1 \cup T_2 \cup T_3)$ such that~$S$ separates~$T_i$ and~$T_j$ in the graph~$G' - T_X$ for all~$1 \leq i < j \leq 3$, 
		\end{itemize}
		as long as this problem has a solution of size at most~$|T|$.
	\end{enumerate}
\end{marking}
%
%
\begin{restatable}{lemma}{lemMarkedTightOcc}[$(\bigstar)$] \label{lem:marked:tight:occ}
	Let $B^\ast$ be constructed as in \refmark when given the graph $G$ and an OCC $(X_B, X_C, X_R)$ of $G$ as input. If there exists a $z$-tight OCC $(A_B, A_C, A_R)$ in $G$, then there exists a $z$-tight OCC $(A_B^\ast, A_C^\ast, A_R^\ast)$ in $G$ with $|A_C^\ast| = |A_C|$ and with $A_C^\ast \cap X_B \subseteq B^\ast$.
\end{restatable}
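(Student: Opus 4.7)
Our plan is to use the cut covering set $B^\ast$ produced by the marking step to construct a new $z$-tight OCC $(A_B^\ast, A_C^\ast, A_R^\ast)$ that differs from $(A_B, A_C, A_R)$ only inside $X_B$, by replacing the interface $D := A_C \cap X_B$ with a minimum-size separator $S^\ast \subseteq B^\ast$. Let $f_X$ be the proper $2$-coloring of $G[X_B]$ produced in step~\ref{itm:mark:coloring} of the marking and fix a proper $2$-coloring $f_A$ of $G[A_B]$. By \cref{lem:tight:occ:cut:optimal:ARN:separator}, the set $D$ is a minimum-size $\{A, R, N\}$-separator of $G[X_B]$, where $A, R, N$ are the sets imposed by $(X_C \cap A_B, X_C \cap A_R, f_A, f_X)$.

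This imposed separation problem corresponds exactly to a specific partition $\mathcal{T}^\ast$ of $T$ in the auxiliary graph $G'$: for each $v \in X_C \cap A_B$, put $v^{(f_A(v))} \in T_1$ and $v^{(1-f_A(v))} \in T_2$; for each $v \in X_C \cap A_R$, put both $v^{(0)}, v^{(1)} \in T_3$; for each $v \in X_C \cap A_C$, put both in $T_X$; and set $T_0 := \emptyset$. By the construction of $G'$, the $X_B$-neighborhoods of $T_1, T_2, T_3$ are precisely $A, R, N$, so a set $S \subseteq X_B$ is a $\{T_1, T_2, T_3\}$-separator in $G' - T_X$ if and only if $S$ is an $\{A, R, N\}$-separator of $G[X_B]$. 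Since $|D| \leq |X_C| \leq |T|$ by \cref{lem:small:occ:intersection}, the cut covering guarantee yields a minimum-size solution $S^\ast \subseteq B^\ast$ for $\mathcal{T}^\ast$; minimality forces $S^\ast \subseteq X_B$, since any $T_X$ or $T_0$ vertex in $S^\ast$ can be removed without destroying separation ($T_X$ is already deleted in $G' - T_X$ and $T_0 = \emptyset$). Hence $S^\ast$ is itself a minimum $\{A, R, N\}$-separator of $G[X_B]$ with $|S^\ast| = |D|$.

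Define $A_C^\ast := (A_C \setminus X_B) \cup S^\ast$ and split $X_B \setminus S^\ast$ into the $X_B$-parts of $A_B^\ast$ and $A_R^\ast$ component-wise: each component $K$ of $G[X_B] - S^\ast$ is placed in $A_R^\ast$ if it intersects $N$, and in $A_B^\ast$ otherwise (so intersecting $A$, $R$, or none of $A, R, N$). Outside $X_B$, inherit the partition from $(A_B, A_C, A_R)$. Then $|A_C^\ast| = |A_C|$ and $A_C^\ast \cap X_B = S^\ast \subseteq B^\ast$ as required. Bipartiteness of $G[A_B^\ast]$ is witnessed by combining $f_A$ on $A_B \setminus X_B$ with a coloring of each component $K \subseteq A_B^\ast \cap X_B$, using $f_X|_K$ if $K \cap A = \emptyset$ and $(1 - f_X)|_K$ otherwise; since $S^\ast$ is a separator, $K$ intersects at most one of $A, R, N$, which makes the coloring consistent at the boundary with $X_C \cap A_B$. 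There are no edges between $A_B^\ast$ and $A_R^\ast$: components in $A_R^\ast \cap X_B$ intersect only $N$ and therefore have no neighbor in $X_C \cap A_B$, components in $A_B^\ast \cap X_B$ do not intersect $N$ and therefore have no neighbor in $X_C \cap A_R$, and $X_B$ is already non-adjacent to $X_R$ by the OCC hypothesis.

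The main obstacle remaining is establishing tightness together with the existence of an $A_C^\ast$-certificate of order $z$ for $G[A_C^\ast \cup A_B^\ast]$; that is, we must produce a subgraph $H^\ast \subseteq G[A_C^\ast \cup A_B^\ast]$ in which $A_C^\ast$ is an optimal OCT and each connected component contains at most $z$ vertices of $A_C^\ast$. The plan is to transform the given $A_C$-certificate $H$ into such an $H^\ast$ by exploiting that $D$ and $S^\ast$ are both minimum $\{A, R, N\}$-separators of $G[X_B]$ of equal size. Using uncrossing and Menger-type arguments on the minimum-separator structure in $G[X_B]$, the $D$-vertices used inside $H$ can be replaced by $S^\ast$-vertices along suitably chosen vertex-disjoint paths, preserving connectivity patterns, the per-component count of head vertices, and the optimality of $A_C^\ast$ as the OCT of the resulting subgraph; the lower bound $\oct(G[A_C^\ast \cup A_B^\ast]) \geq |A_C^\ast|$ then follows from the existence of $H^\ast$, while the upper bound is immediate from the bipartiteness of $G[A_B^\ast]$ established above.
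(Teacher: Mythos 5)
Your construction of $(A_B^\ast, A_C^\ast, A_R^\ast)$ is the same as the paper's (your component-wise rule ``put $K$ in $A_R^\ast$ iff $K$ meets $N$'' is exactly ``not reachable from $N$ in $G[X_B]-S^\ast$''), and your arguments for bipartiteness of $G[A_B^\ast]$, for the absence of $A_B^\ast$--$A_R^\ast$ edges, for $|A_C^\ast|=|A_C|$, and for $A_C^\ast\cap X_B\subseteq B^\ast$ are essentially correct variants of the paper's (the translation to the auxiliary-graph partition is \cref{lem:marked:separators:for:all:ARN}, and your flip-the-coloring argument is a hands-on substitute for invoking \cref{lem:AR-separation} together with \cref{lem:separated:colorings:make:bipartite}). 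The problem is the last paragraph, which is where the real content of the lemma lies: establishing tightness and the order-$z$ $A_C^\ast$-certificate. You only state a plan (``uncrossing and Menger-type arguments\ldots the $D$-vertices used inside $H$ can be replaced by $S^\ast$-vertices along suitably chosen vertex-disjoint paths, preserving connectivity patterns, the per-component count of head vertices, and the optimality of $A_C^\ast$''), with no construction and no argument. As proposed, this transformation is not justified and is doubtful on its face: a vertex $s\in S^\ast$ that ``replaces'' a certificate head vertex $d\in A_C\cap X_B$ along a path of $G[X_B]$ need not be adjacent to the component of $H-d$ at all, the odd cycles of $H$ through $d$ need not have counterparts through $s$, and nothing you say controls why $A_C^\ast$ remains an \emph{optimal} OCT of the rewired subgraph or why its components keep at most $z$ head vertices. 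So the proof has a genuine gap exactly at the step the lemma is about.

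What actually closes this gap (and is the paper's route) is that no rerouting of the certificate is needed: one proves the containment $A_B\cup A_C\subseteq A_B^\ast\cup A_C^\ast$, after which the \emph{original} certificate $D$ lives inside $G[A_B^\ast\cup A_C^\ast]$ unchanged; then $A_C^\ast$ is an OCT of $D$ (because $D-A_C^\ast$ sits inside the bipartite $G[A_B^\ast]$) of size $|A_C|$, hence optimal for $D$, and tightness follows since $G[A_B\cup A_C]$ is a subgraph of $G[A_B^\ast\cup A_C^\ast]$, so $\oct(G[A_B^\ast\cup A_C^\ast])\geq \oct(G[A_B\cup A_C])=|A_C|=|A_C^\ast|$. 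The containment itself is where Menger's theorem enters, but in a different way than you propose: if some $v\in(A_B\cup A_C)\cap X_B\setminus S^\ast$ were reachable from $N$ in $G[X_B]-S^\ast$, then following that path to its first vertex $a\in A_C\cap X_B$ and continuing along the Menger $(A,R)$-path through $a$ (on the side not hit by $S^\ast$, which intersects each such path in exactly one vertex because $S^\ast$ and $A_C\cap X_B$ are both minimum $\{A,R\}$-separators by \cref{lem:tight:occ:cut:optimal:ARN:separator}) would produce an $(N,A\cup R)$-path avoiding $S^\ast$, contradicting that $S^\ast$ is an $\{A,R,N\}$-separator. You have all the ingredients for this argument (you already know both sets are minimum separators of equal size), but your proposal neither states the containment nor proves it, and the alternative certificate-surgery you sketch would not obviously work.
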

\begin{proof}[Proof sketch]
	Let $f_X \colon X_B \rightarrow \{0,1\}$ be the $2$-coloring obtained in step~\ref{itm:mark:coloring} of \refmark, let $f_A \colon A_B \rightarrow \{0,1\}$ be a proper $2$-coloring of $G[A_B]$ and consider the separation problem imposed onto $G[X_B]$ by ${(X_C \cap A_B, X_C \cap A_R, f_A, f_B)}$. Let $A$, $R$ and $N$ be the three sets to be separated in this problem with their names corresponding to their roles as in \cref{def:imposed:separation:problem}.
	
	\rfar{This paragraph represents my most compact attempt at capturing the correspondence between separation problems in the original and reduced graph.}By putting the correct copy of each vertex from $A_B \cap X_C$ into $T_1$ and $T_2$ respectively, putting both copies of vertices from $A_R \cap X_C$ into $T_3$ and putting both copies of vertices from $A_C \cap X_C$ into $T_X$, we obtain a partition $(\emptyset, T_1, T_2, T_3, T_X)$ of the set $T$ defined in step~\ref{itm:mark:auxiliary}, such that the corresponding separation problem has the same solution space as the $\{A, R, N\}$-separation problem imposed onto $G[X_B]$. By construction of $B^\ast$ in step~\ref{itm:mark:cut:covering}, there is a set $S \subseteq B^\ast$ (possibly different from $A_C \cap X_B$) that is an optimal $\{A, R, N\}$-separator in $G[X_B]$. To construct the tight OCC $(A_B^\ast, A_C^\ast, A_R^\ast)$, we use this set $S$ as replacement for $A_C \cap X_B$, which is also a minimum-size $\{A, R, N\}$-separator in $G[X_B]$ by \cref{lem:tight:occ:cut:optimal:ARN:separator}.

	As such, we define $A_C^\ast := (A_C \setminus X_B) \cup S$. To define $A_B^\ast$, let $U$ be the set of vertices from $X_B \setminus S$ that are not reachable from $N$ in $G[X_B] - S$. Now, we define $A_B^\ast := (A_B \setminus X_B) \cup U$. Finally, we define $A_R^\ast := V(G) \setminus (A_B^\ast \cup A_C^\ast)$. Clearly, this $3$-partition of $V(G)$ satisfies the constraints $|A_C^\ast| = |A_C|$ and $A_C^\ast \cap X_B \subseteq B^\ast$. We proceed by showing that it satisfies the three additional properties required to be a $z$-tight OCC.
	
	First, to see that $G[A_B^\ast]$ is bipartite, we note that $A_B^\ast$ only contains vertices from $A_B$ and $X_B \setminus S$. Both are vertex sets that induce a bipartite subgraph. Then, noting the correspondence between the sets $A$ and $R$ obtained from the separation problem and the sets $A$ and $R$ as in \cref{lem:AR-separation}, we invoke this lemma on $G[(X_C \cap A_B) \cup X_B]$ with $c = f_A$ and with $W_0$ and $W_1$ being the two color classes of this coloring restricted to $X_C \cap A_B$. It follows that the vertices from $A_B^\ast$ in $X_B \setminus S$ can be properly $2$-colored by a coloring $f$ that agrees with $f_A$ on the vertex set $X_C \cap A_B$ that separates $A_B^\ast \cap X_R$ and $A_B^\ast \cap X_B$. As these two vertex sets are properly colored by $f_A$ and $f$ respectively, these colorings combine to a proper $2$-coloring of the entire graph $G[A_B^\ast]$ (see \cref{lem:separated:colorings:make:bipartite}).
	
	Secondly, a case distinction shows that there are no edges between $A_B^\ast$ and $A_R^\ast$. It combines the fact that $A_C \cap X_B$ is an $\{A, R, N\}$-separator in $G[X_B]$ --- thereby in particular separating $A_B \cap X_B$ from $N$ in $G[X_B]$ --- and the fact that $A_B^\ast$ only contains vertices that already belonged to $A_B$ and vertices from $X_B$ that are not reachable from $N$ in $G[X_B] - A_C^\ast$.
	
	Finally, it remains to show that $(A_B^\ast, A_C^\ast, A_R^\ast)$ has an  $A_C^\ast$-certificate of order $z$. To prove this, we show that the order-$z$ certificate $D$ of the original OCC $(A_B, A_C, A_R)$ is also an order-$z$ certificate in $(A_B^\ast, A_C^\ast, A_R^\ast)$. The main effort here is to prove that $D$ even lives in $A_B^\ast \cup A_C^\ast$, after which it is easy to see that it is also an order-$z$ certificate for our new OCC.
	
 	As \cref{lem:tight:occ:cut:optimal:ARN:separator} guarantees that $A_C \cap X_B$ is not only an optimal $\{A, R, N\}$-separator in $G[X_B]$ but even an optimal $\{A, R\}$-separator in this graph, it contains exactly one vertex from every path of a maximum packing $\mathcal{P}$ of pairwise vertex-disjoint $(A, R)$-paths in $G[X_B]$, due to Menger's theorem~\cite[Theorem~9.1]{Schrijver03}. Likewise, $A_C^\ast \cap X_B = S$ is also an optimal $\{A, R\}$-separator in $G[X_B]$ and hence also contains exactly one vertex from every path of $\mathcal{P}$.
 	
 	Intuitively, for any path $P \in \mathcal{P}$, a vertex on this path that stops being reachable from one endpoint of $P$ when sliding the picked vertex along the path, starts becoming reachable from the other endpoint of $P$. As both endpoints of $P$ belong to $A \cup R$ and $S$ only differs from $A_C \cap X_B$ by which vertex is picked from each path in $\mathcal{P}$, it cannot drastically alter which vertices are reachable from $A \cup R$, which in turn are all vertices that end up in $A_B^\ast$.
 	
 	Using the observation that~$A_B$ and~$A_B^\ast$ are separated from~$N$ by~$A_C$ and~$A_C^\ast$ respectively, we see that all vertices that \emph{are} disconnected from $A \cup R$ by substituting $A_C \cap X_B$ for $S$ are in particular also disconnected from $N$. Thereby, these vertices end up in $A_B^\ast$. This shows that $(A_B \cup A_C) \subseteq (A_B^\ast \cup A_C^\ast)$, which implies that the certificate $D$ also lives in the latter. 
\end{proof}

\subsection{Simplifying the graph} \label{ssec:shrinking:step:2}
Our eventual reduction starts with the Marking step from the previous section, after which the graph is modified in a way that leaves marked vertices untouched. We want the reduction to preserve the general structure of optimal OCTs and tight OCCs in the input graph. As this is governed by the locations and interactions of odd cycles in the graph, we encode this information in a more space-efficient manner using the following reduction.

\begin{reduction} \label{algo:shrink}
	Given a graph $G$ and an OCC $(X_B, X_C, X_R)$ of it, we construct a graph $G'$ as follows.
	\begin{enumerate}
		\item \label{itm:shrink:mark:b:star} Use \refmark with input $G$ and $(X_B, X_C, X_R)$ to obtain the set $B^\ast \subseteq X_B$.
		\item Initialize $G'$ as a copy of $G - (X_B \setminus B^\ast)$.
		\item \label{itm:shrink:loop} For every $u, v \in X_C \cup B^\ast$ and for every parity $p \in \{\text{even}, \text{odd}\}$, check if the subgraph $G[X_B \setminus B^\ast]$ contains the internal vertex of a $(u,v)$-path with parity $p$. If so, then:
		\begin{itemize}
			\item if $p = \text{even}$, add two new vertices $x$ and $x'$ to $G$ and connect both of them to $u$ and $v$.
			\item if $p = \text{odd}$, add four new vertices $x$, $y$, $x'$ and $y'$ to $G$ and add the edges $\{u, x\}$, $\{x, y\}$, $\{y, v\}$, $\{u, x'\}$, $\{x', y'\}$ and $\{y', v\}$.
		\end{itemize}
		Note that we explicitly allow $u = v$ in this step. 
	\end{enumerate}
\end{reduction}
Effectively, this reduction deletes the vertices~$X_B \setminus B^\ast$ from the graph. For each pair of neighbors~$u,v$ from that set, if the deleted vertices provided an odd (resp.~even) path between them, then we insert two vertex-disjoint odd (resp.~even) paths between~$u$ and~$v$. Hence we shrink the graph while preserving the parity of paths provided by the removed vertices.

As we will prove in \cref{lem:shrink:running:time} and \cref{lem:shrink:size} respectively, the reduction can be performed in $2^{\Oh(|X_C|)} \cdot n^{\Oh(1)}$ time and it is guaranteed to output a strictly smaller graph than its input graph whenever it receives an OCC that is reducible with respect to the function $g_r$ as in \cref{sec:find:occs}. To show that the reduction also preserves OCT and OCC structures, we prove that it satisfies two safety properties formalized below in \cref{lem:forward:safety,lem:backward:safety}.

\begin{restatable}{lemma}{lemForwardSafety}[$(\bigstar)$] \label{lem:forward:safety}
	Let $G$ be a graph, let $(X_B, X_C, X_R)$ be an OCC in $G$ and let $G'$ be the graph obtained by running \refreduction with these input parameters. For all $z \geq 0$, if there exists a $z$-tight OCC $(A_B, A_C, A_R)$ in $G$, then there exists a $z$-tight OCC $(A_B', A_C', A_R')$ in $G'$ with $|A_C'| = |A_C|$.
\end{restatable}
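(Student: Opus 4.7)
The overall plan is to start from the given \zocc $(A_B, A_C, A_R)$, apply \cref{lem:marked:tight:occ} to move to an equivalent \zocc whose cut intersects $X_B$ only inside $B^\ast$, and then use the fact that $A_C$ survives \refreduction verbatim to take $A_C' := A_C$. What remains is to redistribute the surviving original vertices in $V(G) \setminus (X_B \setminus B^\ast)$ together with the newly inserted gadget vertices into $A_B'$ and $A_R'$.

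The key structural observation driving the redistribution is that every $(u,v)$-path of $G$ whose internal vertices lie in $X_B \setminus B^\ast$ has its internals entirely on one side of the $A_B$--$A_R$ divide: those internals avoid $A_C$ (since $A_C \cap X_B \subseteq B^\ast$), and two consecutive internals cannot straddle the divide because an OCC has no edges between $A_B$ and $A_R$. Thus the triggering $(u,v)$-path has $u,v \in A_B \cup A_C$ or $u,v \in A_R \cup A_C$. I will label each gadget introduced in step~\ref{itm:shrink:loop} as $B$ or $R$ according to which side the triggering path traversed, breaking ties arbitrarily (which can only happen when $u,v \in A_C$), and set $A_B' := (A_B \cap V(G')) \cup \{\text{$B$-labeled gadgets}\}$ and $A_R' := (A_R \cap V(G')) \cup \{\text{$R$-labeled gadgets}\}$. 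The basic OCC axioms are then straightforward: no edge joins $A_B'$ to $A_R'$ because each gadget attaches only to its two prescribed endpoints, which by the labeling lie in $A_B \cup A_C$ or $A_R \cup A_C$; and $G'[A_B']$ remains bipartite because a proper $2$-coloring of $G[A_B]$ extends to each $B$-labeled gadget using the parity built into its paths (an even-path gadget between endpoints in $A_B$ forces equal colors on $u,v$, matching the two-vertex gadget, and an odd-path gadget forces unequal colors, matching the four-vertex gadget).

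The main obstacle is verifying that $(A_B', A_C', A_R')$ is in fact \emph{tight} of order $z$. For tightness I would set up a size-preserving translation between OCTs of $G[A_B \cup A_C]$ and of $G'[A_B' \cup A_C']$: the forward direction maps $A_C$ to itself; the reverse direction rewrites any OCT $S'$ of the reduced subgraph into an OCT of the original by keeping $S' \cap V(G)$ and, for each gadget vertex in $S'$, substituting a witness internal vertex of the corresponding original path in $X_B \setminus B^\ast$. The crucial point is that two parallel gadget copies of the same parity form an even cycle on their own, so the gadget region contributes no odd cycle in $G'$ beyond those that mirror odd cycles of $G$ through the original $X_B \setminus B^\ast$ path; therefore the substitution yields an OCT of the original of size at most $|S'|$, forcing $|S'| \geq |A_C|$. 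For the order-$z$ certificate, the plan is to start from the given $A_C$-certificate $H \subseteq G[A_B \cup A_C]$ and splice in the appropriately parity-matched $B$-labeled gadgets in place of every subpath of $H$ through $X_B \setminus B^\ast$, obtaining a subgraph $H' \subseteq G'[A_B' \cup A_C']$ whose components biject with those of $H$ while retaining the same $A_C$-intersection per component; this transfers the order-$z$ bound from $H$ to $H'$ and preserves the parity of every cycle. The most delicate step I expect is verifying that the gadget substitution neither splits nor merges components of $H$, which should reduce to the observation that the inserted gadgets connect exactly the same pairs of endpoints in $X_C \cup B^\ast$ as the original paths did, with matching parities.
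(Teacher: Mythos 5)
Your overall skeleton matches the paper's: invoke \cref{lem:marked:tight:occ} to obtain a $z$-tight OCC whose cut meets $X_B$ only inside $B^\ast$, keep that cut as $A_C'$, distribute the gadget vertices, and transfer the certificate. However, two steps do not go through as written. First, your rule for placing gadgets -- label each gadget by the side of a \emph{triggering path}, ``breaking ties arbitrarily'' when both endpoints lie in $A_C$ -- can destroy tightness. A single gadget $Q_{u,v,p}$ may simultaneously stand in for a path through $A_B \cap (X_B \setminus B^\ast)$ and a path through $A_R \cap (X_B \setminus B^\ast)$ (this is exactly the tie case $u,v \in A_C$, including $u=v$); if the tie is broken towards $R$, the odd cycles of $G[A_B \cup A_C]$ that used the $B$-side path are no longer mirrored inside $G'[A_B' \cup A_C']$, and one can build instances where $\oct(G'[A_B' \cup A_C']) < |A_C'|$, so the constructed triple is not tight and the spliced certificate does not even lie in $G'[A_B' \cup A_C']$. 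The paper avoids this by assigning every gadget whose \emph{endpoints} both lie in $A_B^\ast \cup A_C^\ast$ to $A_B'$, irrespective of where the triggering paths live; your tie-break must at least prefer the $B$ side whenever a $B$-side path of that parity exists.

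Second, the reverse direction of your ``size-preserving translation'' is unsound: replacing each gadget vertex of an OCT $S'$ by ``a witness internal vertex of the corresponding original path'' does not yield an OCT of $G[A_B \cup A_C]$, because one gadget can represent arbitrarily many internally disjoint original $(u,v)$-paths of that parity, so hitting the (at most two) gadget copies in $G'$ has no bounded-size counterpart among the original internal vertices -- e.g.\ with three disjoint odd ears at a vertex $u \in A_C$, the OCT $S'=\{x,x'\}$ of the gadget translates to a set missing a whole ear. The missing idea is the paper's \cref{lem:new:vertices:not:optimal}: because the two parallel copies are same-parity degree-$2$ paths with common endpoints, a \emph{minimum} OCT of the relevant subgraph contains no new gadget vertices at all; once $S' \subseteq V(G)$, odd cycles are transferred between $G$ and $G'$ by parity-preserving path swaps (\cref{lem:odd:cycle:preservation}), which gives both $\oct(G'[A_B' \cup A_C']) \geq |A_C|$ and the optimality of $A_C'$ on the spliced certificate $H'$ -- the latter is a requirement of the order-$z$ certificate that your sketch asserts (``transfers the order-$z$ bound'') but never argues. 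With the endpoint-based gadget assignment and these two lemmas in place of your witness substitution, the rest of your argument (bipartiteness via parity-consistent extension, absence of $A_B'$--$A_R'$ edges, component-preserving splicing) is sound and coincides with the paper's proof.
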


The proof of the lemma above uses \cref{lem:marked:tight:occ} to infer that, for any $z$-tight OCC $(A_B, A_C, A_R)$ of $G$, the graph~$G$ also contains a $z$-tight OCC $(A_B^\ast, A_C^\ast, A_R^\ast)$ of the same width such that $A_C^\ast \subseteq V(G) \cap V(G')$. This allows for the construction of an OCC $(A_B', A_C', A_R')$ in $G'$ with $A_C' = A_C^\ast$. Then, $A_B'$ can be defined as the union of $A_B^\ast \cap V(G) \cap V(G')$ and the set of vertices that were added during the reduction to provide a replacement connection between any two vertices from $A_C^\ast \cup (A_B^\ast \cap V(G) \cap V(G'))$. Finally, $A_R' := V(G') \setminus (A_B' \cup A_C')$.

The proof proceeds to show that the resulting partition $(A_B', A_C', A_R')$ is a $z$-tight OCC of $G'$. The two main insights used to prove this are the facts that: 
\begin{itemize}
	\item optimal OCTs of $G'$ are disjoint from the set of newly added vertices $V(G') \setminus V(G)$, and
	\item odd cycles in $G$ can be translated to very similar odd cycles in $G'$ and vice versa.
\end{itemize}
These insights are also covered in the proof sketch of the second safety property below.

\begin{restatable}{lemma}{lemBackwardSafety}[$(\bigstar)$]
\label{lem:backward:safety}
	Let $G$ be a graph, let $(X_B, X_C, X_R)$ be an OCC in $G$ and let $G'$ be the graph obtained by running \refreduction with these input parameters. If $S'$ is a minimum-size OCT of $G'$, then $S' \subseteq V(G) \cap V(G')$ and $S'$ is a minimum-size OCT of $G$.
\end{restatable}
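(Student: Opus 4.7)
My plan is to decompose the proof into three facts that together imply the lemma. Fact~(A): every minimum-size OCT $S'$ of $G'$ is disjoint from the gadget vertices $V(G') \setminus V(G)$, so $S' \subseteq V(G) \cap V(G')$. Fact~(B): for every $S \subseteq V(G) \cap V(G')$, $S$ is an OCT of $G$ if and only if $S$ is an OCT of $G'$. Fact~(C): some minimum-size OCT of $G$ lies inside $V(G) \cap V(G')$. Given these, (A) gives $S' \subseteq V(G) \cap V(G')$; (B) applied to $S'$ shows $S'$ is an OCT of $G$, so $|S'| \geq \oct(G)$; (C) combined with (B) produces an OCT of $G'$ of size $\oct(G)$, yielding $\oct(G') \leq \oct(G)$. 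Together these give $|S'| = \oct(G') = \oct(G)$, so $S'$ is a minimum OCT of $G$.

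I expect Fact~(A) to be the main obstacle. I plan to use a twin-style exchange: each gadget provides two internally disjoint $(u,v)$-paths of the same parity between $u, v \in X_C \cup B^*$, so its internal vertices can be absorbed into the endpoint $u$. For an even gadget with vertices $x, x'$ (both having neighborhood $\{u, v\}$): suppose $x \in S'$. Minimality forces an odd cycle $C = u - x - v - P - u$ in $G' - (S' \setminus \{x\})$, whence $u, v \notin S'$. The twin cycle $C' = u - x' - v - P - u$ must be broken by $S'$, forcing $x' \in S'$. Then $S^* := (S' \setminus \{x, x'\}) \cup \{u\}$ is an OCT: any odd cycle in $G' - S^*$ using $x$ or $x'$ would require $u \in V(G' - S^*)$, which fails; and any odd cycle avoiding both lies in $G' - S'$, which is bipartite. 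Since $|S^*| = |S'| - 1$, this contradicts minimality, so $x, x' \notin S'$. The odd gadget $\{x, y, x', y'\}$ is handled by the same template: neighborhood constraints force any odd cycle through $x$ into the shape $u - x - y - v - P - u$, the twin cycle $u - x' - y' - v - P - u$ forces at least one of $\{x', y'\}$ into $S'$, and replacing $S' \cap \{x, y, x', y'\}$ with $\{u\}$ strictly decreases the OCT size.

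For Fact~(B), I will use that by construction of \refreduction, a gadget providing a $(u, v)$-path of parity $p$ in $G'$ exists exactly when $G[X_B \setminus B^*]$ contains such a path. Given any odd cycle $C$ in $G - S$ (respectively, $G' - S$), I decompose $C$ into maximal segments whose endpoints lie in $X_C \cup B^*$ and whose interiors lie in $X_B \setminus B^*$ (respectively, inside a single gadget), then replace each segment by a same-parity path through the corresponding gadget (respectively, through $X_B \setminus B^*$). The result is a closed walk in the other graph of the same odd parity as $C$; it avoids $S$ because $S$ is disjoint from both $X_B \setminus B^*$ and from the gadget vertices. By the observation in \cref{sec:prelims} that an odd closed walk induces an odd cycle in its vertex set, the other graph contains an odd cycle disjoint from $S$, establishing both directions of the biconditional.

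For Fact~(C), I apply \cref{lem:marked:tight:occ} to the trivial OCC $(V(G) \setminus S, S, \emptyset)$ of $G$ for a minimum OCT $S$ of $G$. This OCC is tight of order $z := |S|$, with $\mathcal{C}_z := G$ as the required $S$-certificate, since each component of $G$ has an OCT of size at most $|S|$. The marking lemma then produces a $z$-tight OCC $(A_B^*, A_C^*, A_R^*)$ of $G$ with $|A_C^*| = |S|$ and $A_C^* \cap X_B \subseteq B^*$, so $A_C^* \subseteq B^* \cup X_C \cup X_R = V(G) \cap V(G')$. The identity $\oct(G) = |A_C^*| + \oct(G[A_R^*])$ recorded after \cref{def:zocc} forces $\oct(G[A_R^*]) = 0$, so $G[A_R^*]$ is bipartite; combined with the fact that $A_C^*$ separates $A_B^*$ from $A_R^*$ and is an optimal OCT of $G[A_B^* \cup A_C^*]$, this makes $A_C^*$ a minimum OCT of $G$ lying inside $V(G) \cap V(G')$, completing Fact~(C) and the proof plan.
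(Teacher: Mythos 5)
Your proposal is correct, and Facts~(A) and~(B) are essentially the paper's own two auxiliary ingredients: (A) is the paper's lemma that minimum OCTs of $G'$ avoid the newly inserted paired degree-$2$ paths (proved by the same twin-exchange), and (B) is the paper's odd-closed-walk translation lemma, applied in both directions and combined with the observation that a closed odd walk contains an odd cycle. Where you genuinely diverge is Fact~(C), the existence of a minimum OCT of $G$ inside $V(G)\cap V(G')$. The paper proves this directly: it takes an arbitrary minimum OCT $T$ of $G$, sets up the separation problem imposed onto $G[X_B]$ by $(X_C\setminus T,\emptyset,f,f_X)$ (so $N=\emptyset$), shows via \cref{lem:AR-separation} that $T\cap X_B$ is an $\{A,R\}$-separator of size at most $|X_C|$, invokes \cref{lem:marked:separators:for:all:ARN} to get a minimum separator $T^\ast\subseteq B^\ast$, and glues colorings with \cref{lem:separated:colorings:make:bipartite} to conclude $(T\setminus X_B)\cup T^\ast$ is an OCT of $G$ no larger than $T$. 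You instead observe that a minimum OCT $S$ of $G$ yields the trivial $|S|$-tight OCC $(V(G)\setminus S,\,S,\,\emptyset)$ (with $G$ itself as the $S$-certificate), apply \cref{lem:marked:tight:occ} -- which the paper only uses for forward safety -- to relocate its head into $B^\ast\cup X_C\cup X_R = V(G)\cap V(G')$ while preserving its size, and then use the identity $\oct(G)=|A_C^\ast|+\oct(G[A_R^\ast])$ to conclude the relocated head is itself a minimum OCT of $G$. This is valid: \cref{lem:marked:tight:occ} places no bound on $z$ or on the width, and the trivial partition satisfies all OCC/tightness/certificate conditions, so there is no circularity and the argument goes through. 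Your route is shorter and reuses existing machinery at full strength; the paper's route re-derives the needed separator exchange from the more primitive lemmas, which keeps backward safety independent of the certificate bookkeeping in \cref{lem:marked:tight:occ} and makes the size comparison $|S|\le|T|$ explicit. One small point worth a sentence in a polished write-up (shared with the paper's own formulation) is the degenerate $u=v$ gadgets allowed by \refreduction, for which your exchange in Fact~(A) and the segment replacement in Fact~(B) still work but deserve explicit mention.
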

\begin{proof}[Proof sketch]
	To see that $S' \subseteq V(G) \cap V(G')$, we show that $S'$ contains none of the newly added vertices in $V(G') \setminus V(G)$. These newly added vertices come in pairs that form degree-$2$ paths connecting the same endpoints. Consider two such paths and let $u$ and $v$ be the endpoints of both of them. Suppose for contradiction that $S'$ uses an internal vertex $p_1$ from one path to break an odd cycle $F$. Then it must also contain an internal vertex $p_2$ from the other path to break the odd cycle obtained by swapping one path for the other in $F$. As both these cycles also pass through $u$ and $v$ by construction, substituting $p_1$ and $p_2$ for one of $u$ and $v$ in $S'$ yields a strictly smaller solution. This contradicts the optimality of $S'$.
	
	To see that $S'$ is an OCT of $G$, suppose for contradiction that $G - S'$ contains an odd cycle $F$. Every subpath of $F$ that connects two vertices from $V(G) \cap V(G')$ via a path whose internal vertices lie in $G - V(G')$ can be replaced by one of the paths inserted during the construction of~$G'$, with the same endpoints and parity. Substituting every subpath of $F$ that is absent in $G'$ for such a replacement path yields a closed odd walk in $G' - S'$; but this contradicts the fact that $S'$ is an OCT of~$G'$. Hence~$S' \subseteq V(G) \cap V(G')$ is an OCT in~$G$.
	
	It remains to show that $S'$ is an OCT of minimum size. Suppose for contradiction that $T$ is a strictly smaller OCT of $G$. We start by showing how to modify $T$ into an OCT $S$ of $G$ that is at most as large and lives in $V(G) \cap V(G')$. 
	To this end, let $f \colon V(G) \setminus T \rightarrow \{0,1\}$ and $f_X \colon X_B \rightarrow \{0,1\}$ be proper $2$-colorings of $G - T$ and $G[X_B]$ respectively and consider the separation problem imposed onto $G[X_B]$ by $(X_C \setminus T, \emptyset, f, f_B)$. Let $A$, $R$, $N$ be the three sets to be separated in this problem with their names corresponding to their roles as in \cref{def:imposed:separation:problem}. Since the second argument~$C_2$ in the 4-tuple is~$\emptyset$, we obtain~$N = \emptyset$.
	
	The fact that~$N = \emptyset$ ensures that the separation problem above is merely a $2$-way separation problem between the sets $A$ and $R$ in~$G[X_B]$. These sets are defined in such a way that, for a suitable choice of input parameters to \cref{lem:AR-separation}, they coincide with the sets $A$ and $R$ in this lemma. Applying the lemma in one direction to $G[(X_C \setminus T) \cup X_B]$ with $c = f$ and with $W_0$ and $W_1$ being the two color classes of this coloring restricted to $X_C \setminus T$, yields that $T \cap X_B$ is an $\{A, R\}$-separator in $G[X_B]$. 
	Applying it in the other direction yields that the removal of \emph{any} $\{A, R\}$-separator $T'$ from $G$ allows for a proper $2$-coloring $f'$ of $G[(X_C \setminus T) \cup (X_B \setminus T')]$ that agrees with the coloring $f$ on the vertex set $X_C \setminus T$. As this set separates the subgraphs $G[X_R \setminus T]$ and $G[X_B \setminus T']$ in $G - ((T \setminus X_B) \cup T')$ and these subgraphs are properly $2$-colored by $f$ and $f'$ respectively, those two colorings combine to properly $2$-color $G - ((T \setminus X_B) \cup T')$ (see \cref{lem:separated:colorings:make:bipartite}). By construction of $B^\ast$, there is a minimum-size $\{A, R\}$-separator $T^\ast$ in $G[X_B]$ with~$T^\ast \subseteq B^\ast$. Hence, $S := (T \setminus X_B) \cup T^\ast$ is an OCT of $G$ that lives in $V(G) \cap V(G')$. Furthermore, since $T^\ast$ is a minimum-size $\{A, R\}$-separator in $G[X_B]$ and it replaces $T \cap X_B$, which is also an $\{A, R\}$-separator, we find that $|S| \leq |T| < |S'|$.
	
	Since $S'$ was assumed to be a minimum-size OCT of $G'$, the smaller set $S$ is not an OCT of $G'$. Therefore, $G' - S$ contains an odd cycle $F'$. The argument used before to convert an odd cycle in $G$ to one in $G'$ can also be used in the reverse direction to construct an odd cycle $F$ in $G - S$ from $F'$. The existence of this cycle contradicts the assumption that $S$ is an OCT of $G$, which concludes the proof by showing that $S'$ is a minimum-size OCT of $G$.
\end{proof}

\section{Finding and Removing Tight OCCs} \label{sec:tight:occs}


Now we find \octmane{s} by the same color coding technique used in previous work~\cite{DonkersJ24}.
Consider a coloring 
$\chi\colon V(G) \cup E(G) \to \{\cB, \cC, \cR\}$ of the vertices and edges of a graph~$G$.
For every color $c \in \{\cB, \cC, \cR\}$, let $\chi_V^{-1}(c) = \chi^{-1}(c) \cap V(G)$.
For any integer $z \geq 0$, a \zocc $(A_B,A_C,A_R)$ is \emph{$z$-properly colored} by
a coloring $\chi$ if all the following hold:
(i)~$A_C \subseteq \chi_V^{-1}(\cC)$, (ii)~$A_B \subseteq \chi_V^{-1}(\cB)$, and (iii)~for each component $H$ of $G'=G[A_B \cup A_C] - \chi^{-1}(\cR)$ we have $\oct(H)=|A_C \cap V(H)|$ and~$|A_C \cap V(H)| \leq z$.
\bmpr{I switched the enumeration to inline, but it would be nice to turn it back into line-by-line for the final.}
%
Note that $\chi^{-1}(\cR)$ may include both vertices and edges, so that the process of obtaining~$G'$ involves removing both the vertices and edges colored~$\cR$.
By a straight-forward adaptation of the color coding approach from previous work~\cite[Lemma 6.2]{DonkersJ24}, we can reconstruct a \tocc from a proper coloring.

\begin{restatable}{lemma}{lemFindAntlerColored}[$(\bigstar)$]
\label{lem:find-antler-colored}
    There is an $n^{\Oh(z)}$ time algorithm taking as input an integer $z \geq 0$,
    a graph $G$, and a coloring $\chi \colon V(G) \cup E(G) \to \{\cB, \cC, \cR\}$
    that either determines that~$\chi$ does not $z$-properly color any $z$-tight OCC,
    or outputs a $z$-tight \occ $(A_B,A_C,A_R)$ in $G$ such that
    for each OCC $(\hat{A}_B, \hat{A}_C, \hat{A}_R)$ that is $z$-properly colored by~$\chi$,
    we have $\hat{A}_B \subseteq A_B$ and $\hat{A}_C \subseteq A_C$.
\end{restatable}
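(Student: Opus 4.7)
The plan is to adapt the color-coding algorithm of Donkers and Jansen (Lemma 6.2 of their antler paper) to our setting. We construct the subgraph $G^{\star}$ obtained from $G$ by deleting all vertices and edges colored $\cR$; by conditions (i) and (ii) of the proper-coloring definition, any $A_B$ and $A_C$ corresponding to a $z$-properly colored OCC survive in $G^\star$. We then iterate over the connected components $H$ of $G^\star$ and apply the following local tests: (a) $|H_C| \leq z$, where $H_C := V(H) \cap \chi_V^{-1}(\cC)$; (b) $G[V(H) \setminus H_C]$ is bipartite, i.e., $H_C$ is an OCT of $H$; (c) $\oct(H) = |H_C|$, which can be verified in $n^{\Oh(z)}$ time by enumerating subsets of $V(H)$ of size at most $z-1$ and testing bipartiteness of the remainder; and (d) no vertex in $V(H) \setminus H_C$ has a neighbor in $G$ outside $V(H)$. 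Components passing all four tests are accepted, and we output $A_C = \bigcup H_C$ and $A_B = \bigcup (V(H) \setminus H_C)$ over accepted components, with $A_R = V(G) \setminus (A_B \cup A_C)$. If no component is accepted, the algorithm reports failure.

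Verifying that the output is a $z$-tight OCC: condition (d) ensures that $\cB$-vertices of distinct accepted components are non-adjacent in $G$, and that no edge of $G$ joins $A_B$ to $A_R$. Combined with (b), this makes $G[A_B]$ bipartite and yields the separation requirement for an OCC. The disjoint union of the accepted components' subgraphs serves as the $A_C$-certificate of order $z$, with each component satisfying $|H_C| \leq z$ by (a) and $\oct(H) = |H_C|$ by (c). The overall running time is $n^{\Oh(z)}$, dominated by the optimality checks in (c) across the connected components of $G^\star$.

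For the maximality claim, consider a $z$-properly colored $z$-tight OCC $(\hat A_B, \hat A_C, \hat A_R)$ and any $v \in \hat A_B \cup \hat A_C$; we must show $v$ lies in $A_B \cup A_C$ with the appropriate label. Let $H$ be the $G^\star$-component containing $v$. Because $\hat A_B$ has no edges in $G$ to $\hat A_R$, $\cB$-vertices reached from $v$ by $G^\star$-edges stay within $\hat A_B$, so condition (d) of acceptance holds. The $z$-properly colored condition (iii) then guarantees that $H$ coincides on its $\cB$-vertices with a component of $G[\hat A_B \cup \hat A_C] - \chi^{-1}(\cR)$ carrying an optimal OCT of size at most $z$, which yields (a)--(c). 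The main obstacle is ruling out, or alternatively handling, the case where $H$ extends the corresponding component of $G[\hat A_B \cup \hat A_C] - \chi^{-1}(\cR)$ via non-$\cR$ edges leaving $\hat A_C$ into $\hat A_R$; this is resolved as in the antler proof by an exchange argument exploiting the optimality of $\hat A_C \cap V(H)$ as an OCT, which implies that any such extension still meets the acceptance criteria so that $\hat A_B \subseteq A_B$ and $\hat A_C \subseteq A_C$ hold. Since the only structural changes between the antler and OCC settings are replacing ``forest'' with ``bipartite'' and ``FVS'' with ``OCT'', the adaptation is largely mechanical.
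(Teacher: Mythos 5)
There is a genuine gap in the completeness/maximality part of your argument. Your algorithm accepts or rejects whole connected components of $G^\star = G - \chi^{-1}(\cR)$, but the definition of a $z$-proper coloring only constrains the components of $G[\hat{A}_B \cup \hat{A}_C] - \chi^{-1}(\cR)$; it says nothing about the colors of edges leaving $\hat{A}_C$ into $\hat{A}_R$, nor about the colors of vertices in $\hat{A}_R$. Consequently, the $G^\star$-component $H$ containing a vertex of $\hat{A}_B \cup \hat{A}_C$ can leak through $\hat{A}_C$ into $\hat{A}_R$ and absorb arbitrary non-$\cR$ material there. Such an enlarged $H$ can violate every one of your tests: it may contain more than $z$ vertices colored $\cC$ (breaking (a)), its $\cB$-part taken from $\hat{A}_R$ may be non-bipartite or have $\oct(H) \neq |H_C|$ (breaking (b)/(c)), and those extra $\cB$-vertices may have $G$-neighbors outside $V(H)$ (breaking (d)). Your justification of (d) — ``$\cB$-vertices reached from $v$ stay within $\hat{A}_B$'' — already fails for this reason, since paths in $G^\star$ may pass through $\cC$-colored vertices of $\hat{A}_C$ (or of $\hat{A}_R$) and then reach $\cB$-colored vertices of $\hat{A}_R$. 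When $H$ is rejected, the vertices of the properly colored \zocc inside it are dropped from the output (or the algorithm wrongly reports failure), so the guarantee $\hat{A}_B \subseteq A_B$, $\hat{A}_C \subseteq A_C$ is not met. The ``exchange argument'' you invoke is not spelled out and cannot rescue this: no exchange makes an over-sized or non-tight component of $G^\star$ satisfy criteria (a)--(d), because the obstruction lives entirely outside $\hat{A}_B \cup \hat{A}_C$.

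The paper avoids this by not working per component of $G - \chi^{-1}(\cR)$ at all. It iteratively \emph{refines} the coloring: it recolors to $\cR$ every $\cB$-component that is non-bipartite or has a neighbor outside $\chi_V^{-1}(\cC)$; then, for every subset $C \subseteq \chi_V^{-1}(\cC)$ with $|C| \leq z$, it tests whether $\oct(G_\chi[C \cup W_\chi(C)]) = |C|$, where $W_\chi(C)$ collects the $\cB$-components whose entire neighborhood lies in $C$, and marks the vertices of such $C$; unmarked $\cC$-vertices are recolored $\cR$ and the whole process repeats until stabilization. Enumerating small subsets $C$ (rather than taking $H_C$ from a $G^\star$-component) is exactly what makes the search insensitive to leakage through $\hat{A}_C$ into $\hat{A}_R$, and the iteration is needed because recoloring $\cC$-vertices can invalidate previously fine $\cB$-components. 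A correctness proof then shows that recoloring never destroys a properly colored \zocc and that the stabilized coloring yields a \zocc (assembling the certificate from the tight pairs $(C, W_\chi(C))$ via a disjointification argument). Your soundness direction (an accepted collection of components does form a $z$-tight \occ) is essentially fine, but without the subset-enumeration and iterative recoloring the containment guarantee of the lemma does not follow.
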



Combining all ingredients in the previous sections leads to a proof of the main theorem.

\thmMain*
\begin{proof}[Proof sketch]
Given an input graph~$G$, we repeatedly invoke \cref{lem:finding-occ} to find a reducible \occ and use \refreduction to shrink it. When we stabilize on a graph~$G'$, \cref{lem:forward:safety} guarantees that~$G'$ contains a \zocc of width~$|A'_C| = k$ if~$G$ had one. By \cref{lem:num_components}, there is such a \zocc~$(A'_B, A'_C, A'_R)$ in~$G'$ for which~$G'[A'_B]$ has at most~$z^2 k$ components. As each such component gives rise to a single-component \occ, none of them are large enough to be reducible. Hence~$|A'_C \cup A'_B| \in (zk)^{\Oh(1)}$. Hence we can deterministically construct a family of~$2^{(kz)^{\Oh(1)}} n^{\Oh(1)}$ colorings that includes one that properly colors~$(A'_B, A'_C, A'_R)$. Invoking \cref{lem:find-antler-colored} with such a coloring identifies a \zocc in~$G'$ whose head~$A^*_C$ contains~$A'_C$ and therefore has size at least~$k$. Then~$A^*_C$ is contained in an optimal OCT in~$G'$, so that \cref{lem:backward:safety} ensures~$A^*_C$ belongs to an optimal OCT in~$G$. We output~$A^*_C$.
\end{proof}

\section{Additional preliminaries} \label{sec:more:prelims}
In this section we collect some additional preliminaries that are necessary for the upcoming proofs, but which were not relevant for the first part of the paper. 

\subsection{Multiway cuts}
The parameterized problem of computing a restricted multiway cut is well-known to be fixed-parameter tractable parameterized by solution size~\cite{CyganPPW13}.

\begin{theorem}[{\cite[Corollary 2.6]{CyganPPW13}}] \label{thm:multiway:marx}
There is an algorithm that, given an undirected $n$-vertex graph~$G$, terminal set~$T \subseteq V(G)$, and integer~$k$, runs in time~$2^{\Oh(k)} \cdot n^{\Oh(1)}$ and either outputs the minimum size of a vertex set~$U \subseteq V(G) \setminus T$ for which all vertices of~$T$ belong to a different connected component of~$G - U$, or determines that any such multiway cut has size larger than~$k$. 
\end{theorem}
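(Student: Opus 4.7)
The plan is to prove the theorem via LP-guided branching on a half-integral relaxation, in the style of the CKR relaxation and its persistence properties. I would start by introducing the path-covering LP for node multiway cut: assign each non-terminal $v$ a variable $x_v \in [0,1]$ and demand $\sum_{v \in P \setminus T} x_v \geq 1$ for every $(s,t)$-path $P$ with distinct $s,t \in T$, minimising $\sum_v x_v$. The key structural fact is that this LP admits a \emph{half-integral} optimum (every $x_v \in \{0, 1/2, 1\}$), computable in polynomial time --- for instance by a reduction to minimum vertex cover on an appropriate bipartite double cover, or via parametric flow. Let $\tau(G,T)$ denote the LP optimum; if $\tau(G,T) > k$ we immediately report that no feasible cut of size at most $k$ exists.

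Next I would design a branching algorithm controlled by the measure $\mu := 2(k - \tau(G,T))$, which starts at most $2k$ at the root. If the LP attains an all-integral optimum, then $\{v : x_v = 1\}$ already is a minimum restricted multiway cut and we are done. Otherwise I select a \emph{persistent} half-integral vertex $v$, namely one with $x_v = 1/2$ in some extremal LP optimum (for instance the one minimising the support among half-integral optima), and branch into two subproblems:
\begin{itemize}
\item \emph{Include} $v$: delete $v$ from $G$ and decrement $k$ by one.
\item \emph{Exclude} $v$: forbid $v$ from the solution, realised in the LP by the extra constraint $x_v = 0$ and in the graph by contracting $v$ into the appropriate terminal class.
\end{itemize}
The include branch drops $k$ by $1$ while the LP decreases by at most $1/2$ (vertex $v$ contributed $1/2$), so $\mu$ drops by at least $1$. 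The exclude branch keeps $k$ unchanged, but by the persistence lemma (see below) it strictly raises the LP value by at least $1/2$, so again $\mu$ drops by at least $1$. The recurrence $T(\mu) \leq 2\,T(\mu - 1) + n^{\Oh(1)}$ yields $2^{\Oh(\mu)} \cdot n^{\Oh(1)} = 2^{\Oh(k)} \cdot n^{\Oh(1)}$ total running time. To output the cut itself rather than just its size, we follow the ``include'' choices on a path in the search tree to a leaf where the LP becomes integral, and return the union of the deleted vertices with the final integral support.

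The hard step is the persistence lemma: that forbidding a carefully chosen half-integral vertex strictly raises the LP value. Its standard proof is an uncrossing/laminarity argument showing that the set of optimal half-integral LP solutions is closed under a coordinate-wise lattice operation; a vertex that is $1/2$ in \emph{every} optimum therefore cannot be rounded to $0$ without paying at least $1/2$ in the objective. A shortcut that avoids re-deriving all of this machinery is to reduce node multiway cut to edge multiway cut by subdividing each non-terminal vertex and replacing it by a unit-capacity edge whose deletion corresponds to deleting the original vertex; the parameter $k$ is preserved up to a constant, and the $2^{\Oh(k)} \cdot n^{\Oh(1)}$ algorithm for edge multiway cut obtained by the same LP-branching technique gives the claimed bound directly, completing the proof.
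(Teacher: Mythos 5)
First, note that the paper does not prove this statement at all: it is imported verbatim as \cite[Corollary~2.6]{CyganPPW13} and used as a black box, so what you are reconstructing is the proof of the cited result. Your overall strategy --- branching guided by the half-integral node-multiway-cut LP with the measure $2(k-\tau)$ --- is indeed the strategy of that reference, but two of the steps you rely on do not hold as you state them. In the include branch you claim the LP drops by at most $1/2$ ``because $v$ contributed $1/2$''; the only bound that follows from feasibility is $\tau(G-v)\ge\tau(G)-1$ (extend an optimum of $G-v$ by $y_v=1$), and with that bound your measure does not decrease in the include branch, so the recursion does not terminate. The stronger drop of $1/2$ is exactly where the cited proof does real work: the branching vertex must be chosen in $N(t)$ for a terminal $t$ using the structure $\tau=\tfrac12\sum_t\mathrm{mincut}(t,T\setminus\{t\})$, and the exclude branch is then ``$v$ lies on $t$'s side,'' realized by contracting $v$ into $t$. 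Your version --- ``forbid $v$ from the solution, contract into the appropriate terminal class'' --- is not sound for an arbitrary half-integral vertex: excluding $v$ from the cut does not determine which terminal's side it joins, and a vertex chosen from a minimum-support optimum need not lie in the support of \emph{every} optimum, so the claimed $+1/2$ increase under the constraint $x_v=0$ is not established (two optima with disjoint supports leave you with no vertex to branch on).

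The proposed shortcut via reduction to edge multiway cut also fails. Vertex splitting ($v\mapsto v_{\mathrm{in}},v_{\mathrm{out}}$ joined by a unit edge) simulates vertex deletion only in \emph{directed} graphs; in an undirected graph a path may enter and leave $v$ through neighbours attached to the same copy and thereby bypass the unit edge, and there is no consistent orientation that prevents this. Moreover the edge-multiway-cut LP is not half-integral, so the same branching machinery would not transfer. If you want a self-contained argument, either follow the terminal-adjacent branching of \cite{CyganPPW13} carefully (proving both the $-1/2$ and $+1/2$ bounds for that specific choice of $v$), or simply invoke the result as the paper does.
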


While the original formulation of the theorem is stated in terms of simply solving the decision problem (is there a solution of size at most~$k$?), by running this algorithm for all possible~$k'$ in the range of~$\{0, \ldots, k\}$ we can identify the size of an optimal solution with only a polynomial factor overhead. Next, we show that with a small modification we can use the same algorithm to compute the size of a minimum multiway cut of an $s$-partition~$\mathcal{T} = (T_1, \ldots, T_s)$ of the terminal set~$T$ (or determine its size is larger than~$k$). To express the necessary modification on the graph, we need the following concept. The operation of \emph{identifying} a vertex set~$X$ in an undirected graph~$G$ yields the graph~$G'$ obtained from~$G$ by removing~$X$ while inserting a single new vertex~$x$ with~$N_{G'}(x) = N_G(X)$.

\begin{observation} \label{obs:mwc:partition:or:set}
Let~$G$ be an undirected graph with terminal set~$T$ and let~$\mathcal{T} = (T_1, \ldots, T_s)$ be an $s$-partition of~$T$. The following two conditions are equivalent for each vertex set~$U \subseteq V(G) \setminus T$:
\begin{itemize}
	\item The set~$U$ is a multiway cut of the partition~$\mathcal{T}$ in~$G$;
	\item No connected component of~$G' - U$ contains more than one vertex of~$T'$, where~$G'$ is the graph obtained from~$G$ by identifying each non-empty vertex set~$T_i$ for~$i \in [s]$ into a single vertex~$t_i$, and~$T'$ is the set containing the vertices~$t_i$ resulting from these identifications. (The order in which the sets are identified does not affect the final result.)
\end{itemize}
\end{observation}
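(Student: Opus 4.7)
The plan is to prove both directions of the equivalence by exploiting a walk correspondence between $G - U$ and $G' - U$ induced by the identification operation. As a preliminary remark, I would note that because $G'$ arises from $G$ by replacing each non-empty $T_i$ with a single vertex $t_i$ satisfying $N_{G'}(t_i) = N_G(T_i)$, while preserving all edges with both endpoints outside $T$, the following holds: a walk in $G - U$ collapses to a walk in $G' - U$ by replacing each visit to a vertex of $T_i$ with a visit to $t_i$; and conversely, any walk in $G' - U$ can be expanded to a walk in $G - U$ by choosing, at every occurrence of a terminal $t_i$, some vertex of $T_i$ that witnesses the adjacencies used by the walk's neighbors. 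The hypothesis $U \subseteq V(G) \setminus T$ ensures this correspondence is compatible with vertex removal, since no element of $T$ is deleted on either side.

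For the direction $(\Rightarrow)$, I would assume $U$ is a multiway cut of $\mathcal{T}$ in $G$ and argue by contradiction that some connected component of $G' - U$ contains two distinct terminals $t_i, t_j \in T'$. A $t_i$–$t_j$ path in $G' - U$ then expands to a walk in $G - U$ starting at some $u \in T_i$ and ending at some $v \in T_j$; extracting a simple $u$–$v$ path yields a path in $G - U$ between vertices of different parts, contradicting the multiway-cut property.

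For the direction $(\Leftarrow)$, I would assume that every component of $G' - U$ contains at most one terminal of $T'$, and consider any pair $u \in T_i$, $v \in T_j$ with $i \ne j$ (both necessarily present in $G - U$ since $U \cap T = \emptyset$). Any hypothetical $u$–$v$ path in $G - U$ would collapse to a $t_i$–$t_j$ walk in $G' - U$, placing those two terminals in a common component — contradicting the assumption. Hence no such path exists, so $U$ is a multiway cut of $\mathcal{T}$ in $G$.

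Finally, the parenthetical remark that the order of identifications is irrelevant follows because the resulting graph is determined by the data $\{N_G(T_i)\}_{i \in [s]}$ together with the edges of $G$ inside $V(G) \setminus T$, neither of which depends on the order in which the $T_i$ are contracted. I do not anticipate any real technical obstacle here: the argument is pure bookkeeping between $G$ and $G'$. The only point worth stating carefully is the walk correspondence above, which is the single ingredient that both directions invoke.
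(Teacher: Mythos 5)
Your direction from the second condition back to the first is fine: collapsing a $u$--$v$ path in $G-U$ (with $u \in T_i$, $v \in T_j$, $i \neq j$) by mapping every visited terminal to its identified vertex does produce a $t_i$--$t_j$ walk in $G'-U$, which is all that direction needs, and the order-independence remark is also correct. The gap is in the other direction, where you invoke the general claim that any walk in $G'-U$ can be expanded to a walk in $G-U$ by choosing, at each occurrence of a terminal $t_\ell$, a single vertex of $T_\ell$ ``witnessing the adjacencies used.'' That claim is false when the walk passes through an intermediate identified terminal: the edge into $t_\ell$ and the edge out of $t_\ell$ may be witnessed by two different vertices $a,b \in T_\ell$, and nothing guarantees that $a$ and $b$ are adjacent or even connected in $G-U$, since $T_\ell$ need not induce a connected subgraph. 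For instance, with $T_1=\{a\}$, $T_2=\{c\}$, $T_3=\{p,q\}$ and edge set $\{ax, xp, qy, yc\}$, the identified graph contains the path $(t_1, x, t_3, y, t_2)$, yet $G$ contains no $a$--$c$ walk at all; so the expansion you rely on simply does not exist in general, and your forward direction never uses the multiway-cut hypothesis to justify it.

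The repair is small, and it is exactly the subpath argument the paper's own one-line proof uses: instead of expanding the whole $t_i$--$t_j$ path, take a maximal terminal-free segment of it, say the prefix ending at the first terminal $t_b \neq t_i$ it meets. Its internal vertices lie in $V(G)\setminus (T \cup U)$, and by the definition of identification its two terminal endpoints can be replaced by witnesses $u \in T_i$ and $v \in T_b$ adjacent to the first and last internal vertex (or to each other, if the segment is a single edge). This yields a path in $G-U$ between vertices of two \emph{different} parts of $\mathcal{T}$, which already contradicts $U$ being a multiway cut of $\mathcal{T}$ --- you do not need to reconnect $T_i$ to $T_j$ specifically, only $T_i$ to some other part. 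With that substitution in the forward direction, your proof is correct and matches the paper's argument.
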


The forward implication is trivial; the converse follows from the fact that any path between two distinct vertices of~$T'$ contains a subpath connecting vertices belonging to two different sets of the partition~$\mathcal{T}$, which must therefore be broken by any multiway cut for the partition~$\mathcal{T}$ in~$G$.

\begin{lemma} \label{lem:multiway:alg}
There is an algorithm that, given an undirected $n$-vertex graph~$G$, an integer~$s$, an $s$-partition~$\mathcal{T} = (T_1, \ldots, T_s)$ of a terminal set~$T \subseteq V(G)$, and integer~$k$, runs in time~$2^{\Oh(k)} \cdot n^{\Oh(1)}$ and either outputs the minimum size of a restricted multiway cut for~$\mathcal{T}$, or outputs~$\bot$ to indicate that any such restricted multiway cut has size larger than~$k$. 
\end{lemma}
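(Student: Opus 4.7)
The plan is to reduce the partitioned multiway cut problem directly to the standard multiway cut problem handled by \cref{thm:multiway:marx}, using the equivalence provided by \cref{obs:mwc:partition:or:set}. First, I would construct the graph $G'$ from $G$ by identifying each non-empty set $T_i$ into a single vertex $t_i$ (in any order, as the observation notes); this takes polynomial time and yields a graph satisfying $|V(G')| \leq |V(G)|$. Let $T' = \{t_i \mid T_i \neq \emptyset\}$ be the resulting terminal set, and observe that $V(G') \setminus T' = V(G) \setminus T$ since identification only affects terminal vertices.

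Next, I would invoke the algorithm of \cref{thm:multiway:marx} on the instance $(G', T', k)$. As noted in the paragraph following that theorem, running the decision version for each candidate value $k' \in \{0, \ldots, k\}$ (which only adds a polynomial factor to the running time) allows us to recover either the exact minimum size of a vertex set $U \subseteq V(G') \setminus T'$ such that each connected component of $G' - U$ contains at most one vertex of $T'$, or the information that any such set has size larger than $k$. In the first case I return that minimum size; in the second case I return $\bot$.

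Correctness then follows immediately from \cref{obs:mwc:partition:or:set}: since $V(G') \setminus T' = V(G) \setminus T$, a set $U$ in this common vertex set is a restricted multiway cut for $\mathcal{T}$ in $G$ if and only if no component of $G' - U$ contains more than one vertex of $T'$, which is exactly the condition checked by the standard multiway cut algorithm. In particular, the minimum values of the two problems coincide, and the algorithm returns $\bot$ precisely when every such cut has size larger than $k$ in both instances.

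The total running time is $2^{\Oh(k)} \cdot n^{\Oh(1)}$, dominated by the calls to the algorithm of \cref{thm:multiway:marx}. There is no substantial technical obstacle here; the only point worth flagging is to double-check that the identification step neither introduces spurious terminal-terminal connectivity beyond what already existed (any edge between two distinct $T_i, T_j$ in $G$ becomes an edge $t_i t_j$ in $G'$, which correctly forces the output to be $\bot$ whenever such an edge prevents a restricted cut from existing) nor destroys relevant paths (reachability among vertices of $V(G) \setminus T$ in $G - U$ and in $G' - U$ coincides for any $U \subseteq V(G) \setminus T$, because identification preserves all paths whose internal vertices avoid $T$).
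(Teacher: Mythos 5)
Your proposal is correct and follows essentially the same route as the paper: identify each part $T_i$ into a single terminal $t_i$, run the algorithm of \cref{thm:multiway:marx} on the resulting instance (over budgets up to $k$), and appeal to \cref{obs:mwc:partition:or:set} for correctness. The additional checks you flag about identification preserving connectivity are exactly the content of that observation, so nothing is missing.
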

\begin{proof}
Apply the algorithm of \cref{thm:multiway:marx} to the graph~$G'$ and terminal set~$T'$ obtained by identifying each set~$T_i$ into a new terminal vertex~$t_i \in T'$, with a budget of~$k$. Correctness follows from \cref{obs:mwc:partition:or:set}.
\end{proof}

We will use \cref{lem:multiway:alg} to construct a \emph{deterministic} FPT-algorithm to compute a cut covering set later. For a vertex set~$U$ in an undirected graph~$G$, the operation of \emph{bypassing} vertex set~$U$ results in the graph~$\bypass_U(G)$ that is obtained as follows: starting from~$G - U$, add an edge $uv$ between every pair of distinct vertices $u,v$ which appear together in the same connected component of $G[U]$? Hence $\bypass_U(G)$ is a graph on vertex set~$V(G) \setminus U$ that has an edge~$uv$ whenever~$uv \in E(G)$ or there is a path from~$u$ to~$v$ for which all internal vertices belong to~$U$.
	
	\begin{observation} \label{obs:multiway:bypass}
		Let~$G$ be an undirected graph with terminal set~$T \subseteq V(G)$, and let~$(T_1, \ldots, T_s)$ be a partition of~$T$. For any set~$R \subseteq V(G) \setminus T$ of bypassed vertices and any vertex set~$U \subseteq V(G) \setminus R$, the set~$U$ is a restricted multiway cut of~$(T_1, \ldots, T_s)$ in~$G$ if and only if~$U$ is a restricted multiway cut of~$(T_1, \ldots, T_s)$ in~$\bypass_R(G)$.
	\end{observation}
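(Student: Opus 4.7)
The plan is to prove both implications by a direct correspondence between $(T_i, T_j)$-paths that avoid $U$ in $G$ and $(T_i, T_j)$-paths that avoid $U$ in $\bypass_R(G)$, for any two distinct parts $T_i, T_j$. Since $R \cap T = \emptyset$, the terminal set $T$ is contained in both $V(G)$ and $V(\bypass_R(G)) = V(G) \setminus R$, so the restriction $U \cap T = \emptyset$ is literally identical in the two settings; only the path-cutting condition needs work. I would establish the two directions by projecting and lifting, respectively, and then take contrapositives.

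For the forward direction, suppose $U$ is a restricted multiway cut in $\bypass_R(G)$ and assume for contradiction that $G - U$ contains a path $P$ from some $t_i \in T_i$ to some $t_j \in T_j$ with $i \neq j$. I would let $v_{a_0} = t_i, v_{a_1}, \ldots, v_{a_m} = t_j$ be the subsequence of vertices of $P$ that are not in $R$ (well-defined since $t_i, t_j \notin R$). Between consecutive $v_{a_b}$ and $v_{a_{b+1}}$ on $P$, all intermediate vertices lie in $R$, so $G$ contains a $v_{a_b}$-$v_{a_{b+1}}$ path whose internal vertices belong to $R$; by the definition of $\bypass_R(G)$, the edge $v_{a_b} v_{a_{b+1}}$ is then present in $\bypass_R(G)$. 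Chaining these edges yields a walk in $\bypass_R(G)$ from $t_i$ to $t_j$ whose vertices are precisely $\{v_{a_0}, \ldots, v_{a_m}\}$, all lying on $P$ outside $R$, and hence outside $U$. Any such walk contains a $(T_i, T_j)$-path in $\bypass_R(G) - U$, contradicting the assumption.

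For the backward direction, suppose $U$ is a restricted multiway cut in $G$ and assume for contradiction that $\bypass_R(G) - U$ contains a $(T_i, T_j)$-path $P'$ with $i \neq j$. I would lift $P'$ to a walk in $G - U$ by replacing each edge $uv$ of $P'$ that is not present in $E(G)$ with some $u$-$v$ path in $G$ whose internal vertices belong to $R$; such a path exists by the definition of $\bypass_R(G)$. The replacement internal vertices lie in $R$ and are therefore disjoint from $U$ thanks to $U \subseteq V(G) \setminus R$, while the vertices of $P'$ itself avoid $U$ by assumption. The resulting walk in $G - U$ from $t_i$ to $t_j$ contains a path between the same endpoints, contradicting that $U$ is a multiway cut in $G$.

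I do not anticipate a real obstacle here: the content is pure bookkeeping expressing that bypass edges in $\bypass_R(G)$ encode exactly the $R$-internal paths of $G$. The only subtleties to watch are ensuring, when projecting, that the endpoints $t_i, t_j$ lie outside $R$ (which follows from $R \cap T = \emptyset$), and, when lifting, that internal vertices of replacement paths are disjoint from $U$ (which follows from $U \cap R = \emptyset$).
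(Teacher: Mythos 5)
Your proof is correct: the projection argument (contracting maximal $R$-segments of a path in $G-U$ into bypass edges) and the lifting argument (expanding bypass edges into $R$-internal paths, which avoid $U$ since $U \subseteq V(G) \setminus R$) together establish exactly the stated equivalence, and you correctly note that the condition $U \cap T = \emptyset$ is unaffected because $R \cap T = \emptyset$. The paper states this as an observation without proof, treating it as immediate from the definition of $\bypass_R(G)$; your write-up is precisely the intended bookkeeping and contains no gaps.
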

	
	This observation implies that bypassing a set of non-terminal vertices can only make it \emph{harder} to form a multiway cut. 
	
\subsection{Covering multiway cuts for generalized partitions} \label{sec:coveringg:multiway}
\lemCoveringCorollary*
\begin{proof}
%
We present the algorithm first, which will be followed by its correctness proof and analysis. Given the graph~$G$ with terminal set~$T$ as input, the algorithm proceeds as follows.

\begin{enumerate}
	\item Number the vertices of~$V(G) \setminus T$ as~$v_1, \ldots, v_\ell$. Initialize a set~$R$ of \emph{redundant} vertices, initially empty.
	\item For each~$i \in [\ell]$, do the following:
	\begin{itemize}
		\item Initialize a Boolean variable~$E[i]$ to keep track of whether~$v_i$ is essential, initially $\mathsf{false}$.
		\item For each of the~$(s+2)^{|T|}$ distinct generalized $s$-partitions~$\mathcal{T}$ of the set~$T$, do the following.
		\begin{enumerate}
				\item Let~$X_{\mathcal{T}} \in \mathbb{N} \cup \{\bot\}$ be the result of invoking the algorithm from \cref{lem:multiway:alg} with parameter value~$k := |T| + 1$ on the graph~$\bypass_R(G - T_X)$ with partition~$\mathcal{T}$. 
				\item Let~$Y_{\mathcal{T}} \in \mathbb{N} \cup \{\bot\}$ be the result of invoking the algorithm from \cref{lem:multiway:alg} with parameter value~$k := |T| + 1$ on the graph~$\bypass_{R \cup \{v_i\}}(G - T_X)$ with partition~$\mathcal{T}$. 
				\item If~$X_\mathcal{T} \neq Y_{\mathcal{T}}$, then set~$E[i]$ to $\mathsf{true}$. We say that~$\mathcal{T}$ is a \emph{witness} to the fact that~$v_i$ is essential.
		\end{enumerate}
		\item If~$E[i]$ is still $\mathsf{false}$ after having considered all~$|T|^{s+2}$ generalized $s$-partitions of~$T$: add~$v_i$ to~$R$.
	\end{itemize}
	\item Let~$R^*$ denote the contents of the set~$R$ after completing the previous loop. 
	\item Output~$Z := V(G) \setminus R^*$ as the cut covering set. (The vertices of~$T$ remain in the cut covering set despite the set covering restricted multiway cuts; their presence in a cut may be needed when they are placed in the ``free''~$T_0$ part of a generalized $s$-partition.)
\end{enumerate}

In summary, the algorithm tries the non-terminal vertices one at a time, testing for each of them whether there is a generalized $s$-partition that is affected by bypassing the vertex. We note the following important property of the algorithm above: for each generalized $s$-partition~$\mathcal{T}$ of~$T$ that admits a restricted multiway cut of size at most~$|T|$ in~$G$, we have~$\rmwc(G,\mathcal{T}) = \rmwc(\bypass_{R^*}(G), \mathcal{T}) = \rmwc(\bypass_R(G), \mathcal{T})$ for all~$R \subseteq R^*$. Conversely, any~$\mathcal{T}$ that does not admit a restricted multiway cut of size at most~$|T|$ in~$G$, does not admit such a cut in~$\bypass_{R^*}(G)$ either.

We now analyze the algorithm and its output. The bound on the running time follows easily: the algorithm consists of~$\ell \leq n$ iterations. In each iteration, we consider all~$(s+2)^{|T|} \in 2^{\Oh(|T|)}$ generalized $s$-partitions of~$\mathcal{T}$. For each of them, we make two calls to the algorithm of \cref{lem:multiway:alg} that runs in time~$2^{\Oh(|T|)} \cdot n^{\Oh(1)}$. Hence the total running time is~$2^{\Oh(|T|)} \cdot n^{\Oh(1)}$. Next, we prove that the output~$Z$ covers restricted minimum multiway cuts in the claimed way.

\begin{claim}
For each generalized $s$-partition~$\mathcal{T}$ of~$T$, if there is a restricted multiway cut for~$\mathcal{T}$ of size at most~$|T|$ in~$G$, then the computed set~$Z$ contains a \emph{minimum} restricted multiway cut of~$\mathcal{T}$.
\end{claim}
\begin{claimproof}
Consider a generalized $s$-partition~$\mathcal{T}$ of~$T$ that admits a restricted multiway cut of size at most~$|T|$ in~$G$. Let~$U$ be a minimum restricted multiway cut of~$\mathcal{T}$. Since the procedure has considered this partition~$\mathcal{T}$ in all its iterations, every time it added a vertex~$v_i$ to the set~$R$ we know that bypassing vertex~$v_i$ did not change the size of a restricted multiway cut for~$\mathcal{T}$ if that size was at most~$|T|$. Hence the graph~$\bypass_{R^*}(G)$ contains a restricted multiway cut~$U$ for~$\mathcal{T}$ of size~$\rmwc(G, \mathcal{T}) \leq |T|$. Note that~$U \subseteq V(G) \setminus R^*$ since~$\bypass_{R^*}(G)$ does not contain vertex set~$R^*$. By Observation~\ref{obs:multiway:bypass} and the fact that~$R^* \cap T = \emptyset$, the set~$U$ is also a restricted multiway cut for~$\mathcal{T}$ in graph~$G$, and since its size is at most~$\rmwc(G, \mathcal{T})$ it is minimum.
\end{claimproof}

To complete the proof of \cref{lem:covering:corollary}, it suffices to argue that~$|Z| \leq \Oh(|T|^{2s+s})$. To prove this bound, we will relate restricted multiway cuts in~$G$ of size at most~$|T|$ to non-restricted multiway cuts in an auxiliary graph~$G'$ with a somewhat larger terminal set~$T'$ that contains multiple copies of each terminal to prevent it from being deleted. We will then show that the algorithm by Kratsch and Wahlstr\"{o}m (\cref{thm:multiway-cover}) would have included every vertex of~$Z \setminus T$ in its cut covering set of~$G'$, which allows us to bound the size of~$Z \setminus T$ in terms of the output guarantee of \cref{thm:multiway-cover} with terminal set~$T'$.


We proceed to make these notions precise. Let~$G'$ be the undirected graph obtained as follows:
\begin{enumerate}
	\item Initialize~$G'$ as the graph~$\bypass_{R^*}(G)$ obtained from~$G$ by bypassing all vertices of~$R^*$.
	\item For each terminal vertex~$t \in T$, insert~$|T|$ additional false-twin copies~$t^{(1)}, \ldots, t^{(|T|)}$ into~$G'$. Hence for each for each~$i \in [|T|]$, the set~$N_{G'}(t^{(i)})$ contains the (copies of) the neighbors of vertex~$t$ in~$\bypass_{R^*}(G)$. 
	\item Let~$T' = T \cup \{ t^{(i)} \mid t \in T, i \in [|T|]\}$ denote the set of original terminals together with their copies. 
\end{enumerate}
Note that~$|T'| \in \Oh(|T|^2)$.

Any generalized $s$-partition~$\mathcal{T} = (T_0, T_1, \ldots, T_s, T_X)$ of~$T$ maps to a generalized $s$-partition~$\mathcal{T}' = (T'_0, T'_1, \ldots, T'_s, T'_X)$ of~$T'$. Starting from the generalized $s$-partition~$\mathcal{T}$, we obtain~$\mathcal{T}'$ by allocating the copies~$t^{(i)} \in T' \setminus T$ of a vertex~$t \in T$ to sets of~$\mathcal{T}'$ as follows:
\begin{itemize}
	\item for each~$t \in T_0$, its copies~$t^{(i)}$ for~$i \in [|T|]$ belong to~$T'_X$;
	\item for each~$t \in T_X$, its copies~$t^{(i)}$ for~$i \in [|T|]$ belong to~$T'_X$; and
	\item for each~$t \in T_j$ with~$j \in [s]$, its copies~$t^{(i)}$ for~$i \in [|T|]$ belong to~$T'_j$.
\end{itemize}
We say that~$\mathcal{T}'$ is the generalized $s$-partition of~$T'$ that corresponds to~$\mathcal{T}$. The definition ensures that each vertex that is involved in a separation requirement is represented~$|T|+1$ times, so that such vertices cannot be contained in a multiway cut of size at most~$|T|$. Note that for vertices~$t \in T$ belonging to~$T_0$, which are allowed to be deleted by a restricted multiway cut, the original vertex~$t$ remains in~$T'_0$ but the copies are placed in~$T'_X$. This ensures that the existence of the copies does not make it more costly to delete such vertices.

The following claim shows how restricted multiway cuts in~$\bypass_{R^*}(G)$ correspond to (non-restricted) multiway cuts in~$G'$.
\begin{claim} \label{claim:multiway:restricted}
Let~$\mathcal{T} = (T_0, T_1, \ldots, T_s, T_X)$ be a generalized $s$-partition of~$T$ and let~$\mathcal{T}' = (T'_0, T'_1, \ldots, T'_s, T'_X)$ be the corresponding generalized $s$-partition of~$T'$. The following holds:
\begin{enumerate}
	\item Any restricted multiway cut~$U$ of~$\mathcal{T}$ in~$\bypass_{R^*}(G)$ is a multiway cut of~$\mathcal{T}'$ in~$G'$.
	\item If~$U'$ is a multiway cut of~$\mathcal{T}'$ in~$G'$ of size at most~$|T|$, then~$U := U' \setminus (\bigcup _{i=1}^s T_i)$ is a restricted multiway cut of~$\mathcal{T}$ in~$\bypass_{R^*}(G)$.
\end{enumerate}
\end{claim}
\begin{claimproof}
For the first item, consider an arbitrary restricted multiway cut~$U$ of~$\mathcal{T} = (T_0, T_1, \ldots,$ $ T_s, \linebreak[0] T_X)$ in~$\bypass_{R^*}(G)$. This means that~$U$ is a multiway cut for~$(T_1, \ldots, T_s)$ in the graph~$\bypass_{R^*}(G) - T_X$. Our definition of~$T'_X$ ensures that the graph~$G' - T'_X$ can be obtained from~$\bypass_{R^*}(G) - T_X$ by making false-twin copies of the vertices in~$\bigcup _{j \in [s]} T_j$; here we exploit the fact that the copies of vertices in~$T_0$ end up in~$T'_X$. No vertex of~$\bigcup _{j \in [s]} T_j$ is contained in~$U$ since it is a \emph{restricted} multiway cut for~$\mathcal{T}$. As making false-twin copies of a vertex does not affect which pairs of the remaining vertices can reach each other, it follows that~$U$ is also a multiway cut for~$(T_1, \ldots, T_s)$ in~$G' - T'_X$. Since~$(T'_1, \ldots, T'_s)$ is obtained from~$(T_1, \ldots, T_s)$ by making false-twin copies and placing them in the same set that holds the original, the set~$U$ is also a multiway cut for~$(T'_1, \ldots, T'_s)$ in~$G' - T'_X$. Hence~$U$ is a multiway cut for~$\mathcal{T}'$ in~$G'$, as required.

Next, we establish the second item of the claim. Consider a multiway cut~$U'$ of~$\mathcal{T}'$ in~$G'$ of size at most~$|T|$. Hence~$U'$ is a multiway cut for~$(T'_1, \ldots, T'_s)$ in~$G' - T'_X$. We prove that~$U := U' \setminus (\bigcup _{i=1}^s T_i)$ is a restricted multiway cut for~$\mathcal{T}$ in~$\bypass_{R^*}(G)$. Assume for a contradiction that this is not the case. Since~$U$ is disjoint from~$\bigcup_{i=1}^s T_i$ by construction, it does not violate the requirement that the multiway cut is restricted. Hence~$U$ violates the separation requirement: there exist two different terminal sets~$T_i, T_j$ with~$1 \leq i < j \leq s$ and terminals~$t_i \in T_i, t_j \in T_j$ such that~$(\bypass_{R^*}(G) - T_X) - U$ contains a path~$P$ from~$t_i$ to~$t_j$. We show that~$P$ can be transformed into a path~$P'$ in~$(G' - T'_X) - U'$ that connects a vertex~$t'_i \in T'_i$ to a vertex~$t'_j \in T'_j$, which will contradict the assumption that~$U'$ is a multiway cut for~$\mathcal{T}'$ in~$G'$. To obtain~$P'$ from~$P$, it suffices to replace any vertex~$t$ on~$P$ that belongs to~$U' \cap (\bigcup _{i=1}^s T_i)$, by a copy~$t^{(i)}$ of~$t$ with~$t^{(i)} \notin U'$. Such a copy exists since~$|U'| \leq |T|$ by assumption, so that~$U'$ cannot contain all~$|T|$ copies of~$t$ together with the original vertex~$t \in U' \cap T$. The correspondence between~$\mathcal{T}$ and~$\mathcal{T}'$ ensures that the vertices~$t'_i, t'_j$ used to replace the endpoints of~$P$ (if needed) belong to distinct sets~$T'_i, T'_j$ of the generalized $s$-partition~$\mathcal{T}'$ with~$1 \leq i < j \leq s$. Hence~$P'$ contradicts the assumption that~$U'$ is a multiway cut for~$\mathcal{T}'$ in~$G'$, which concludes the proof.
\end{claimproof}

Using the preceding claim to translate between restricted and non-restricted multiway cuts, we now analyze the output of the algorithm.

\begin{claim} \label{claim:z:essential:g}
For each vertex~$v_i \in Z \setminus T$, there is a generalized $s$-partition~$\mathcal{T}$ of~$T$ such that:
\begin{enumerate}
	\item there is a restricted multiway cut~$U$ of size at most~$|T|$ for~$\mathcal{T}$ in~$\bypass_{R^*}(G)$, and
	\item vertex~$v_i$ belongs to \emph{all} minimum restricted multiway cuts for~$\mathcal{T}$ in~$\bypass_{R^*}(G)$.
\end{enumerate}
\end{claim}
\begin{claimproof}
Consider an arbitrary vertex~$v_i \in Z \setminus T$ and recall that~$R^*$ denotes the contents of variable~$R$ upon termination. By definition of~$Z$, we have~$v_i \in V(G) \setminus R^*$. The fact that~$v_i \notin T$ means the algorithm performed an iteration in which it considered this vertex~$v_i$ as a candidate to add to~$R$. Let~$\hat{R}$ be the vertex set representing the contents of variable~$R$ just before the iteration for~$v_i$. Since~$v_i \notin R^*$, vertex~$v_i$ was not added to~$R$ in its iteration. Hence there is a witnessing partition $\mathcal{T} = (T_0, T_1, \ldots, T_s, T_X)$ of~$T$ for which the two calls to \cref{lem:multiway:alg} yielded different outcomes~$X_{\mathcal{T}} \neq Y_{\mathcal{T}}$. Let~$\mathsf{opt}_{\mathcal{T}}$ denote the minimum size of a restricted multiway cut for~$\mathcal{T}$ in~$\bypass_{\hat{R}}(G - T_X)$. Since bypassing a non-terminal vertex can only \emph{increase} the cost of a restricted multiway cut, the fact that~$\mathcal{T}$ is a witness for~$v_i$ implies that~$X_\mathcal{T} = \mathsf{opt}_{\mathcal{T}} \leq |T|$ and that either~$Y_{\mathcal{T}} > X_{\mathcal{T}}$ or~$Y_{\mathcal{T}} = \bot$.

As noted above, the procedure for adding vertices to~$R$ is such that for any generalized $s$-partition that admits a restricted multiway cut of size at most~$|T|$, the size of a minimum restricted multiway cut is preserved whenever vertices are added to~$R$. Hence we have~$\rmwc(G, \mathcal{T}) = \rmwc(\bypass_{\hat{R}}(G), \mathcal{T}) = \rmwc(\bypass_{R^*}(G), \mathcal{T}) = \mathsf{opt}_{\mathcal{T}} \leq |T|$. Now assume for a contradiction that there is a minimum restricted multiway cut~$U$ of~$\mathcal{T}$ in~$\bypass_{R^*}(G)$ that does not contain~$v_i$; note that~$U$ is a subset of the vertices of~$\bypass_{R^*}(G)$, so that~$U \subseteq V(G) \setminus R^* = Z$. By the previous equality, we have~$|U| = \mathsf{opt}_{\mathcal{T}} \leq |T|$. Observation~\ref{obs:multiway:bypass} ensures that~$U$ is also a restricted multiway cut of~$\mathcal{T}$ in~$\bypass_{\hat{R}}(G)$. Since~$v_i \notin U$, another application of Observation~\ref{obs:multiway:bypass} shows that~$U$ is also a valid restricted multiway cut of~$\mathcal{T}$ in~$\bypass_{\hat{R} \cup \{v_i\}}(G)$. But then the graph~$\bypass_{\hat{R} \cup \{v_i\}}(G)$ has a restricted multiway cut of size~$\mathsf{opt}_{\mathcal{T}} = X_{\mathcal{T}} \leq |T|$ that the algorithm should have found; a contradiction to the assumption that~$Y_{\mathcal{T}} \neq X_{\mathcal{T}}$. This proves the claim.
\end{claimproof}

The following claim translates the essentiality of vertices to restricted multiway cuts in~$\bypass_{R^*}(G)$, into the essentiality for unrestricted multiway cuts in~$G'$.

\begin{claim} \label{claim:z:essential:gprime}
For each vertex~$v_i \in Z \setminus T$, there is a generalized $s$-partition~$\mathcal{T}'$ of~$T'$ such that vertex~$v_i$ belongs to \emph{all} minimum (non-restricted) multiway cuts of~$\mathcal{T}'$ in~$G'$.
\end{claim}
\begin{claimproof}
Let~$\mathcal{T} = (T_0, T_1, \ldots, T_s, T_X)$ be a generalized $s$-partition of~$T$ such that~$v_i$ belongs to all minimum restricted multiway cuts of~$\mathcal{T}$ in~$\bypass_{R^*}(G)$, and for which such minimum cuts have size at most~$|T|$. \cref{claim:z:essential:g} guarantees the existence of such~$\mathcal{T}$. Let~$\mathcal{T}' = (T'_0, T'_1, \ldots, T'_s, T'_X)$ be the generalized $s$-partition of~$T'$ corresponding to~$\mathcal{T}$. We argue that~$v_i$ belongs to all minimum (non-restricted) multiway cuts of~$\mathcal{T}'$ in~$G'$. 

Assume for a contradiction that~$U' \not \ni v_i$ is a minimum multiway cut of~$\mathcal{T}'$ in~$G'$. Since there is a restricted multiway cut of~$\mathcal{T}$ in~$\bypass_{R^*}(G)$ of size at most~$|T|$, while bypassing a set of non-terminals can only increase the size of a cut, there is a restricted multiway cut of~$\mathcal{T}$ in~$G$ of size at most~$|T|$. Hence the procedure for computing~$Z$ considered~$\mathcal{T}$ in each iteration, so that~$\mathsf{opt}_{\mathcal{T}} := \rmwc(G,\mathcal{T}) = \rmwc(\bypass_{R^*}(G), \mathcal{T}) \leq |T|$. The first part of \cref{claim:multiway:restricted} then ensures that the unrestricted counterpart in~$G^*$ satisfies~$\mwc(G^*, \mathcal{T}') \leq \mathsf{opt}_{\mathcal{T}} \leq |T|$. Since~$U'$ is minimum, we have~$|U'| \leq \mathsf{opt}_{\mathcal{T}} \leq |T|$. Hence we can invoke the second part of \cref{claim:multiway:restricted} to infer that~$U := U' \setminus (\bigcup_{i=1}^s T_i)$ is a restricted multiway cut of~$\mathcal{T}$ in~$\bypass_{R^*}(G)$. Since~$v_i \notin U'$, we have~$v_i \notin U$. We have arrived at a contradiction:~$U$ is a restricted multiway cut of~$\mathcal{T}$ avoiding~$v_i$ of size at most~$\mathsf{opt}_{\mathcal{T}} = \rmwc(\bypass_{R^*}(G), \mathcal{T})$ and is therefore optimal for~$\bypass_{R^*}(G)$; this contradicts our choice of~$v_i$.
\end{claimproof}

We can now prove the desired bound on the size of~$Z$.

\begin{claim}
The set~$Z$ computed by the procedure has size~$\Oh(|T|^{s+2})$.
\end{claim}
\begin{claimproof}
Consider the guarantee of \cref{thm:multiway-cover} when applied to graph~$G'$ with terminal set~$T'$ and the chosen value of~$s$. The theorem states that there is an algorithm that, with positive probability, computes a set~$\hat{Z}$ of size~$|\hat{Z}| = \Oh(|T'|^{s+1})$ that contains a minimum multiway cut of every generalized $s$-partition of~$T'$. So in particular, the theorem guarantees there \emph{exists} a set with these properties. So consider a set~$\hat{Z}$ of size~$\Oh(|T'|^{s+1}) = \Oh((|T|^2)^{s+1}) = \Oh(|T|^{2s+2})$ that contains a minimum multiway cut of every generalized $s$-partition. We argue that~$\hat{Z}$ contains~$Z \setminus T$, so that~$|Z| \leq |\hat{Z}| + |T| \in \Oh(|T|^{2s+2})$ which will establish the desired bound on~$|Z|$. It therefore suffices to argue that~$\hat{Z}$ contains all vertices of~$Z \setminus T$.

Consider an arbitrary vertex~$v_i \in Z \setminus T$. By \cref{claim:z:essential:gprime}, there is a generalized $s$-partition~$\mathcal{T}'$ of~$T'$ such that~$z$ belongs to all minimum multiway cuts of~$\mathcal{T}'$ in~$G'$. By assumption on~$\hat{Z}$, there is a minimum multiway cut~$U'$ of~$\mathcal{T}'$ in~$G'$ with~$U' \subseteq \hat{Z}$. Since~$z \in U'$ by the guarantee of \cref{claim:z:essential:gprime}, we have~$z \in U' \subseteq \hat{Z}$, so that~$z \in \hat{Z}$. Since~$z$ was arbitrary, this establishes that~$Z \setminus T \subseteq \hat{Z}$.
\end{claimproof}

This concludes the proof of \cref{lem:covering:corollary}.
\end{proof}
	
		

\subsection{Universal sets: the deterministic version of color coding} \label{sec:color:coding}

For a set~$D$ of size~$n$ and integer~$k$ with~$n \geq k$, an~\emph{$(n,k)$-universal set} for~$D$ is a family~$\mathcal{U}$ of subsets of~$D$ such that for all~$S \subseteq D$ of size at most~$k$ we have~$\{S \cap U \mid U \in \mathcal{U}\} = 2^S$.

\begin{theorem}[{\cite[Theorem~6]{NaorSS95}, cf.~\cite[Theorem~5.20]{CyganFKLMPPS15}}] \label{thm:universal-set}
 For any set~$D$ and integers~$n$ and~$k$ with~$|D| = n \geq k$, an $(n,k)$-universal set~$\mathcal{U}$ for~$D$ with~$|\mathcal{U}| = 2^{\Oh(k)}\log n$ can be created in~$2^{\Oh(k)} n \log n$ time.
\end{theorem}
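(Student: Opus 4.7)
The plan is a two-stage derandomization of the obvious probabilistic construction. A uniformly random subset $U \subseteq D$ satisfies $U \cap S = S'$ with probability $2^{-k}$ for each fixed $k$-subset $S$ and target $S' \subseteq S$, so $\Oh(k \cdot 2^k \log n)$ independent random samples already yield a universal family with high probability by a union bound over all $\binom{n}{k} \cdot 2^k$ requirements; the task is to match this size \emph{deterministically} with a short seed.

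First I would reduce the universe to size $m = \mathrm{poly}(k)$ via an $(n, k, m)$-perfect hash family $\mathcal{H}$: a deterministically constructed collection of functions $h \colon D \to [m]$ with $|\mathcal{H}| = 2^{\Oh(k)} \log n$ such that every $k$-subset of $D$ is mapped injectively by at least one $h \in \mathcal{H}$. Such families are constructible in $2^{\Oh(k)} n \log n$ time, for instance by Fredman--Koml\'os--Szemer\'edi-style polynomial hashing over a suitable finite field, derandomized via the method of conditional expectations, or via splitter constructions. Second, on the small universe $[m]$ I would build an inner $(m, k)$-universal set $\mathcal{V}$ of size $2^{\Oh(k)}$. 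The output is $\mathcal{U} := \{\, h^{-1}(V) \mid h \in \mathcal{H},\ V \in \mathcal{V}\,\}$, of total size $|\mathcal{H}| \cdot |\mathcal{V}| = 2^{\Oh(k)} \log n$ as required. Correctness is immediate: given any $k$-subset $S \subseteq D$ and any target $S' \subseteq S$, pick $h \in \mathcal{H}$ injective on $S$; then $h(S)$ is a $k$-subset of $[m]$, and by the universality of $\mathcal{V}$ there is a $V \in \mathcal{V}$ with $V \cap h(S) = h(S')$, so $U := h^{-1}(V)$ satisfies $U \cap S = S'$ by injectivity.

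The main obstacle is the inner construction of $\mathcal{V}$ of size $2^{\Oh(k)}$ on a universe of size $m = \mathrm{poly}(k)$, since naive enumeration of all $2^m$ subsets is much too large. I would resolve this by taking the support of an explicit $k$-wise $\delta$-almost-independent distribution on $\{0,1\}^m$ with $\delta < 2^{-k}$: any fixed pattern on any $k$ coordinates then has probability at least $2^{-k} - \delta > 0$, so must appear in the support, which therefore \emph{is} an $(m,k)$-universal set. Explicit constructions (Naor--Naor, or Alon--Goldreich--H\aa{}stad--Peralta small-bias sets combined with the standard bias-to-almost-independence reduction) yield such distributions with support size $\Oh(m^2/\delta^2) = 2^{\Oh(k)}$ once $m = \mathrm{poly}(k)$. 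Alternatively one could recurse the same two-stage scheme until the universe shrinks to $\Oh(k)$, at which point brute-force enumeration of the $2^{\Oh(k)}$ subsets is affordable. In either route, careful accounting of the two stages (the outer hash family costs $2^{\Oh(k)} n \log n$ time and the inner construction costs $2^{\Oh(k)}$ time per hash function) delivers the claimed running time.
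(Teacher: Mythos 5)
This theorem is not proved in the paper but quoted from Naor, Schulman and Srinivasan (cf.\ Cygan et al., Theorem~5.20), and your sketch is essentially their construction: an outer perfect-hash/splitter stage of size $2^{\Oh(k)}\log n$ reducing the universe to $\mathrm{poly}(k)$, composed with an inner $(m,k)$-universal set obtained from a $k$-wise $\delta$-almost independent (small-bias) sample space with $\delta<2^{-k}$. The argument is correct, including the key observation that any pattern on $k$ coordinates has positive probability under such a distribution and hence appears in its support, and the size and time bounds combine to $2^{\Oh(k)}\log n$ and $2^{\Oh(k)}n\log n$ as claimed.
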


We extend the notion of universal set to the following notion of \emph{universal function family}.
For sets $A$ of size $n$ and $B$ of size $q$ and an integer $k$ with $n \geq k$,
an~\emph{$(n,k,q)$-universal function family} is a family~$\mathcal{F}$ of functions $A \to B$
such that for all~$S \subseteq A$ of size at most~$k$, the family $\{{f|}_{S} \colon f \in \mathcal{F} \}$ contains all $q^k$ functions from $A$ to $B$. Here we write~$f|_{S}$ for the restriction of~$f$ to domain~$S$.

\begin{corollary}\label{cor:universal-partition}
	There is an algorithm that, given sets $A,B$ with $n=|A|$, $q=|B|$,
	and a positive integer~$k \leq n$, constructs an~$(n,k,q)$-universal function family~$\mathcal{F}$ of size
  $2^{\Oh(k q')} \log^{q'} n$
  in $2^{\Oh(k q')} nq' \log^{q'} n$ time, where $q'=\lceil\log_2 q\rceil$.
  In particular, when $q$ is a constant, both the size and construction time are $2^{\Oh(k)} n^{\Oh(1)}$.
\end{corollary}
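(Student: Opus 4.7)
The plan is to reduce the construction to $q'$ parallel applications of \cref{thm:universal-set}, one per bit of a binary encoding of $B$, and then take a direct product over these bit positions. First I would fix an injective encoding $\mathrm{enc} \colon B \hookrightarrow \{0,1\}^{q'}$ together with an arbitrary default element $b_0 \in B$ to handle the $2^{q'}-q$ binary strings that are not in the image of $\mathrm{enc}$. For each bit position $i \in [q']$, invoke \cref{thm:universal-set} on $A$ with parameter $k$ to obtain an $(n,k)$-universal family $\mathcal{U}_i \subseteq 2^A$ of size $2^{\Oh(k)} \log n$, computable in $2^{\Oh(k)} n \log n$ time.

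The family $\mathcal{F}$ would then consist of all functions $f_{U_1, \ldots, U_{q'}} \colon A \to B$ obtained by choosing $U_i \in \mathcal{U}_i$ for each $i \in [q']$ and setting, for every $a \in A$, the bit string $\sigma(a) \in \{0,1\}^{q'}$ whose $i$-th bit is $1$ iff $a \in U_i$; we then put $f(a) := \mathrm{enc}^{-1}(\sigma(a))$ when $\sigma(a)$ lies in the image of $\mathrm{enc}$, and $f(a) := b_0$ otherwise.

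For correctness, consider any $S \subseteq A$ with $|S| \leq k$ and any target $g \colon S \to B$. For each $i \in [q']$, define $S_i := \{a \in S \mid \text{bit } i \text{ of } \mathrm{enc}(g(a)) \text{ equals } 1\}$. Since $|S| \leq k$, the universal property of $\mathcal{U}_i$ supplies some $U_i \in \mathcal{U}_i$ with $U_i \cap S = S_i$. For the resulting $f = f_{U_1, \ldots, U_{q'}}$ and every $a \in S$, the string $\sigma(a)$ equals $\mathrm{enc}(g(a))$, which lies in $\mathrm{enc}(B)$, so $f(a) = g(a)$ and thus $f|_S = g$, as required.

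For the bounds, $|\mathcal{F}| \leq \prod_{i=1}^{q'} |\mathcal{U}_i| = \bigl(2^{\Oh(k)} \log n\bigr)^{q'} = 2^{\Oh(kq')} \log^{q'} n$. Building the $q'$ universal sets costs $q' \cdot 2^{\Oh(k)} n \log n$ time, and materializing each $f \in \mathcal{F}$ as a table of values takes $\Oh(nq')$ time, giving a total of $2^{\Oh(kq')} nq' \log^{q'} n$. The only mild obstacle is that $q$ need not be a power of two, so some bit strings have no preimage under $\mathrm{enc}$; this is handled transparently by the default value $b_0$, since the universal property only constrains $f$ on the specified $S$, and there $\sigma(a)$ is always a valid encoding by construction. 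The constant-$q$ statement follows by noting that $q' = \Oh(1)$ in that regime.
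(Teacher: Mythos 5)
Your proposal is correct and follows essentially the same route as the paper: both reduce to $q'$ bit-coordinates handled by $(n,k)$-universal sets from \cref{thm:universal-set}, take the Cartesian product, and decode the resulting indicator bit strings into elements of $B$ (the paper via an arbitrary surjection onto $B$, you via an injective encoding with a default value, which is an immaterial difference). The correctness argument and the size/time analysis match the paper's.
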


\begin{proof}
  Let $q'= \lceil \log_2 q \rceil$.
	The algorithm starts by computing an~$(n,k)$-universal set for $A$ in $2^{\Oh(k)} n \log n$ time via \cref{thm:universal-set}.
	We conceptually create~$q'$ copies of this universal set, denoted~$\mathcal{U}_i$ for~$i \in [q']$,
	where each $\mathcal{U}_i$ has size $2^{\Oh(k)} \log n$.
	Now consider the Cartesian product $\mathcal{U}_{\times} := \mathcal{U}_1 \times \ldots \times \mathcal{U}_{q'}$. 
  For a set $X \subseteq A$ we denote its indicator function by $I_X \colon A \to \{0,1\}$, so that $I_X(x) = 1$ if $x \in X$ and $I_X(x)=0$ if $x \not\in X$.
  Also, let $g \colon \{0,1\}^{q'} \to [2^{q'}]$ be a function that converts binary representations of
  integers between $0$ and ($2^{q'}-1$) into integers in the range~$\{1, \ldots, 2^{q'}\}$; note that~$2^{q'} \geq q$.
	Finally, define an arbitrary surjective function $h: [2^{q'}] \to B$.
  Now, we construct a family of functions $\mathcal{F}$ as follows:
  for each $(U_1,\ldots,U_{q'}) \in \mathcal{U}_\times$,
  create a function $f \colon A \to B$ such that
	$f(x)=h(g(I_{U_1}(x), \ldots, I_{U_{q'}}(x)))$.
  %
  It is straightforward to see that $|\mathcal{F}| \leq |\mathcal{U}_\times| \leq |\mathcal{U}_1|^{q'} \leq 2^{\Oh(kq')}\log^{q'}n$.
  This construction can be done in $\Oh(|\mathcal{F}|\cdot nq')$ time
  because for each $f \in \mathcal{F}$, its values~$f(x)$ can be computed by sequentially scanning $(U_1,\ldots,U_{q'})$ in time $\Oh(nq')$.

  We complete the proof by showing that $\mathcal{F}$ is an $(n,k,q)$-universal function family.
  First, observe that for all $S \subseteq A$ of size at most $k$, for each~$i \in [q']$, 
  the family $\{I_{U_i}|_S \colon U_i \in \mathcal{U}_i\}$ contains all functions from $S$ to $\{0,1\}$.
  Consequently, $\{ g(I_{U_1}|_S, \ldots, I_{U_{q'}}|_S) \colon (U_1,\ldots,U_{q'}) \in \mathcal{U}_\times \}$ contains all functions from $S$ to $[2^{q'}]$.
	Since the function $h$ is surjective,
	the family $\{{f|}_S : f \in \mathcal{F} \}$ includes all functions from $S$ to $B$.
\end{proof}

\section{Proofs for Odd Cycle Cuts} 
\label{sec:proofs:oct:cut}

\lemOCCARNSeparators*
\begin{proof}
	We start by proving the former of the two implications.
	
	\begin{claim} \label{clm:tight:occ:cut:optimal:AR:separtor}
		$A_C \cap X_B$ is a minimum-size $\{A, R\}$-separator in $G[X_B]$.
	\end{claim}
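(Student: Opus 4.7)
The plan is to apply Lemma~\ref{lem:AR-separation} to the auxiliary graph $H := G[(X_C \cap A_B) \cup X_B]$, taking $W := X_C \cap A_B$ as the OCT (its removal leaves $G[X_B]$, which is bipartite), partitioning $W$ into the independent sets $W_i := W \cap f_A^{-1}(i)$, and using $c := f_X$ as the proper $2$-coloring of $H - W = G[X_B]$. A short unpacking shows that the sets $A$ and $R$ produced by the lemma in this instantiation coincide exactly with the sets $A$ and $R$ of the imposed $\{A,R,N\}$-separation problem. The lemma therefore converts the question of whether a set $X \subseteq X_B$ separates $A$ from $R$ in $G[X_B]$ into the question of whether $H - X$ admits a proper $2$-coloring extending $f_A|_W$.

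To show that $A_C \cap X_B$ is an $\{A,R\}$-separator, I will use the ``$\Leftarrow$'' direction of the lemma and exhibit a proper $2$-coloring of $H - (A_C \cap X_B) = G[(X_C \cap A_B) \cup (X_B \setminus A_C)]$ extending $f_A|_W$. The key observation is that this induced subgraph contains no vertex of $A_C$ at all (since $A_C \cap A_B = \emptyset$ and $A_C \cap X_B$ has been removed); hence, using that $A_C$ separates $A_B$ from $A_R$ in $G$, there is no edge between its $A_B$-part and its $(A_R \cap X_B)$-part. Coloring the former by $f_A$ and the latter by $f_X$ therefore glues together without conflict and produces the required coloring.

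For minimality, I will argue by contradiction: assume there is an $\{A,R\}$-separator $S \subseteq X_B$ with $|S| < |A_C \cap X_B|$, and use it to construct an OCT of $G[A_B \cup A_C]$ strictly smaller than $A_C$, contradicting tightness. The ``$\Rightarrow$'' direction of Lemma~\ref{lem:AR-separation} supplies a proper $2$-coloring $f'$ of $H - S$ with $f'|_W = f_A|_W$; writing $V_1 := (A_B \cup (A_C \cap X_B)) \setminus S$ and noting that $V_1 \setminus V(H - S) = A_B \cap X_R$, I will define $g \colon V_1 \to \{0,1\}$ by setting $g := f'$ on $V_1 \cap V(H - S)$ and $g := f_A$ on $A_B \cap X_R$. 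The candidate smaller OCT is then $(A_C \setminus X_B) \cup (S \cap (A_B \cup A_C))$, which has size at most $|A_C \setminus X_B| + |S| < |A_C|$, provided that $g$ properly $2$-colors $G[V_1]$.

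Establishing this properness is the main obstacle. The only non-routine case concerns an edge $uv$ of $G[V_1]$ with $u \in A_B \cap X_R$ and $v$ in the domain of $f'$. Here the OCC property of $(X_B, X_C, X_R)$ --- the absence of edges between $X_B$ and $X_R$ --- forces $v \in X_C$, and combining this with $v \in (X_C \cap A_B) \cup (X_B \setminus S)$ pins $v$ down to $W$, where $f'(v) = f_A(v)$ by construction. Both endpoints thus lie in $A_B$, so the edge is properly colored by $f_A$. All remaining cases reduce either to properness of $f'$ on $H - S$ or to properness of $f_A$ on $G[A_B]$, and the argument closes.
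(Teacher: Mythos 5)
Your proposal is correct and follows essentially the same route as the paper's proof: both directions rest on the same instantiation of \cref{lem:AR-separation} with $W = X_C \cap A_B$, the glued coloring ($f_A$ on the $A_B$-part, $f_X$ on the $A_R \cap X_B$-part) for the separator direction, and, for minimality, turning a smaller $\{A,R\}$-separator $S$ into an OCT of $G[A_B \cup A_C]$ smaller than $A_C$, contradicting tightness. The only cosmetic differences are that you intersect $S$ with $A_B \cup A_C$ up front and carry out the coloring-gluing case analysis by hand where the paper invokes \cref{lem:separated:colorings:make:bipartite}.
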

	\begin{claimproof}
		We start by showing that $A_C \cap X_B$ is a (not necessarily smallest) $\{A, R\}$-separator in $G[X_B]$. Consider the graph $\widetilde{G} := G[(X_C \cap A_B) \cup X_B]$, let $W_0 = f_A^{-1}(0) \cap X_C$, $W_1 = f_A^{-1}(1) \cap X_C$ and $c = f_X$.
		
		Using these parameters, we invoke \cref{lem:AR-separation}. Note in particular that the sets~$A$ and~$R$ in this lemma coincide with the sets~$A$ and~$R$ we obtained from \cref{def:imposed:separation:problem} at the start of the proof of \cref{lem:tight:occ:cut:optimal:ARN:separator}. Now, from \cref{lem:AR-separation}, we obtain that it suffices  to show that $\widetilde{G} - (A_C \cap X_B)$ admits a proper $2$-coloring with vertices from $W_0$ and $W_1$ receiving colors $0$ and $1$ respectively. This last property can be satisfied by defining our coloring $f: (V(\widetilde{G}) \setminus (A_C \cap X_B) \rightarrow \{0, 1\}$ to have $f(v) = f_A(v)$ for all $v \in A_B \setminus X_R$. We complete the coloring $f$ by setting $f(v) = f_X(v)$ for all $v \in X_B \cap A_R$. This is a proper $2$-coloring of $\widetilde{G} - (A_C \cap X_B)$ because:
		\begin{itemize}
			\item it is a proper $2$-coloring restricted to $A_B \setminus X_R$ (as $f_A$ is a proper coloring of $G[A_B]$), and
			\item it is a proper $2$-coloring restricted to $X_B \cap A_R$ (as $f_X$ is a proper coloring of $G[X_B]$), and
			\item the sets $A_B \setminus X_R$ and $X_B \cap A_R$ cover all vertices of $\widetilde{G} - (A_C \cap X_B)$ and are not adjacent to one another as they are subsets of $A_B$ and $A_R$ respectively.
		\end{itemize}
		This shows that $A_C \cap X_B$ is an $\{A, R\}$-separator in $G[X_B]$ and it remains to show that it is a smallest such separator. For contradiction, let $S \subseteq X_B$ be an $\{A, R\}$-separator in $G[X_B]$ with $|S| < |A_C \cap X_B|$. Given such a set $S$, we show that $\widehat{S} := (A_C \setminus X_B) \cup S$ is an OCT in $G[A_B \cup A_C]$. Since $\widehat{S}$ is then strictly smaller than $A_C$, this will contradict the assumption that $(A_B, A_C, A_R)$ is a tight OCC, which requires that $A_C$ is a smallest OCT of $G[A_B \cup A_C]$.
		
		To prove that $\widehat{S}$ is an OCT of $G[A_B \cup A_C]$, we show that $G[A_B \cup A_C \cup X_B] - \widehat{S}$ is bipartite of which $G[A_B \cup A_C] - \widehat{S}$ is a subgraph. To this end, we recall that $S$ is an $\{A, R\}$-separator in $G[X_B]$. Hence, by \cref{lem:AR-separation} with $c = f_X$ and $W_0$ and $W_1$ based on $f_A$ as defined above, there is a proper $2$-coloring $\widetilde{f}$ of $\widetilde{G} - S$ with $\widetilde{f}(v) = f_A(v)$ for each $v \in X_C \cap A_B$.
		
		Now, the subgraphs induced by $A_B \setminus X_B$ and $V(\widetilde{G}) \setminus S$ can be properly $2$-colored by $f_A$ and $\widetilde{f}$ respectively. These vertex sets intersect precisely in $X_C \cap A_B$ on which the two colorings agree and which separates the remaining vertices from the two sets. Hence, by \cref{lem:separated:colorings:make:bipartite}, the graph induced by the union of these sets, $G[A_B \cup A_C] - \widehat{S}$, is bipartite.
	\end{claimproof}
	Towards proving that $A_C \cap X_B$ is also a minimum-size $\{A, R, N\}$-separator in $G[X_B]$, we prove the following claim.
	
	\begin{claim} \label{clm:tight:occ:cut:separates:N}
		$A_C \cap X_B$ is a (not necessarily minimum-size) $\{A \cup R, N\}$-separator in $G[X_B]$.
	\end{claim}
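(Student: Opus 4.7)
The plan is to argue by contradiction: assume some path $P$ in $G[X_B] - (A_C \cap X_B)$ connects a vertex $v \in A \cup R$ to a vertex $w \in N$, and then extend $P$ by one vertex on each side to obtain a walk in $G$ that starts in $A_B$, ends in $A_R$, and avoids $A_C$ entirely---contradicting the fact that $A_C$ trivially separates $A_B$ from $A_R$ in $G$ (since $(A_B,A_C,A_R)$ is an OCC, there are no edges between $A_B$ and $A_R$).

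More concretely, I would first record the separation property of~$A_C$: because $(A_B,A_C,A_R)$ is an OCC, $A_C$ is a $\{A_B, A_R\}$-separator in~$G$. Next, unfolding the definitions from \cref{def:imposed:separation:problem} applied to the $4$-tuple $(X_C \cap A_B, X_C \cap A_R, f_A, f_X)$ yields that any $v \in A \cup R$ has a neighbor $v_c \in X_C \cap A_B$, and any $w \in N$ has a neighbor $w_c \in X_C \cap A_R$.

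The main step would then assume, for a contradiction, that $v$ and $w$ lie in a common connected component of $G[X_B] - (A_C \cap X_B)$, and pick a path $P$ between them in this graph (if $v = w$, take the trivial one-vertex path). Prepending $v_c$ and appending $w_c$ produces a walk $W$ from $A_B$ to $A_R$ in $G$. Every internal vertex of $W$ lies on $P$ and hence belongs to $X_B \setminus A_C$, while the two new endpoints $v_c, w_c$ lie in $A_B$ and $A_R$ respectively, which are disjoint from $A_C$. So $W$ does not meet $A_C$, contradicting the observation of the previous paragraph. (In the degenerate case $v = w$, the walk $(v_c, v, w_c)$ has length two with $v \in A_B \cup A_R$, and in either sub-case one of its two edges directly connects $A_B$ to $A_R$, again contradicting the OCC property.)

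I do not expect any real obstacle: the only points requiring care are the degenerate cases $v = w$ and possible equality of intermediate vertices when passing from a walk to a path, but both are handled by the same observation that the extended walk cannot use a vertex of $A_C$. The argument does not rely on tightness of the OCC $(A_B,A_C,A_R)$ at all---only on its being an OCC---so this claim is genuinely a structural statement about arbitrary OCCs, which matches its role as a stepping stone toward the optimality statement in the remainder of the lemma.
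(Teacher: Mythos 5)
Your proposal is correct and takes essentially the same route as the paper: both arguments reduce to the observation that $A \cup R$ and $N$ consist of neighbors of $X_C \cap A_B$ and $X_C \cap A_R$ respectively, so any connection between them in $G[X_B] - (A_C \cap X_B)$ would yield a path from $A_B$ to $A_R$ avoiding $A_C$, contradicting the OCC property of $(A_B, A_C, A_R)$. The paper states this directly via the containments $A \cup R \subseteq A_B \cup A_C$ and $N \subseteq A_R \cup A_C$ rather than by extending the hypothetical path with the witnessing neighbors, but that difference is purely cosmetic.
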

	\begin{claimproof}		
		Note that $A \cup R$, being defined as a set of neighbors of $A_B \cap X_C$, is a subset of $A_B \cup A_C$, because $A_C$ separates $A_B$ from the remaining vertices of the graph. Likewise, as $N$ is defined as a set of neighbors of $A_R \cap X_C$, it is a subset of $A_R \cup A_C$, because $A_C$ also separates $A_R$ from the remaining vertices of the graph. As such, when removing $A_C$ from the graph, the remaining vertices of $A \cup R$ and $N$ live in $A_B$ and $A_R$ respectively. Since $A_C$ separates these latter two sets by definition of $(A_B, A_C, A_R)$ being an OCC, $A_C \cap X_B$ is an $\{A \cup R, N\}$-separator in $G[X_B]$.
	\end{claimproof}
	
	Now, the fact that $A_C \cap X_B$ is an $\{A, R, N\}$-separator follows from the claims that this set is both an$\{A, R\}$-separator (\cref{clm:tight:occ:cut:optimal:AR:separtor}) and an $\{A \cup R, N\}$-separator (\cref{clm:tight:occ:cut:separates:N}). The fact that it is even a \emph{minimum-size} $\{A, R, N\}$-separator follows from the observation that any $\{A, R, N\}$-separator must be at least as large as a minimum-size $\{A, R\}$-separator and from the claim that $A_C \cap X_B$ is itself a minimum-size $\{A, R\}$-separator (\cref{clm:tight:occ:cut:optimal:AR:separtor}).
	
	This proves the second implication from the lemma statement and concludes the proof.
\end{proof}

To facilitate a proof of \Cref{lem:num_components}, here we characterize the components in the bipartite part of an $X_C$-certificate that are safe to remove.

\begin{lemma}\label{lem:redundant-component}
  For a graph $G$ and its vertex set $X_C \subseteq V(G)$,
  let $H$ be an $X_C$-certificate of order $z$ in~$G$.
  Suppose there exists a connected component $D$ of $H - X_C$ that satisfies the following two criteria,
  where $H'$ is a connected component of $H$ that contains $D$, and $X'_C := X_C \cap V(H')$:
  \begin{enumerate}[(1)]
    \item For each vertex $x \in X'_C$ for which $H[\{x\} \cup V(D)]$ contains an odd cycle,
    there are at least $z$ connected components $D' \neq D$ of $H-X_C$ for which
    $H[\{x\} \cup V(D')]$ also contains an odd cycle, and 
    \item For each pair of distinct vertices $x,y \in X'_C$, for each parity $p \in \{0,1\}$,
    if the graph $H[\{x,y\} \cup V(D)]$ contains an $(x,y)$-path of parity $p$,
    then there are at least $z$ connected components $D' \neq D$ of $H-X_C$ for which
    $H[\{x,y\} \cup V(D')]$ also contains an $(x,y)$-path of parity $p$.
  \end{enumerate}
  Then $\oct(H) = \oct(H - V(D))$.
\end{lemma}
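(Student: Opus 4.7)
My plan is to prove $\oct(H) = \oct(H - V(D))$ via two inequalities. The direction $\oct(H - V(D)) \leq \oct(H)$ is immediate: restricting any OCT of $H$ to $V(H) \setminus V(D)$ yields an OCT of $H - V(D)$. For the reverse direction, I will use the additivity of $\oct$ over connected components to reduce to proving $\oct(H') \leq \oct(H' - V(D))$, where $H'$ is the connected component of $H$ containing $V(D)$. Since $X_C$ is an optimal OCT of $H$ and $\oct$ is additive across components, the restriction $X'_C := X_C \cap V(H')$ must be an optimal OCT of $H'$; combined with the certificate bound $|X'_C| \leq z$, it suffices to show $\oct(H' - V(D)) \geq |X'_C|$.

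I will proceed by contradiction: suppose $S'$ is an optimal OCT of $H' - V(D)$ with $|S'| \leq |X'_C| - 1 \leq z - 1$. If $S'$ happened to be an OCT of $H'$, then $\oct(H') \leq |S'| < |X'_C|$ would contradict the optimality of $|X'_C|$; otherwise $H' - S'$ contains an odd cycle $F$, and $F$ must pass through $V(D)$ since $H' - V(D) - S'$ is bipartite. The plan is then to use conditions~(1) and~(2) to transform $F$ into a closed odd walk in $H' - V(D) - S'$; such a walk contains an odd cycle (as noted in the preliminaries), contradicting that $S'$ is an OCT of $H' - V(D)$. Observe first that $F$ must visit at least one vertex of $X'_C$, since otherwise $F$ would lie in a single component of $H' - X'_C \subseteq H - X_C$, which is bipartite. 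Let $\ell \geq 1$ denote the number of $X'_C$-vertices on $F$.

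In the case $\ell = 1$, let $\{x\} = V(F) \cap X'_C$; then $F \setminus \{x\}$ is a path in a single component of $H' - X'_C$, which must be $D$ since $F$ meets $V(D)$, so $F$ is an odd cycle in $H[\{x\} \cup V(D)]$. By condition~(1), at least $z$ components $D' \neq D$ of $H - X_C$ each admit an odd cycle in $H[\{x\} \cup V(D')]$. These components are pairwise vertex-disjoint, so the budget $|S' \setminus X'_C| \leq z - 1$ allows $S'$ to meet at most $z - 1$ of them, leaving some $D^\star$ with $V(D^\star) \cap S' = \emptyset$. Since $x \notin S'$ (as $F$ uses $x$), the odd cycle in $H[\{x\} \cup V(D^\star)]$ lies entirely in $H' - V(D) - S'$, yielding the contradiction.

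For $\ell \geq 2$, I decompose $F$ cyclically into $\ell$ segments separated by its $X'_C$-vertices; each segment with internal vertices lies in a single component of $H' - X'_C$. Call such a segment a \emph{$D$-segment} when its internal vertices lie in $V(D)$; at least one $D$-segment exists, and each is an $(x,y)$-path between two distinct $X'_C$-vertices of some parity $p$. For each $D$-segment $A$ with endpoints $(x_A, y_A)$ and parity $p_A$, condition~(2) supplies at least $z$ components $D' \neq D$ providing an $(x_A, y_A)$-path of parity $p_A$ in $H[\{x_A, y_A\} \cup V(D')]$; since $S'$ meets at most $z-1$ components of $H - X_C$, at least one such $D^\star_A$ is disjoint from $S'$, and I pick a corresponding replacement path $P_A$ inside it. Substituting each $D$-segment $A$ in $F$ by $P_A$ produces a closed walk $F'$ whose length parity matches that of $F$ and is therefore odd, and whose vertices lie in $V(H') \setminus V(D) \setminus S'$ (using $X'_C \cap V(D) = \emptyset$, $V(F) \cap S' = \emptyset$, and the choice of $D^\star_A$). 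The main subtlety I anticipate here is that $F'$ need not be a simple cycle, since the $D^\star_A$'s can coincide across different $A$'s and the replacement paths may overlap with other segments of $F$; however, this is not a real obstacle, because any closed odd walk contains an odd cycle on a subset of its vertex set, so $H' - V(D) - S'$ indeed contains an odd cycle and the contradiction follows.
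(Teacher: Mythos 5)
Your proof is correct, and its engine is the same as the paper's: given a too-small transversal (size $<z$), reroute the portions of an offending odd cycle that pass through $D$ into sibling components supplied by criteria (1) and (2), at least one of which is untouched because the transversal has fewer than $z$ vertices. The difference is in how the contradiction is closed. The paper works with an optimal OCT $X$ of $H'-V(D)$, chooses among all odd cycles of $H'-X$ one minimizing $|V(A)\cap V(D)|$, replaces a \emph{single} $D$-segment, and contradicts that minimality; the extremal choice means criterion (2) is invoked only once and the non-simplicity of the rerouted object is harmless because a closed odd walk still yields an odd cycle using strictly fewer $D$-vertices. You instead avoid the extremal argument: you make the reduction to the component $H'$ and the inequality $\oct(H'-V(D))\geq|X'_C|$ explicit, replace \emph{all} $D$-segments simultaneously (one application of (1) or (2) per segment), and use the preliminaries' observation that a closed odd walk contains an odd cycle to get an odd cycle in $H'-V(D)-S'$, directly contradicting that $S'$ is an OCT of $H'-V(D)$. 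You correctly flag and handle the only delicate point of this variant (the replacement paths may overlap, so the result is a walk, not a cycle). One step both you and the paper leave implicit, and which is worth a sentence, is that the sibling components $D'$ from (1)/(2) lie inside $H'$ — this holds because each such $D'$ is attached to the vertex $x\in V(H')$ (or to $x,y\in V(H')$), so the rerouted odd structure really lives in $H'-V(D)-S'$. With that noted, both arguments are sound; the paper's buys a shorter case analysis, yours is slightly more direct and dispenses with the minimality bookkeeping.
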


\begin{proof}
Suppose $D$ satisfies these criteria.
Assume for a contradiction that $\oct(H - V(D)) < \oct(H)$.
Then $\oct(H' - V(D)) < \oct(H') \leq z$,
where the last inequality comes from the assumption that $H$ is an order-$z$ $X_C$-certificate.
Let $X$ be an optimal OCT for $\oct(H' - V(D))$, whose size is less than $\oct(H') \leq z$.
It follows that $H' - X$ has an odd cycle.
Out of all odd cycles of $H' - X$,
let $A$ be an odd cycle that minimizes the quantity $t := |V(A) \cap V(D)|$.
Note that $t > 0$, otherwise $A$ would also be an odd cycle in $H' - V(D)$
which contradicts $X$ being an OCT for that subgraph.
Since $X_C$ is an OCT in $H$ by the definition of $X_C$-certificate,
the component $D$ of $H - X_C$ is bipartite.
Hence the odd cycle $A$ is not fully contained in $V(D)$,
but contains at least one vertex $d$ of $D$ since $t \geq 1$.
Orient the odd cycle $A$ in an arbitrary direction and let $x$ be the first vertex after $d$ on $A$ that does not belong to $V(D)$.
Similarly, let $y$ be the first vertex before $d$ on $A$ that does not belong to $V(D)$.
We distinguish two cases, depending on whether $x=y$.

Case 1: If $x=y$.
In this case, $A$ is an odd cycle in $H[\{x\} \cup V(D)]$.
Since component $D$ satisfied criterion (1),
there are at least $z$ connected components $D' \neq D$ of $H-X_C$
for which $H[\{x\} \cup V(D')]$ contains an odd cycle.
Since $|X| < z$, at least one of these components $D^*$ does not contain any vertex of $X$.
Then, $H[\{x\} \cup V(D^*)]$ contains an odd cycle $A'$, but we have $V(A') \cap V(D) = \emptyset$.
This contradicts our choice of $A$ as an odd cycle minimizing $V(A) \cap V(D)$.

Case 2: If $x \neq y$.
Observe that the subpath of $C$ from $x$ to $y$ through $d$ forms an $(x,y)$-path $P$ in the graph $H[\{x,y\} \cup V(D)]$.
Let $p$ be the parity of path $P$.
Since component $D$ satisfied criterion (2),
there are at least $z$ connected components $D' \neq D$ of $H-X_C$
that provide an $(x,y)$-path of parity $p$.
Since $|X| < z$, at least one of these components $D^*$ does not contain any vertex of $X$.
Consider the closed walk $A'$ that is obtained by replacing the subpath $P$ of $A$ by
an $(x,y)$-path $P'$ of the same parity in the graph $H[\{x,y\} \cup V(D^*)]$.
Since the replacement preserves the total parity and provides an alternative connection between
the two vertices $x,y$ that appeared on $A$,
it follows that $A'$ is a closed odd walk.
Since $V(D^*) \cap X = \emptyset$, it follows that $A'$ is a closed odd walk in $H' - X$,
which implies that $H' - X$ contains an odd cycle whose vertex set is a subset of $V(A')$.
But note that $A'$ uses strictly fewer vertices from $V(D)$ than $A$ does,
since vertex $d \in V(D)$ does not appear in $V(A')$.
This contradicts our choice of $A$ as an odd cycle minimizing $|V(A) \cap V(D)|$
and proves the claim.
\end{proof}

The following lemma states that we can bound the number of components in an $X_C$-certificate.
This is inspired by and a variation of \cite[Lemma 4.5]{DonkersJ24}.

\begin{lemma}\label{lem:num-components-certificate}
	Let $G$ be a graph. For a set $X_C \subseteq V(G)$, let $H$ be an $X_C$-certificate of order $z$ in~$G$.
  Then $H$ contains an $X_C$-certificate $\hat{H}$ of order $z$ in $G$
	such that $\hat{H}-X_C$ has at most $z^2|X_C|$ components.
\end{lemma}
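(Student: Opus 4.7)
The plan is to iteratively apply Lemma~\ref{lem:redundant-component}. Starting from $\hat{H} := H$, I would repeatedly search for a component $D$ of $\hat{H} - X_C$ that satisfies both criteria (1) and (2) of that lemma (with respect to the current $\hat{H}$) and replace $\hat{H}$ by $\hat{H} - V(D)$. By Lemma~\ref{lem:redundant-component} each such removal preserves the OCT number, so $X_C$ remains optimal for $\hat{H}$; moreover, each component of $\hat{H} - V(D)$ is an induced subgraph of a component of $\hat{H}$, so all components of $\hat{H}$ still have OCT number at most $z$. Hence $\hat{H}$ remains an $X_C$-certificate of order $z$. The process terminates after finitely many steps since $|V(\hat{H})|$ strictly decreases.

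To bound the number of components of $\hat{H} - X_C$, I would first note that every component $\hat{H}'$ of $\hat{H}$ must intersect $X_C$: if some $\hat{H}'$ had $V(\hat{H}') \cap X_C = \emptyset$, then the component $D := V(\hat{H}')$ of $\hat{H} - X_C$ would vacuously satisfy both criteria (there are no vertices $x, y$ in $X_C \cap V(\hat{H}')$ to fail them), contradicting termination. Next, for every surviving component $D$ of $\hat{H} - X_C$, at least one criterion fails, supplying a \emph{witness} of either type (a) a vertex $x \in X_C$ for which $\hat{H}[\{x\} \cup V(D)]$ contains an odd cycle while strictly fewer than $z$ other components of $\hat{H} - X_C$ do, or type (b) a pair $(x, y) \in X_C \times X_C$ with $x \neq y$ together with a parity $p \in \{0, 1\}$ for which $\hat{H}[\{x, y\} \cup V(D)]$ contains an $(x,y)$-path of parity $p$ while strictly fewer than $z$ other components do.

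The key counting observation is that for any fixed witness $w$, at most $z$ components can use $w$: if at least $z+1$ components used $w$, then any one of them would have $\geq z$ others sharing the property, contradicting the witnessing condition. For each component $\hat{H}'$ of $\hat{H}$, let $X'_C := X_C \cap V(\hat{H}')$, so $|X'_C| \leq z$. The witnesses whose underlying vertices lie in $V(\hat{H}')$ number at most $|X'_C|$ of type (a) plus $2\binom{|X'_C|}{2} = |X'_C|(|X'_C| - 1)$ of type (b), for a total of at most $|X'_C|^2$. Since witnesses for components inside $\hat{H}' - X_C$ must use vertices of $X'_C$, the number of components of $\hat{H}' - X_C$ is at most $z \cdot |X'_C|^2$. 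Summing over all components of $\hat{H}$ and using $|X'_C| \leq z$ together with $\sum_{\hat{H}'} |X'_C| = |X_C|$ yields
\[
\#\{\text{components of } \hat{H} - X_C\} \;\leq\; \sum_{\hat{H}'} z \cdot |X'_C|^2 \;\leq\; z^2 \sum_{\hat{H}'} |X'_C| \;=\; z^2 |X_C|,
\]
as required.

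The main obstacle I anticipate is the bookkeeping around removability: since Lemma~\ref{lem:redundant-component} requires \emph{both} criteria simultaneously, a component may survive because of a single path-parity witness even when its odd-cycle witnesses are amply duplicated. I would therefore verify carefully that the witness-counting still gives the clean per-witness bound of $z$, and that witnesses are charged to the component of $\hat{H}$ containing them so the tight estimate $\sum |X'_C|^2 \leq z \sum |X'_C|$ can be applied.
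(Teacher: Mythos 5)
Your proposal is correct and follows essentially the same route as the paper: iteratively delete components satisfying both criteria of Lemma~\ref{lem:redundant-component}, then bound the surviving components by charging each to a witness (a vertex, or a vertex pair with a parity) in $X_C \cap V(\hat{H}')$, with at most $z$ components per witness; your per-component bound $z|X'_C|^2$ summed with $|X'_C|\leq z$ is just a rearrangement of the paper's split into $z|X_C|$ violations of criterion~(1) plus $z(z-1)|X_C|$ violations of criterion~(2).
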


\begin{proof}
  If there is a connected component $D$ of $H-X_C$ that satisfies the two criteria of \Cref{lem:redundant-component},
  then, update $H$ to $H-V(D)$.
  From \Cref{lem:redundant-component}, we have $\oct(H) = \oct(H-V(D))=|X_C|$,
  and since we never remove vertices in $X_C$, $H-V(D)$ remains an \zocc.
  We repeat this process until the criteria become unsatisfied,
  and let $\hat{H}$ be the final graph.

  Now, we count the number of components in $\hat{H}-X_C$.
  For each vertex $x \in X_C$, there are at most $z$ components in $\hat{H}-X_C$ that violates criterion (1)
  because otherwise, one of those components will satisfy criterion (1).
  This results in that there exist at most $z \cdot |X_C|$ components in $\hat{H}-X_C$ violating criterion (1).

  For each component $H'$ of $H$, for each distinct vertex pair $x,y \in X_C \cap H'$, 
  and for each parity $p \in \{0,1\}$, there are at most $z$ components in $\hat{H}-X_C$ that violates criterion (2)
  because otherwise, one of those components will satisfy criterion (2).
  This results in that there exist at most
  $\sum_{H'} \binom{|X_C \cap V(H')|}{2} \cdot 2z
  = z \cdot \sum_{H'} (|H_C \cap V(H')|)(|H_C \cap V(H')|-1) \leq z\cdot (z-1)|X_C|$
  components in $\hat{H}-X_C$ violating criterion (2).

  Hence, the number of components in $\hat{H}-X_C$ is upper-bounded by
  $z \cdot |X_C| + z \cdot (z-1) |X_C|=z^2|X_C|$, as desired.
\end{proof}

Lastly, we restate and prove \Cref{lem:num_components}.

\restatenumcomponents*

\begin{proof}
Since $(A_B,A_C,A_R)$ is a \zocc, the graph $G[A_B \cup A_C]$ contains an $A_C$-certificate $H$ of order $z$.
From \Cref{lem:num-components-certificate} we know that $H$ contains an $A_C$-certificate $\hat{H}$
of order $z$ such that $\hat{H}-A_C$ has at most $z^2|A_C|$ components.

Let $A_B'$ be the union of the vertices in the components $B$ in $G[A_B]$ such that
$B$ includes at least one vertex from $\hat{H}$.
By definition, $(A_B',A_C,A_R \cup A_B \setminus A_B')$ is a \zocc such that
$G[A_B']$ has at most $z^2|A_C|$ components.
\end{proof}

\section{Finding an \occ given a coloring}
\label{sec:proofs:oct:finding}

In this section, we prove \Cref{lem:finding-occ-colored} using 
bipartite separations, as introduced by Jansen et al.~\cite{JansenKW22}. 

Specifically, define a \emph{$(\mathcal{H},k)$-separation} to be 
a pair $(C,S)$ of disjoint vertex sets for a graph class $\mathcal{H}$ and
an integer $k$ such that $G[C] \in \mathcal{H}$, $|S| \leq k$ and $N_G(C) \subseteq S$.
For $\mathcal{H}=\bip$, where $\bip$ denotes the class of bipartite graphs,
Jansen et al.~\cite{JansenKW22} gave a $2$-approximation for the problem of computing a $(\bip, k)$-separation covering a given connected vertex set~$Z$.

\begin{corollary}[{{\cite[Lemma 4.24]{JansenKW22}} (rephrased)}] \label{cor:bipartite-separation}
  There is a polynomial-time algorithm that, given a graph~$G$, an integer~$k$, and a non-empty vertex set~$Z$ such that~$G[Z]$ is connected, either:
	\begin{itemize}
		\item returns a~$(\bip, 2k)$-separation~$(C,S)$ with~$Z \subseteq C$, or
		\item concludes that~$G$ does not have a~$(\bip, k)$-separation~$(C',S')$ with~$Z \subseteq C'$.
	\end{itemize}
\end{corollary}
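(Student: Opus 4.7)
}
The plan is to reduce the approximation problem to computing a minimum vertex $(s,t)$-cut in the bipartite double cover of $G$, anchored by the connected set~$Z$. First I would verify that $G[Z]$ is bipartite; if it is not, then any set $C' \supseteq Z$ has $G[C']$ non-bipartite, so no $(\bip, k)$-separation covering $Z$ can exist and I would return \textsc{No}. Otherwise, since $G[Z]$ is connected and bipartite, its proper 2-coloring $c \colon Z \to \{0,1\}$ is unique up to swapping the two color classes; fix one such $c$.

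Next I would construct an auxiliary graph $H$ as follows. For each vertex $u \in V(G) \setminus Z$, create two copies $u_0$ and $u_1$ (representing the two possible colors for $u$). For each edge $uv$ with $u,v \in V(G) \setminus Z$, add the ``crossing'' edges $u_0 v_1$ and $u_1 v_0$, mirroring the bipartite double cover. Add a source $s$ and a sink $t$. For each edge $zu \in E(G)$ with $z \in Z$ and $u \in V(G) \setminus Z$, if $c(z)=0$ then add $u_0$ to $N_H(s)$ (the copy corresponding to the forbidden color, which should end up ``on the source side''), and otherwise add $u_1$; symmetrically link the opposite copy to~$t$. This way, a vertex $u \in V(G) \setminus Z$ that neighbors $z \in Z$ with $c(z)=0$ has $u_0$ adjacent to $s$ and $u_1$ adjacent to $t$, enforcing that a coloring of $u$ must be~$1$. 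Then I would compute a minimum vertex $(s,t)$-cut $M \subseteq V(H) \setminus \{s,t\}$ in $H$ via standard max-flow, which runs in polynomial time.

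If $|M| > 2k$, I would declare that no $(\bip,k)$-separation covering $Z$ exists. To justify this, I would show soundness: given any $(\bip, k)$-separation $(C^*, S^*)$ with $Z \subseteq C^*$ and $|S^*| \leq k$, extend $c$ to a proper 2-coloring $d$ of $G[C^*]$ (possible because $G[C^*]$ is bipartite and $G[Z]$ is connected). Then $M^* := \{u_{1-d(u)} : u \in C^* \setminus Z\} \cup \{u_0, u_1 : u \in S^*\}$ is an $(s,t)$-vertex-cut in $H$ of size at most $|C^* \setminus Z| \cdot 0 + 2|S^*| \leq 2k$ (vertices of $C^*$ contribute one copy each, but those copies lie between $s$-side and $t$-side in a way that, after routine case analysis of the paths through $H$, are not actually needed when the coloring extension is valid). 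A careful counting argument shows $|M^*| \leq 2|S^*|$, so the minimum cut has size at most $2k$ whenever the promised $k$-separation exists.

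If $|M| \leq 2k$, I would project back by defining $S := \{u \in V(G) \setminus Z : \{u_0, u_1\} \cap M \neq \emptyset\}$, so $|S| \leq |M| \leq 2k$. Let $C$ be the connected component of $G - S$ containing $Z$. The coloring determined by ``$u$ gets color $i$ iff $u_{1-i} \in M$, breaking ties arbitrarily'' descends to a proper 2-coloring of $G[C]$ extending $c$, because the cut property in $H$ prevents any edge of $G[C]$ from receiving equal colors and prevents the $s$- and $t$-side constraints from clashing. Hence $G[C]$ is bipartite and $N_G(C) \subseteq S$, so $(C,S)$ is the desired $(\bip, 2k)$-separation. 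The main obstacle I anticipate is setting up the $s/t$-attachments so the two directions — translating an optimal $G$-separation into a cut of at most twice the size, and reconstructing a valid coloring from the cut — are both correct; in particular, when a vertex $u \in V(G)\setminus Z$ has neighbors in $Z$ of \emph{both} colors, the construction forces both $u_0$ and $u_1$ into any cut, so such vertices are automatically placed in $S$, which is consistent with the observation that they cannot belong to any bipartite $C \supseteq Z$.
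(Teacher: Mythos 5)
The paper does not reprove this statement: it is obtained directly from \cite[Lemma 4.24]{JansenKW22}, with a short remark that the $\mathcal{H}$-treewidth hypothesis of the original lemma is unused for $\mathcal{H}=\bip$ and that the separator there is built in an auxiliary graph separating two copies of $Z$, hence avoids $Z$. Your proposal essentially reconstructs that auxiliary-graph argument from scratch (a bipartite double cover with $s$/$t$ attachments determined by the forced coloring of $Z$, followed by a minimum vertex $(s,t)$-cut), and the overall architecture is sound: a $(\bip,k)$-separation yields a cut of size at most $2k$, and a cut of size at most $2k$ projects to a set $S$ of at most $2k$ vertices whose removal secludes a bipartite component around $Z$. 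So the approach is the right one.

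However, both of your verification steps are wrong as written. In the soundness direction, the set $M^*$ you define contains one copy of \emph{every} vertex of $C^*\setminus Z$, so $|M^*|\ge |C^*\setminus Z|$ and the claimed bound $|M^*|\le 2|S^*|$ is false; the correct witness is simply $\{u_0,u_1 : u\in S^*\}$, and one must then argue that no $(s,t)$-path survives inside the copies of $C^*$ --- this follows from a parity computation: a lifted path from $s$ to $t$ through $C^*$-copies would force $d(u^{(m)})$ to have two different values modulo $2$, where $d$ is the extension of $c$ to $C^*$. In the completeness direction, your coloring rule ``$u$ gets color $i$ iff $u_{1-i}\in M$, breaking ties arbitrarily'' is vacuous on $C$: by definition of $S$, \emph{no} copy of a vertex of $C$ lies in $M$, so the rule assigns every vertex of $C\setminus Z$ an arbitrary color and certainly need not be proper. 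The standard fix is to color $u\in C\setminus Z$ according to which of $u_0,u_1$ is reachable from $s$ in $H-M$, or, by contraposition, to show that any odd closed walk in $G[C]$ based at a vertex of $Z$ decomposes into $Z$-to-$Z$ segments at least one of which has the ``wrong'' parity, and that such a segment lifts to an $(s,t)$-path in $H$ avoiding $M$. Both repairs are routine, but as stated neither direction of your equivalence is actually established.
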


The original formulation of the corresponding result is in terms of a problem called \textsc{$(\mathcal{H}, k)$-Separation Finding}. In the definition of this problem, the input graph is required to have a bound on its $\mathcal{H}$-treewidth and the output separation is only guaranteed to \emph{weakly} cover~$Z$, meaning that~$Z \subseteq C \cup S$ rather than~$Z \subseteq C$. Jansen et al.~\cite{JansenKW22} used this formalism to treat a variety of different graph classes~$\mathcal{H}$. For the particular case of bipartite graphs that is of interest here, the assumption on the $\mathcal{H}$-treewidth of the input graph is never used (as is acknowledged in their paper), and from the specification of the algorithm it can be seen that~$Z \cap S$ is empty: the set~$S$ is constructed as a separator in an auxiliary graph, separating two copies of~$Z$ without intersecting these sets. Hence our formulation follows directly from their results.

\lemFindColoredOCC*
\begin{proof}
    Consider the following algorithm.
    Let $Z$ be the vertices in a connected component of $G[\chi^{-1}(\cB)]$.
    We iterate over all components.
    If $|Z|<\ell$ or $G[Z]$ is not bipartite, continue to the next component.
    Otherwise, we invoke \Cref{cor:bipartite-separation} to find a $(\bip, 2k)$-separation $(B,S)$ such that $Z \subseteq B$.
    If such a separation exists, output $(B, S, V(G) \setminus (B \cup S))$.
    Otherwise, continue to the next component.
  
  \subparagraph*{Correctness.}
  Let $G' = G[\chi^{-1}(\cB)]$.
  By assumption, the single-component \occ $(X_B, X_C, X_R)$ is $\ell$-properly colored by $\chi$. Hence $V(G') \cap X_C = \emptyset$ and $G'$ contains a connected component
  $G[Z]$ such that $Z \subseteq X_B$ and $|Z| \geq \ell$.
  Also, because $(X_B, X_C)$ is a $(\bip, |X_C|)$-separation in $G$ such that $Z \subseteq X_B$,
  the algorithm must output some solution $(B,S,V(G)\setminus (B \cup S))$ such that $Z \subseteq B$,
  which is by definition an \occ,
  and we know $|B| \geq |Z| \geq \ell$ and $|S| \leq 2|X_C| \leq 2k$.
  
  \subparagraph*{Running time.}
  From \Cref{cor:bipartite-separation}, we can,
  given a component $Z$, find a $(\bip,2k)$-separation $(B,S)$
  such that $Z \subseteq B$ in polynomial time.
  There are $\Oh(n)$ components in $G'$, so the overall runtime is also polynomial.
  \end{proof}
\section{Proofs regarding guarantees of \refmark}
In this section, we cover the various guarantees that were promised for \refmark. A visual example of the graph reduction that is part of it may be found in \cref{fig:find-central}.

\begin{figure*}[t]
  \centering
    \tikzset{
        old inner xsep/.estore in=\oldinnerxsep,
        old inner ysep/.estore in=\oldinnerysep,
        double circle/.style 2 args={
            circle,
            old inner xsep=\pgfkeysvalueof{/pgf/inner xsep},
            old inner ysep=\pgfkeysvalueof{/pgf/inner ysep},
            /pgf/inner xsep=\oldinnerxsep+#1,
            /pgf/inner ysep=\oldinnerysep+#1,
            alias=sourcenode,
            append after command={
            let     \p1 = (sourcenode.center),
                    \p2 = (sourcenode.east),
                    \n1 = {\x2-\x1-#1-0.5*\pgflinewidth}
            in
                node [inner sep=0pt, draw, circle, minimum width=2*\n1,at=(\p1),#2] {}
            }
        },
        double circle/.default={-3pt}{black}
    }
  \tikzstyle{large} = [circle, fill=white, text=black, draw, thick, scale=1, minimum size=0.8cm, inner sep=1.5pt]
  \tikzstyle{plain} = [circle, fill=white, text=black, draw, thick, scale=1, minimum size=0.5cm, inner sep=1.5pt]
  \tikzstyle{small} = [circle, fill=white, text=black, draw, thick, scale=1, minimum size=0.2cm, inner sep=1.5pt]
  \tikzmath{\xunit = 1.3; \yunit = 0.8;}

  \begin{minipage}[m]{.50\textwidth}
      \vspace{0pt}
      \centering
      \begin{tikzpicture}
          \node[small] (r5) at (0.2 * \xunit, 2.8 * \yunit) {};
          \node[small] (r6) at (0.2 * \xunit, 3.6 * \yunit) {};

          \node[small] (v1) at (1 * \xunit, 1 * \yunit) {};
          \node[small] (v2) at (1 * \xunit, 2 * \yunit) {};
          \node[small] (v3) at (1 * \xunit, 3.25 * \yunit) {};
          \node[small] (v4) at (1 * \xunit, 4.4 * \yunit) {};
          \node[small] (v5) at (1 * \xunit, 5.2 * \yunit) {};
          \node[small] (v6) at (1 * \xunit, 6 * \yunit) {};

          \node[small] (b11) at (2 * \xunit, 1 * \yunit) {};
          \node[small] (b12) at (2 * \xunit, 2 * \yunit) {\footnotesize $N$};
          \node[large,double circle] (b14) at (2 * \xunit, 3.25 * \yunit) {\scriptsize $ARN$};
          \node[small] (b15) at (2 * \xunit, 4.4 * \yunit) {\footnotesize $A$};
          \node[small] (b16) at (2 * \xunit, 6 * \yunit) {\footnotesize $R$};
          \node[small] (b22) at (3 * \xunit, 2 * \yunit) {\footnotesize $N$};
          \node[small,double circle] (b24) at (3 * \xunit, 3.25 * \yunit) {};
          \node[small] (b25) at (3 * \xunit, 4.4 * \yunit) {\footnotesize $R$};
          \node[small] (b26) at (3 * \xunit, 6 * \yunit) {\footnotesize $A$};

          \draw (r5)--(r6);
          \draw (r5)--(v3);
          \draw (r6)--(v3);

          \draw (v1)--(v2);
          \draw (v2)--(v3);
          \draw (v3)--(v4);
          \draw (v5)--(v6);

          \draw (v6)--(b15);
          \draw (v6)--(b14);
          \draw (v6)--(b25);
          \draw (v4)--(b16);
          \draw (v4)--(b26);

          \draw (v5)--(b14);

          \draw (v3)--(b14);
          \draw (v3)--(b15);
          \draw (v3)--(b12);
          \draw (v2)--(b12);
          \draw (v2)--(b14);
          \draw (v1)--(b12);
          \draw (v1)--(b22);

          \draw (b16)--(b25);
          \draw (b15)--(b26);
          \draw (b15)--(b24);
          \draw (b14)--(b25);

          \draw (b16)--(b24);
          \draw (b12)--(b24);
          \draw (b14)--(b22);
          \draw (b11)--(b22);
          \draw (b11)--(b24);

          \draw[dotted] (-0.2 * \xunit, 0.5 * \yunit) -- (3.5 * \xunit, 0.5 * \yunit);
          \draw[dotted] (-0.2 * \xunit, 6.5 * \yunit) -- (3.5 * \xunit, 6.5 * \yunit);
          \draw[dotted] (-0.2 * \xunit, 0.5 * \yunit) -- (-0.2 * \xunit, 6.5 * \yunit);
          \draw[dotted] (0.5 * \xunit, 0.5 * \yunit) -- (0.5 * \xunit, 6.5 * \yunit);
          \draw[dotted] (1.5 * \xunit, 0.5 * \yunit) -- (1.5 * \xunit, 6.5 * \yunit);
          \draw[dotted] (2.5 * \xunit, 0.5 * \yunit) -- (2.5 * \xunit, 6.5 * \yunit);
          \draw[dotted] (3.5 * \xunit, 0.5 * \yunit) -- (3.5 * \xunit, 6.5 * \yunit);

          \draw[darkgray, rounded corners] (0.6 * \xunit, 0.7 * \yunit) rectangle (1.4 * \xunit, 2.3 * \yunit) {};
          \draw[darkgray, rounded corners] (0.6 * \xunit, 5.7 * \yunit) rectangle (1.4 * \xunit, 6.3 * \yunit) {};
          \draw[darkgray, rounded corners] (0.6 * \xunit, 4.1 * \yunit) rectangle (1.4 * \xunit, 5.5 * \yunit) {};

          \node[] at (0.3 * \xunit, 1.5 * \yunit) {$C_2$};
          \node[] at (0.1 * \xunit, 4.9 * \yunit) {\footnotesize$f_C^{-1}(1)$};
          \node[] at (0.1 * \xunit, 6 * \yunit) {\footnotesize$f_C^{-1}(0)$};

          \draw [decorate,decoration={brace,amplitude=5pt}]
          (-0.3 * \xunit, 4.0 * \yunit) -- (-0.3 * \xunit, 6.5 * \yunit) node[midway,xshift=-14pt]{$C_1$};

          \node[] at (0.2, 7 * \yunit) {$X_R$};
          \node[] at (1 * \xunit, 7 * \yunit) {$X_C$};
          \node[] at (2 * \xunit, 7 * \yunit) {\footnotesize$f_B^{-1}(0)$};
          \node[] at (3 * \xunit, 7 * \yunit) {\footnotesize$f_B^{-1}(1)$};

          \draw [decorate,decoration={brace,amplitude=5pt}]
          (1.5 * \xunit, 7.2 * \yunit) -- (3.5* \xunit, 7.2 * \yunit) node[midway,yshift=12pt]{$X_B$};
          \node[] at (-0.5 * \xunit, 7.8 * \yunit) {$G$};
      \end{tikzpicture}
  \end{minipage}
  \hfill
  \begin{minipage}[m]{.44\textwidth}
    \vspace{0pt}
    \centering
    \begin{tikzpicture}
          \node[small] (v1) at (1 * \xunit, 1 * \yunit) {\footnotesize $N'$};
          \node[small] (v2) at (1 * \xunit, 2 * \yunit) {\footnotesize $N'$};
          \node[small,dashed] (v3) at (1 * \xunit, 3.25 * \yunit) {\footnotesize $T_X$};
          \node[small] (v4) at (1 * \xunit, 4.2 * \yunit) {\footnotesize $R'$};
          \node[small] (v5) at (1 * \xunit, 5.1 * \yunit) {\footnotesize $R'$};
          \node[small] (v6) at (1 * \xunit, 6 * \yunit) {\footnotesize $A'$};

          \node[small] (w1) at (4 * \xunit, 1 * \yunit) {\footnotesize $N'$};
          \node[small] (w2) at (4 * \xunit, 2 * \yunit) {\footnotesize $N'$};
          \node[small,dashed] (w3) at (4 * \xunit, 3.25 * \yunit) {\footnotesize $T_X$};
          \node[small] (w4) at (4 * \xunit, 4.2 * \yunit) {\footnotesize $A'$};
          \node[small] (w5) at (4 * \xunit, 5.1 * \yunit) {\footnotesize $A'$};
          \node[small] (w6) at (4 * \xunit, 6 * \yunit) {\footnotesize $R'$};

          \node[small] (b11) at (2 * \xunit, 1 * \yunit) {};
          \node[small] (b12) at (2 * \xunit, 2 * \yunit) {};
          \node[small,double circle] (b14) at (2 * \xunit, 3.25 * \yunit) {};
          \node[small] (b15) at (2 * \xunit, 4.4 * \yunit) {};
          \node[small] (b16) at (2 * \xunit, 6 * \yunit) {};
          \node[small] (b22) at (3 * \xunit, 2 * \yunit) {};
          \node[small,double circle] (b24) at (3 * \xunit, 3.25 * \yunit) {};
          \node[small] (b25) at (3 * \xunit, 4.4 * \yunit) {};
          \node[small] (b26) at (3 * \xunit, 6 * \yunit) {};

          \draw (v6)--(b15);
          \draw (v6)--(b14);
          \draw (w6)--(b25);
          \draw (v4)--(b16);
          \draw (w4)--(b26);

          \draw (v5)--(b14);

          \draw[dashed] (v3)--(b14);
          \draw[dashed] (v3)--(b12);
          \draw[dashed] (v3)--(b15);
          \draw (v2)--(b12);
          \draw (v2)--(b14);
          \draw (v1)--(b12);
          \draw (w1)--(b22);

          \draw (b16)--(b25);
          \draw (b15)--(b26);
          \draw (b15)--(b24);
          \draw (b14)--(b25);

          \draw (b16)--(b24);
          \draw (b12)--(b24);
          \draw (b14)--(b22);
          \draw (b11)--(b22);
          \draw (b11)--(b24);

          \node[] at (1 * \xunit, 7 * \yunit) {$\{v^{(0)}\}$};
          \node[] at (2 * \xunit, 7 * \yunit) {\footnotesize$f_B^{-1}(0)$};
          \node[] at (3 * \xunit, 7 * \yunit) {\footnotesize$f_B^{-1}(1)$};
          \node[] at (4 * \xunit, 7 * \yunit) {$\{v^{(1)}\}$};
          \draw [decorate,decoration={brace,amplitude=5pt}]
          (1.5 * \xunit, 7.2 * \yunit) -- (3.5* \xunit, 7.2 * \yunit) node[midway,yshift=12pt]{$X_B$};

          \node[] at (0.7 * \xunit, 7.8 * \yunit) {$G'$};
      \end{tikzpicture}
  \end{minipage}%
  \caption{%
  An illustration of the auxiliary graph used in the proof of \Cref{lem:marked:separators:for:all:ARN}.
  The left figure shows an example graph $G$ with the given
  \occ $(X_B,X_C,X_R)$, disjoint sets $C_1, C_2 \subseteq X_C$, a proper $2$-coloring
  $f_B:X_B \to \{0,1\}$, and another (not necessarily proper) $2$-coloring $f_C:C_1 \to \{0,1\}$.
  Possibly overlapping terminals $A,R,N$ are determined as described in the proof.
  The right shows the auxiliary graph $G'$ for $G$,
  constructed from a copy of $G[X_B]$ with additional $2|X_C|$ vertices.
  Terminals are partitioned into $(A',R',N',T_X)$, where
  $T_X$ is deleted when we examine restricted $3$-way cuts.
  For both figures, the minimum $3$-way separators (in $G[X_B]$ and $G'-T_X$ (restricted), resp.) are shaped in double circles.
  }
  \label{fig:find-central}
\end{figure*}

First, we briefly argue the following basic guarantees of \refmark.

\begin{lemma} \label{lem:mark:running:time}
	Let $G$ be a graph and let $(X_B, X_C, X_R)$ be an OCC in it. Then \refmark can be executed with these input parameters in $2^{\Oh(|X_C|)} n^{\Oh(1)}$ time and outputs a set~$B^\ast$ of size $|X_C|^{\Oh(1)}$.
\end{lemma}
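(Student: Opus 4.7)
The plan is to verify each step of \refmark in sequence, since both the running time and output size bounds follow by composition from the guarantees of \Cref{lem:covering:corollary}.

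First I would handle step~\ref{itm:mark:coloring}: since $G[X_B]$ is bipartite by definition of an OCC, a proper $2$-coloring $f_X$ can be computed by a BFS on each connected component in polynomial time. Next, step~\ref{itm:mark:auxiliary} constructs $G'$ from a copy of $G[X_B]$ by adding $2|X_C|$ fresh vertices $\{v^{(0)}, v^{(1)} : v \in X_C\}$, together with at most $m$ edges into $X_B$. This construction is linear in $n+m$, and the resulting terminal set $T$ has size exactly $2|X_C|$.

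The main step is step~\ref{itm:mark:cut:covering}, which invokes \Cref{lem:covering:corollary}. The partitions $(T_0,T_1,T_2,T_3,T_X)$ considered there are exactly generalized $s$-partitions in the sense of Section~\ref{sec:coveringg:multiway} with $s = 3$: the parts $T_1,T_2,T_3$ correspond to the active separation classes, while $T_0$ plays the role of the free part and $T_X$ plays the role of the deleted part. Since $s$ is a constant and $|T| = 2|X_C|$, \Cref{lem:covering:corollary} produces, in time $2^{\Oh(|T|)} \cdot n^{\Oh(1)} = 2^{\Oh(|X_C|)} \cdot n^{\Oh(1)}$, a set $B^{\ast} \subseteq V(G')$ of size $\Oh(|T|^{2s+2}) = \Oh(|X_C|^{8})$, which is $|X_C|^{\Oh(1)}$.

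Finally, to obtain a set $B^{\ast} \subseteq X_B$ as required by the output specification, it suffices to intersect the cut covering set returned by \Cref{lem:covering:corollary} with $X_B = V(G') \setminus T$; the covering guarantee is preserved because every restricted multiway cut lies in $V(G') \setminus T$ by definition, so intersecting with $X_B$ cannot remove any vertex that participates in such a cut. Summing the costs of the three steps yields the claimed total running time $2^{\Oh(|X_C|)} \cdot n^{\Oh(1)}$, and the size bound $|B^{\ast}| = |X_C|^{\Oh(1)}$ is inherited directly from \Cref{lem:covering:corollary}. Since every ingredient is a direct application of an earlier lemma, I do not anticipate any substantive obstacle beyond matching up the parameters $s$ and $|T|$ correctly.
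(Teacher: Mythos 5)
Your proof is correct and follows essentially the same route as the paper: steps~\ref{itm:mark:coloring} and~\ref{itm:mark:auxiliary} are polynomial, and step~\ref{itm:mark:cut:covering} is a direct application of \cref{lem:covering:corollary} with $s=3$ and $|T|=2|X_C|$, giving time $2^{\Oh(|X_C|)} n^{\Oh(1)}$ and $|B^\ast| = \Oh(|X_C|^{8}) = |X_C|^{\Oh(1)}$, exactly as in the paper's argument. (Your closing remark that restricted cuts avoid all of $T$ is only accurate when $T_0=\emptyset$, but this concerns the covering guarantee rather than the running-time/size bounds being proved here, and the partitions used later indeed have $T_0=\emptyset$.)
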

\begin{proof}
	First we note that indeed $|B^\ast| = |X_C|^{\Oh(1)}$, since $B^\ast$ is constructed as a set of at most $c \cdot (2|X_C|)^8$ vertices. Next, it is clear to see that steps 1 and 2 can be done in polynomial time, and step 3 takes
	$2^{\Oh(|T|)} n^{\Oh(1)} = 2^{\Oh(|X_C|)} n^{\Oh(1)}$ time.
\end{proof}

Next, we formalize the main property we wished to hold for \refmark and proof that it is indeed satisfied.

\begin{lemma}
	\label{lem:marked:separators:for:all:ARN}
	Let $G$ be a graph, let $(X_B, X_C, X_R)$ be an OCC in it and let $B^\ast \subseteq X_B$ be the result of executing \refmark with these input parameters. Let $f_X : X_B \rightarrow \{0, 1\}$ be the proper $2$-coloring of $G[X_B]$ computed in the first step of \refmark. Let $C_1, C_2 \subseteq X_C$ be two arbitrary disjoint subsets of $X_C$ and let $f_C: C_1 \rightarrow \{0, 1\}$ be a (not necessarily proper) $2$-coloring of~$C_1$. If the~$\{A,R,N\}$-separation problem imposed onto~$G[X_B]$ by $(C_1, C_2, f_C, f_X)$ has a solution of size at most~$|X_C|$, then the set~$B^\ast$ contains an optimal solution to this separation problem.
\end{lemma}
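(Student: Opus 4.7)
The plan is to reduce the $\{A,R,N\}$-separation problem in $G[X_B]$ to a restricted multiway cut problem in $G'-T_X$ for a suitable generalized $3$-partition $\mathcal{T}^{\ast}=(T_0,T_1,T_2,T_3,T_X)$ of $T$, and then invoke the cut-covering guarantee of $B^{\ast}$ obtained in step~\ref{itm:mark:cut:covering}. The partition I will use is $T_1 = \{v^{(f_C(v))} : v \in C_1\}$, $T_2 = \{v^{(1-f_C(v))} : v \in C_1\}$, $T_3 = \{v^{(0)}, v^{(1)} : v \in C_2\}$, $T_X = \{v^{(0)}, v^{(1)} : v \in X_C \setminus (C_1 \cup C_2)\}$, and $T_0 = \emptyset$. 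This choice is driven by the observation that, by construction of $G'$, the $X_B$-neighbors of $v^{(j)}$ are exactly the vertices of $N_G(v) \cap X_B$ with $f_X$-value $j$; matching this against \cref{def:imposed:separation:problem}, the $X_B$-neighbors of $T_1,T_2,T_3$ are precisely $A,R,N$, respectively. Since $T_0 = \emptyset$ and vertices in $T_X$ are deleted, any restricted multiway cut for $\mathcal{T}^{\ast}$ can be assumed to lie in $X_B$.

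The heart of the proof is to establish the equivalence, for every $S \subseteq X_B$, that $S$ is an $\{A,R,N\}$-separator in $G[X_B]$ if and only if $S$ is a restricted multiway cut for $\mathcal{T}^{\ast}$ in $G'-T_X$. The ``only if'' direction is routine: any witnessing $G[X_B]$-path between two of $A,R,N$ can be extended at each end by a single edge to a witnessing terminal copy in the appropriate $T_i$, yielding a witnessing path in $G'-T_X-S$ between two different $T_i$'s. The ``if'' direction is the main obstacle, because paths in $G'-T_X-S$ are allowed to pass through intermediate terminal copies, providing shortcuts not available in $G[X_B]$. My plan for this direction is to track the reachability sets $\mathrm{Reach}_A, \mathrm{Reach}_R, \mathrm{Reach}_N$ of $A,R,N$ in $G[X_B]-S$; under the separation hypothesis these three sets are pairwise disjoint. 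A short induction on path length then shows that every $X_B$-vertex reachable from $T_1$ in $G'-T_X-S$ lies in $\mathrm{Reach}_A$: a direct $G[X_B]$-edge preserves membership in $\mathrm{Reach}_A$; a jump through a terminal copy in $T_1$ also stays in $\mathrm{Reach}_A$ because both endpoints of the jump lie in $A$; and a jump through a copy in $T_2$ or $T_3$ cannot occur, since it would force the current vertex to lie in $R$ or $N$, contradicting the pairwise disjointness. Symmetric inductions for $T_2$ and $T_3$ then rule out any path between distinct $T_i$'s in $G'-T_X-S$.

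With the equivalence in hand, the lemma follows cleanly. The hypothesis that the $\{A,R,N\}$-separation problem has a solution of size at most $|X_C|$ translates, via the equivalence, to the existence of a restricted multiway cut for $\mathcal{T}^{\ast}$ of size at most $|X_C| \leq 2|X_C| = |T|$, which is within the threshold required by \cref{lem:covering:corollary} as invoked in step~\ref{itm:mark:cut:covering}. Consequently $B^{\ast}$ contains a minimum restricted multiway cut for $\mathcal{T}^{\ast}$, which by the equivalence is a minimum $\{A,R,N\}$-separator in $G[X_B]$; this is precisely the optimal solution to the imposed separation problem that we wanted $B^{\ast}$ to contain.
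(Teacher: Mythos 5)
Your proposal is correct and follows essentially the same route as the paper's proof: the identical generalized $3$-partition of $T$ (the paper's $(\emptyset, A', R', N', T_X)$ with $A' = \{v^{(f_C(v))}\}$, $R' = \{v^{(1-f_C(v))}\}$, $N'$ the copies of $C_2$), the same equivalence between $\{A,R,N\}$-separators in $G[X_B]$ and restricted $3$-way cuts in $G' - T_X$, and the same final appeal to the cut-covering guarantee of \cref{lem:covering:corollary} with $|X_C| \leq |T|$. The only differences are cosmetic: your ``if''/``only if'' labels are swapped relative to the implications you actually argue, and your reachability-set induction spells out the handling of intermediate terminal copies that the paper treats implicitly when it asserts a violating path can be chosen with its interior inside $G'[X_B]$.
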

\begin{proof}
	Recall that the auxiliary graph $G'$ consists of a copy of $G[X_B]$ and terminals
	$T = \{v^{(i)} \mid v \in X_C, i=0,1\}$.
	Given $C_1$, $C_2$ and $f_C$, consider the following generalized $3$-partition $\mathcal{T}$ of $T$
	into $(\emptyset, A',R',N',T_X)$, where $T_X$ represents deleted vertices.
	\begin{itemize}
		\item If $v \in C_1$ and $f_C(v)=0$, then let $v^{(0)} \in A'$ and $v^{(1)} \in R'$.
		\item If $v \in C_1$ and $f_C(v)=1$, then let $v^{(0)} \in R'$ and $v^{(1)} \in A'$.
		\item If $v \in C_2$, then let $v^{(0)},v^{(1)} \in N'$.
		\item If $v \in X_C \setminus (C_1 \cup C_2)$, then let $v^{(0)},v^{(1)} \in T_X$.
	\end{itemize}
	\Cref{fig:find-central} illustrates this construction.
	Note that for $v \in C_1$ we have $v^{(i)} \in A'$ ($v^{(i)} \in R'$, resp.)
	if and only if $f_C(v)=i$ ($f_C(v)=1-i$).
	We shall show the equivalence between restricted 3-way cuts for $\{A',R',N'\}$ in $G'-T_X$
	and $\{A,R,N\}$-separators in $G[X_B]$.
	
	First, let $S'$ be a restricted 3-way cut for $\{A',R',N'\}$ in $G'-T_X$, and
	assume that there is a path $sPt$ in $G[X_B]-S'$ where $s$ and $t$ belong to different sets in $\{A,R,N\}$.
	%
	\begin{enumerate}[(1)]
		\item If $s \in A$ and $t \in R$,
		then there exist vertices $v_s \in N_G(s) \cap C_1$ and $v_t \in N_G(t) \cap C_1$
		such that $f_C(v_s)=f_X(s)$ and $f_C(v_t)\neq f_X(t)$.
		This implies $v_s^{(f_X(s))}s, v_t^{(f_X(t))}t \in E(G')$.
		By construction, $v_s^{(f_X(s))}=v_s^{(f_C(v_s))} \in A'$ and $v_t^{(f_X(t))}=v_t^{(1-f_C(v_t))} \in R'$,
		and these vertices are present in $G'-T_X$.
		Hence, $v_s^{(f_X(s))} s P t v_t^{(f_X(t))}$ is a path between $A'$ and $R'$ in $G'-T_X-S'$, a contradiction.
		
		\item If $s \in A$ and $t \in N$, then
		there exist vertices $v_s \in N_G(s) \cap C_1$ and $v_t \in N_G(t) \cap C_2$
		such that $f_C(v_s)=f_X(s)$.
		This implies $v_s^{(f_X(s))}s, v_t^{(f_X(t))}t \in E(G')$.
		By construction, $v_s^{(f_X(s))}=v_s^{(f_C(v_s))} \in A'$ and $v_t^{(f_X(t))} \in N'$.
		Hence, $v_s^{(f_X(s))} s P t v_t^{(f_X(t))}$ is a path between $A'$ and $N'$ in $G'-T_X-S'$, again a contradiction.
	\end{enumerate}
	Other cases are proven symmetrically.
	
	Conversely, let $S$ be an $\{A,R,N\}$-separator in $G[X_B]$, and
	assume that $S$ is not a restricted $3$-way cut for $\{A',R',N'\}$ in $G' - T_X$.
	Then, since terminals in $G'$ are independent,
	there is a path $s^{(i)} P t^{(j)}$ in $G'-T_X-S$ such that
	$s^{(i)}$ and $t^{(j)}$ belong to different sets in $\{A',R',N'\}$ and $P$ is a non-empty path in $G'[X_B]$.
	Let $s'$ and $t'$ be the first and the last vertices in $P$ respectively ($s'$ and $t'$ can be the same vertex).
	%
	We will show that $s'$ and $t'$ belong to different sets in $\{A,R,N\}$ in $G[X_B]$,
	which leads to a contradiction.
	\begin{enumerate}[(1)]
		\item If $s^{(i)} \in A'$ and $t^{(j)} \in R'$, then $s, t \in C_1$
		and $f_C(s)=i$, $f_C(t)=1-j$.
		Also, knowing that $f_X(s')=i$, $f_X(t')=j$,
		we have $s' \in A$ and $t' \in R$ in $G[X_B]$.
		\item If $s^{(i)} \in A'$ and $t^{(j)} \in N'$, then $s \in C_1$,
		$t \in C_2$, and $f_C(s)=i$.
		Since $f_X(s')=i$, we have $s' \in A$.
		From $t't^{(j)} \in E(G')$, we have $t' \in N$.
	\end{enumerate}
	Other cases are symmetrical to these.
	
	Now, suppose there exists a smallest $\{A,R,N\}$-separator of size at most $|X_C|$ in $G[X_B]$.
	Then, there exists a restricted $3$-way cut of size at most $|X_C|$ for $\{A',R',N'\}$ in $G'-T_X$.
	From \Cref{lem:covering:corollary},
	we know that $B^\ast$ contains a minimum restricted $3$-way cut $Z$ of size at most $|X_C| \leq |T|$ for $\{A',R',N'\}$ in $G'-T_X$.
	Finally, $Z$ is indeed an optimal solution to the $\{A,R,N\}$-separation problem.
\end{proof}
Using this statement, we can prove \cref{lem:marked:tight:occ}.
\lemMarkedTightOcc*
\begin{proof}
	Suppose there exists a $z$-tight OCC $(A_B, A_C, A_R)$ in $G$. Let $f_X \colon X_B \rightarrow \{0,1\}$ and $f_A \colon A_B \rightarrow \{0,1\}$ be proper $2$-colorings of $G[X_B]$ and $G[A_B]$ respectively. To define $(A_B^\ast, A_C^\ast, A_R^\ast)$, we consider the separation problem imposed onto $G[X_B]$ by $(X_C \cap A_B, X_C \cap A_R, f_A, f_X)$ and let $A$, $R$ and $N$ be the three sets to be separated in this problem. 
	
	By \cref{lem:tight:occ:cut:optimal:ARN:separator}, $A_C \cap X_B$ is an $\{A, R, N\}$-separator in $G[X_B]$ and by \cref{lem:small:occ:intersection}, this intersection is at most as large as $X_C$. This satisfies the preconditions to \cref{lem:marked:separators:for:all:ARN}, from which we derive that $B^\ast$ contains some minimum-size $\{A, R, N\}$-separator $S$ in $G[X_B]$. We use this set $S$ to define $(A_B^\ast, A_C^\ast, A_R^\ast)$.
	
	First, let $A_C^\ast := (A_C \setminus X_B) \cup S$. To define $A_B^\ast$, let $U$ be the set of vertices from $X_B \setminus S$ that are not reachable from $N$ in $G[X_B - S]$. Now, we define $A_B^\ast := (A_B \setminus X_B) \cup U$. Finally, $A_R^\ast$ is simply defined as $V(G) \setminus (A_B^\ast \cup A_C^\ast)$. The remainder of the proof is used to show that the following five properties hold:
	\begin{enumerate}[(i)]
		\item \label{itm:marked:occ:1} $G[A_B^\ast]$ is bipartite.
		\item \label{itm:marked:occ:2} There are no edges between $A_B^\ast$ and $A_R^\ast$.
		\item \label{itm:marked:occ:3} $|A_C^\ast| = |A_C|$.
		\item \label{itm:marked:occ:4} $(A_B^\ast, A_C^\ast, A_R^\ast)$ has a certificate of order $z$.
		\item \label{itm:marked:occ:5} $A_C^\ast \cap X_B \subseteq B^\ast$.
	\end{enumerate}
	\bmpr{Clear proof setup, nice! I would personally do the simplest one (currently V) first, but that's a matter of taste.}The first two properties show that $(A_B^\ast, A_C^\ast, A_R^\ast)$ is an OCC. Together with properties \textbf{(\ref{itm:marked:occ:3})} and \textbf{(\ref{itm:marked:occ:4})}, they show that this decomposition is even a tight OCC whose order and width are equal to those of $(A_B, A_C, A_R)$. Finally, the lemma explicitly requires that $A_C^\ast \cap X_B \subseteq B^\ast$, which corresponds to property \textbf{(\ref{itm:marked:occ:5})}.
	\proofsubparagraph{(\ref{itm:marked:occ:1}) To prove: $G[A_B^\ast]$ is bipartite.} Since $A_B^\ast$ is a subset of $A_B \cup (X_B \setminus S)$, by showing that $G[A_B \cup (X_B \setminus S)]$ is bipartite, we show that $G[A_B^\ast]$ is bipartite.
	\begin{claim} \label{clm:ab:xb:bipartite}
		$G[A_B \cup (X_B \setminus S)]$ is bipartite.
	\end{claim}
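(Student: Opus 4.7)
The plan is to construct a proper $2$-coloring of $G[A_B \cup (X_B \setminus S)]$ by combining $f_A$ on $G[A_B]$ with a coloring $f$ on $G' - S$, where $G' := G[(X_C \cap A_B) \cup X_B]$, and then to verify bipartiteness via \cref{lem:separated:colorings:make:bipartite} using $X_C \cap A_B$ as the interface along which the two colorings must agree.

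First, I would apply \cref{lem:AR-separation} to the graph $G'$. The set $W := X_C \cap A_B$ is an OCT of $G'$ since $G' - W = G[X_B]$ is bipartite. Taking $W_0 := W \cap f_A^{-1}(0)$ and $W_1 := W \cap f_A^{-1}(1)$ (which are independent as subsets of the partite classes of $f_A$) together with the proper $2$-coloring $c := f_X$ of $G' - W$, a direct comparison of \cref{def:imposed:separation:problem} with the formulas in \cref{lem:AR-separation} shows that the sets $A$ and $R$ from the imposed separation problem coincide with the sets $A$ and $R$ from the lemma. Since $S$ is an $\{A, R, N\}$-separator in $G[X_B]$, it is in particular an $\{A, R\}$-separator, so the lemma yields a proper $2$-coloring $f \colon V(G') \setminus S \to \{0,1\}$ of $G' - S$ that agrees with $f_A$ on $W$.

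Next, I would invoke \cref{lem:separated:colorings:make:bipartite} on $G[A_B \cup (X_B \setminus S)]$ with the partition $V_L := A_B \setminus X_C$, $V_0 := X_C \cap A_B$, $V_R := (X_B \setminus S) \setminus A_B$. Their union is $A_B \cup (X_B \setminus S)$ as required. The induced subgraph $G[V_0 \cup V_L] = G[A_B]$ is properly $2$-colored by $f_L := f_A$, while $G[V_0 \cup V_R] \subseteq G' - S$ is properly $2$-colored by $f_R := f|_{V_0 \cup V_R}$. The two colorings agree on $V_0 = W$ by our construction of $f$, so \cref{lem:separated:colorings:make:bipartite} delivers the conclusion once its separation hypothesis is met.

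The main obstacle, and the crux of the proof, is to verify that $V_0$ separates $V_L$ from $V_R$ in $G[A_B \cup (X_B \setminus S)]$. Any offending edge $uv$ must satisfy $u \in (A_B \cap X_B) \setminus X_C$ (since $u \in A_B \setminus X_C$ and the neighbor $v \in X_B$ rules out $u \in X_R$) and $v \in (A_C \cap X_B) \setminus S$ (since $v \in X_B \setminus S$, $v \notin A_B$, and no edges run between $A_B$ and $A_R$). My plan to rule out such edges is to exploit the $\{A, R, N\}$-separator property of $S$ together with the structural consequences of $(A_B, A_C, A_R)$ being a tight OCC, arguing that the presence of such an edge yields a walk in $G[X_B] - S$ between an $A$-vertex and an $R$-vertex (obtained by traveling through $X_C \cap A_B$-neighbors of the involved $A_B$- and $A_C$-vertices and coloring them via $f_A$), which contradicts the separator property of $S$.
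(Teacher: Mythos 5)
There is a genuine gap, and it lies exactly where you placed the crux. The step you still need---that there is no edge between $V_L = A_B \setminus X_C$ and $V_R = (X_B \setminus S) \setminus A_B$, i.e.\ no edge between $A_B \cap X_B$ and $(A_C \cap X_B) \setminus S$---is false in general, so no contradiction can be derived from such an edge. Nothing about $S$ being a minimum $\{A,R,N\}$-separator inside $B^\ast$ prevents a vertex of $A_C \cap X_B$ that $S$ does not pick from having neighbors in $A_B \cap X_B$. Concretely, let $G$ be the $5$-cycle $c\,p_1\,p_2\,p_3\,p_4$ with $X_C=\{c\}$, $X_B=\{p_1,p_2,p_3,p_4\}$, $X_R=\emptyset$, and the $1$-tight OCC $A_C=\{p_2\}$, $A_B=\{c,p_1,p_3,p_4\}$, $A_R=\emptyset$. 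The imposed problem has $A=\{p_1\}$, $R=\{p_4\}$, $N=\emptyset$, and $S=\{p_3\}$ is a legitimate minimum $\{A,R,N\}$-separator that the Marking step may return. Then $p_2\in (A_C\cap X_B)\setminus S$ is adjacent to $p_1,p_3\in A_B\cap X_B$, so your separation hypothesis fails; worse, $A_B\cup(X_B\setminus S)$ is the entire $5$-cycle, so when $S$ intersects $A_B\cap X_B$ the statement in this literal form can actually fail (the claim is slightly overstated), and hence cannot be rescued by a cleverer argument for the same partition.

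The paper's proof avoids this interface altogether by \emph{not} keeping $f_A$ on $A_B\cap X_B$. It obtains, exactly as you do via \cref{lem:AR-separation}, a proper $2$-coloring $f_X^\ast$ of $G[(X_B\setminus S)\cup(X_C\cap A_B)]$ that agrees with $f_A$ on $X_C\cap A_B$, but then combines $f_A$ only on $A_B\setminus X_B$ with $f_X^\ast$ on all of $(X_B\setminus S)\cup(X_C\cap A_B)$. With this choice the interface for \cref{lem:separated:colorings:make:bipartite} is $V_0=X_C\cap A_B$ between $V_L=A_B\cap X_R$ and $V_R=X_B\setminus S$, and the separation hypothesis is trivial because the OCC $(X_B,X_C,X_R)$ has no $X_B$--$X_R$ edges. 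This establishes bipartiteness of $G[(A_B\setminus X_B)\cup(X_B\setminus S)]$, which is all that is needed, since $A_B^\ast=(A_B\setminus X_B)\cup U$ with $U\subseteq X_B\setminus S$ is contained in that vertex set. If you replace your partition by this one (coloring every surviving $X_B$-vertex with the coloring from \cref{lem:AR-separation} rather than with $f_A$), your argument goes through; as written, the plan cannot be completed.
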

	\begin{claimproof}
		Because $S$ is a $\{A, R, N\}$-separator in $G[X_B]$, it is in particular an $\{A, R\}$-separator in $G[X_B]$. Consider now the graph $G[(X_B \setminus S) \cup (X_C \cap A_B)]$, let $W_0 := f_A^{-1}(0)$ and $W_1 := f_A^{-1}(1)$ and let $c = f_X$. Then, it follows from \cref{lem:AR-separation} that $G[(X_B \setminus S) \cup (X_C \cap A_B)]$ admits a proper $2$-coloring $f_X^\ast \colon (X_B \setminus S) \cup (X_C \cap A_B) \rightarrow \{0,1\}$ with $f_X^\ast(v) = f_A(v)$ for every $v \in (X_C \cap A_B)$.
		
		As such, the graphs induced by $A_B \setminus X_B$ and and $(X_B \setminus S) \cup (X_C \cap A_B)$ are both bipartite as they admit proper $2$-colorings $f_A$ and $f_X^\ast$ respectively. These vertex sets intersect precisely in $X_C \cap A_B$, on which the two colorings agree and which separates the remaining vertices from the two sets. Hence, by \cref{lem:separated:colorings:make:bipartite}, the graph induced by the union of these sets, $G[A_B \cup (X_B \setminus S)]$, is bipartite.
	\end{claimproof}
	\proofsubparagraph{(\ref{itm:marked:occ:2}) To prove: there are no edges between $A_B^\ast$ and $A_R^\ast$.} We take an arbitrary vertex $v \in A_B^\ast$ and show that it has no neighbors in $A_R^\ast$. I.e.: all its neighbors live in $A_B^\ast \cup A_C^\ast$. To this end, we distinguish three cases.
	\begin{itemize}
		\item \textbf{Case 1:} Suppose $v \in A_B \cap X_R$. Because $A_C$ separates $A_B$ from $A_R$, vertex $v$ has no neighbors in $A_R$. Likewise, because $X_C$ separates $X_B$ from $X_R$, vertex $v$ has no neighbors in $X_B$. The neighborhood of $v$ is therefore contained in $V(G) \setminus (A_R \cup X_B)$. By construction, all these vertices are in $A_B^\ast \cup A_C^\ast$.
		
		\item \textbf{Case 2:} Suppose $v \in A_B \cap X_C$. Let $u$ be a neighbor of $v$. Since $A_C$ separates $A_B$ from $A_R$, vertex $u$ must be in $A_B \cup A_C$. We distinguish two subcases.
		\begin{itemize}
			\item \emph{Case 2a:} Suppose $u \in (A_B \cup A_C) \setminus X_B$. Then $v$ is in $A_B^\ast \cup A_C^\ast$ by construction.
			\item \emph{Case 2b:} Suppose $u \in (A_B \cup A_C) \cap X_B$. Recall that $A$ and $R$ are defined in such a way that all neighbors of $A_B \cap X_C$ in $X_B$ end up in $A \cup R$. In particular, this means that $u \in A \cup R$. If $u \notin S$, then $u$ is not reachable from $N$ in $G[X_B] - S$, because $S$ is an $\{A \cup R, N\}$-separator in $G[X_B]$. In that case, $u$ is in $A_B^\ast$. Otherwise, if $u \in S$, then $u$ is in $A_C^\ast$.
		\end{itemize}
		
		\item \textbf{Case 3:} Suppose $v \in X_B$. Let $u$ be a neighbor of $v$. Since $X_C$ separates $X_B$ from $X_R$, $u$ must be in $X_B \cup X_C$. We distinguish four subcases.
		\begin{itemize}
			\item \emph{Case 3a:} Suppose $u \in X_C \cap A_R$. Then, because $u$ and $v$ are neighbors, we find that $v \in N$ by definition of $N$. This contradicts the fact that $u \in A_B^\ast$, since $A_B^\ast \cap X_B$ only contains vertices that are not reachable from $N$ in $G[X_B] - S$. Hence, this case does not occur.
			\item \emph{Case 3b:} Suppose $u \in X_C \setminus A_R$. Then by construction $u \in A_B^\ast \cup A_C^\ast$.
			\item \emph{Case 3c:} Suppose $u \in S$. Then by construction $u \in A_C^\ast$.
			\item \emph{Case 3d:} Suppose $u \in X_B \setminus S$. Then because $v$ is not reachable from $N$ in $G[X_B] - S$, neither are its neighbors that also live in $X_B \setminus S$. In particular, this means that $u$ is not reachable from $N$ in $G[X_B] - S$, so $u \in A_B^\ast$.
		\end{itemize}
	\end{itemize}
	To see that the three cases above are exhaustive, recall that $A_B^\ast$ was defined as the union of $A_B \setminus X_B$ and $U$. The former is covered by $A_B \cap X_R$ and $A_B \cap X_C$ since ($A_B, A_C, A_R)$ is a partition of $V(G)$. The latter is defined as a subset of $X_B$.
	\proofsubparagraph{(\ref{itm:marked:occ:3}) To prove: $|A_C^\ast| = |A_C|$.} We start by observing that $|S| = |A_C \cap X_B|$, since $S$ and $A_C \cap X_B$ are both minimum-size $\{A, R, N\}$-separators in $G[X_B]$ (respectively by assumption and by \cref{lem:tight:occ:cut:optimal:ARN:separator}). Since $A_C^\ast = (A_C \setminus X_B) \cup S$, it follows that $|A_C^\ast| = |A_C|$.
	\proofsubparagraph{(\ref{itm:marked:occ:4}) To prove: $(A_B^\ast, A_C^\ast, A_R^\ast)$ has a certificate of order $z$.} By definition, $(A_B, A_C, A_R)$ has a certificate of order $z$. Let $D$ be such a certificate and recall that $D$ is then a subgraph of $G[A_B \cup A_C]$ with components $D_1, \ldots, D_q$ that each have a minimum size OCT of size at most~$z$.
	
	We prove property \textbf{(\ref{itm:marked:occ:4})} to hold by showing that $D$ is also an order-$z$ certificate for $(A_B^\ast, A_C^\ast, A_R^\ast)$. For this, we need to show that: 
	\begin{itemize}
		\item $D$ is a subgraph of $G[A_B^\ast \cup A_C^\ast]$,
		\item $A_C^\ast$ is a smallest OCT of $D$, and
		\item for every $i \in [q]$, $D_i$ has an OCT of size at most $z$.
	\end{itemize}
	We continue by proving each of these properties separately.
	\begin{claim}
		$D$ is a subgraph of $G[A_B^\ast \cup A_C^\ast$].
	\end{claim}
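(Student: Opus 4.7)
The plan is to show $V(D) \subseteq A_B^\ast \cup A_C^\ast$, from which the claim follows since $D$ inherits its edges from $G$. Because $D$ is a subgraph of $G[A_B \cup A_C]$, it suffices to prove $(A_B \cup A_C) \cap X_B \subseteq A_B^\ast \cup A_C^\ast$; vertices of $A_B \cup A_C$ outside $X_B$ belong to $A_B^\ast \cup A_C^\ast$ by construction. For $v \in (A_B \cup A_C) \cap X_B$, if $v \in S$ then $v \in A_C^\ast$ and we are done; otherwise we must establish that $v \in U$, that is, $v$ is not reachable from $N$ in $G[X_B] - S$.

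To prepare the ground I would first record two structural facts. Since $A_C$ separates $A_B$ from $A_R$ in $G$, no vertex of $A_B$ can be adjacent to a vertex of $X_C \cap A_R$, so $N \cap A_B = \emptyset$ and hence $N \subseteq (A_C \cup A_R) \cap X_B$; symmetrically $A \cup R \subseteq (A_B \cup A_C) \cap X_B$. By \cref{lem:tight:occ:cut:optimal:ARN:separator}, $A_C \cap X_B$ is both a minimum $\{A,R\}$-separator and a minimum $\{A,R,N\}$-separator in $G[X_B]$, both of common size $\nu := |A_C \cap X_B|$. Since $S$ is a minimum $\{A,R,N\}$-separator produced by \refmark, we have $|S| = \nu$ and $S$ is also a minimum $(A,R)$-separator. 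Menger's theorem then yields a packing $\mathcal{P} = \{P_1, \ldots, P_\nu\}$ of vertex-disjoint $(A,R)$-paths in $G[X_B]$, and by the standard counting argument each $P_i$ meets $A_C \cap X_B$ in exactly one vertex $c_i$ and meets $S$ in exactly one vertex $s_i$; in particular $A_C \cap X_B = \{c_1, \ldots, c_\nu\}$.

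For the core claim I would argue by contradiction: suppose $v \in (A_B \cup A_C) \cap X_B \setminus S$ reaches some $n \in N$ via a path $P$ in $G[X_B] - S$. I first locate a vertex $w \in (A_C \cap X_B) \setminus S$ on $P$. Either $n \in A_C \cap X_B$, in which case $w := n$ works (as $P$ avoids $S$), or $n \in A_R$, in which case tracing $P$ from $v \in A_B \cup A_C$ toward $n \in A_R$ encounters a transition from a vertex of $A_B \cup A_C$ to a vertex of $A_R$; the first endpoint of this transition cannot lie in $A_B$ (else an edge would connect $A_B$ to $A_R$, contradicting the separation property of $A_C$), so it lies in $A_C \cap X_B$ and outside $S$. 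Now $w = c_i$ for some $i$ and $s_i \neq w$ lies elsewhere on $P_i$; removing $s_i$ from $P_i$ leaves a subpath $Q_i$ from $w$ to some endpoint $t_i \in \{a_i, r_i\} \subseteq A \cup R$, and $V(Q_i) \cap S = \emptyset$ since $V(P_i) \cap S = \{s_i\}$. Concatenating $Q_i$ with the subwalk of $P$ from $w$ to $n$ yields a walk from $A \cup R$ to $N$ inside $G[X_B] - S$, contradicting that $S$ is an $\{A,R,N\}$-separator. The main obstacle is this path-exchange step, but it is handled cleanly by the fact that a minimum cut meets each path of a maximum packing in exactly one vertex, so rerouting along $P_i$ to avoid $s_i$ automatically avoids all of $S$.
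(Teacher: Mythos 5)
Your proof is correct and follows essentially the same route as the paper's: both reduce the claim to showing $(A_B \cup A_C) \cap X_B \subseteq A_B^\ast \cup A_C^\ast$, locate a vertex of $(A_C \cap X_B) \setminus S$ on a hypothetical $N$-path, and then use \cref{lem:tight:occ:cut:optimal:ARN:separator} together with Menger's theorem (a minimum $\{A,R\}$-separator meets each path of a maximum disjoint $(A,R)$-path packing exactly once) to reroute and obtain an $(A\cup R, N)$-connection avoiding $S$, contradicting that $S$ is an $\{A,R,N\}$-separator. The only difference is cosmetic: the paper picks the first $A_C$-vertex when traversing from the $N$-endpoint, while you find it via the $A_B\cup A_C$ to $A_R$ transition edge (or as $n$ itself).
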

	\begin{claimproof}
		It suffices to show that $V(D) \subseteq A_B^\ast \cup A_C^\ast$, which we do by proving that $(A_B \cup A_C) \subseteq (A_B^\ast \cup A_C^\ast)$. For contradiction, suppose that there is some vertex $v \in A_B \cup A_C$ that is not in $A_B^\ast \cup A_C^\ast$. 
		
		Since $(A_B \cup A_C) \setminus X_B \subseteq (A_B^\ast \cup A_C^\ast)$ by construction, it follows that $v \in X_B$. Moreover, because $S \subseteq A_C^\ast$, we find that $v \in X_B \setminus S$. By construction of $A_B^\ast$, vertices in $X_B \setminus S$ are in $A_B^\ast$ if and only if they are not reachable from $N$ in $G[X_B] - S$. Since $v \notin A_B^\ast$ by assumption, there must be a path $P_v$ from $N$ to $v$ in $G[X_B] - S$.
		
		The remainder of the proof of this claim is used to show that the existence of $P_v$ implies the existence of a path from $N$ to $A \cup R$ in $G[X_B] - S$. This would contradict the fact that $S$ is an $\{A, R, N\}$-separator in $G[X_B]$. To this end, first note that $N$, which only contains neighbors of vertices in $A_R$, is a subset of $A_R \cup A_C$. As $P_v$ is a path from $N$ to $v \in A_B \cup A_C$, it contains at least one vertex from $A_C$: if none of its endpoints are in $A_C$ themselves, they lie in the sets $A_R$ and $A_B$ respectively, which are separated by $A_C$.
		
		Let $a$ be the first vertex from $A_C$ encountered when traversing $P$ from its endpoint in $N$ to $v$. Now, because $P_v$, which lives in $G[X_B] - S$, already contains a path from $N$ to $a$ as a subpath, it suffices to show that that there exists a path from $a$ to $A \cup R$ in $G[X_B] - S$.
		
		\bmpr{I commented out Menger's theorem and moved the sourcecode for it here since it was used only once in the paper, the entire research community knows it, and this proof will end up in the appendix anyway. We could resurrect the statement for the journal version, but the inline citation that I put in the text may suffice even in the journal.}
		%
		
		Recall that $A_C \cap X_B$ is a minimum-size $\{A, R\}$-separator in $G[X_B]$ by \cref{lem:tight:occ:cut:optimal:ARN:separator}. By Menger's theorem~\cite[Theorem 9.1]{Schrijver03}, the minimum size of an~$\{A,R\}$-separator in~$G[X_B]$ equals the maximum cardinality of a packing of pairwise vertex-disjoint~$\{A,R\}$ paths in~$G[X_B]$.  Therefore, there is a collection $\mathcal{P}$ of $|A_C \cap X_B|$ pairwise disjoint $(A, R)$-paths in $G[X_B]$. As such, every $\{A, R\}$-separator in $G[X_B]$ of minimum size consists of exactly one vertex from each path in $\mathcal{P}$. From this, it follows that every vertex in such a separator lies on a unique path in $\mathcal{P}$. Recall that the vertex $a$ lives in the minimum-size $\{A, R\}$-separator $A_C \cap X_B$, so let $P_a$ be the unique path from $\mathcal{P}$ that it lies on.
		
		Next, we recall that $|S| = |A_C \cap X_B|$, as observed while proving property \textbf{(\ref{itm:marked:occ:3})}. Because $S$ is an $\{A, R, N\}$-separator in $G[X_B]$, it is in particular an $\{A, R\}$-separator in $G[X_B]$. And because its size is equal to $A_C \cap X_B$, which is a minimum-size $\{A, R\}$-separator by \cref{clm:tight:occ:cut:optimal:AR:separtor}, we find that $S$ is a minimum-size $\{A, R\}$-separator as well. Therefore, it contains exactly one vertex from every path in $\mathcal{P}$ and in particular from $P_a$.
		
		However, we find that $a \notin S$ because $a$ was chosen as a vertex on the path $P_v$, which lives in $G[X_B] - S$. As $P_a$ is an $(A, R)$-path in $G[X_B]$ and $S$ contains only one vertex from it (which is not $a$) it follows that $P_a - S$ contains either an $(a, A)$-path or an $(a, R)$-path as subpath.
		
		By concatenating the $(N, a)$-subpath of $P_v$ with an $(a,A)$-subpath or $(a,R)$-subpath of $P_a - S$, we obtain an $(N, A \cup R)$-path in $G[X_B] - S$. This contradicts the fact that $S$ is an $\{A, R, N\}$-separator and concludes the proof.
	\end{claimproof}
	Next, we show that $A_C^\ast$ is a smallest OCT of the certificate $D$.
	\begin{claim}
		$A_C^\ast$ is a smallest OCT of $D$.
	\end{claim}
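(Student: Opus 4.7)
The plan is to exploit the two facts we have already established, namely that $D \subseteq G[A_B^\ast \cup A_C^\ast]$ and that $|A_C^\ast| = |A_C|$, together with the fact that $A_C$ is a minimum-size OCT of $D$ by virtue of $D$ being an $A_C$-certificate of order~$z$ for the original tight OCC $(A_B, A_C, A_R)$.

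First I would argue that $A_C^\ast$ is an OCT of $D$. Since $D \subseteq G[A_B^\ast \cup A_C^\ast]$, removing $A_C^\ast$ from $D$ yields a subgraph of $G[A_B^\ast]$. Property \textbf{(\ref{itm:marked:occ:1})}, established in the previous claim, guarantees that $G[A_B^\ast]$ is bipartite, so any subgraph is bipartite as well. Hence $D - A_C^\ast$ is bipartite, which means $A_C^\ast$ intersects every odd cycle of $D$; i.e., $A_C^\ast$ is an OCT of $D$.

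Next, I would invoke property \textbf{(\ref{itm:marked:occ:3})} to obtain $|A_C^\ast| = |A_C|$. Since $D$ is an $A_C$-certificate of order $z$ for $(A_B, A_C, A_R)$, by \cref{def:zocc} the set $A_C$ is an optimal (i.e., minimum-size) OCT of $D$. If there existed an OCT $S'$ of $D$ with $|S'| < |A_C^\ast|$, then also $|S'| < |A_C|$, contradicting the optimality of $A_C$ as an OCT of $D$. Therefore $A_C^\ast$ is an OCT of $D$ whose size matches that of a minimum OCT, and hence $A_C^\ast$ is itself a minimum-size OCT of $D$.

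No step here looks technically difficult once the preceding claims are in hand; the argument is essentially a bookkeeping exercise combining the subgraph containment, bipartiteness of $G[A_B^\ast]$, and the size equality $|A_C^\ast| = |A_C|$.
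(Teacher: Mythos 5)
Your proposal is correct and follows essentially the same route as the paper: use the containment $D \subseteq G[A_B^\ast \cup A_C^\ast]$ and bipartiteness of $G[A_B^\ast]$ (property \textbf{(i)}) to see that $A_C^\ast$ is an OCT of $D$, then combine $|A_C^\ast| = |A_C|$ (property \textbf{(iii)}) with the optimality of $A_C$ as an OCT of $D$ (from the definition of an $A_C$-certificate) to conclude minimality. No gaps.
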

	\begin{claimproof}
		Clearly, $A_C^\ast$ is an OCT of $D$: because $D - A_C^\ast$ is a subgraph of $A_B^\ast$, which is bipartite by property \textbf{(\ref{itm:marked:occ:1})}. Furthermore, we know that $A_C^\ast$ and $A_C$ are equally large by property \textbf{(\ref{itm:marked:occ:3})}. Because $A_C$ is a smallest OCT of $D$ by definition of $D$, we find that $A_C^\ast$ is also a smallest OCT of $D$.
	\end{claimproof}
	Finally, for every $i \in [q]$, we know that $D_i$ has an OCT of size at most $z$, by definition of $D$ being an order-$z$ certificate of $(A_B, A_C, A_R)$. 
	As this was the third and final property required to show that $D$ is an order-$z$ certificate of $(A_B^\ast, A_C^\ast, A_R^\ast)$, this concludes the proof of property \textbf{(\ref{itm:marked:occ:4})}.
	\proofsubparagraph{(\ref{itm:marked:occ:5}) To prove: $A_C^\ast \cap X_B \subseteq B^\ast$.} We know that $A_C^\ast \cap X_B = S$ by construction of $A_C^\ast$ and we  have $S \subseteq B^\ast$ by definition of $S$.
\end{proof}
\section{Proofs regarding core properties of \refreduction} \label{sec:reduction:proofs}
First we prove that \refreduction can be executed in FPT time parameterized by $|X_C|$.
\begin{lemma} \label{lem:shrink:running:time}
	\Refreduction can be performed in $2^{\Oh(|X_C|)} \cdot n^{\Oh(1)}$ time.
\end{lemma}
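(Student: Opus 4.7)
The plan is to bound each of the three phases of \refreduction separately. For step~\ref{itm:shrink:mark:b:star}, I would directly invoke \cref{lem:mark:running:time}, which states that \refmark runs in $2^{\Oh(|X_C|)} n^{\Oh(1)}$ time and outputs a set $B^{\ast}$ of size $|X_C|^{\Oh(1)}$. Initializing $G'$ as a copy of $G - (X_B \setminus B^{\ast})$ in step~2 is then clearly polynomial.

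The bulk of the analysis concerns step~\ref{itm:shrink:loop}. Its outer loop ranges over pairs $(u,v)$ with $u,v \in X_C \cup B^{\ast}$ (possibly $u = v$) and two parities $p$. Since $|X_C \cup B^{\ast}| \leq |X_C| + |B^{\ast}| \in |X_C|^{\Oh(1)}$, there are only $|X_C|^{\Oh(1)}$ such triples, and for each of them only a constant number of vertices and edges is added to $G'$. It therefore suffices to show that the test driving the loop---whether $G$ admits a $(u,v)$-path of parity $p$ whose internal vertices all lie in $X_B \setminus B^{\ast}$---can be performed in polynomial time.

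The key observation that enables a polynomial-time test is that $G[X_B]$ is bipartite by the definition of an OCC, so its induced subgraph $G[X_B \setminus B^{\ast}]$ is also bipartite. Fixing any proper $2$-coloring $f$ of $G[X_B]$, the parity of any path between two vertices lying in the same connected component of $G[X_B \setminus B^{\ast}]$ is completely determined by the $f$-colors of its endpoints. Hence, to decide whether a desired $(u,v)$-path exists, I would enumerate all pairs $(u',v')$ with $u' \in N_G(u) \cap (X_B \setminus B^{\ast})$ and $v' \in N_G(v) \cap (X_B \setminus B^{\ast})$, and for each pair check whether $u'$ and $v'$ lie in the same component of $G[X_B \setminus B^{\ast}]$ and whether $f(u') \oplus f(v')$ implies a $(u',v')$-subpath parity whose sum with the two boundary edges $uu'$ and $v'v$ equals $p$. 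Computing the components of $G[X_B \setminus B^{\ast}]$ and the coloring $f$ up front is polynomial, after which each pair test runs in $\Oh(1)$. Summing the three phases yields a total running time of $2^{\Oh(|X_C|)} n^{\Oh(1)}$, as required.

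The main subtlety I foresee is the case $u = v$, where a genuine $(u,v)$-path of length at least $2$ is required rather than just a closed walk. This can be handled by insisting $u' \neq v'$ in the enumeration above; since the internal vertices of the sought path live in $X_B \setminus B^{\ast}$, which is disjoint from $\{u,v\} \subseteq X_C \cup B^{\ast}$, no other source of repeated vertices can arise, and the test remains a simple component-and-parity check in the bipartite graph $G[X_B \setminus B^{\ast}]$ without affecting the asymptotic running time.
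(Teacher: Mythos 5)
Your proposal is correct and follows essentially the same route as the paper's proof: charge the $2^{\Oh(|X_C|)} n^{\Oh(1)}$ cost to the Marking step via \cref{lem:mark:running:time}, bound the loop of step~\ref{itm:shrink:loop} by $|X_C|^{\Oh(1)}$ iterations using $|B^\ast| = |X_C|^{\Oh(1)}$, and answer each parity-path query in polynomial time by exploiting that $G[X_B \setminus B^\ast]$ is bipartite, so parity is determined by which partite sets (equivalently, $2$-coloring classes) the neighbors of $u$ and $v$ fall into within a common component. Your neighbor-pair enumeration is an equivalent reformulation of the paper's per-component partite-set check, and your explicit treatment of the $u=v$ case is a harmless extra detail.
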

\begin{proof}
	Executing \refmark in step~\ref{itm:shrink:mark:b:star} of the algorithm takes $2^{\Oh(|X_C|)} \cdot n^{\Oh(1)}$ time. We proceed by arguing that the remainder of the algorithm can be performed in $n^{\Oh(1)}$ time.
	
	The loop from step~\ref{itm:shrink:loop} forms the computational bottle-neck of the remaining part of the algorithm. It consists of $2 \cdot |X_C \cup B^\ast|^2$ iterations. As $|B^\ast| = |X_C|^{\Oh(1)}$ by construction, this results in $|X_C|^{\Oh(1)}$ iterations. In each of these iterations, we need to check whether $G[X_B \setminus B^\ast]$ connects two given vertices $u$ and $v$ using a path of a given parity $p$. 
	
	This check can be done by iterating over all components $H$ of $G[X_B \setminus B^\ast]$ and returning whether at least one of those connects $u$ and $v$ with a path of parity $p$. Now, it suffices to show that this can be computed for a single arbitrary such component $H$ in polynomial time.
	
	Because every component $H$ of $G[X_B \setminus B^\ast]$ is a connected bipartite subgraph, checking whether an arbitrary such component connects two given vertices $u$ and $v$ with a path of a given parity $p$ is in fact quite simple. If $p = \text{even}$, it suffices to return whether $u$ and $v$ each have a neighbor in the same partite set of $H$, which can clearly be done in polynomial time. To see that this suffices, note that any even path in $H$ starts and ends in the same partite set, since $H$ is bipartite. Moreover, because~$H$ is connected, any two vertices in it are connected by a path.
	
	By similar arguments, if $p = \text{odd}$, it suffices to return whether $u$ and $v$ have neighbors in opposite partite sets of $H$.
\end{proof}
Next, we show that \refreduction outputs a strictly smaller graph than its input graph when the OCC it is given is reducible. Recall that we call an OCC $(X_B, X_C, X_R)$ reducible if $|X_B| > g_r(|X_C|)$ for some polynomial $g_r(x)$. The exact definition of this polynomial had been postponed so far. Having reached the statement whose proof relies on this definition, we now specify this exact polynomial.

Its definition is based on \Cref{lem:covering:corollary} in which sets $Z$ and $T$ are specified. Setting the value of $s$ in this lemma to $3$ yields the existence of a constant $c \in \N$ such that $|Z| \leq c \cdot |T|^{8}$ for large enough $|T|$. Given this constant $c$, we define $g_r \colon \N \to \N$ as $g_r(x) = (6(2^8 c+1)^2+2^8 c) \cdot x^{16}$. With this definition, we can prove the following result. 
\begin{lemma} \label{lem:shrink:size}
	Let $G$ be a graph and let $(X_B, X_C, X_R)$ be an OCC of it. If $(X_B, X_C, X_R)$ is reducible, then running \refreduction with these input parameters yields an output graph~$G'$ with $|V(G')| < |V(G)|$.
\end{lemma}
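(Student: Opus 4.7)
My plan is to directly compare the number of vertices removed from~$G$ by \refreduction with the number of vertices newly added, and show that the reducibility assumption~$|X_B| > g_r(|X_C|)$ guarantees the former exceeds the latter. The reduction removes exactly the set~$X_B \setminus B^*$, contributing~$|X_B| - |B^*|$ deletions. The only vertices that are added are those inserted in step~\ref{itm:shrink:loop}: for each pair~$u,v \in X_C \cup B^*$ and each parity~$p$, we add at most~$4$ new vertices (two for even parity, four for odd parity), so in total at most~$6$ new vertices per unordered pair. Thus the number of inserted vertices is at most~$6 \cdot |X_C \cup B^*|^2$.

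Next I would bound~$|B^*|$ and thereby~$|X_C \cup B^*|$ in terms of~$|X_C|$, using the guarantee of \Cref{lem:covering:corollary} invoked inside \refmark. The marking step calls that lemma with terminal set~$T$ of size~$2|X_C|$ and with generalized $s$-partitions for~$s = 3$, which yields~$|B^*| \leq c \cdot (2|X_C|)^{2s+2} = 2^8 c \cdot |X_C|^8$ for the constant~$c$ fixed in the definition of~$g_r$. Consequently~$|X_C \cup B^*| \leq (2^8 c + 1) |X_C|^8$, and squaring gives~$|X_C \cup B^*|^2 \leq (2^8 c + 1)^2 |X_C|^{16}$.

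Combining these two counts, the net change~$|V(G')| - |V(G)|$ is at most
\[
  |B^*| + 6 |X_C \cup B^*|^2 - |X_B| \leq \bigl(2^8 c + 6 (2^8 c + 1)^2\bigr)\cdot |X_C|^{16} - |X_B| = g_r(|X_C|) - |X_B|.
\]
Since the OCC is reducible, this quantity is strictly negative, establishing~$|V(G')| < |V(G)|$.

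The main work here is really bookkeeping; the only mildly subtle step is making sure the bound on~$|B^*|$ from \Cref{lem:covering:corollary} is applied with the correct value of~$s$ (namely~$s=3$, as used in step~\ref{itm:mark:cut:covering} of \refmark) so that the exponent in the polynomial upper bound matches the exponent~$16$ built into the definition of~$g_r$. Once that alignment is made explicit, the inequality falls out immediately from the definition of~$g_r$.
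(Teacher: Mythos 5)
Your proposal is correct and follows essentially the same argument as the paper's proof: bound $|B^\ast|$ by $c\cdot(2|X_C|)^8$ via \cref{lem:covering:corollary} with $s=3$, bound the newly inserted vertices by $6\cdot|X_C \cup B^\ast|^2 \leq 6(2^8c+1)^2|X_C|^{16}$, and conclude from the definition of $g_r$ and the reducibility assumption that the deletions of $X_B \setminus B^\ast$ outweigh the additions. The only cosmetic difference is that the paper phrases the comparison as $|X_B'| \leq g_r(|X_C|) < |X_B|$ for $X_B' := V(G')\setminus(X_C\cup X_R)$ rather than as a net vertex-count change, which is the same bookkeeping.
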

\begin{proof}
	Note that by construction of $G'$, we have $X_C \cup X_R \subseteq V(G) \cap V(G')$, so that we can define $X_B' := V(G') \setminus (X_C \cup X_R)$. Since $X_B = V(G) \setminus (X_C \cup X_R)$, it suffices to show that $|X_B'| < |X_B|$. To do so, we separately bound the sizes of $B^\ast$ and $V(G') \setminus V(G)$ of which $X_B'$ is the union. 
	
	We express bounds in terms of $|X_C|$ and the values of $c$ and $g_r$ as defined above. Bounding the size of $B^\ast$ using these terms is simple: as it was constructed in \refmark via \Cref{lem:covering:corollary}, it has at most $c \cdot (2 |X_C|)^8$ vertices.
	
	To bound the size of $V(G') \setminus V(G)$, note that it includes at most $6$ vertices for every pair of vertices from $X_C \cup B^\ast$. Hence, $|V(G') \setminus V(G)| \leq 6 \cdot |X_C \cup B^\ast|^2 = 6 \cdot (|X_C| + |B^\ast|)^2$. Inserting the bound $|B^\ast| \leq c \cdot (2 |X_C|)^8$ into this expression, we obtain $|V(G') \setminus V(G)| \leq 6 \cdot \left((2^8 c+1) \cdot |X_C|^8 \right)^2 = 6 \cdot (2^8 c+1)^2 \cdot |X_C|^{16}$.
	
	Combining the bounds on $|B^\ast|$ and $|V(G') \setminus v(G)|$, we get $|X_B'| \leq \left(6(2^8 c+1)^2+2^8 c\right) \cdot |X_C|^{16} = g_r(|X_C|)$. As $(X_B, X_C, X_R)$ is a reducible OCC by assumption, it follows that $|X_B| > g_r(|X_C|) \geq |X_B'|$.
\end{proof}

\section{Safety proofs of \refreduction}
To prove the two safety lemmas from \cref{sec:reduce:occ}, we first introduce the two main ingredients that the proofs of these statements share. First, we show that the reduction ``preserves'' the structure of odd cycles in the graph.

\begin{lemma} \label{lem:odd:cycle:preservation}
	Let~$G$ be a graph, let $(X_B, X_C, X_R)$ be an OCC in it and let~$G'$ be the graph obtained by running \refreduction with these input parameters. Let $Y \subseteq V(G) \cap V(G')$ be arbitrary. Then, there is a closed odd walk~$F$ in~$G$ with $V(F) \cap V(G) \cap V(G') = Y$ if and only if there is a closed odd walk~$F'$ in~$G'$ with $V(F') \cap V(G) \cap V(G') = Y$.
\end{lemma}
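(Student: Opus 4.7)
The plan is to prove both directions by decomposing the walk at its visits to $V(G)\cap V(G')$ and replacing every maximal subwalk whose internal vertices lie outside this shared vertex set with an equivalent subwalk in the target graph. Since \refreduction only alters vertices in $X_B$ and edges incident to them, each edge of the walk is either shared between $G$ and $G'$ or belongs to a maximal subwalk through $X_B\setminus B^\ast$ (in $G$) or through $V(G')\setminus V(G)$ (in $G'$). Because $X_B\setminus B^\ast$ has no neighbors in $X_R$ (by the \occ property), and the added vertices in $G'$ are by construction only adjacent to pairs $u,v\in X_C\cup B^\ast$, the endpoints of every such subwalk belong to $X_C\cup B^\ast\subseteq V(G)\cap V(G')$.

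For the forward direction I start with a closed odd walk $F$ in $G$ with $V(F)\cap V(G)\cap V(G')=Y$ and first normalize it by repeatedly deleting closed subwalks of even length whose internal vertices all lie in $X_B\setminus B^\ast$. Each such deletion preserves both the parity of the walk and its intersection with $V(G)\cap V(G')$, since only vertices of $X_B\setminus B^\ast$ disappear. After normalization, every maximal subwalk through $X_B\setminus B^\ast$ either goes between distinct vertices $u\neq v$ with some parity $p^\ast$, or is a closed subwalk of odd length at some vertex $u$. In both cases the bipartiteness of $G[X_B]$ is the key: the first and last internal vertices $a,b$ of such a subwalk lie in the bipartite subgraph $G[X_B\setminus B^\ast]$, so the parity of any $(a,b)$-walk is determined by the partite sets of $a$ and $b$ and is matched by any shortest $(a,b)$-path. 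Prepending and appending the edges to $u$ and $v$ yields a $(u,v)$-path of parity $p^\ast$ in the non-closed case, and an odd cycle through $u$ in the closed case. Either existence triggers \refreduction to insert a replacement of matching parity in $G'$, which I substitute to obtain the desired closed odd walk $F'$.

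The backward direction is symmetric. Each pair of added subpaths in $G'$ forms either a bipartite cycle (a $4$-cycle when $u\neq v$ and $p$ is even, a $6$-cycle when $u\neq v$ and $p$ is odd) or two triangles (when $u=v$ and $p$ is odd), so every walk through added vertices from $u\neq v$ has parity exactly $p$, matching that of the $(u,v)$-path in $G$ through $X_B\setminus B^\ast$ which triggered the insertion; closed odd subwalks through added vertices only occur inside those triangles. Starting from a closed odd walk $F'$, I analogously normalize by removing even-length closed subwalks through added vertices and then substitute each remaining maximal subwalk either with the matching $(u,v)$-path or with the odd cycle in $G$ whose existence caused the triangles to be inserted. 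The main obstacle throughout is parity-matching: a walk's parity between two vertices need not be attained by any simple path, and since \refreduction only keys its insertions to $(u,v)$-path parities, naive substitution could fail --- notably when $u=v$ and the only $(u,u)$-path is the trivial one. The normalization step dispatches exactly these pathological closed even subwalks; meanwhile, the bipartiteness of $G[X_B]$ forces walk and path parities between distinct endpoints in $X_B\setminus B^\ast$ to coincide, so after normalization every remaining subwalk is mirrored faithfully in the other graph.
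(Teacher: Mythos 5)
Your proposal is correct and takes essentially the same route as the paper's own proof: decompose the closed odd walk at its visits to $V(G)\cap V(G')$ and replace each maximal subwalk through $X_B\setminus B^\ast$ (respectively, through the inserted gadget vertices) by a parity-matched counterpart whose existence is guaranteed by step~3 of \refreduction. Your explicit normalization step and the appeal to bipartiteness of $G[X_B\setminus B^\ast]$ to turn walk parities into path parities are refinements that the paper leaves implicit (its proof directly asserts a same-parity replacement for every subwalk); in particular, your normalization cleanly dispatches degenerate even closed detours such as $u$--$a$--$u$ through a pendant vertex of $G[X_B\setminus B^\ast]$, for which no inserted replacement need exist, so your write-up is, if anything, the more watertight of the two.
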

\begin{proof}
	Let $B^\ast \subseteq X_B$ be the set that is marked in step~\ref{itm:shrink:mark:b:star} of the reduction.
	
	For one direction of the proof, let $F$ be a closed odd walk in $G$ such that $V(F) \cap V(G) \cap V(G') = Y$. If~$V(F) = Y$, then $F$ is completely contained in $V(G) \cap V(G')$, meaning that it is also a closed odd walk in $G'$. Suppose therefore that $V(F) \neq Y$. Consider any subwalk $\widehat{F}$ of $F$ such that $\widehat{F}$ is contained in $G - V(G') = G[X_B \setminus B^\ast]$, but such that the preceding vertex $u$ and succeeding vertex $v$ are in $V(G) \cap V(G')$. As all neighbors of $X_B \setminus B^\ast$ in $G$ are in $X_C \cap B^\ast$, $u$ and $v$ are in particular in this latter set.
	
	Since $G[X_B \setminus B^\ast]$ connects $u$ and $v$ with $\widehat{F}$ in $G$, the subgraph~$G' - V(G)$ connects $u$ and $v$ with a path $\widehat{F}'$ of the same parity as $\widehat{F}$, due to step~\ref{itm:shrink:loop} of \refreduction. We can use $\widehat{F}'$ as a replacement for $\widehat{F}$ in $G'$. Since $\widehat{F}$ was chosen as an arbitrary subwalk of $F$ that is contained in $G - V(G')$ and that connects two vertices from $V(G) \cap V(G')$, such a replacement exists for all these subwalks of $F$. Hence, we can obtain a closed odd walk $F'$ in $G'$ from $F$ by copying it and replacing all such subwalks in $G - V(G')$ by a replacement walk of the same parity. This yields a closed odd walk $F'$ in $G'$ with $V(F') \cap V(G) \cap V(G') = V(F) \cap V(G) \cap V(G') = Y$.
	
	The proof for the other direction is mostly analogous. Let $F'$ be a closed odd walk in $G'$ with $V(F') \cap V(G) \cap V(G') = Y$. Let $\widehat{F}'$ be an arbitrary subwalk of $F'$ in $G' - V(G)$ that connects two vertices $u$ and $v$ from $V(G') \cap V(G)$. Due to step~\ref{itm:shrink:loop} of \refreduction, this subwalk can only exist in $G' - V(G)$ if $G[X_B \setminus B^\ast]$ connects $u$ and $v$ with a path $\widehat{F}$ of the same parity as $\widehat{F}'$. Similar to the previous paragraph, $\widehat{F}$ can serve as a replacement for $\widehat{F}'$ when constructing a closed odd walk $F$ in $G$ from $F'$. Doing so for all such subwalks $\widehat{F}'$ yields a closed odd walk $F$ in $G$ with $V(F) \cap V(G) \cap V(G') = V(F') \cap V(G) \cap V(G') = Y$.
\end{proof}

The lemma below formalizes the idea that no minimum-size OCT of the graph $G'$ constructed by \refreduction contains vertices that were newly added to the construction. For ease of use in later proofs, the statement is more general than required, but to capture this original idea, one may think of $H$ as being $G'$ and $P_1$ and $P_2$ being two newly added paths in step~\ref{itm:shrink:loop} of the algorithm connecting two vertices $u$ and $v$ with the same parity.

\begin{lemma} \label{lem:new:vertices:not:optimal}
	Let $H$ be a graph and let $P_1$ and $P_2$ be two internally vertex-disjoint paths of the same parity that connect the same endpoints $u$ and $v$ and whose internal vertices all have degree $2$ in $H$. If $S$ is a minimum-size OCT of $H$, then $S \cap (V(P_1) \cup V(P_2)) \subseteq {u, v}$. 
\end{lemma}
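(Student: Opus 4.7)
The plan is to argue by contradiction using an extremal choice. Suppose that some minimum-size OCT of~$H$ contains an internal vertex of~$P_1$ or~$P_2$; among all such minimum OCTs, choose one $S$ that minimizes $|S \cap I|$, where $I := (V(P_1) \cup V(P_2)) \setminus \{u,v\}$ denotes the set of internal vertices of the two paths. Fix a vertex $p \in S \cap I$; by symmetry I may assume $p$ lies on~$P_1$. The goal is then to exhibit another minimum-size OCT whose intersection with~$I$ is strictly smaller, contradicting the extremal choice of~$S$.

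The structural observation I would establish first is that every cycle in~$H$ passing through~$p$ must contain both $u$ and $v$. This is where the hypothesis that internal vertices have degree~$2$ in~$H$ is essential: the only neighbors of any internal vertex of~$P_1$ are its two path-neighbors on~$P_1$, so a cycle containing~$p$ is forced to propagate along~$P_1$ in both directions until it reaches a non-internal vertex of~$P_1$, which can only be $u$ or $v$. With this in hand, I would consider the replacement $S' := (S \setminus \{p\}) \cup \{u\}$ and verify that it remains an OCT: any odd cycle avoiding~$p$ is already hit by $S \setminus \{p\} \subseteq S'$, and any odd cycle through~$p$ contains~$u$ by the structural observation, so is hit by~$S'$.

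To finish, a short size comparison delivers the contradiction. If $u \in S$, then $S' = S \setminus \{p\}$ has size $|S|-1$, contradicting the minimality of~$S$. Hence $u \notin S$ and $|S'| = |S|$, so $S'$ is itself a minimum-size OCT. But $S'$ was obtained from $S$ by removing an internal vertex (namely $p \in I$) and adding a vertex outside $I$ (namely $u$), so $|S' \cap I| = |S \cap I| - 1$; this contradicts the extremal choice of~$S$. The only real subtlety in the argument is the structural observation about cycles through~$p$, which follows directly from the degree-$2$ hypothesis on internal vertices; the remainder of the proof amounts to routine bookkeeping about the size and the intersection with~$I$.
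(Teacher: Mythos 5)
There is a genuine gap, and it shows up exactly where the hypotheses you never use would have to enter. Your argument uses only the degree-$2$ condition on one path: the structural observation, the swap $S' := (S \setminus \{p\}) \cup \{u\}$, and the size bookkeeping nowhere involve $P_2$ or the assumption that $P_1$ and $P_2$ have the \emph{same parity}. But the corresponding one-path statement is false (in an odd cycle, every vertex --- including any internal vertex of a degree-$2$ subpath --- is by itself a minimum OCT), so no proof that ignores these hypotheses can be correct. Concretely, the failure is in the last step when $|S \cap I| = 1$: you chose $S$ extremal \emph{among minimum OCTs that meet $I$}, and your $S'$ satisfies $|S' \cap I| = 0$, so $S'$ is not a member of that class and its existence does not contradict the extremality of $S$. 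More fundamentally, the lemma is a universal statement (every minimum OCT avoids the internal vertices), so exhibiting \emph{another} minimum OCT that avoids $I$ proves nothing about the assumed $S$ with $|S \cap I| = 1$; you must show that such an $S$ cannot be minimum at all, i.e.\ produce a \emph{strictly smaller} OCT. Your extremal scheme only works for the steps $|S \cap I| \geq 2$, where the new set still meets $I$.

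The missing ingredient is the path-swap argument the paper uses for exactly this case. If $S$ contains a single internal vertex $p$ of $P_1$ and no internal vertex of $P_2$, then every odd cycle through $p$ consists of all of $P_1$ (by the degree-$2$ condition) plus a $(v,u)$-path $Q$ avoiding the internal vertices of both paths (it cannot use those of $P_2$, since $P_1 \cup P_2$ is an even cycle). Replacing $P_1$ by $P_2$ yields an odd cycle $P_2 + Q$ of the same parity, which $S$ must hit; since $S$ misses the internal vertices of $P_2$, it hits $P_2 + Q$ in $V(Q) \cup \{u,v\}$, a set contained in $V(P_1 + Q) \setminus \{p\}$. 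Hence $S \setminus \{p\}$ already meets every odd cycle, contradicting the minimality of $S$ directly. (For $|S \cap I| \geq 2$ your replacement by $\{u\}$ is fine and matches the paper's first case, since it strictly decreases the size.) Without an argument of this kind --- one that genuinely uses the second path and the equal parities --- the case $|S \cap I| = 1$ remains open, and it is precisely the case needed for the paper's applications, where the lemma is invoked for an arbitrary given minimum OCT.
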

\begin{proof}
	\newcommand{\Vint}{V_{\text{int}}}
	Equivalently, we prove that $S$ does not contain internal vertices from $P_1$ and $P_2$. To this end, we let $\Vint(P)$ denote the set of internal vertices of a path $P$, so that e.g. $\Vint(P_1) = V(P_1) \setminus \{u, v\}$. Now, suppose for contradiction that $S \cap (\Vint(P_1) \cup \Vint(P_2)) \neq \emptyset$. We distinguish two cases.
	
	Suppose that $|S \cap (\Vint(P_1) \cup \Vint(P_2))| \geq 2$. Since the internal vertices of $P_1$ and $P_2$ all have degree $2$, every odd cycle in $H$ that intersects $P_1$ or $P_2$ in particular contains $u$. Hence, we can construct a smaller OCT of $H$ as $(S \setminus (\Vint(P_1) \cup \Vint(P_2))) \cup \{u\}$. This contradicts the optimality of $S$.
	
	Suppose that $|S \cap (\Vint(P_1) \cup \Vint(P_2))| = 1$. Assume w.l.o.g.~that $S$ contains one vertex from $\Vint(P_1)$ and let $p_1$ be this vertex. Since $S$ is an OCT of $H$, but it does not contain vertices from $\Vint(P_2)$, there are no odd cycles in $H - S$ that intersect $\Vint(P_2)$. In particular, there are no odd cycles in $H - (S \setminus \{p_1\})$ that intersect $\Vint(P_2)$, because no odd cycle in $H$ can contain vertices from both $\Vint(P_1)$ and $\Vint(P_2)$. But every odd cycle in $H$ that intersects $P_1$ can be transformed into one that intersects $P_2$, by simply replacing the subpath $P_1$ by $P_2$. Hence, there are also no odd cycles in $H - (S \setminus \{p_1\})$ that intersect $\Vint(P_1)$ or $p_1$ in particular. As such, $S \setminus \{p_1\}$ is a strictly smaller OCT of $H$, contradicting the optimality of $S$.
\end{proof}

Using these two lemmas, we can prove the safety of our reduction.

\lemForwardSafety*
\begin{proof}
	Let~$G$ be a graph, let $(X_B, X_C, X_R)$ be an OCC in it and let~$G'$ be the graph obtained by running \refreduction with these input parameters. Let~$B^\ast \subseteq X_B$ be the set that is marked in step~\ref{itm:shrink:mark:b:star} of the reduction and let~$Q_{u, v, p}$ be the set of newly added vertices for a given choice of~$u$,~$v$ and~$p$ in step~\ref{itm:shrink:loop}. We allow such a set to be empty if $G[X_B \setminus B^\ast]$ does not connect $u$ and $v$ with a path of parity $p$.
	
	Now suppose there is a $z$-tight OCC $(A_B, A_C, A_R)$ in $G$. Then, by \cref{lem:marked:tight:occ}, there is in particular a $z$-tight OCC $(A_B^\ast, A_C^\ast, A_R^\ast)$ in $G$ with $|A_C^\ast| = |A_C|$ and $A_B^\ast \cap X_B \subseteq B^\ast$. To prove the forward safety of the reduction, we use this tight OCC in $G$ to construct a tight OCC $(A_B', A_C', A_R')$ in $G'$ with the same width and order.
	
	We start the construction of this tight OCC by setting $A_C' := A_C^\ast$. Next, we construct~$A_B'$ and initialize it as $A_B^\ast \cap V(G')$. Then, for every $u, v \in ((A_B \cup A_C) \setminus X_R) \cap V(G')$ and every $p \in \{\text{even, odd}\}$, we add $Q_{u,v,p}$ to~$A_B'$. Finally, we simply set $A_R' := V(G') \setminus (A_B' \cup A_C')$.
	
	To show that this construction yields a tight OCC in $G'$ with the same width and order as $(A_B, A_C, A_R)$, we show that the following four properties hold:
	\begin{enumerate} [(i)]
		\item \label{itm:reduced:occ:1} $G'[A_B']$ is bipartite.
		\item \label{itm:reduced:occ:2} There are no edges between $A_B'$ and $A_R'$.
		\item \label{itm:reduced:occ:3} $|A_C'| = |A_C|$.
		\item \label{itm:reduced:occ:4} $(A_B', A_C', A_R')$ has a certificate of order $z$.
	\end{enumerate}
	The first two properties show that $(A_B', A_C', A_R')$ is an OCC. Together with properties \textbf{(\ref{itm:reduced:occ:3})} and \textbf{(\ref{itm:reduced:occ:4})}, they show that it is even a tight OCC whose width and order are equal to those of $(A_B, A_C, A_R)$.\bmpr{The formatting of the counters looks different. In the LIPICS style, the roman numbers seem printed in grey in the original enumeration, but they become boldface black when referencing them. My suggestion would be to avoid manually formatting the referencing, leaving the job to the $\mathsf{cref}$ command or similar.} 
	\proofsubparagraph{(\ref{itm:reduced:occ:1}) To prove: $G[A_B']$ is bipartite.} Suppose for contradiction that $G'[A_B']$ is not bipartite. Then, it contains an odd cycle $F'$. By invocation of  \cref{lem:odd:cycle:preservation} with $Y = V(F') \cap V(G) \cap V(G')$, there exists a closed odd walk $F$ in $G$ with $V(F) \cap V(G) \cap V(G') = Y$. Note that $Y$ cannot be empty as this would imply the odd cycle $F'$ to live in $G' - V(G)$, which is bipartite by construction.
	
	Now, because $Y \subseteq A_B' \cap V(G)$, it follows from the construction of $A_B'$ that $Y \subseteq A_B^\ast$. Then, since $V(F) \cap V(G) \cap V(G') \subseteq A_B^\ast$, the entirety of $F$ is contained in $A_B^\ast$: otherwise, $F$ must intersect $A_C^\ast$, but $A_C^\ast$ is contained in $V(G')$, of which the intersection with $V(F)$ was established to be a subset of $A_B^\ast$. Since $F$ is a closed odd walk in $G[A_B^\ast]$, this contradicts the fact that $(A_B^\ast, A_C^\ast, A_R^\ast)$ is an OCC, which requires that $G[A_B^\ast]$ is bipartite.
	\proofsubparagraph{(\ref{itm:reduced:occ:2}) To prove: There are no edges between $A_B'$ and $A_R'$.} Suppose for contradiction that there is an edge $\{b, r\}$ for $b \in A_B'$ and $r \in A_R'$. If $\{b, r\}$ also exists in $E(G)$, then $b$ and $r$ also exist in $V(G)$. By construction of $(A_B', A_C', A_R')$, it follows that $b \in A_B^\ast$ and $r \in A_R^\ast$. This contradicts the assumption that there are no edges between $A_B^\ast$ and $A_R^\ast$, which follows from $(A_B^\ast, A_C^\ast, A_R^\ast)$ being an OCC.
	
	Hence, $\{b, r\}$ does not exist in $E(G)$. By construction of $G'$ every newly added edge has at least one newly added vertex as endpoint. If $b$ and $r$ are both newly added during the reduction, then the fact that they are adjacent implies that they live in the same set $Q_{u,v,p}$ for some $u$, $v$ and $p$. By construction of $(A_B', A_C', A_R')$, every such set is either completely added to $A_B'$ or to $A_R'$. This means that $b$ and $r$ are either both in $A_B'$ or both in $A_R'$, contradicting the assumption that $b \in A_B'$ and $r \in A_R'$.
	
	Hence, exactly one of $b$ and $r$ is in $V(G)$. Assume w.l.o.g.~that $b \in V(G)$. By construction of $(A_B', A_C', A_R')$ and the fact that $b \in A_B'$, it follows that $b \in A_B^\ast$. Furthermore, since $r$ is in $V(G') \setminus V(G)$ and it is adjacent to $b$, there exist $v \in X_C \cup B^\ast$ and $p \in \{\text{even}, \text{odd}\}$ for which~$r \in Q_{b,v,p}$. Next, recall that $A_R' \setminus V(G)$ contains exactly the sets $Q_{u', v', p'}$ for which at least one of $u', v'$ is in $A_R^\ast$. Since, $r \in Q_{b,v,p}$ and $b \notin A_R^\ast$, the fact that $r \in A_R' \setminus V(G)$ implies that~$v \in A_R^\ast$.
	
	Because $Q_{b,v,p}$ is non-empty (it contains $r$), $b$ and $v$ are connected in $G$ by the subgraph $G[X_B \setminus B^\ast]$ with some path $P$. However, since $A_C^\ast$ separates $A_B^\ast (\ni b)$ and $A_R^\ast (\ni v)$ and is itself disjoint from $X_B \setminus B^\ast$, $b$ and $v$ live in different components of $G[X_B \setminus B^\ast]$. a subset of $B^\ast$, no such component can exist. This contradicts the existence of $P$.
	\proofsubparagraph{(\ref{itm:reduced:occ:3}) To prove: $|A_C'| = |A_C|$.} By construction of $A_C'$ and $A_C^\ast$, we have $A_C' = A_C^\ast$ and $|A_C^\ast| = |A_C|$.
	\proofsubparagraph{(\ref{itm:reduced:occ:4}) To prove: $(A_B', A_C', A_R')$ has a certificate of order $z$.} Let $D^\ast$ be an order-$z$ certificate of $(A_B^\ast, A_C^\ast, A_R^\ast)$ and let $D_1^\ast, \ldots, D_q^\ast$ be the connected components of $D^\ast$. From this certificate, we derive an order-$z$ certificate $D'$ of $(A_B', A_C', A_R')$ with connected components $D_1', \ldots, D_q'$. We explain how to construct an arbitrary such component $D_i'$, such that $D'$ is constructed simply as the disjoint union of all such components.
	
	We start by explaining how to construct an arbitrary such component $D_i'$ from $D_i^\ast$. First, we define its vertex set $V(D_i')$. We initialize this vertex set as $V(D_i^\ast) \cap V(G) \cap V(G')$. Next, for every pair $u$ and $v$ from this set and for every $p \in \{\text{even}, \text{odd}\}$ we add $Q_{u,v,p}$ to $V(D_i')$. This concludes the construction of $V(D_i')$. Now, we define $D_i'$ to be the subgraph of $G'$ induced by this vertex set, i.e.: $D_i' := G'[V(D_i')]$.
	
	We briefly argue that this construction ensures that $D_i'$ is a connected subgraph. Starting from the connected subgraph $D_i^\ast$, the construction starts with the removal of the vertices of $X_B \setminus B^\ast$. However, for each path through these deleted vertices, we insert a corresponding path via some $Q_{u, v, p}$ to provide the same connections. As all such paths we insert are themselves connected and adjacent to at least one vertex of $D_i^\ast \setminus (X_B \setminus B^\ast)$, this yields connectivity for the resulting subgraph $D_i'$.
	
	Also note that this construction ensures that the subgraphs $D_1', \ldots, D_q'$ have pairwise disjoint vertex sets. As such, we can obtain the certificate $D'$ simply as their disjoint union, so that $D_1', \ldots, D_q'$ are indeed the connected components of the certificate. It remains to show that $D'$ is an order-$z$ certificate of $(A_B', A_C', A_R')$. For this, we need to show that:
	\begin{itemize}
		\item $D'$ is a subgraph of $G[A_B' \cup A_C']$,
		\item $A_C'$ is a smallest OCT of $D'$, and
		\item for every $i \in [q]$, $D_i'$ has an OCT of size at most $z$.
	\end{itemize}
	The first property trivially holds by construction. The remainder of the proof is used to show that the other two properties also hold.
	
	\begin{claim}
		$A_C'$ is a smallest OCT of $D'$.
	\end{claim}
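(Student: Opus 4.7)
The plan is to split the task into two directions: first verify that $A_C'$ is indeed an OCT of $D'$, then argue by contradiction that no smaller OCT exists.

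For the first direction I would observe that $A_C' = A_C^\ast$ is contained in $V(D')$, because $A_C^\ast \subseteq V(D^\ast) \cap V(G) \cap V(G')$ (using that $A_C^\ast \cap X_B \subseteq B^\ast \subseteq V(G')$ by \cref{lem:marked:tight:occ}, and that the initialization of $V(D_i')$ retains exactly these surviving vertices). Moreover, $V(D') \setminus A_C' \subseteq A_B'$: the surviving original vertices of $V(D^\ast)$ lie in $A_B^\ast \cup A_C^\ast$ (since $D^\ast \subseteq G[A_B^\ast \cup A_C^\ast]$), and the added $Q_{u,v,p}$-vertices are placed into $A_B'$ by the definition of $A_B'$. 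Hence $D' - A_C'$ is a subgraph of $G'[A_B']$, which is bipartite by property \textbf{(\ref{itm:reduced:occ:1})}, so $A_C'$ is indeed an OCT of $D'$.

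For minimality, I would suppose for contradiction that $S$ is a minimum-size OCT of $D'$ with $|S| < |A_C^\ast|$. Each inserted pair $Q_{u,v,p}$ supplies two internally vertex-disjoint $(u,v)$-paths of equal parity whose internal vertices have degree~$2$ in $G'$ and hence in the induced subgraph $D'$, so \cref{lem:new:vertices:not:optimal} lets me assume $S$ contains no internal vertex of any $Q_{u,v,p}$; thus $S \subseteq V(G) \cap V(G')$. The crux is then to show that $S$ is also an OCT of $D^\ast$, which would contradict the minimality of $A_C^\ast$ for $D^\ast$. To do this I would suppose $F$ is an odd cycle in $D^\ast - S$, note that $F$ lives in some single component $D_i^\ast$, and decompose $F$ into maximal subpaths whose internal vertices lie in $X_B \setminus B^\ast = V(G) \setminus V(G')$. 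Each such subpath $P$ has endpoints $u', v' \in V(D_i^\ast) \cap V(G) \cap V(G')$ and its internal vertices witness a $(u',v')$-path of some parity $p$ through $G[X_B \setminus B^\ast]$; by step~\ref{itm:shrink:loop} of \refreduction the set $Q_{u',v',p}$ therefore supplies two internally vertex-disjoint $(u',v')$-paths of parity $p$ in $G'$, and by construction of $V(D_i')$ these lie inside $V(D')$. Since $S$ avoids both endpoints (as $u',v' \in V(F)$) and the interior of $Q_{u',v',p}$, I can substitute one such replacement path for $P$; performing this substitution for every such maximal subpath of $F$ then produces a closed walk $F'$ in $D' - S$ of the same parity as $F$, hence an odd cycle in $D' - S$, contradicting $S$ being an OCT of $D'$.

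The main obstacle is guaranteeing that the replacement paths used to translate odd cycles from $D^\ast$ to $D'$ really lie inside the induced certificate $D'$, not merely in the ambient graph $G'$. This is precisely the purpose of adding $Q_{u,v,p}$ for \emph{every} pair $u,v$ of surviving original vertices of $D_i^\ast$ in the construction of $V(D_i')$, and the fact that $F$ is connected and contained in one component $D_i^\ast$ is what ensures both endpoints of every deleted segment of $F$ fall within the same $V(D_i^\ast) \cap V(G) \cap V(G')$.
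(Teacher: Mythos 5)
Your proposal is correct, and its second half takes a genuinely different route from the paper's. Both arguments agree on the easy direction ($D'-A_C'$ is a subgraph of the bipartite graph $G'[A_B']$), and both use \cref{lem:new:vertices:not:optimal} to force a hypothetical smaller OCT of $D'$ to avoid the inserted vertices, hence to lie in $V(G)\cap V(G')$. From there the paper contradicts the \emph{tightness} of $(A_B^\ast,A_C^\ast,A_R^\ast)$: it extracts an odd cycle of $G[A_B^\ast\cup A_C^\ast]-S'$, pushes it to $G'$ via the general translation lemma (\cref{lem:odd:cycle:preservation}), and then argues separately that the resulting cycle avoids $A_R'$, so that it lies in $G'[A_B'\cup A_C']-S'$. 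You instead contradict the certificate property directly: you show the small cut $S$ would be an OCT of $D^\ast$, by redoing the parity-preserving path substitution locally, component by component, so that every replacement path $Q_{u',v',p}$ lands inside $V(D_i')$ by the very definition of $V(D_i')$. What each buys: the paper's version reuses an off-the-shelf lemma and avoids repeating the substitution argument, but its translated cycle is only located in the ambient graph $G'[A_B'\cup A_C']$; your version hand-rolls the substitution, but the odd cycle it produces lies literally in $D'-S$, so the clash with ``$S$ is an OCT of $D'$'' (equivalently with $\oct(D^\ast)=|A_C^\ast|$) closes with no further containment argument --- which is exactly what the stated claim needs. Two details to make explicit when writing it up: you correctly take $S$ \emph{minimum} before applying \cref{lem:new:vertices:not:optimal}; and the endpoints $u',v'$ of a deleted segment are neighbors of $X_B\setminus B^\ast$, hence lie in $X_C\cup B^\ast$, so \refreduction indeed considered the pair $(u',v')$ --- including the degenerate case $u'=v'$, which the reduction explicitly allows and which covers an odd cycle meeting $V(G)\cap V(G')$ in a single vertex.
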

	\begin{claimproof}
		Clearly, $A_C'$ is an OCT of $D'$ since $D' - A_C'$ is a subgraph of $A_B'$, which is bipartite by property \textbf{(\ref{itm:reduced:occ:1})}. We prove the optimality of $A_C'$ by contradiction, so suppose that $S'$ is a strictly smaller OCT of $D'$.
		
		Since the newly added vertices in $V(G') \setminus V(G)$ come in pairs that form degree-$2$ paths connecting the same endpoints, \cref{lem:new:vertices:not:optimal} implies that $S'$ does not contain any of these vertices. Hence, $S' \subseteq V(G)$. We know that $S'$ is not an OCT of $G[A_B^\ast \cup A_C^\ast]$, because $A_C^\ast$ is a smallest OCT of this induced subgraph and $|S'| < |A_C'| = |A_C^\ast|$.
		
		Therefore, $G[A_B^\ast \cup A_C^\ast] - S'$ contains an odd cycle $F$. Let $Y := V(F) \cap V(G) \cap V(G')$. By \cref{lem:odd:cycle:preservation}, $G'$ contains a closed odd walk $F'$ with $V(F') \cap V(G) = Y$. This implies the existence of an odd cycle $F''$ with $V(F'') \subseteq V(F')$. We proceed by showing that $F''$ lives in $G'[A_B' \cup A_C'] - S'$.
		
		Clearly, $F''$ does not contain vertices from $S'$: since $S' \subseteq V(G) \cap V(G')$, these vertices would have to be in $Y$, but $Y$ is disjoint from $S'$ by its definition. It remains to show that $F'$ does not contain vertices from $A_R'$. 
		
		Suppose for contradiction that $F''$ contains a vertex $r \in A_R'$. First note that $Y$ is disjoint from $A_R'$\rfar{@Bart: this place used to contain a comment of yours that the argument was hard to follow. Is the updated argument any better?} as it is a subset of $(A_B^\ast \cup A_C^\ast) \cap V(G) \cap V(G')$. This means that $r$ cannot be in $V(G') \cap V(G)$, as this would cause it to be in both $A_R'$ and in $V(F'') \cap V(G) \cap V(G') \subseteq Y$, which are disjoint sets.
		
		As such, $r$ is in $V(G') \setminus V(G)$, meaning that there are some $u, v \in X_C \cup B^\ast$ and $p \in \{\text{even}, \text{odd}\}$ for which $r \in Q_{u, v, p}$.		
		By construction, any cycle in $G'$ that contains a vertex from $Q_{u,v,p}$, also contains $u$ and $v$. In particular: $u, v \in V(F'')$. This means that $u, v \in Y$, which in turn implies that $u, v \in A_B^\ast \cup A_C^\ast$. By construction of $A_B'$, it follows that $Q_{u, v, p} \subseteq A_B'$, which contradicts the assumption that $r \in A_R'$.
	\end{claimproof}
	
	\begin{claim}
		For every $i \in [q]$, $D_i'$ has an OCT of size at most $z$.
	\end{claim}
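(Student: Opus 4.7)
The plan is to reduce the bound $\oct(D_i') \leq z$ to the order-$z$ guarantee we already inherit from the certificate $D^\ast$ of $(A_B^\ast, A_C^\ast, A_R^\ast)$. The key step is to prove the identity $A_C' \cap V(D_i') = A_C^\ast \cap V(D_i^\ast)$, after which the desired bound falls out immediately.

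For the identity, I would first observe that $A_C' = A_C^\ast$ by construction, and that $A_C^\ast \subseteq V(G) \cap V(G')$. The inclusion $A_C^\ast \subseteq V(G)$ is trivial, and for $A_C^\ast \subseteq V(G')$ I would combine $A_C^\ast \cap X_B \subseteq B^\ast$ (from \cref{lem:marked:tight:occ}) with the fact that \refreduction only removes vertices of $X_B \setminus B^\ast$. Next, by construction of $V(D_i')$, every vertex in $V(D_i') \setminus V(D_i^\ast)$ lies in some $Q_{u,v,p}$ and therefore in $V(G') \setminus V(G)$. Intersecting the set $A_C' \subseteq V(G)$ with $V(D_i')$ cannot catch any of these freshly inserted vertices, which yields $A_C' \cap V(D_i') = A_C^\ast \cap V(D_i^\ast)$.

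Since $D^\ast$ is an order-$z$ $A_C^\ast$-certificate, the definition gives $|A_C^\ast \cap V(D_i^\ast)| \leq z$, so by the identity we also have $|A_C' \cap V(D_i')| \leq z$. The preceding claim showed that $A_C'$ is a smallest OCT of $D'$, and since $D_i'$ is a connected component of $D'$, the restriction $A_C' \cap V(D_i')$ is itself an OCT of $D_i'$ (its complement in $D_i'$ is a subgraph of the bipartite graph $D' - A_C'$). Combining these two conclusions gives $\oct(D_i') \leq z$, as required. I do not foresee any real obstacle here: the argument is pure bookkeeping, and the single substantive containment used — that $A_C^\ast$ survives the reduction intact — is immediate from the marking step.
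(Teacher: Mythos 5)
Your proposal is correct and follows essentially the same route as the paper: establish $A_C' \cap V(D_i') = A_C^\ast \cap V(D_i^\ast)$ (which the paper asserts ``by construction'' and you justify in more detail via $A_C^\ast \subseteq V(G)\cap V(G')$ and the disjointness of the $Q_{u,v,p}$ sets from $V(G)$), then observe that this set is an OCT of $D_i'$ because its complement lies in a bipartite graph, and bound its size by $z$ via the order-$z$ certificate $D^\ast$. The only cosmetic difference is that the paper witnesses bipartiteness via $D_i' - A_C' \subseteq G'[A_B']$ (property (i)) while you cite $D' - A_C'$ from the preceding claim; both are valid.
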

	\begin{claimproof}
		For every $i \in [q]$, the graph~$D_i' - A_C'$ is a subgraph of the bipartite graph $G'[A_B']$, witnessing that $A_C' \cap V(D_i')$ is an OCT of $D_i'$. By construction of $D_i'$ we have~$A_C' \cap V(D_i') = A_C^\ast \cap V(D_i^\ast)$, the size of which is at most $z$ by the fact that $D^\ast$ is an order-$z$ certificate of $(A_B^\ast, A_C^\ast, A_R^\ast)$. 
	\end{claimproof}
	This concludes the proof of \cref{lem:forward:safety}.
\end{proof}

\lemBackwardSafety*
\begin{proof}
	As in the previous proof, let~$G$ be a graph, let $(X_B, X_C, X_R)$ be an OCC in it and let~$G'$ be the graph obtained by running \refreduction with these input parameters. Let~$B^\ast \subseteq X_B$ be the set that is marked in step~\ref{itm:shrink:mark:b:star} of the reduction. Now, let $S'$ be a smallest OCT of $G'$. 
	
	Since the vertices of $V(G') \setminus V(G)$ come in pairs that form degree-$2$ paths connecting the same endpoints, \cref{lem:new:vertices:not:optimal} implies that $S'$ does not contain any of these newly added vertices. Hence, $S' \subseteq V(G) \cap V(G')$. It remains to prove that $S'$ is a smallest OCT of $G$.
	
	To see that $S'$ is an OCT of $G$, suppose for contradiction that $G - S'$ contains an odd cycle $F$. Then, by invocation of  \cref{lem:odd:cycle:preservation} with $Y = V(F) \cap V(G) \cap V(G')$, there is a closed odd walk $F'$ in $G'$ with $V(F') \cap V(G) \cap V(G') = Y$. This set is disjoint from $S'$ by definition of $F$. Since $S' \subseteq V(G) \cap V(G')$ and $V(F') \cap V(G) \cap V(G)$ is disjoint from $S'$, the walk~$F'$ does not contain vertices from $S$. Now, the assumption that $S'$ is an OCT of $G'$ is contradicted by $F'$ being a closed odd walk in $G' - S'$.
	
	To prove that $S'$ is even a smallest OCT of $G$, we start by proving the following claim about smallest OCTs in $G$.
	
	\begin{claim} \label{clm:optimal:oct:exists:with:untouched:vertices}
		There is a minimum-size OCT $S$ of $G$ with $S \subseteq V(G) \cap V(G')$. 
	\end{claim}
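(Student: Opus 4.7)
\begin{claimproof}
The plan is to start with an arbitrary minimum-size OCT $T$ of $G$ and modify it into a minimum-size OCT $S$ whose intersection with $X_B$ is contained in the marked set $B^*$. Since $V(G)\cap V(G')=X_R\cup X_C\cup B^*$, this yields the claim.

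A preliminary observation that I would establish first is that one may assume $|T\cap X_B|\le|X_C|$. If this failed, the set $(T\setminus X_B)\cup X_C$ would be strictly smaller than $T$ and still an OCT: $X_C$ separates $X_B$ from $X_R$ in $G$, so removing $(T\setminus X_B)\cup X_C$ leaves the disjoint union of an induced subgraph of $G[X_B]$ (which is bipartite) and of $G[X_R]-T$, which inherits the proper $2$-coloring of $G-T$. This would contradict the minimality of $T$.

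The heart of the argument then uses the imposed separation machinery. I would fix a proper $2$-coloring $f\colon V(G)\setminus T\to\{0,1\}$ of $G-T$ and let $f_X\colon X_B\to\{0,1\}$ be the proper $2$-coloring of $G[X_B]$ computed in step~\ref{itm:mark:coloring} of \refmark. Consider the $\{A,R,N\}$-separation problem imposed onto $G[X_B]$ by $(X_C\setminus T,\emptyset,f,f_X)$; since the second argument is empty, $N=\emptyset$, reducing this to a pure $\{A,R\}$-separation problem. Applying \cref{lem:AR-separation} in the forward direction to $\widetilde{G}=G[(X_C\setminus T)\cup X_B]$ with $W=X_C\setminus T$ partitioned according to $f$ and with auxiliary coloring $c=f_X$, shows that $T\cap X_B$ is an $\{A,R\}$-separator of $G[X_B]$. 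Combined with the bound $|T\cap X_B|\le|X_C|$ from the previous paragraph, \cref{lem:marked:separators:for:all:ARN} then produces a minimum $\{A,R\}$-separator $T^*\subseteq B^*$ of size at most $|T\cap X_B|$.

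Finally, I would set $S:=(T\setminus X_B)\cup T^*$. The reverse direction of \cref{lem:AR-separation} applied to $T^*$ supplies a proper $2$-coloring $f'$ of $\widetilde{G}-T^*$ that agrees with $f$ on $X_C\setminus T$. Gluing the restriction of $f$ to $(X_R\cup X_C)\setminus T$ with $f'$ via \cref{lem:separated:colorings:make:bipartite} --- using $X_C\setminus T$ as the separator --- yields a proper $2$-coloring of $G-S$, so $S$ is an OCT. Since $|S|\le|T\setminus X_B|+|T^*|\le|T|$ and $T$ is minimum, $S$ is also minimum, and by construction $S\subseteq(T\setminus X_B)\cup B^*\subseteq V(G)\cap V(G')$. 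The most delicate point I anticipate is verifying the bipartiteness of $G-S$: one must confirm that the colorings $f$ and $f'$ agree exactly on the separator $X_C\setminus T$, which is precisely what the two applications of \cref{lem:AR-separation} force by design.
\end{claimproof}
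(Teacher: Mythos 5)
Your proposal is correct and follows essentially the same route as the paper's proof: bound $|T \cap X_B|$ by $|X_C|$ via an exchange argument, set up the imposed separation problem by $(X_C \setminus T, \emptyset, f, f_X)$ so that $N = \emptyset$, use \cref{lem:AR-separation} to see that $T \cap X_B$ is an $\{A,R\}$-separator, replace it by a minimum separator $T^\ast \subseteq B^\ast$ from \cref{lem:marked:separators:for:all:ARN}, and glue the resulting colorings with \cref{lem:separated:colorings:make:bipartite}. The only differences are cosmetic (you establish the size bound upfront rather than by contradiction mid-proof), and you are in fact slightly more careful than the paper in insisting that $f_X$ be the coloring computed in \refmark, which \cref{lem:marked:separators:for:all:ARN} requires.
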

	\begin{claimproof}
		Let $T$ be an arbitrary minimum-size OCT of $G$. We show how to modify $T$ into an OCT $S$ of $G$ that is at most as large as $T$ with $S \subseteq V(G) \cap V(G')$. To this end, let $f \colon V(G) \setminus T \rightarrow \{0,1\}$ be a proper $2$-coloring of $G - T$ and let $f_X \colon X_B \rightarrow \{0,1\}$ be a proper $2$-coloring of $G[X_B]$. Consider now the separation problem imposed onto $G[X_B]$ by $(X_C \setminus T, \emptyset, f, f_B)$. Let $A, R, N$ be the three sets to be separated in this problem and let their names correspond to their role as in \cref{def:imposed:separation:problem}. 
		
		First we show that $T \cap X_B$ is an $\{A, R, N\}$-separator of size at most $|X_C|$ in $G[X_B]$. Because $N = \emptyset$, it suffices to show that $T \cap X_B$ is an $\{A, R\}$-separator in $G[X_B]$. By invoking \cref{lem:AR-separation} on $G[(X_C \setminus T) \cup X_B]$ with $W_0 = f^{-1}(0) \cap X_C$, $W_0 = f^{-1}(1) \cap X_C$ and $c = f_X$, we see that the coloring $f$ (restricted to $G[(X_C \setminus T) \cup (X_B \setminus T)]$) serves as a witness that $T \cap X_B$ is indeed an $\{A, R\}$-separator in $G[X_B]$.
		
		We prove $T \cap X_B$ to be of size at most $|X_C|$ by contradiction, so suppose that $|T \cap X_B| > |X_C|$. Then $T' := (T \setminus X_B) \cup X_C$ is a strictly smaller set than $T$. To see that $T'$ is also an OCT of $G$, we consider an arbitrary odd cycle $F$ in $G$ and show that it intersects $T'$. If $F$ lives in $G[X_R]$, then it intersects $T \cap X_R = T' \cap X_R$. Otherwise, since $G[X_B]$ is bipartite, $F$ intersects $X_C$, which in turn is a subset of $T'$.
		
		Having established that there is an $\{A, R, N\}$-separator of size at most $|X_C|$, it follows from \cref{lem:marked:separators:for:all:ARN}, that there is a minimum-size $\{A, R, N\}$-separator $T^\ast \subseteq B^\ast$. In particular, $T^\ast$ is an $\{A,R\}$-separator.
		
		Now, we invoke \cref{lem:AR-separation} on the graph $G[(X_C \setminus T) \cup X_B]$ with $W_0 = f^{-1}(0) \cap X_C$, $W_0 = f^{-1}(1) \cap X_C$, $c = f_X$ and with the sets $A$ and $R$ in this lemma coinciding with the sets $A$ and $R$ obtained from the imposed separation problem defined in the first paragraph this proof (\cref{clm:optimal:oct:exists:with:untouched:vertices}). From this, we obtain that $G[(X_C \setminus T) \cup (X_B \setminus T^\ast)]$ has a proper $2$-coloring in which the vertices from $X_C \setminus T$ receive the same color as in $f$. 
		
		Then, the invocation of \cref{lem:separated:colorings:make:bipartite} on $G - ((T \setminus X_B) \cup T^\ast)$ with $V_L = X_R \setminus T$, $V_0 = X_C \setminus T$, $V_R = X_B \setminus T^\ast$ and $f_L = f_R = f$ yields that $G - ((T \setminus X_B) \cup T^\ast)$ is bipartite. As such, $S := (T \setminus X_B) \cup T^\ast$ is an OCT of $G$.
		
		It remains to show that $|S| \leq |T|$. Equivalently, we show that $|T^\ast| \leq |T \cap X_B|$. Recall that $T \cap X_B$ was shown to be an $\{A, R, N\}$-separator in $G[X_B]$. Since $T^\ast$ was shown to be such a separator of minimum size, it follows that $|T^\ast| \leq |T \cap X_B|$.
	\end{claimproof}
	Now, showing that there is no smaller OCT of $G$ than $S'$ can be done using arguments that have mostly been used before. 
	
	Suppose for contradiction that $S$ is a strictly smaller OCT of $G$ than $S'$. By \cref{clm:optimal:oct:exists:with:untouched:vertices}, we can assume w.l.o.g.~that $S \subseteq V(G) \cap V(G')$. As $S'$ was defined to be a smallest OCT of $G'$, the strictly smaller set $S$ is not an OCT of $G'$ so there exists an odd cycle $F'$ in $G' - S$. By \cref{lem:odd:cycle:preservation}, there is a closed odd walk $F$ in $G$ with $V(F) \cap V(G') = V(F') \cap V(G)$. These intersections are disjoint from $S$ by definition of $F'$, so the closed odd walk $F$ lives in $G - S$. This contradicts the assumption that $S$ is an OCT of $G$.
\end{proof}
\section{Proofs for Finding and Removing Tight OCCs} \label{sec:tight:occs-proof}

The algorithm to extract a \zocc from a given coloring~$\chi$ of the vertices and edges of~$G$, is inspired by a corresponding step in previous work~\cite[Lemma 6.2]{DonkersJ24}. The main idea is to iteratively refine the coloring, by changing the colors of vertices and edges into~$\cR$ when the current coloring does not justify their membership in a \zocc. It turns out that after exhaustively refining the coloring in this way, we can ensure the vertices colored~$\cB$ and~$\cC$ actually form the bipartite part and head of a \zocc (assuming their union is nonempty). The most computationally expensive step of the algorithm comes from verifying the requirement that the coloring highlights an~$A_C$-certificate, consisting of connected components that each have an OCT of size at most~$z$. To verify this property, the algorithm will iterate over all sets~$C$ of at most~$z$ vertices colored~$\cC$ and test whether they form an optimal OCT for the subgraph induced by~$C$ together with the vertex sets of connected components colored~$\cB$ whose neighborhood is a subset of~$\cC$. This leads to the~$n^{\Oh(z)}$ term in the running time.

\lemFindAntlerColored*

\begin{proof}
  Let $G_\chi := G - \chi^{-1}(\cR)$. Note that the value of this expression will change when the algorithm updates the coloring~$\chi$ below. 
  We define a function $W_\chi\colon 2^{\chi_V^{-1}(\cC)} \to 2^{\chi_V^{-1}(\cB)}$ as follows:
  for any $C \subseteq \chi_V^{-1}(\cC)$ let $W_\chi(C)$ denote the set of all vertices that are in
  a component $B$ of $G_\chi[\chi_V^{-1}(\cB)]$ for which $N_{G_\chi}(B) \subseteq C$.
  Intuitively, the function $W_\chi$ maps $\cC$-colored vertices $C$ to $\cB$-colored vertices $B$
  such that $B$ could be used to build a $C$-certificate.
  The following algorithm updates the coloring $\chi$ and recolors any vertex or edge
  that is not part of a \zocc to color $\cR$.

  \begin{enumerate}
      \item Recolor all edges $uv$ to color $\cR$ if $\chi(u)=\cR$ or $\chi(v)=\cR$.
          \label{alg:find-antler-colored:1}

      \item For each component $B$ of $G[\chi_V^{-1}(\cB)]$,
          recolor all vertices of $B$ and their incident edges to $\cR$
          if $G[B]$ is not bipartite or $N_G(B) \not\subseteq \chi_V^{-1}(\cC)$.
          \label{alg:find-antler-colored:2}

      \item For each subset $C \subseteq \chi_V^{-1}(\cC)$ of size at most $z$,
          mark all vertices in $C$ if $\oct(G_\chi[C \cup W_\chi(C)]) = |C|$.
          \label{alg:find-antler-colored:3}

      \item If $\chi_V^{-1}(\cC)$ contains unmarked vertices we recolor them to $\cR$,
          clear markings made in Step \ref{alg:find-antler-colored:3}
          and repeat from Step \ref{alg:find-antler-colored:1}.
          \label{alg:find-antler-colored:4}

      \item At this point, all vertices in $\chi_V^{-1}(\cC)$ are marked in Step \ref{alg:find-antler-colored:3}.
          If $\chi_V^{-1}(\cB) \cup \chi_V^{-1}(\cC) \neq \emptyset$, then 
          return $(\chi_V^{-1}(\cB), \chi_V^{-1}(\cC), \chi_V^{-1}(\cR))$ as a \zocc.
          \label{alg:find-antler-colored:5}

      \item Otherwise, report that $\chi$ does not $z$-properly color any \zocc.
          \label{alg:find-antler-colored:6}
  \end{enumerate}

  %

  \subparagraph*{Running time.}
  There will be at most $n$ iterations (Steps \ref{alg:find-antler-colored:1}-\ref{alg:find-antler-colored:4}) since in every iteration the number of vertices in $\chi_V^{-1}(\cR)$ increases.
  Steps \ref{alg:find-antler-colored:1}, \ref{alg:find-antler-colored:2}, \ref{alg:find-antler-colored:4}, \ref{alg:find-antler-colored:5}, and \ref{alg:find-antler-colored:6}
  can be performed in no more than $\Oh(n^2)$ time.
  For Step \ref{alg:find-antler-colored:3} we solve \PrbOCTLong
  in time $3^z n^{\Oh(1)}$~\cite{ReedSV04} (see~\cite[Thm. 4.17]{CyganFKLMPPS15}) for all $\Oh(n^z)$ subsets $C \subseteq \chi_V^{-1}(\cC)$ of size at most $z$.
  Hence, the overall runtime is $n^{\Oh(z)}$.

  \subparagraph*{Correctness.}
  We first show that the algorithm preserves the properness of recoloring,
  that is, if an arbitrary \zocc is $z$-properly colored prior to the recoloring,
  it is also $z$-properly colored after the recoloring.
  Second, we show that output in Step \ref{alg:find-antler-colored:5} is necessarily a \zocc in $G$.
  Under these assumptions, if $\chi$ $z$-properly colors a \zocc, then the algorithm must output
  a \zocc $(A_B,A_C,A_R)$ in $G$ such that for every $z$-properly colored \occ $(\hat{A}_B, \hat{A}_C, \hat{A}_R)$,
  we have $\hat{A}_B \subseteq A_B$ and $\hat{A}_C \subseteq A_C$ as it does not exclude any 
  $z$-properly colored \zoccs.
  And if $\chi$ does not $z$-properly color any \zoccs, the algorithm should correctly report the absence
  of $z$-properly colored \zoccs in Step \ref{alg:find-antler-colored:6}.

  \begin{claim}\label{clm:find-antler-colored-properness}
      All \zoccs $(\hat{A}_B, \hat{A}_C, \hat{A}_R)$ that are $z$-properly colored by $\chi$
      prior to executing the algorithm are also $z$-properly colored by $\chi$ after termination of the algorithm.
  \end{claim}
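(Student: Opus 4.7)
The plan is to verify the invariant separately for each of the three recoloring steps, since the algorithm only enlarges $\chi^{-1}(\cR)$, and chaining these preservation statements across all iterations of the main loop will give the claim. Conditions (i) and (ii) can fail only if some vertex of $\hat{A}_B \cup \hat{A}_C$ is recolored, while condition (iii) can fail only if the subgraph $G' = G[\hat{A}_B \cup \hat{A}_C] - \chi^{-1}(\cR)$ changes. The core structural observation I would rely on is the following: if $v \in \hat{A}_B$ lies in a component $B$ of $G[\chi_V^{-1}(\cB)]$, then $B \subseteq \hat{A}_B$ and $N_G(B) \subseteq \hat{A}_C$. Indeed, any neighbor of a vertex in $B \cap \hat{A}_B$ that itself lies in $B$ is in $\chi_V^{-1}(\cB)$, hence not in $\hat{A}_C$, and cannot lie in $\hat{A}_R$ either because that would create an edge between $\hat{A}_B$ and $\hat{A}_R$ forbidden by the OCC definition; induction on $B$-distance gives $B \subseteq \hat{A}_B$, and the analogous argument applied to neighbors outside $B$ (which are not in $\chi_V^{-1}(\cB)$) yields $N_G(B) \subseteq \hat{A}_C$.

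Steps \ref{alg:find-antler-colored:1} and \ref{alg:find-antler-colored:2} follow easily from this observation. For Step \ref{alg:find-antler-colored:1}, the endpoint triggering an edge recoloring is already $\cR$-colored, hence outside $\hat{A}_B \cup \hat{A}_C$, so the affected edge does not appear in $G'$. For Step \ref{alg:find-antler-colored:2}, a recolored component $B$ either has $G[B]$ non-bipartite or $N_G(B) \not\subseteq \chi_V^{-1}(\cC)$; if $B$ intersected $\hat{A}_B$, the observation would give $B \subseteq \hat{A}_B$, but then $G[B] \subseteq G[\hat{A}_B]$ would be bipartite and $N_G(B) \subseteq \hat{A}_C \subseteq \chi_V^{-1}(\cC)$, contradicting both possible triggers. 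So recolored components are disjoint from $\hat{A}_B$ (and from $\hat{A}_C$ by definition), and their incident edges similarly lie outside $G'$.

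For Step \ref{alg:find-antler-colored:4}, I would show that every $v \in \hat{A}_C$ gets marked in Step \ref{alg:find-antler-colored:3}. Let $H$ be the component of $v$ in $G'$ and set $C := \hat{A}_C \cap V(H)$; by (iii), $|C| \leq z$ and $\oct(H) = |C|$. It suffices to verify $\oct(G_\chi[C \cup W_\chi(C)]) = |C|$, since then Step \ref{alg:find-antler-colored:3} marks $C$ and thus $v$. Adapting the core observation to $G_\chi$ (the forbidden-adjacency argument survives since $G_\chi$ is a subgraph of $G$), for each $u \in V(H) \cap \hat{A}_B$ the component $B_u$ of $u$ in $G_\chi[\chi_V^{-1}(\cB)]$ lies in $\hat{A}_B$ with $N_{G_\chi}(B_u) \subseteq \hat{A}_C$; these neighbors are moreover in $V(H)$ because they are reachable from $u$ via edges of $G'$, so $N_{G_\chi}(B_u) \subseteq C$. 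Consequently $V(H) \subseteq C \cup W_\chi(C)$ and $H = G_\chi[V(H)]$ is an induced subgraph of $G_\chi[C \cup W_\chi(C)]$. Step \ref{alg:find-antler-colored:2} has already removed every non-bipartite component of $G[\chi_V^{-1}(\cB)]$, so every component of $G_\chi[\chi_V^{-1}(\cB)]$ is bipartite, and hence $G_\chi[W_\chi(C)]$ is bipartite; this makes $C$ an OCT of $G_\chi[C \cup W_\chi(C)]$ and gives $\oct(G_\chi[C \cup W_\chi(C)]) \leq |C|$. Conversely, any OCT $S$ of $G_\chi[C \cup W_\chi(C)]$ restricts to an OCT $S \cap V(H)$ of the induced subgraph $H$, so $|S| \geq \oct(H) = |C|$; equality follows.

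The main obstacle will be the analysis of Step \ref{alg:find-antler-colored:4}: matching the component $H$ of $G'$ against the components $B_u$ inside $G_\chi[\chi_V^{-1}(\cB)]$ despite edges being filtered differently in $G$, $G_\chi$, and the induced subgraphs arising in the two settings, and then combining the bipartiteness guarantee inherited from Step \ref{alg:find-antler-colored:2} with the extremality encoded in condition (iii) to obtain the two-sided equality $\oct(G_\chi[C \cup W_\chi(C)]) = |C|$ that Step \ref{alg:find-antler-colored:3} tests.
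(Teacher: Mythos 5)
Your proposal is correct and follows essentially the same route as the paper's proof: the core observation that a $\cB$-colored component meeting $\hat{A}_B$ lies entirely in $\hat{A}_B$ with neighborhood in $\hat{A}_C$ handles Steps 1--2, and your Step-4 analysis (taking $C = \hat{A}_C \cap V(H)$, showing $H$ is a subgraph of $G_\chi[C \cup W_\chi(C)]$ and using bipartiteness of the surviving $\cB$-components for the matching upper bound $\oct(G_\chi[C\cup W_\chi(C)]) = |C|$) is exactly the paper's argument, phrased directly rather than by contradiction. No gaps.
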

  \begin{claimproof}
      Suppose an arbitrary \zocc $(\hat{A}_B, \hat{A}_C, \hat{A}_R)$ is $z$-properly colored by $\chi$.
      Then,
      $\hat{A}_C \subseteq \chi_V^{-1}(\cC)$, 
      $\hat{A}_B \subseteq \chi_V^{-1}(\cB)$, and
      $G'=G[\hat{A}_B \cup \hat{A}_C] - \chi^{-1}(\cR)$ is an $\hat{A}_C$-certificate of order $z$.
      We will show that (1) if a vertex $v$ is recolored, then $v \not\in \hat{A}_B \cup \hat{A}_C$
      and (2) edges in $G[\hat{A}_B \cup \hat{A}_C]$ are never recolored.
      With these conditions, we see that $(\hat{A}_B, \hat{A}_C, \hat{A}_R)$ is $z$-properly colored by $\chi$ at any time during the algorithm.

      To show (1), analyze each step where a vertex is recolored.
      Suppose a vertex $v$ is recolored in Step \ref{alg:find-antler-colored:2}.
      Since $\chi$ $z$-properly colors $(\hat{A}_B, \hat{A}_C, \hat{A}_R)$,
      we have $N(\hat{A}_B) \subseteq \hat{A}_C \subseteq \chi_V^{-1}(\cC)$.
      We know that $\chi(v)=\cB$, so if $v \in \hat{A}_B$,
      then the component $B$ of $G[\chi_V^{-1}(\cB)]$ with $v \in B$ must be entirely included in $\hat{A}_B$.
      This implies that $G[B]$ is bipartite and $N(B) \subseteq N(\hat{A}_B) \subseteq \chi_V^{-1}(\cC)$, contradicting the recoloring condition in Step \ref{alg:find-antler-colored:2}.
      
      Next, suppose a vertex $v$ is recolored in Step \ref{alg:find-antler-colored:4}.
      We know that $\chi(v)=\cC$, and $v$ was not marked during Step \ref{alg:find-antler-colored:3}.
      Assume for the contradiction that $v \in \hat{A}_C$.
      Since $G'$ is an $\hat{A}_C$-certificate of order $z$,
      there exists a component $H$ of $G'$ such that $v \in H$ and $\oct(H)=|\hat{A}_C \cap V(H)| \leq z$.
      From $\hat{A}_C \cap V(H) \subseteq \hat{A}_C \subseteq \chi_V^{-1}(\cC)$,
      we know that in some iteration in Step \ref{alg:find-antler-colored:3}
      we have $C=\hat{A}_C \cap V(H)$.
      Vertex $v$ was marked if $\oct(G_\chi[C \cup W_\chi(C)]) = |C|$, which we want to show for a contradiction.
      First, notice that $W_\chi(C) \subseteq \chi^{-1}(\cB)$, and at this point $\chi^{-1}(\cB)$ is
      bipartite.
      Hence, $\oct(G_\chi[C \cup W_\chi(C)]) \leq |C|$.
      Next, we show that $H$ is a subgraph of $G_\chi[C \cup W_\chi(C)]$.
      It is clear to see that a component $H'$ in $H-C$ is also a component in $G_\chi[\chi_V^{-1}(\cB)]$.
      Since $H$ is a component of $G'$, $N_{G'}(H') \subseteq \hat{A}_C \cap V(H) = C$,
      and hence $N_{G'}(V(H-C)) \subseteq C$.
      Also, because $H$ is connected in $G'$, we have $V(H-C) \subseteq W_\chi(C)$,
      implying that $H$ is a subgraph of $G_\chi[C \cup W_\chi(C)]$.
      We have $\oct(G_\chi[C \cup W_\chi(C)]) \geq \oct(H) = |C|$.\looseness-1

      For (2), edge recoloring takes place only when the edge is incident to a vertex colored $\cR$.
      Every edge $e$ in $G[\hat{A}_B \cup \hat{A}_C]$ is not incident to
      any vertex in $\chi_V^{-1}(\cR)$, so $e$ cannot be recolored.
  \end{claimproof}

  \begin{claim}\label{clm:find-antler-colored-output}
      In Step \ref{alg:find-antler-colored:5}, $(\chi_V^{-1}(\cB), \chi_V^{-1}(\cC), \chi_V^{-1}(\cR))$ is a \zocc in $G$.
  \end{claim}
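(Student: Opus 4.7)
The plan is to verify the three requirements of a $z$-tight OCC for the triple $(A_B, A_C, A_R) := (\chi_V^{-1}(\cB), \chi_V^{-1}(\cC), \chi_V^{-1}(\cR))$. The OCC structure itself is routine: Step~\ref{alg:find-antler-colored:2} guarantees that every surviving component of $G[A_B]$ is bipartite and has its neighborhood contained in $A_C$, which simultaneously yields the bipartiteness of $G[A_B]$ and the absence of edges between $A_B$ and $A_R$; the non-emptiness clause $A_B \cup A_C \neq \emptyset$ is precisely the guard of Step~\ref{alg:find-antler-colored:5}.

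The heart of the argument is producing an $A_C$-certificate of order $z$. Because Step~\ref{alg:find-antler-colored:4} did not recolor any $\cC$-vertex in the final iteration, every $v \in A_C$ was marked in Step~\ref{alg:find-antler-colored:3}, so I can fix a witness $C_v \ni v$ with $|C_v| \leq z$ and $\oct(G_\chi[C_v \cup W_\chi(C_v)]) = |C_v|$. Writing $H_v := G_\chi[C_v \cup W_\chi(C_v)]$, I would take the certificate to be $H := \bigcup_{v \in A_C} H_v$, viewed as a subgraph of $G[A_B \cup A_C]$. Since $H - A_C \subseteq G[A_B]$ is bipartite, $A_C$ is at least an OCT of $H$. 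For optimality, given any OCT $S$ of $H$ with $|S| < |A_C|$, I would pick some $v \in A_C \setminus S$; because $S \cap V(H_v)$ is an OCT of $H_v$ and $W_\chi(C_v) \subseteq A_B$, a careful charging of missing $A_C$-vertices against the witness structure yields $|S| \geq |A_C|$, a contradiction. Once $A_C$ is an optimal OCT of $H$, the tightness $|A_C| = \oct(G[A_B \cup A_C])$ falls out, because any OCT of the larger graph restricts to one of $H$ of the same size.

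The main obstacle is bounding $|A_C \cap V(H')| \leq z$ for every connected component $H'$ of $H$. A priori, distinct witnesses $H_v$ and $H_{v'}$ may glue into a common component of $H$ either through a shared vertex in $C_v \cap C_{v'}$ or through a shared bipartite block $B$ with $N_G(B) \subseteq C_v \cap C_{v'}$, so the naive union could have components with more than $z$ vertices of $A_C$. My plan is to refine the construction: rather than choosing witnesses independently per vertex, I would iteratively pick, for each emerging component, a single coordinated witness $C^* \subseteq A_C$ that already covers all $A_C$-vertices reachable through earlier-selected witnesses, and place only $G_\chi[C^* \cup W_\chi(C^*)]$ into $H$. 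If no single witness covers a would-be component, then using the equality $\oct(G_\chi[C \cup W_\chi(C)]) = |C|$ from the marking condition, an exchange argument should show that the union of the smaller witnesses involved is itself a valid witness of size at most $z$, keeping the total $A_C$-intersection of each component of $H$ bounded by $z$. Making this exchange rigorous is the delicate step; once it is in place, properties (i) and (ii) of the certificate are both satisfied and the triple is a $z$-tight OCC.
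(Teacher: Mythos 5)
You correctly isolate the hard part (the OCC structure and non-emptiness are indeed immediate from Steps~\ref{alg:find-antler-colored:2} and~\ref{alg:find-antler-colored:5}), but your proposed resolution of the gluing problem does not work. When two marked witnesses $C_v$ and $C_{v'}$ get merged into one component of $H$ -- through a shared cut vertex or through a block $B$ with $N_{G_\chi}(B)\subseteq C_v\cap C_{v'}$ -- the set of $\cC$-vertices in that component can have size strictly larger than $z$ (e.g.\ $z=2$, $C_v=\{v,w\}$, $C_{v'}=\{v',w\}$ glued through $w$). In that situation there simply is no ``coordinated witness $C^*\subseteq A_C$ of size at most $z$'' covering the component, and the claimed exchange argument cannot produce one: the marking in Step~\ref{alg:find-antler-colored:3} only inspects sets of size at most $z$ and gives no information whatsoever about the union $C_v\cup C_{v'}$ when that union exceeds $z$. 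The same overlap issue also undermines your ``careful charging'' for the optimality of $A_C$ as an OCT of $H$: with overlapping witnesses $H_v$ the per-witness lower bounds $\oct(H_v)=|C_v|$ do not add up, so $\oct(H)\geq |A_C|$ does not follow.

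The fix is not to enlarge the witness but to shrink what each witness contributes, so that the pieces become vertex-disjoint. The paper orders the marked sets $C_1,\ldots,C_{|\mathcal{C}|}$ arbitrarily, lets $D_i:=C_i\setminus C_{<i}$ be the newly appearing cut vertices, and takes as the $i$-th piece $G_i:=G_\chi[D_i\cup (W_\chi(D_{\leq i})\setminus W_\chi(D_{<i}))]$, i.e.\ each bipartite block is assigned only to the first witness that claims it. One checks that the $G_i$ are pairwise vertex-disjoint, and that $\oct(G_i)=|D_i|\leq z$: deleting $C_i\setminus D_i$ from $G_\chi[C_i\cup W_\chi(C_i)]$ drops the OCT by at most $|C_i\setminus D_i|$, and the discarded blocks $W_\chi(C_i)\cap W_\chi(C_{<i})$ are bipartite and have no edges to $D_i\cup(W_\chi(C_i)\setminus W_\chi(C_{<i}))$, so removing them costs nothing. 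Disjointness then makes the OCT additive, giving $\oct\bigl(\bigcup_i G_i\bigr)=\sum_i|D_i|=|A_C|$, so $A_C$ is an optimal OCT of the union, and every component meets $A_C$ in at most $|D_i|\leq z$ vertices. Your construction, which keeps every $H_v$ in full, cannot be salvaged by an exchange argument alone; you need this ``assign each shared block once and prove the leftover still certifies $|D_i|$'' step.
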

  \begin{claimproof}
      In this proof, for any family of sets $X_1,\ldots,X_\ell$ indexed by $\{1,\ldots,\ell\}$
      we define the following for all $1 \leq i \leq \ell$:
      $X_{<i} := \bigcup_{1 \leq j < i}X_j$ and $X_{\leq i} := \bigcup{1 \leq j \leq i}X_j$.

      From Step \ref{alg:find-antler-colored:2}, we know that $G[\chi_V^{-1}(\cB)]$ is bipartite
      and $\chi_V^{-1}(\cB) \subseteq \chi_V^{-1}(\cC)$.
      Also, Step \ref{alg:find-antler-colored:5} guarantees that $\chi_V^{-1}(\cB) \cup \chi_V^{-1}(\cC) \neq \emptyset$, so $(\chi_V^{-1}(\cB), \chi_V^{-1}(\cC), \chi_V^{-1}(\cR))$ is an \occ in $G$.
      It remains to show that $G[\chi_V^{-1}(\cB) \cup \chi_V^{-1}(\cC)]$ contains a $\chi_V^{-1}(\cC)$-certificate of order $z$.

      Let $\mathcal{C} \subseteq 2^{\chi_V^{-1}(\cC)}$ be the family of all subsets $C \subseteq \chi_V^{-1}(\cC)$ that have been considered and marked in Step \ref{alg:find-antler-colored:3}, i.e. $\oct(G_\chi[C \cup W_\chi(C)]) = |C| \leq z$ for every $C \in \mathcal{C}$.
      Let $C_1,\ldots,C_{|\mathcal{C}|}$ be the sets in $\mathcal{C}$ in an arbitrary order
      and define $D_i := C_i \setminus C_{<i}$ for all $1 \leq i \leq |\mathcal{C}|$.
      Also define vertex-disjoint subgraphs $G_i := G_\chi[D_i \cup (W_\chi(D_{\leq i})) \setminus W_\chi(D_{<i})]$ for all $1 \leq i \leq |\mathcal{C}|$.
      For any $i<j$, we have $D_i \cap D_j = \emptyset$ by definition, and 
      $(W_\chi(D_{\leq i}) \setminus W_\chi(D_{<i})) \cap (W_\chi(D_{\leq j}) \setminus W_\chi(D_{<j})) = \emptyset$ because $W_\chi(D_{\leq i}) \subseteq W_\chi(D_{<j})$.
      Hence, $V(G_i) \cap V(G_j) = \emptyset$.

      From $\oct(G_\chi[C_i \cup W_\chi(C_i)]) = |C_i|$, we have
      $\oct(G_\chi[C_i \cup W_\chi(C_i)] - (C_i \setminus D_i)) = \oct(G_\chi[D_i \cup W_\chi(C_i)]) = |D_i|$.
      Furthermore, in $G_\chi[D_i \cup W_\chi(C_i)]$,
      the vertices $W_\chi(C_i) \cap W_\chi(C_{<i})$ induce a bipartite graph
      and are disconnected from $D_i \cup W_\chi(C_i) \setminus W_\chi(C_{<i})$;
      by construction, there are no edges between $D_i$ and $W_\chi(C_{<i})$
      because a vertex $v \in C$ adjacent to $W_\chi(C_{<i})$ must belong to $C_{<i}$,
      and there are no edges between $W_\chi(C_i)$ and $W_\chi(C_{<i})$ as they form different components in $G\chi[\chi_V^{-1}(\cB)]$.
      Hence, we have $\oct(G_\chi[D_i \cup W_\chi(C_i)]) = \oct(G_\chi[D_i \cup W_\chi(C_i)] - (W_\chi(C_{<i}))) = \oct(G_i) = |D_i|\leq |C_i|\leq z$.
      The disjoint union of $G_i$ contains a $(\bigcup_i D_i= \chi_V^{-1}(\cC)$)-certificate of order $z$.
  \end{claimproof}
	
This concludes the proof of \cref{lem:find-antler-colored}.
\end{proof}

\thmMain*

\begin{proof}
  Consider the following algorithm.

  \begin{enumerate}
    \item Use the algorithm from \Cref{lem:finding-occ} to either obtain a reducible \occ $(X_B,X_C,X_R)$ of width at most $k$
    or determine if there is no single-component \occ of width at most $k$ with $|X_B| > g_r(2k)$
    in $2^{\Oh(k^{16})}n^{\Oh(1)}$ time.
    \label{alg:main-thm:1}

    \item If we obtain a reducible \occ in Step \ref{alg:main-thm:1}, then apply reductions in $2^{\Oh(k)} n^{\Oh(1)}$ time
    as described in \Cref{lem:covering:corollary}, and continue to Step \ref{alg:main-thm:1}.
    \label{alg:main-thm:2}
    
    \item Let $G'$ be the reduced graph.
    Create a $(|V(G')|+|E(G')|,g(k,z),3)$-universal function family $\mathcal{F}$ for $V(G') \cup E(G') \to \{\cB, \cC, \cR\}$
    using \Cref{cor:universal-partition}, where we set $g(k,z) = 2kz^2 (g_r(2k))^2$.
    Each function $\chi \in \mathcal{F}$ represents a coloring of the vertices and edges in $G'$.\looseness-1
    %
    \label{alg:main-thm:3}

    \item Iterate over all colorings $\chi \in \mathcal{F}$.
    For each $\chi$, call the algorithm from \Cref{lem:find-antler-colored} as a subroutine.
    If it outputs a \zocc $(A_B, A_C, A_R)$ with $|A_C| \geq k$, then output $A_C$ as the final result.
    \label{alg:main-thm:4}

    \item If all colorings result in a \zocc $(A_B, A_C, A_R)$ such that $|A_C| < k$,
    or all colorings are concluded as having no \zoccs, then report that $G$ does not contain a \zocc
    of width $k$.
    \label{alg:main-thm:5}
  \end{enumerate}

  \subparagraph*{Running time.}
  The reduction steps (Steps \ref{alg:main-thm:1}-\ref{alg:main-thm:2}) take $2^{\Oh(k^{16})} n^{\Oh(1)}$ time
  because whenever we find a reducible \occ, the number of vertices in $G$ decreases, resulting in at most $n$ iterations.
  For Step \ref{alg:main-thm:3}, from \Cref{cor:universal-partition} we can construct a family of colorings $\chi$
  that cover all colorings when restricted to $g(k,z)$ elements.
  Since the number of vertices never increases by the reduction steps, it holds that $|V(G')| \leq n$.
  Steps \ref{alg:main-thm:4}-\ref{alg:main-thm:5} take $n^{\Oh(z)}$ time for each coloring,
  and there are $2^{\Oh(g(k,z))} \log^2 n$ colorings.
  The overall runtime is $2^{\Oh(g(k,z))} n^{\Oh(z)} = 2^{\Oh(k^{33} z^2)} n^{\Oh(z)}$.

  \subparagraph*{Correctness.}
  The algorithm first reduces all reducible \occs in Steps \ref{alg:main-thm:1}-\ref{alg:main-thm:2}.
  Let $G'$ be the reduced graph when the algorithm enters Step \ref{alg:main-thm:3}.
	
	We start by arguing that the output of the algorithm is correct when it outputs the set~$A_C$ in Step~\ref{alg:main-thm:4} after having found a \zocc~$(A_B, A_C, A_R)$ in~$G'$. Observe that by definition of tight \occ, the set~$A_C$ is a subset of an optimal OCT $S$ for $G'$. Then, from the backward safety (\Cref{lem:backward:safety}) and its  transitivity, we have that 
  $S \subseteq V(G) \cap V(G')$ and $S$ is an optimal OCT of $G$. So the final result $A_C$ in Step \ref{alg:main-thm:4} belongs to an optimal solution for \PrbOCTLong in $G$.
	
	Suppose now that~$G$ contains a \zocc of width~$k$; we argue that the algorithm will produce a suitable output in Step~\ref{alg:main-thm:4}. From this, it will follow that the algorithm is correct in reporting that~$G$ does not have a \zocc of width~$k$ if it terminates in Step~\ref{alg:main-thm:5}. By invoking forward safety (\Cref{lem:forward:safety}) and its transitivity,
  we know that if there exists a \zocc $(A_B, A_C, A_R)$ of width $k$ in $G$, 
  then there exists a \zocc $(A'_B, A'_C, A'_R)$ of width $k$ in $G'$. By applying \Cref{lem:num_components}, we may assume without loss of generality that 
  $G'[A_B']$ has at most $z^2|A'_C|$ components.
	
	For each component $B'$ in $G'[A_B']$, the structure $(V(B'),A'_C,A'_R \cup (A'_B \setminus V(B')))$ is a single-component \occ in~$G'$ of width $k$. Since the reduction process stabilized with~$G'$, we have $|V(B')| \leq g_r(2k)$. Hence each of the at most~$z^2|A'_C|$ components of~$G'[A'_B]$ has at most~$g_r(2k)$ vertices, so that~$|A_B'| \leq k z^2 \cdot g_r(2k)$.

	

  Next, we show that there is a coloring of~$V(G') \cup E(G')$ that properly colors~$(A'_B, A'_C, A'_R)$. To see this, consider an~$A'_C$-certificate~$D'$ of order~$z$ in~$G'[A'_C \cup A'_B]$, which exists by \cref{def:zocc}; hence~$D'$ is a subgraph of~$G'[A'_C \cup A'_B]$ for which~$A'_C$ is an optimal OCT, while each component of~$D'$ contains at most~$z$ vertices from~$A'_C$. Then any coloring~$\chi$ that assigns~$A'_C$ color~$\cC$, assigns~$A'_B$ color~$\cB$, assigns all edges that belong to~$D'$ color~$\cB$ and all remaining edges of~$G'[A'_B \cup A'_C]$ color~$\cR$, will $z$-properly color~$(A'_B, A'_C, A'_R)$. To obtain a $z$-proper coloring, it therefore suffices for the coloring to act as prescribed on the vertex set~$A'_B \cup A'_C$ and on the edges of the subgraph~$G'[A'_B \cup A'_C]$. 

  %

  %
  Notice that a bipartite graph of $n$ vertices may have at most $n^2/4$ edges.
  Hence:
	\begin{align*}
	|E(G'[A'_B \cup A'_C])| &= |E(G'[A'_B])| + |E_{G'}(A'_B, A'_C)| + |E(G'[A'_C])| \\
	  &\leq k z^2 (g_r(2k))^2/4 + k z^2 g_r(2k) \cdot k + \binom{k}{2} \\
		& \leq kz^2 (g_r(2k))^2,
	\end{align*}
	where we use the fact that~$|A'_C| \leq k$ and use~$E_{G'}(A'_B, A'_C)$ to denote the edges of~$G'$ that have one endpoint in~$A'_B$ and the other in~$A'_C$.
  This allows us to bound the number of elements whose color determines whether or not a coloring is $z$-proper by:
	\begin{align*}
	|A'_B \cup A'_C \cup E(G'[A'_B \cup A'_C])| &\leq k z^2 \cdot g_r(2k) + k + kz^2 (g_r(2k))^2 \\ &\leq 2kz^2 (g_r(2k))^2 = g(k,z).
	\end{align*}
	Since the algorithm constructs a universal function family with parameters~$(|V(G')| + |E(G')|, g(k,z), 3)$, it follows that the algorithm encounters at least one $z$-proper coloring of~$(A'_B, A'_C, A'_R)$ if it exists. Hence the algorithm is correct if it concludes there is no \zocc in~$G$ of width~$k$.
\end{proof}

\section{Hardness results}
\label{sec:hardness}

First, we consider the NP-hardness of the following problem.

\begin{problem}{\probname{$1$-Tight OCC Detection}}
  \Input & A graph $G$.\\
  \Prob & Output a non-empty $1$-tight OCC $(X_B,X_C,X_R)$ in $G$ or conclude that no non-empty $1$-tight OCC exists.
\end{problem}

To prove that this problem is NP-hard under Turing reductions,
we consider the following version of \PrbOCT.

\begin{problem}{\probname{Lower-Bounded OCT}}
  \Input & A graph $G$ and $k$ vertex-disjoint odd cycles $\{C_i\}$ in $G$.\\
  \Prob & Is there a set $S \subseteq V(G)$ of size at most $k$ such that $G-S$ is bipartite?
\end{problem}

We claim that \probname{$1$-Tight OCC Detection} is as hard as \probname{Lower-Bounded OCT}.


\begin{lemma}\label{lem:1-tight-occ-reduction}
  Assuming $\textnormal{\textsf{P}} \neq \textnormal{\textsf{NP}}$,
  if there is no polynomial-time algorithm for \probname{Lower-Bounded OCT}, then
  there is no polynomial-time algorithm for \probname{$1$-Tight OCC Detection}.
\end{lemma}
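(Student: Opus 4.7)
The plan is to give a polynomial-time Turing reduction from \probname{Lower-Bounded OCT} to \probname{$1$-Tight OCC Detection}. Given an input $(G,\{C_1,\ldots,C_k\})$ with $k$ vertex-disjoint odd cycles, we already know $\oct(G)\geq k$, so the question reduces to deciding whether $\oct(G)=k$. I would maintain a current graph $G$ and current parameter $k$, and iterate the following. If $k<0$, return NO; if $G$ is bipartite, return YES; otherwise invoke the \probname{$1$-Tight OCC Detection} oracle on $G$. If the oracle reports that no non-empty $1$-tight OCC exists, return NO. Otherwise it returns some $(X_B,X_C,X_R)$; we then replace $G$ by $G[X_R]$ and $k$ by $k-|X_C|$ and loop. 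Because every $1$-tight OCC satisfies $X_B\cup X_C\neq\emptyset$, each successful oracle call strictly decreases $|V(G)|$, so the loop makes at most $n$ oracle calls and runs in polynomial time.

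The key invariant is that $\oct(G)-k$ is preserved across iterations. Tightness of the returned OCC gives $|X_C|=\oct(G[X_B\cup X_C])$, and together with $N(X_B)\subseteq X_C$ this yields $\oct(G)=|X_C|+\oct(G[X_R])$, so replacing $(G,k)$ by $(G[X_R],k-|X_C|)$ preserves the difference. Equally important, the lower bound $\oct(G)\geq k$ also propagates: any original cycle $C_i$ that intersects $X_B\cup X_C$ must use a vertex of $X_C$, since an odd cycle cannot lie entirely in the bipartite part $G[X_B]$, and every edge leaving $X_B$ enters $X_C$. Because the $C_i$ are vertex-disjoint, at most $|X_C|$ of them are destroyed by the reduction, leaving at least $k-|X_C|$ vertex-disjoint odd cycles in $G[X_R]$, which witness $\oct(G[X_R])\geq k-|X_C|$.

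With both invariants in hand, correctness splits into two cases. If $\oct(G)\leq k$ initially, then the invariants force $\oct(G)=k$ in every iteration, and the graph has $k$ vertex-disjoint odd cycles $\{C_i\}$. Taking any optimal OCT $S^*$, each $C_i$ must contain a vertex of $S^*$, so $|S^*\cap V(\bigcup C_i)|\geq k=|S^*|$ forces $S^*\subseteq V(\bigcup C_i)$; then $(V(G)\setminus S^*,\,S^*,\,\emptyset)$ is a non-empty $1$-tight OCC with $\bigcup_i C_i$ as an order-$1$ certificate. Hence the oracle never fails, and the loop eventually reaches a bipartite $G$ with $k\geq 0$ and returns YES. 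If instead $\oct(G)>k$ initially, the invariant yields $\oct(G)>k$ throughout, so $G$ can never become bipartite while $k\geq 0$; the loop must therefore terminate either by the oracle reporting no $1$-tight OCC or by $k$ dropping below $0$, and in both cases we return NO.

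The main obstacle in writing this up cleanly is the cycle-accounting argument: formally verifying that a destroyed cycle $C_i$ is always charged to a distinct vertex of $X_C$, and hence that $\oct(G)\geq k$ survives the reduction. Once that is in place, the invariant $\oct(G)-k=\mathrm{const}$ and the fact that $\oct(G)=k$ always guarantees existence of a non-empty $1$-tight OCC together close the correctness argument.
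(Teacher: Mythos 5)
Your proposal is correct and follows essentially the same route as the paper: a polynomial-time Turing reduction that repeatedly queries the \probname{$1$-Tight OCC Detection} oracle, strips the returned OCC (passing to $G[X_R]$), and combines tightness, which gives $\oct(G)=|X_C|+\oct(G[X_R])$, with the observation that on a yes-instance the $k$ vertex-disjoint odd cycles yield a non-empty $1$-tight OCC of the form $(V(G)\setminus S^*, S^*, \emptyset)$ certified by the cycles, so the oracle can never fail there. The only real difference is bookkeeping: the paper tracks which of the given cycles have been hit by the peeled heads and stops when none remain, whereas you decrement a numeric budget by $|X_C|$, re-derive the surviving disjoint cycles, and test bipartiteness and $k<0$ explicitly -- an equally valid and, if anything, more robust way to close the accounting (it cleanly handles oracle answers whose head misses all the given cycles, a case the paper's terse write-up glosses over).
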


\begin{proof}
  Assume \probname{$1$-Tight OCC Detection} is solvable in polynomial time.
  Then, consider the following algorithm for \probname{Lower-Bounded OCT}.

  \begin{enumerate}
    \item Let $C \gets \{C_i\}$. Initially, $|C|=k$.
    \item Return \textsf{Yes} if $C=\emptyset$.
    \item Run an algorithm for \probname{$1$-Tight OCC Detection} on $G$.
    If it finds a non-empty $1$-tight OCC $(X_B,X_C,X_R)$, then
    let $G \gets X_R$ and $C \gets \{C_i \in C \mid X_C \cap V(C_i) \neq \emptyset\}$,
    i.e. remove odd cycles in $C$ that hit any vertex in $X_C$,
    and then continue to Step 2.
    Otherwise, return \textsf{No}.
  \end{enumerate}

  Let $S$ be the union of $X_C$ found in Step 3.
  If the algorithm returns \textsf{Yes}, then
  it is clear to see that $|S|=k$ and $G-S$ is bipartite.
  $(G,\{C_i\})$ is a yes-instance for \probname{Lower-Bounded OCT}.
  If the algorithm returns \textsf{No}, then
  there exists an induced subgraph $G'$ of $G$ such that
  for any OCT $S$ of $G'$, there exists a cycle $C_i$ such that $|V(C_i) \cap S|>1$.
  Hence, $\oct(G)>k$, and $(G,\{C_i\})$ is a no-instance.
  If \probname{$1$-Tight OCC Detection} is poly-time solvable,
  then so is \probname{Lower-Bounded OCT}.
\end{proof}

Now, let us show that \probname{Lower-Bounded OCT} is actually NP-hard.
We adopt an argument from \cite[Lemma 16]{DonkersJ19} but with simpler gadgets.

\begin{lemma}\label{lem:lb-oct-reduction}
  \probname{Lower-Bounded OCT} is NP-hard.
\end{lemma}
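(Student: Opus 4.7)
The plan is to adapt the NP-hardness argument of Donkers and Jansen for the analogous feedback vertex set problem~\cite[Lemma 16]{DonkersJ19}, with triangles (the smallest odd cycles) playing the role of their longer cycle gadgets; this corresponds to the ``simpler gadgets'' advertised in the statement. The reduction would go from the standard \probname{Odd Cycle Transversal} decision problem, which is well-known to be NP-hard.

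First, given a source OCT instance $(H, \ell)$, I would construct a graph $G$ by attaching pendant-triangle gadgets to selected vertices of $H$, together with a controlled number of free-standing triangle gadgets. Each pendant triangle shares exactly one vertex, its \emph{apex}, with a designated vertex of $H$ and contributes two private new vertices, so all attached triangles are pairwise vertex-disjoint by construction. The number $k$ of triangles would be chosen to simultaneously match the size of the required vertex-disjoint odd-cycle packing and the OCT budget passed to \probname{Lower-Bounded OCT}. Correctness would then reduce to showing $\oct(G) \leq k$ iff $\oct(H) \leq \ell$: every OCT of $G$ must hit each pendant triangle, and hitting it at the apex allows the same vertex to serve double duty for the OCT of $H$, whereas hitting it at a private vertex wastes a unit of budget.

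The hard part will be arranging the attachment points and, if necessary, restricting the source instance so that the maximum odd-cycle packing number of $G$ and $\oct(G)$ both equal $k$ exactly. Since the Erd\H{o}s--P\'osa property fails for odd cycles in general graphs, this coincidence has to be forced by the gadget design: the pendant-triangle gadget is tailored precisely because its own packing and cover numbers agree, and its interaction with $H$ only tightens that equality. Once this alignment is in place, verifying that yes- and no-instances are preserved is a careful bookkeeping argument along the template of~\cite[Lemma 16]{DonkersJ19}, after which NP-hardness of \probname{Lower-Bounded OCT} follows, and \cref{lem:1-tight-occ-reduction} then transfers the hardness to \probname{$1$-Tight OCC Detection}.
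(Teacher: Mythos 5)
There is a genuine gap, and it sits exactly where you deferred the work. In \probname{Lower-Bounded OCT} the budget is \emph{forced} to equal the number of vertex-disjoint odd cycles supplied with the instance, so in any yes-instance every solution vertex must lie on one of the packed cycles, one vertex per cycle, with no budget left over. Your construction therefore gives no way to encode the budget $\ell$ of the source instance $(H,\ell)$: a pendant triangle attached at a vertex $v$ only offers the choice ``apex $v$ or a private vertex,'' so the set of attachment points must already contain an odd cycle transversal of $H$ for the forward direction to go through. Since you cannot know such a transversal in advance, you would have to attach a pendant triangle to \emph{every} vertex of $H$; but then taking all apexes (i.e.\ all of $V(H)$) plus one vertex per remaining triangle is always a solution of size $k$, so every instance maps to a yes-instance and the reduction fails. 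The free-standing triangles cannot repair this: they raise the packing and the budget by the same amount, and the extra budget must be spent inside them, so they create no slack with which to meter how many vertices of $H$ may be used. This is not a bookkeeping issue that ``careful arrangement of attachment points'' can fix --- it is the reason a reduction from the generic \probname{Odd Cycle Transversal} decision problem with this gadget does not work.

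The paper instead reduces from \probname{3-SAT}, which fits the rigid ``one vertex per packed triangle'' regime: each unit of budget corresponds to a bounded (binary/ternary) choice, namely a truth assignment inside a variable triangle, and each clause gadget contributes two disjoint triangles whose tight budget can only break all odd cycles of the gadget if at least one incident literal vertex was already selected. If you want to follow your route of adapting \cite[Lemma 16]{DonkersJ19}, the adaptation is at the level of the SAT gadgets (replacing the FVS cycle gadgets by triangles and a small clause widget), not a reduction from \probname{Odd Cycle Transversal} itself.
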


\begin{proof}
  We reduce from \probname{3-SAT}, a version of \probname{Satisfiability (SAT)}
  where every clause includes exactly $3$ literals, to \probname{Lower-Bounded OCT}.
  Let $X=x_1,\ldots,x_n$ be the variables and $\Phi=\phi_1,\ldots,\phi_m$ be the clauses
  appearing in a SAT instance.

  We construct a graph $G$ as follows:
  \begin{itemize}
    \item For each variable $x_i$, create a new triangle and let $v_i$ and $\overline{v_i}$
    be two of the triangle's vertices.
    These vertices represent SAT literals $x_i$ and $\overline{x_i}$ as shown in \Cref{fig:proof-np-hardness} (left).
    \item For each clause $\phi_j=\ell_{j1} \lor \ell_{j2} \lor \ell_{j3}$, introduce a gadget illustrated in \Cref{fig:proof-np-hardness} (right).
    We then identify vertices $s_1, s_2, s_3$ and the vertices representing $\ell_{j1}$, $\ell_{j2}$, $\ell_{j3}$,
    respectively.
  \end{itemize}

  It is clear to see that this construction can be done in polynomial time.
  Also observe that there are $n$ disjoint triangles for SAT variables
  and $2m$ disjoint triangles for clause gadgets.
  By letting $k=n+2m$, we have that $G$ contains $k$ vertex-disjoint triangles, which are odd cycles.
  We claim that $(X,\Phi)$ is satisfiable if and only if $G$ contains an OCT of size at most $k$.

  First, assume $(X,\Phi)$ is satisfiable.
  Let $U \subseteq V(G)$ be the vertices that correspond to a certificate,
  and consider all odd cycles present in $G-U$.
  For each variable gadget, we know that either $x_i$ or $\overline{x_i}$ is in $U$,
  so no triangles remain in $G-U$.
  For each clause gadget, we know that at least one of $s_1, s_2, s_3$ is in $U$.
  In any cases, we can choose $2$ vertices hitting all remaining triangles.
  Hence, $\oct(G-U) \leq 2m$ and $\oct(G) \leq n + 2m = k$.

  Conversely, assume $G$ has an OCT $S \subseteq V(G)$ of size at most $k$.
  Since $G$ has $k$ vertex-disjoint triangles, $S$ contains exactly one vertex from each of them.
  For every variable gadget, exactly one vertex is in $S$.
  For every clause gadget excluding $s_1$, $s_2$, $s_3$, exactly $2$ vertices must be in $S$,
  which leads to the fact that at least one of $s_1$, $s_2$, $s_3$ is in $S$.
  Finally, we construct a solution for \probname{3-SAT}.
  For each $1 \leq i \leq n$,
  we set $x_i$ to \textsf{true} if $v_i \in S$,
  and \textsf{false} otherwise.
  This by construction forms a certificate, and $(X,\Phi)$ is satisfiable.
\end{proof}

Combining \Cref{lem:1-tight-occ-reduction} and \Cref{lem:lb-oct-reduction},
we obtain the following result.

\begin{theorem} \label{thm:1tight:occ:nphard}
  \probname{$1$-Tight OCC Detection} is NP-hard under Turing reductions.
\end{theorem}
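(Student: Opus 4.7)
The plan is to simply chain together the two lemmas immediately preceding the theorem. Lemma~\ref{lem:lb-oct-reduction} establishes that \probname{Lower-Bounded OCT} is NP-hard via a polynomial-time many-one reduction from \probname{3-SAT}, so in particular it is NP-hard under Turing reductions. Lemma~\ref{lem:1-tight-occ-reduction} gives, in its proof, an explicit polynomial-time Turing reduction from \probname{Lower-Bounded OCT} to \probname{$1$-Tight OCC Detection}: it calls a hypothetical polynomial-time oracle for \probname{$1$-Tight OCC Detection} at most $n$ times on progressively shrinking induced subgraphs, peeling off one $1$-tight OCC at a time and tracking how many of the prescribed vertex-disjoint odd cycles have been resolved.

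Since Turing reductions compose in polynomial time, chaining the Turing reduction from \probname{3-SAT} to \probname{Lower-Bounded OCT} (which is a many-one reduction, hence a Turing reduction) with the Turing reduction from \probname{Lower-Bounded OCT} to \probname{$1$-Tight OCC Detection} yields a polynomial-time Turing reduction from the NP-hard problem \probname{3-SAT} to \probname{$1$-Tight OCC Detection}. By definition, this establishes that \probname{$1$-Tight OCC Detection} is NP-hard under Turing reductions.

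There is essentially no obstacle here, since both ingredients are proved in the preceding pair of lemmas. The only thing worth spelling out in the writeup is the (standard) fact that polynomial-time Turing reductions are transitive and that many-one reductions are a special case of Turing reductions, so that the composition of \cref{lem:lb-oct-reduction} and \cref{lem:1-tight-occ-reduction} gives the claimed hardness. Thus the proof of the theorem reduces to a single sentence invoking these two lemmas.
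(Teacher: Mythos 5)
Your proposal is correct and matches the paper exactly: the paper's proof of this theorem is literally the one-line combination of \cref{lem:1-tight-occ-reduction} and \cref{lem:lb-oct-reduction}, i.e., the NP-hardness of \probname{Lower-Bounded OCT} composed with the polynomial-time Turing reduction from it to \probname{$1$-Tight OCC Detection}. Your additional remark about transitivity of Turing reductions and many-one reductions being a special case is the same (standard) glue the paper leaves implicit.
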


Next, we show that finding a non-empty 1-tight OCC of bounded width is W[1]-hard,
parameterized by the width.

\begin{problem}{\probname{Bounded-Width $1$-Tight OCC Detection}}
  \Input & A graph $G$ and an integer $k$.\\
  Parameter: & $k$. \\
  \Prob & Does $G$ admit a non-empty $1$-tight OCC $(X_B,X_C,X_R)$ with $|X_C|\leq k$?
\end{problem}

We will show the W[1]-hardness of this problem by a parameterized reduction from
the well-known \probname{Multicolored Clique} problem.

\begin{problem}{\probname{Multicolored Clique}}
  \Input & A graph $G$, an integer $k$, and a partition of $V(G)$ into sets $V_1,\ldots,V_k$.\\
  Parameter: & $k$. \\
  \Prob & Does $G$ include as a subgraph such a clique $S$ such that
  for each $1 \leq i \leq k$ we have $|S \cap V_i|=1$?
\end{problem}

\begin{figure*}[t]
  \centering
  \tikzstyle{large} = [circle, fill=white, text=black, draw, thick, scale=1, minimum size=0.8cm, inner sep=1.5pt]
  \tikzstyle{plain} = [circle, fill=white, text=black, draw, thick, scale=1, minimum size=0.4cm, inner sep=1.5pt]
  \tikzstyle{small} = [circle, fill=white, text=black, draw, thick, scale=1, minimum size=0.2cm, inner sep=1.5pt]
  \tikzstyle{black} = [circle, fill=gray, text=black, draw, thick, scale=1, minimum size=0.2cm, inner sep=1.5pt]
  \tikzmath{\xunit = 0.6; \yunit = 1.0; }

  \begin{minipage}[m]{.30\textwidth}
    \vspace{0pt}
    \centering
    \begin{tikzpicture}
        \node[plain,label=below:$v_i$] (a2) at (2 * \xunit, 1 * \yunit) {};
        \node[plain,label=below:$\overline{v_i}$] (a3) at (4 * \xunit, 1 * \yunit) {};
        \node[plain] (b1) at (3 * \xunit, 2 * \yunit) {};
        
        \draw (b1)--(a2)--(a3)--(b1);
    \end{tikzpicture}
  \end{minipage}
  \hfill
  \begin{minipage}[m]{.66\textwidth}
      \vspace{0pt}
      \centering
      \begin{tikzpicture}
          \node[plain,dotted,label=below:$s_{1}$] (s1) at (1 * \xunit, 0) {};
          \node[plain,dotted,label=below:$s_{2}$] (s2) at (5 * \xunit, 0) {};
          \node[plain,dotted,label=below:$s_{3}$] (s3) at (9 * \xunit, 0) {};
          \node[plain] (a1) at (0 * \xunit, 1 * \yunit) {};
          \node[plain] (a2) at (2 * \xunit, 1 * \yunit) {};
          \node[plain] (a3) at (4 * \xunit, 1 * \yunit) {};
          \node[plain] (a4) at (6 * \xunit, 1 * \yunit) {};
          \node[plain] (a5) at (8 * \xunit, 1 * \yunit) {};
          \node[plain] (a6) at (10 * \xunit, 1 * \yunit) {};
          \node[plain] (b1) at (3 * \xunit, 2 * \yunit) {};
          \node[plain] (b2) at (7 * \xunit, 2 * \yunit) {};
          
          \draw (a1)--(a2)--(a3)--(a4)--(a5)--(a6);
          \draw (a1)--(s1)--(a2);
          \draw (a3)--(s2)--(a4);
          \draw (a5)--(s3)--(a6);
          \draw (a2)--(b1)--(a3);
          \draw (a4)--(b2)--(a5);
          
      \end{tikzpicture}
  \end{minipage}
  \caption{%
  Gadgets used for the proof of \Cref{lem:lb-oct-reduction}.
  The left shows a \emph{variable gadget}, including two vertices representing literals.
  The right shows a \emph{clause gadget}, where $s_1$, $s_2$, $s_3$ are connected to
  corresponding literals.
  }
  \label{fig:proof-np-hardness}
\end{figure*}
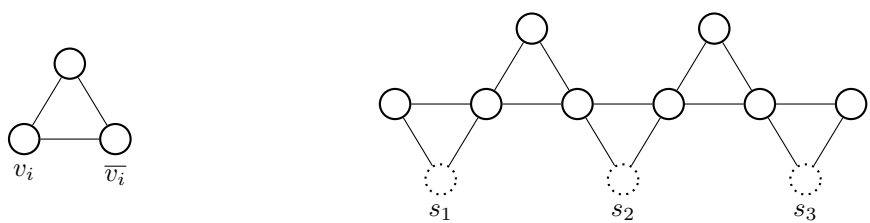

\begin{theorem} \label{thm:tightocc:w1hard}
  \probname{Bounded-Width $1$-Tight OCC Detection} is W[1]-hard.
\end{theorem}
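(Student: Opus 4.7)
The plan is to give a parameterized reduction from \probname{Multicolored Clique} to \probname{Bounded-Width $1$-Tight OCC Detection}, mirroring the antler W[1]-hardness reduction of Donkers and Jansen~\cite[Theorem 3.7]{DonkersJ24} but with triangles (and more generally, short odd cycles) playing the role that arbitrary cycles play in their construction. Given an instance $(G, k, V_1, \ldots, V_k)$ of \probname{Multicolored Clique}, I would build a graph $G'$ and integer $k' = \Oh(k^2)$ such that $G'$ admits a non-empty $1$-tight OCC of width at most $k'$ if and only if $G$ has a multicolored clique.

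First, for each color class $V_i$ I would construct a \emph{vertex-selection gadget}: for every $v \in V_i$ attach a private triangle via two fresh degree-$2$ vertices so that all triangles inside $V_i$ share only the vertex $v$, forcing any OCT to place at least one vertex into each class-gadget. Second, for each pair $(i,j)$ with $i<j$ I would add an \emph{edge-check gadget} $E_{i,j}$: for every non-edge $uv \in \binom{V_i \cup V_j}{2}$ with $u \in V_i$, $v \in V_j$, the gadget contains an odd cycle through $u$, $v$, and fresh internal vertices that cannot be broken by a single head vertex outside $\{u,v\}$, so a head of size at most $k' = k + \binom{k}{2}$ can only ``select'' pairs that \emph{are} edges of $G$. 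The intended head $X_C$ consists of one central vertex per $V_i$ together with one distinguished witness-vertex per $E_{i,j}$; the bipartite part $X_B$ collects the interior vertices of the selected triangles and of the active edge-check cycles; all remaining gadget vertices go into $X_R$. The whole construction is wired so that the only odd cycles in $G'[X_C \cup X_B]$ are the $k + \binom{k}{2}$ pairwise vertex-disjoint triangles through the heads, and so that the strict separator requirement $N_{G'}(X_B) \subseteq X_C$ holds.

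Correctness would then follow in two directions. Forward: from a multicolored clique $\{v_1,\ldots,v_k\}$ with $v_i \in V_i$, I place the $k$ corresponding centrals plus the $\binom{k}{2}$ edge-witnesses into $X_C$, the corresponding triangle-interiors into $X_B$, and the rest into $X_R$; the $k'$ vertex-disjoint triangles form an order-$1$ certificate proving both that $X_C$ is an optimal OCT of $G'[X_C \cup X_B]$ and that each component of the certificate has OCT at most $1$. Backward: a $1$-tight OCC with $|X_C| \leq k'$ comes bundled with an order-$1$ certificate of $|X_C|$ pairwise vertex-disjoint odd cycles in $G'[X_C \cup X_B]$; a case analysis over which of the selection triangles and edge-check cycles are used forces the head to contain exactly one central per $V_i$ and exactly one edge-witness per pair $(i,j)$, where the pair chosen in $E_{i,j}$ must be an actual edge of $G$ (else the extra odd cycles raise the OCT above $k'$). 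Reading off the chosen centrals recovers a multicolored clique.

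The main obstacle will be engineering the edge-check gadget $E_{i,j}$ so that the \emph{tightness} condition $|X_C| = \oct(G'[X_C \cup X_B])$ truly rules out alternative cheaper OCTs of the local subgraph, while simultaneously respecting the strict separator requirement $N_{G'}(X_B) \subseteq X_C$, which is more demanding than in the antler setting where only a weaker boundary condition is imposed. These constraints pull in opposite directions: the separator condition pushes boundary vertices into $X_C$ (potentially inflating its size), while tightness pushes $|X_C|$ down to the number of vertex-disjoint odd cycles. Threading this needle by making all non-head parts of each gadget internally bipartite and by routing inter-gadget edges exclusively through the designated central/witness vertices is where the bulk of the technical work will lie, but once the gadgets are stabilized the resulting reduction is parameter-preserving in $k$, which yields W[1]-hardness with the parameter being the OCC width.
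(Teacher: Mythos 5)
Your high-level strategy (a parameterized reduction from \probname{Multicolored Clique}, in the spirit of the antler hardness proof) is the same as the paper's, but what you have written is a plan rather than a proof, and the plan as stated contains a concrete flaw. The vertex-selection gadget you describe --- attaching to each $v \in V_i$ a private triangle through two fresh degree-$2$ vertices $a,b$ --- immediately creates a non-empty $1$-tight OCC of width $1$ in $G'$: take $X_B = \{a,b\}$, $X_C = \{v\}$, $X_R$ the rest. Then $N_{G'}(X_B) = \{v\}$, $G'[X_B]$ is a single edge, $\oct(G'[\{v,a,b\}]) = 1 = |X_C|$, and the triangle itself is an order-$1$ certificate. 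So every instance you produce is a yes-instance for any $k' \geq 1$, independently of whether $G$ has a multicolored clique, and the reduction cannot be correct in the sketched form. The same danger lurks in your edge-check gadgets built from fresh low-degree internal vertices: any locally separable bipartite piece whose small boundary happens to be an optimal OCT of the local subgraph is itself a tight OCC, and you acknowledge rather than resolve this (``the main obstacle will be engineering the edge-check gadget \ldots is where the bulk of the technical work will lie''). For a W[1]-hardness proof, this gadget engineering \emph{is} the proof.

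The paper's construction shows how the needle is actually threaded, and it differs from your sketch in exactly the places you left open: each original vertex $i$ is replaced by $k-1$ copies $U_i$, the union of \emph{all} copies is made into one large clique (so a copy can never sit in the bipartite side of a small-width OCC, since its neighborhood is far too large to be covered by a head of size $k' = k(k-1)$, and no cheap spurious tight OCC forms around vertex gadgets), and for each \emph{edge} $ij$ of $G$ (not for non-edges) one attaches two short paths $x_{ij},x'_{ij}$ and $y_{ij},y'_{ij}$ whose hub vertices are adjacent to all of $U_i \cup U_j$, closing triangles only through the designated copies $u_{i\chi(j)}$ and $u_{j\chi(i)}$. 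The intended head is $\bigcup_{i \in S} U_i$ of size $k(k-1)$, certified by $k(k-1)$ vertex-disjoint triangles, and the backward direction is carried by degree-counting ($X_B \subseteq W$, $X_C \subseteq U$) plus the bound $2|X_C| \leq |X_B| \leq 4|E'|$, forcing $E' = \binom{V'}{2}$ with distinct colors. Until you specify gadgets with comparable safeguards and verify both the strict separator condition and tightness against all candidate decompositions (not just the intended one), your argument has a genuine gap.
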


\begin{proof}
  Consider an instance $(G,\chi,k)$ of \probname{Multicolored Clique},
  where $\chi\colon V(G) \to [k]$ denotes the \textit{color} of each vertex such that
  $\chi(v)=\ell$ if and only if $v \in V_\ell$.
  Without loss of generality, we assume $k \geq 2$ and $n > k + 2$.
  Then, we construct an input $(G',k')$ for \probname{Bounded-Width $1$-Tight OCC Detection} as follows
  (see \Cref{fig:proof-w1-hardness} for an illustration).
  \begin{enumerate}
    \item For each vertex $i \in V(G)$, introduce ($k-1$) vertices $U_{i}:=\{u_{i\ell} \mid \ell \in [k] \setminus \{\chi(i)\}\}$ to $G'$.
    \item For each edge $ij \in E(G)$, add $4$ vertices $x_{ij}$, $x_{ij}'$, $y_{ij}$, and $y_{ij}'$ to $G'$.
    \item Insert edges to turn $\bigcup_{i \in V(G)}U_i$ into a clique in $G'$.
    \item For each edge $ij \in E(G)$, add edges from $U_i \cup U_j$ to $x_{ij}$,
    $x_{ij}$ to $x_{ij}'$, and $x_{ij}'$ to $u_{i \chi(j)}$.
    Similarly, add edges from $U_i \cup U_j$ to $y_{ij}$,
    $y_{ij}$ to $y_{ij}'$, and $y_{ij}'$ to $u_{j \chi(i)}$.
  \end{enumerate}

  Let $k'=k(k-1)$, and we claim that $(G',k')$ for \probname{Bounded-Width $1$-Tight OCC Detection}
  \probname{Multicolored Clique}  is an equivalent instance to $(G,\chi,k)$ for \probname{Multicolored Clique}.

  \begin{claim}\label{clm:w1-hardness-forward}
    If $G$ has a multicolored clique of size $k$, then $G'$ has a non-empty $1$-tight OCC of width $k'$.
  \end{claim}

  \begin{claimproof}
    Suppose $S$ is a multicolored clique of size $k$ in $G$.
    Define a $1$-tight OCC $(X_B,X_C,X_R)$ as follows:
    \begin{itemize}
      \item $X_B = \bigcup_{i,j \in S:i \neq j} \{ x_{ij}, x_{ij}', y_{ij}, y_{ij}' \}$
      \item $X_C = \bigcup_{i \in S} U_i$
      \item $X_R = V(G') \setminus (X_B \cup X_C)$
    \end{itemize}

    It is clear to see that $|X_C|=k(k-1)=k'$.
    Also, by construction, $N(X_C)=\bigcup_{i,j \in S \mid i \neq j} (U_i \cup U_j \cup \{ u_{i \chi (j)}, u_{j \chi (i)} \}) = \bigcup_{i \in S} U_i = X_C$,
    which remains us to show that $G'[X_B \cup X_C]$ contains $k'$ vertex-disjoint odd cycles.
    Notice that $G'[X_B \cup X_C]$ contains the set of triangles
    $T = \bigcup_{i,j \in S \mid i \neq j} \{u_{i \chi(j)}x_{ij}x_{ij}', u_{j \chi(i)}y_{ij}y_{ij}'\}$.
    Since $|T|=\binom{k}{2}\cdot 2=k'$ and $x_{ij},x_{ij}',y_{ij},y_{ij}'$ are unique in $T$,
    it suffices to show that $\left|\bigcup_{i,j\in S \mid i\neq j} \{u_{i\chi(j)}, u_{j\chi(i)}\}\right|=k'$.
    Since $S$ is a multicolored clique, for every pairwise distinct $i,j\in S$, we have $\chi(i) \neq \chi(j)$.
    Hence, $\bigcup_{i,j\in S \mid i\neq j} \{u_{i\chi(j)}, u_{j\chi(i)}\}\supseteq
    \bigcup_{i\in S} \left(\bigcup_{j \in S \setminus \{i\}} u_{i \chi(j)}\right)
    =\bigcup_{i\in S} \bigcup_{\ell \in [k] \setminus \{\chi(i)\}}u_{i\ell}
    =\bigcup_{i\in S} U_i$.
    $T$ indeed consists of $\left| \bigcup_{i\in S} U_i \right|=k(k-1)=k'$ vertex-disjoint triangles,
    and $(X_B,X_C,X_R)$ is a $1$-tight OCC.
  \end{claimproof}

  Finally, we show the backward implication.

  \begin{claim}\label{clm:w1-hardness-backward}
    If $G'$ has a non-empty $1$-tight OCC of size $k'$, then $G$ has a multicolored clique of size $k$.
  \end{claim}

  \begin{claimproof}
    Let $(X_B,X_C,X_R)$ be a non-empty $1$-tight OCC of size $k'$ in $G'$.
    We write $U$ for $\bigcup_{i\in S}U_i$ and $W$ for $\bigcup_{e \in E(G)} \{x_e, x_e', y_e, y_e'\}$.

    First, we show that $X_B \subseteq W$.
    Assume not.
    Then, there exists a vertex $u \in X_B \cap U$.
    Since $X_B$ is bipartite and $U$ forms a clique in $G'$,
    we have $|X_B \cap U| \leq 2$.
    We know that $N_{G'}(u) \subseteq X_B \cup X_C$,
    and thus $|N_{G'}(u) \setminus X_B| \leq |X_C|$.
    We have:
    $|N_{G'}(u) \setminus X_B|\geq |N_{G'}(u)|-1
    \geq (|U|-1)-1 = n(k-1)-2 > ((k+2)(k-1)-2) = k^2 + k - 4 \geq k^2-k=k'$.
    This implies $|X_C|>k'$, a contradiction.

    Next, we show that $X_C \subseteq U$.
    For the sake of contradiction, assume there exists a vertex $x \in X_C \cap W$.
    Then, there must be an odd cycle in $G'[\{x\} \cup X_B]$, but because
    $\{x\} \cup X_B \subseteq W$ is bipartite, we reach a contradiction.

    Now that we know $X_B \subseteq W$ and $X_C \subseteq U$,
    for each $u \in X_B$, there is an odd cycle in $G'[\{u\} \cup X_B]$.
    This happens only when $u x_e x_e'$ or $u y_e y_e'$ forms a triangle for some $e \in E(G)$,
    which leads to $|X_B| \geq 2|X_C|$.

    Let $E' = \{e \in E(G) \mid \{x_e, y_e\} \cap X_B \neq \emptyset\}$.
    If $ij \in E'$, then all the vertices in $C_i \cup C_j$ are in $X_C$,
    so we can write $X_C=\bigcup_{i \in V'}C_i$ for some $V' \subseteq V(G)$.
    Furthermore, from $|X_C|=k'=k(k-1)$, we know $|V'|=k$.
    From $E' \subseteq \binom{V'}{2}$, we have $|X_B| \leq 4|E'| \leq 2|V'|(|V'|-1)=2k'$.
    On the other hand, $|X_B| \geq 2|X_C| = 2k'$.
    These inequalities hold only when $E'=\binom{V'}{2}$.

    We claim that $V'$ is a multicolored clique in $G$.
    Let $ij$ be an edge in $E'$.
    By construction, triangles $u_{i \chi(j)}x_{ij}x_{ij}'$ and $u_{j \chi(i)}y_{ij}y_{ij}'$
    are in $G'[X_B \cup X_C]$, and thus $\chi(i)\neq \chi(j)$.
    Since $V'$ forms a clique in $G$ such that any vertex pair has different colors,
    $V'$ is a multicolored clique.
  \end{claimproof}

  From Claims \ref{clm:w1-hardness-forward} and \ref{clm:w1-hardness-backward},
  we conclude that $(G,\chi,k)$ and $(G',k')$ are equivalent,
  and \probname{Bounded-Width $1$-Tight OCC Detection} is W[1]-hard.
\end{proof}

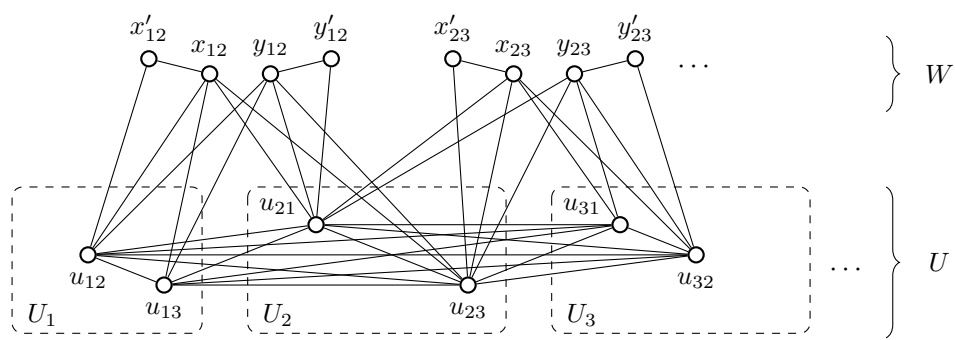
\begin{figure*}[t]
  \centering
  \tikzstyle{large} = [circle, fill=white, text=black, draw, thick, scale=1, minimum size=0.8cm, inner sep=1.5pt]
  \tikzstyle{plain} = [circle, fill=white, text=black, draw, thick, scale=1, minimum size=0.5cm, inner sep=1.5pt]
  \tikzstyle{small} = [circle, fill=white, text=black, draw, thick, scale=1, minimum size=0.2cm, inner sep=1.5pt]
  \tikzstyle{black} = [circle, fill=gray, text=black, draw, thick, scale=1, minimum size=0.2cm, inner sep=1.5pt]
  \tikzmath{\xunit = 0.8; \yunit = 0.8; \dx = 1.0; \dy = 0.4; \cx=0.8; }

  \begin{minipage}[m]{.96\textwidth}
      \vspace{0pt}
      \centering
      \begin{tikzpicture}
          \node[small,label=below:$u_{12}$] (u12) at (0 + 1 * \dx, 1 - 1 * \dy) {};
          \node[small,label=below:$u_{13}$] (u13) at (0 + 2 * \dx, 1 - 2 * \dy) {};

          \node[small,label={[shift={(-0.5, -0.1)}]$u_{21}$}] (u21) at (4, 1) {};
          \node[small,label=below:$u_{23}$] (u23) at (4 + 2 * \dx, 1 - 2 * \dy) {};

          \node[small,label={[shift={(-0.5, -0.1)}]$u_{31}$}] (u31) at (8, 1) {};
          \node[small,label=below:$u_{32}$] (u32) at (8 + 1 * \dx, 1 - 1 * \dy) {};

          \node[small,label=above:$x_{12}$] (x12) at (2.6, 3) {};
          \node[small,label=above:$x_{12}'$] (x12p) at (2.6 - \cx, 3.2) {};
          \node[small,label=above:$y_{12}$] (y12) at (3.4, 3) {};
          \node[small,label=above:$y_{12}'$] (y12p) at (3.4 + \cx, 3.2) {};

          \node[small,label=above:$x_{23}$] (x23) at (6.6, 3) {};
          \node[small,label=above:$x_{23}'$] (x23p) at (6.6 - \cx, 3.2) {};
          \node[small,label=above:$y_{23}$] (y23) at (7.4, 3) {};
          \node[small,label=above:$y_{23}'$] (y23p) at (7.4 + \cx, 3.2) {};

          \draw (u12)--(x12);
          \draw (u13)--(x12);
          \draw (u21)--(x12);
          \draw (u23)--(x12);
          \draw (u12)--(y12);
          \draw (u13)--(y12);
          \draw (u21)--(y12);
          \draw (u23)--(y12);

          \draw (x12)--(x12p);
          \draw (y12)--(y12p);
          \draw (x23)--(x23p);
          \draw (y23)--(y23p);

          \draw (x12p)--(u12);
          \draw (y12p)--(u21);

          \draw (u21)--(x23);
          \draw (u23)--(x23);
          \draw (u31)--(x23);
          \draw (u32)--(x23);
          \draw (u21)--(y23);
          \draw (u23)--(y23);
          \draw (u31)--(y23);
          \draw (u32)--(y23);

          \draw (x23p)--(u23);
          \draw (y23p)--(u32);

          \draw (u12)--(u13);
          \draw (u12)--(u21);
          \draw (u12)--(u23);
          \draw (u12)--(u31);
          \draw (u12)--(u32);
          \draw (u13)--(u21);
          \draw (u13)--(u23);
          \draw (u13)--(u31);
          \draw (u13)--(u32);
          \draw (u21)--(u23);
          \draw (u21)--(u31);
          \draw (u21)--(u32);
          \draw (u23)--(u31);
          \draw (u23)--(u32);
          \draw (u31)--(u32);

          \node[] at (9, 3.1) {\large $\cdots$};
          \node[] at (11, 0.4) {\large $\cdots$};

          \draw[dashed,rounded corners] (0, 1.5) rectangle (-0.5 + 3 * \dx, 1 - 3.6 * \dy);
          \draw[dashed,rounded corners] (3.1, 1.5) rectangle (3.5 + 3 * \dx, 1 - 3.6 * \dy);
          \draw[dashed,rounded corners] (7.1, 1.5) rectangle (7.5 + 3 * \dx, 1 - 3.6 * \dy);
          \node[] at (0.4, -0.2) {$U_1$};
          \node[] at (3.5, -0.2) {$U_2$};
          \node[] at (7.5, -0.2) {$U_3$};

        \draw [decorate,decoration={brace,amplitude=5pt}]
        (11.5, 1.5) -- (11.5, -0.5) node[midway,xshift=20pt]{$U$};

        \draw [decorate,decoration={brace,amplitude=5pt}]
        (11.5, 3.5) -- (11.5, 2.5) node[midway,xshift=20pt]{$W$};
      \end{tikzpicture}
  \end{minipage}
  \caption{%
  A visualization of an auxiliary graph constructed from an instance $(G,k)$ of \probname{Multicolored Clique}.
  The set $U$ consists of $n(k-1)$ vertices representing the vertices in $G$ and
  their adjacent colors.
  The set $W$ consists of $4m$ vertices representing the edges in $G$ replaced by a gadget for 
  creating odd cycles.
  }
  \label{fig:proof-w1-hardness}
\end{figure*}

\section{Conclusion} \label{sec:conclusion}

Inspired by crown decompositions for \textsc{Vertex Cover} and antler decompositions for \textsc{Feedback Vertex Set}, we introduced the notion of (tight) odd cycle cuts to capture local regions of a graph in which a simple certificate exists for the membership of certain vertices in an optimal solution to \textsc{Odd Cycle Transversal}. In addition, we developed a fixed-parameter tractable algorithm to find a non-empty subset of vertices that belong to an optimal odd cycle transversal in input graphs admitting a tight odd cycle cut; the parameter~$k$ we employed is the \emph{width} of the tight \occ. Finding tight odd cycle cuts and removing the vertices certified to be in an 	optimal solution leads to search-space reduction for the natural parameterization of \textsc{Odd Cycle Transversal}. To obtain our results, one of the main technical ideas was to replace the use of minimum two-way separators that arise naturally when solving \textsc{Odd Cycle Transversal}, by minimum three-way separators that simultaneously handle breaking the odd cycles in a subgraph \emph{and} separating the resulting local bipartite subgraph from the remainder of the graph.

\subparagraph*{Theoretical challenges}
There are several interesting directions for follow-up work. We first discuss the theoretical challenges. The algorithm we presented runs in time~$2^{k^{\Oh(1)}} n^{\Oh(z)}$, where~$z$ is the order of the tight odd cycle cut in the output guarantee of \cref{thm:main}. The polynomial term in the exponent has a large degree, which is related to the size of the cut covering sets used to shrink the bipartite part of an odd cycle cut in terms of its width. While we expect that some improvements can be made by a more refined analysis, it would be more interesting to see whether an algorithmic approach that avoids color coding can lead to significantly faster algorithms. 

An odd cycle cut~$(X_B, X_C, X_R)$ of width~$|X_C| = k$ in a graph~$G$ gives rise to a $k$-secluded bipartite subgraph~$G[X_B]$; recall that a subgraph is called $k$-secluded if its open neighborhood has size~$k$. For enumerating inclusion-maximal \emph{connected} $k$-secluded subgraphs that satisfy a property~$\Pi$, a bounded-depth branching strategy was recently proposed~\cite{JansenKW23} that generalizes the enumeration of important separators. Can such branching techniques be used to improve the running time for the search-space reduction problem considered in this paper to~$2^{\Oh(k)} n^{\Oh(z)}$?

The dependence on the complexity~$z$ of the certificate is another topic for further investigation. The search-space reduction algorithm for \textsc{Feedback Vertex Set} by Donkers and Jansen~\cite{DonkersJ24} that inspired this work, also incurs a factor~$n^{\Oh(z)}$ in its running time. For \textsc{Feedback Vertex Set}, it is conjectured but not proven that such a dependence on~$z$ is unavoidable. The situation is the same for \textsc{Odd Cycle Transversal}. Is there a way to rule out the existence of an algorithm for the task of \cref{thm:main} that runs in time~$f(k, z) \cdot n^{\Oh(1)}$?

A last theoretical challenge concerns the definition of the substructures that are used to certify membership in an optimal odd cycle transversal. Our definition of an odd cycle cut~$(X_B, X_C, X_R)$ prohibits the existence of any edges between~$X_B$ and~$X_R$. Together with the requirement that~$G[X_B]$ is bipartite, this ensures that all odd cycles intersecting~$X_B$ are intersected by~$X_C$. In principle, one could also obtain the latter conclusion from a slightly less restricted graph decomposition. Since any odd cycle enters a bipartite subgraph on one edge and leaves via another, knowing that each connected component~$H$ of~$G[X_B]$ is connected to~$X_R$ by at most one edge is sufficient to guarantee that all odd cycles visiting~$X_B$ are intersected by~$X_C$. The prior work on antler structures for \textsc{Feedback Vertex Set} allows the existence of one pendant edge per component, and manages to detect such antler structures efficiently. It would be interesting to see whether our approach can be generalized for \emph{relaxed} odd cycle cuts in which each component of~$G[X_B]$ has at most one edge to~$X_R$. To adapt to this setting, one would have to refine the type of three-way separation problem that is used in the graph reduction step. 

For \textsc{Odd Cycle Transversal}, one could relax the definition of the graph decomposition even further: to ensure that odd cycles visiting~$X_B$ are intersected by~$X_C$, it would suffice for each connected component~$H$ of~$G[X_B]$ to have at most one neighbor~$v_H$ in~$X_R$, as long as all vertices of~$H$ adjacent to~$v_H$ belong to the \emph{same} side of a bipartition of~$H$.

\subparagraph*{Practical challenges}
Since the investigation of search-space reduction is inspired by practical considerations, we should not neglect to discuss practical aspects of this research direction. While we do not expect the algorithm as presented here to be practical, it serves as a proof of concept that rigorous guarantees on efficient search-space reduction can be formulated. Our work also helps to identify the types of substructures that can be used to reason locally about membership in an optimal solution. Apart from finding faster algorithms in theory and experimenting with their results, one could also target the development of specialized algorithms for concrete values of~$k$ and~$z$. 

For~$k=1$, a tight odd cycle cut of width~$1$ effectively consists of a cutvertex~$c$ of the graph whose removal splits off a bipartite connected component~$B$ but for which the subgraph induced by~$B \cup \{v\}$ contains an odd cycle. Preliminary investigations suggest that in this case, an algorithm that computes the block-cut tree, analyzes which blocks form non-bipartite subgraphs, and which cut vertices break all the odd cycles in their blocks, can be engineered to run in time~$\Oh(|V(G)| + |E(G)|)$ to find a vertex~$v$ belonging to an optimal odd cycle transversal when given a graph that has a tight odd cycle cut of width~$k=1$. Do linear-time algorithms exist for~$k>1$? These would form valuable reduction steps in algorithms solving \textsc{Odd Cycle Transversal} exactly, such as the one developed by Wernicke~\cite{Wernicke14}.

The~$k=1$ case of the \emph{relaxed} odd cycle cuts described above are in fact used as one of the reduction rules in Wernicke's algorithm~\cite[Rule 7]{Wernicke14}. His reduction applies whenever there is a triangle~$\{u,v,w\}$ in which~$w$ has degree two and~$v$ has degree at most three. Under these circumstances, there is an optimal solution that contains~$u$ while avoiding~$v$ and~$w$: since the removal of~$u$ decreases the degree of~$w$ to one, while~$w$ is one of the at most two remaining neighbors of~$v$, the removal of~$u$ breaks all odd cycles intersecting~$\{u,v,w\}$. This corresponds to the fact that the triple~$(X_B = \{v,w\}, X_C = \{u\}, X_R = V(G) \setminus \{u,v,w\})$ forms a tight \emph{relaxed} odd cycle cut. We interpret the fact that the~$k=1$ case was developed naturally in an existing algorithm as encouraging evidence that refined research into search-space reduction steps can eventually lead to impact in practice.


\newpage
\bibliography{main}


\end{document}